\begin{document}

\newcommand{\OMIT}[1]{}

\DeclarePairedDelimiter\ceil{\lceil}{\rceil}
\DeclarePairedDelimiter\floor{\lfloor}{\rfloor}

\allowdisplaybreaks

\newcommand{\ghw}{\mathsf{GHW}}
\newcommand{\sjf}{\mathsf{SJF}}
\newcommand{\spantl}{\mathsf{SpanTL}}
\newcommand{\spanl}{\mathsf{SpanL}}
\newcommand{\trees}[2]{\mathsf{trees}_{#1}[#2]}

\newcommand{\oprh}[3]{\mathsf{ORep}_{#3}(#1,#2)}
\newcommand{\probhom}[2]{\probrep{#1}{#2}}
\newcommand{\crsh}[3]{\mathsf{CRS}_{#3}(#1,#2)}

\newcommand{\op}{\mathit{op}}
\newcommand{\PS}{\mathcal{P}}
\newcommand{\viol}[2]{\mathsf{V}(#1,#2)}
\newcommand{\rs}[2]{\mathsf{RS}(#1,#2)}
\newcommand{\rsone}[2]{\mathsf{RS}^1(#1,#2)}
\newcommand{\crs}[2]{\mathsf{CRS}(#1,#2)}
\newcommand{\crss}[3]{\mathsf{CRS}_{#3}(#1,#2)}
\newcommand{\cancrs}[2]{\mathsf{CanCRS}(#1,#2)}
\newcommand{\cancrss}[3]{\mathsf{CanCRS}_{#3}(#1,#2)}
\newcommand{\opr}[2]{\mathsf{ORep}(#1,#2)}
\newcommand{\copr}[2]{\mathsf{CORep}(#1,#2)}
\newcommand{\ops}[3]{\mathsf{Ops}_{#1}(#2,#3)}
\newcommand{\opsone}[3]{\mathsf{Ops}^1_{#1}(#2,#3)}
\newcommand{\abs}[1]{\mathsf{abs}_{>0}(#1)}
\renewcommand{\abs}[1]{\mathsf{RL}(#1)}
\newcommand{\insP}{\ins{P}}
\newcommand\sem[1]{{[\![ #1 ]\!]}}
\newcommand{\probrep}[2]{\mathsf{P}_{#1}(#2)}
\newcommand{\oca}[2]{\mathsf{OCA}_{#2}(#1)}
\newcommand{\ocqa}{\mathsf{OCQA}}
\newcommand{\rrelfreq}[1]{\mathsf{RRFreq}(#1)}
\newcommand{\rrelfreqone}[1]{\mathsf{RRFreq}^1(#1)}
\newcommand{\orfreq}[2]{\mathsf{rrfreq}_{#1}(#2)}
\newcommand{\orfreqone}[2]{\mathsf{rrfreq}^1_{#1}(#2)}
\newcommand{\srelfreq}[1]{\mathsf{SRFreq}(#1)}
\newcommand{\srelfreqone}[1]{\mathsf{SRFreq}^1(#1)}
\newcommand{\srfreq}[2]{\mathsf{srfreq}_{#1}(#2)}
\newcommand{\srfreqone}[2]{\mathsf{srfreq}^1_{#1}(#2)}
\newcommand{\ur}{\mathsf{ur}}
\newcommand{\us}{\mathsf{us}}
\newcommand{\uo}{\mathsf{uo}}

\newcommand{\IS}{\mathsf{IS}}
\newcommand{\ISZ}{\mathsf{IS}_{\neq \emptyset}}
\newcommand{\ISC}{\mathsf{IS}^{\mathsf{con}}}
\newcommand{\CC}{\mathsf{CC}}
\newcommand{\cg}[2]{\mathsf{CG}(#1,#2)}

\newcommand{\crsone}[2]{\mathsf{CRS}^1(#1,#2)}
\newcommand{\coprone}[2]{\mathsf{CORep}^1(#1,#2)}

\newcommand{\mi}[1]{\mathit{#1}}
\newcommand{\ins}[1]{\mathbf{#1}}
\newcommand{\adom}[1]{\mathsf{dom}(#1)}
\renewcommand{\paragraph}[1]{\textbf{#1}}
\newcommand{\ra}{\rightarrow}
\newcommand{\fr}[1]{\mathsf{fr}(#1)}
\newcommand{\dep}{\Sigma}
\newcommand{\sch}[1]{\mathsf{sch}(#1)}
\newcommand{\ar}[1]{\mathsf{ar}(#1)}
\newcommand{\body}[1]{\mathsf{body}(#1)}
\newcommand{\head}[1]{\mathsf{head}(#1)}
\newcommand{\guard}[1]{\mathsf{guard}(#1)}
\newcommand{\class}[1]{\mathbb{#1}}
\newcommand{\pos}[2]{\mathsf{pos}(#1,#2)}
\newcommand{\app}[2]{\langle #1,#2 \rangle}
\newcommand{\crel}[1]{\prec_{#1}}

\newcommand{\ccrel}[1]{\prec_{#1}^+}

\newcommand{\tcrel}[1]{\prec_{#1}^{\star}}
\newcommand{\rctaa}{\class{CT}_{\forall \forall}^{\mathsf{res}}}
\newcommand{\rctaapr}{\mathsf{CT}_{\forall \forall}^{\mathsf{res}}}
\newcommand{\rctae}{\class{CT}_{\forall \exists}^{\mathsf{res}}}
\newcommand{\rctaepr}{\mathsf{CT}_{\forall \exists}^{\mathsf{res}}}
\newcommand{\base}[1]{\mathsf{base}(#1)}
\newcommand{\eqt}[1]{\mathsf{eqtype}(#1)}
\newcommand{\result}[1]{\mathsf{result}(#1)}
\newcommand{\chase}[2]{\mathsf{ochase}(#1,#2)}
\newcommand{\pred}[1]{\mathsf{pr}(#1)}
\newcommand{\origin}[1]{\mathsf{org}(#1)}
\newcommand{\eq}[1]{\mathsf{eq}(#1)}
\newcommand{\dept}[1]{\mathsf{depth}(#1)}

\newcommand{\comp}[2]{\mathsf{comp}_{#2}(#1)}

\newcommand{\rep}[2]{\mathsf{rep}_{#2}(#1)}
\newcommand{\repp}[2]{\mathsf{rep}_{#2}\left(#1\right)}
\newcommand{\rfreq}[2]{\mathsf{rfreq}_{#2}(#1)}
\newcommand{\homs}[3]{\mathsf{hom}_{#2,#3}(#1)}
\newcommand{\prob}[1]{\mathsf{#1}}
\newcommand{\key}[1]{\mathsf{key}(#1)}
\newcommand{\keyval}[2]{\mathsf{key}_{#1}(#2)}
\newcommand{\block}[2]{\mathsf{block}_{#2}(#1)}
\newcommand{\sblock}[2]{\mathsf{sblock}_{#2}(#1)}

\newcommand{\rt}[1]{\mathsf{root}(#1)}
\newcommand{\child}[1]{\mathsf{child}(#1)}

\newcommand{\var}[1]{\mathsf{var}(#1)}
\newcommand{\const}[1]{\mathsf{const}(#1)}
\newcommand{\pvar}[2]{\mathsf{pvar}_{#2}(#1)}

\newcommand{\att}[1]{\mathsf{att}(#1)}
\newcommand{\card}[1]{\sharp #1}

\newcommand{\pr}{\mathsf{Pr}}
\newcommand{\prsp}{\mathsf{PS}}

\newcommand{\sign}[1]{\mathsf{sign}(#1)}
\newcommand{\litval}[2]{\mathsf{lval}_{#2}(#1)}
\newcommand{\angletup}[1]{\langle #1 \rangle}

\def\qed{\hfill{\qedboxempty}      
  \ifdim\lastskip<\medskipamount \removelastskip\penalty55\medskip\fi}

\def\qedboxempty{\vbox{\hrule\hbox{\vrule\kern3pt
                 \vbox{\kern3pt\kern3pt}\kern3pt\vrule}\hrule}}

\def\qedfull{\hfill{\qedboxfull}   
  \ifdim\lastskip<\medskipamount \removelastskip\penalty55\medskip\fi}

\def\qedboxfull{\vrule height 4pt width 4pt depth 0pt}

\newcommand{\markfull}{\qedboxfull}
\newcommand{\markempty}{\qed}

\newcommand{\atrees}[2]{\mathsf{trees}_{#2}(#1)}
\newcommand{\leaves}[1]{\mathsf{leaves}(#1)}
\newcommand{\children}[1]{\mathsf{children}(#1)}

\newcommand{\accept}{\text{\rm accept}}
\newcommand{\init}{\text{\rm init}}
\newcommand{\reject}{\text{\rm reject}}
\newcommand{\spanm}{\text{\rm span}}
\newcommand{\node}{\text{\rm node}}
\newcommand{\assign}{\text{\rm assignment}}
\newcommand{\confs}{\text{\rm configurations}}
\newcommand{\labeling}{L}
\newcommand{\final}{\mathsf{end}}
\newcommand{\process}{\mathsf{Process}}
\newcommand{\pathv}{\mathsf{path}}
\newcommand{\states}{\mathsf{states}}
\newcommand{\curr}{\mathsf{current}}
\newcommand{\per}{\mathsf{per}}

\newtheorem{claim}[theorem]{Claim}
\newtheorem{fact}[theorem]{Fact}
\newtheorem{observation}{Observation}
\newtheorem{remark}{Remark}
\newtheorem{apptheorem}{Theorem}[section]
\newtheorem{appcorollary}[apptheorem]{Corollary}
\newtheorem{appproposition}[apptheorem]{Proposition}
\newtheorem{applemma}[apptheorem]{Lemma}
\newtheorem{appclaim}[apptheorem]{Claim}
\newtheorem{appfact}[apptheorem]{Fact}

\newtheorem{manualtheoreminner}{Theorem}
\newenvironment{manualtheorem}[1]{%
	\renewcommand\themanualtheoreminner{#1}%
	\manualtheoreminner
}{\endmanualtheoreminner}

\newtheorem{manualpropositioninner}{Proposition}
\newenvironment{manualproposition}[1]{%
	\renewcommand\themanualpropositioninner{#1}%
	\manualpropositioninner
}{\endmanualpropositioninner}

\newtheorem{manuallemmainner}{Lemma}
\newenvironment{manuallemma}[1]{%
	\renewcommand\themanuallemmainner{#1}%
	\manuallemmainner
}{\endmanuallemmainner}

\fancyhead{}

\title{Combined Approximations for Uniform Operational Consistent Query Answering}

\author{Marco Calautti}
\affiliation{%
	\institution{University of Milan}
	\country{}
}
\email{marco.calautti@unimi.it}

\author{Ester Livshits}
\affiliation{%
	\institution{University of Edinburgh}
	\country{}
}
\email{ester.livshits@ed.ac.uk}

\author{Andreas Pieris}
\affiliation{%
	\institution{University of Edinburgh \&}
	\country{}
}
\affiliation{%
	\institution{University of Cyprus}
	\country{}
}
\email{apieris@inf.ed.ac.uk}

\author{Markus Schneider}
\affiliation{%
	\institution{University of Edinburgh}
	\country{}
}
\email{m.schneider@ed.ac.uk}

\begin{abstract} 
	Operational consistent query answering (CQA) is a recent framework for CQA based on revised definitions of repairs, which are built by applying a sequence of operations (e.g., fact deletions) starting from an inconsistent database until we reach a database that is consistent w.r.t.~the given set of constraints.
	%
	%
	%
	It has been recently shown that there are efficient approximations for computing the percentage of repairs, as well as of sequences of operations leading to repairs, that entail a given  query when we focus on primary keys, conjunctive queries, and assuming the query is fixed (i.e., in data complexity). However, it has been left open whether such approximations exist when the query is part of the input (i.e., in combined complexity). 
	We show that this is the case when we focus on self-join-free conjunctive queries of bounded generelized hypertreewidth. We also show that it is unlikely that efficient approximation schemes exist once we give up one of the adopted syntactic restrictions, i.e., self-join-freeness or bounding the generelized hypertreewidth.
	Towards the desired approximation schemes, we introduce a novel counting complexity class, called $\mathsf{SpanTL}$, 
	show that each problem in $\mathsf{SpanTL}$ admits an efficient approximation scheme by using a recent approximability result in the context of tree automata, and then place the problems of interest in $\mathsf{SpanTL}$.
\end{abstract}

\begin{CCSXML}
	<ccs2012>
	<concept>
	<concept_id>10002951.10002952</concept_id>
	<concept_desc>Information systems~Data management systems</concept_desc>
	<concept_significance>500</concept_significance>
	</concept>
	</ccs2012>
\end{CCSXML}

\ccsdesc[500]{Information systems~Data management systems}

\keywords{inconsistency; consistent query answering; operational semantics; primary keys; conjunctive queries; approximation schemes}

\maketitle

\section{Introduction}\label{sec:introduction}

Operational consistent query answering (CQA) is a recent framework, introduced in 2018~\cite{CaLP18}, that allows us to compute conceptually meaningful answers to queries posed over inconsistent databases, that is, databases that do not conform to their specifications given in the form of constraints, and it is based on revised definitions of repairs and consistent answers.
In particular, the main idea underlying this new framework is to replace the declarative approach to repairs, introduced in the late 1990s by Arenas, Bertossi, and Chomicki~\cite{ArBC99}, and adopted by numerous works (see, e.g.,~\cite{CaCP19,CaCP21,CLPS21,FuFM05,FuMi07,GePW15,KoSu14,KoWi15,KoWi21,MaWi13}), with an {\em operational} one that explains the process of constructing a repair. In other words, we can iteratively apply operations (e.g., fact deletions), starting from an inconsistent database, until we reach a database that is consistent w.r.t.~the given set of integrity constraints. This gives us the flexibility of choosing the probability with which we apply an operation, which in turn allows us to calculate the probability of an operational repair, and thus, the probability with which an answer is entailed. 
Probabilities can be naturally assigned to operations in many scenarios leading to inconsistencies. This is illustrated using the following example from~\cite{CaLP18}.

\begin{example}\label{exa:operational}
	Consider a data integration scenario that results in a database containing the facts ${\rm Emp}(1, {\rm Alice})$ and ${\rm Emp}(1,{\rm Tom})$ that violate the constraint that the first attribute of the relation name ${\rm Emp}$ (the id) is a key.
	Suppose we have a level of
	trust in each of the sources; say we believe that each is 50\% reliable. With probability $0.5 \cdot 0.5 = 0.25$ we do not trust either tuple and apply the operation that removes both facts. With probability $(1-0.25)/2=0.375$ we remove either ${\rm Emp}(1,{\rm Alice})$ or ${\rm Emp}(1,{\rm Tom})$.
	Therefore, we have three repairs, each with its probability: the empty repair with probability $0.25$, and the repairs consisting of ${\rm Emp}(1,{\rm Alice})$ or ${\rm Emp}(1,{\rm Tom})$ with probability $0.375$. 
	%
	Note that the standard CQA approach from~\cite{ArBC99} only allows the removal of one of the two facts (with equal probability $0.5$). It somehow assumes that we trust at least one of the sources, even though they are conflicting. \hfill\markfull
	%
\end{example}

As recently discussed in~\cite{CLPS22}, a natural way of choosing the probabilities assigned to operations is to follow the uniform probability distribution over a reasonable space. The obvious candidates for such a space are the set of operational repairs and the set of sequences of operations that lead to a repair (note that multiple such sequences can lead to the same repair). This led to the so-called {\em uniform} operational CQA. Let us note that in the case of uniform operational repairs (resp., sequences of operations), the probability with which an answer is entailed is essentially the percentage of operational repairs (resp., sequences of operations leading to repairs) that entail the answer in question. We know from~\cite{CLPS22} that, although computing the above percentages is $\sharp ${\rm P}-hard, they can be efficiently approximated via a fully polynomial-time randomized approximation scheme (FPRAS) when we focus on primary keys, conjunctive queries, and the query is fixed (i.e., in data complexity). However, it has been left open whether such efficient approximations exist when the query is part of the input (i.e., in combined complexity). Closing this problem is the main goal of this work.

\subsection{Our Contributions}

We show that the problem of computing the percentage of operational repairs, as well as of sequences of operations leading to repairs, that entail an answer to the given query admits an FPRAS in combined complexity when we focus on self-join-free conjunctive queries of bounded generalized hypertreewidth.
We further show that it is unlikely to have efficient approximation schemes once we give up one of the adopted syntactic restrictions, i.e., self-join-freeness or bounding the generelized hypertreewidth.
Note that an analogous approximability result has been recently established for the problem of computing the probability of an answer to a conjunctive query over a tuple-independent probabilistic database~\cite{BrMe23}.
Towards our approximability results, our main technical task is to show that computing the numerators of the ratios in question admits an FPRAS since we know from~\cite{CLPS22} that the denominators can be computed in polynomial time. To this end, we introduce a novel counting complexity class, called $\mathsf{SpanTL}$, establish that each problem in $\mathsf{SpanTL}$ admits an FPRAS, and then place the problems of interest in $\mathsf{SpanTL}$.
The complexity class $\spantl$, apart from being interesting in its own right, will help us to obtain the desired approximation schemes for the problems of interest in a more modular and comprehensible way.

The class $\mathsf{SpanTL}$ relies on alternating Turing machines with output. Although the notion of output is clear for (non-)deterministic Turing machines, it is rather uncommon to talk about the output of an alternating Turing machine. To the best of our knowledge, alternating Turing machines with output have not been considered before. We define an output of an alternating Turing machine $M$ on input $w$ as a node-labeled rooted tree, where its nodes are labeled with (finite) strings over the alphabet of the machine, obtained from a computation of $M$ on $w$.
We then define $\mathsf{SpanTL}$ as the class that collects all the functions that compute the number of distinct accepted outputs of an alternating Turing machine with output with some resource usage restrictions.
For establishing that each problem in $\mathsf{SpanTL}$ admits an FPRAS, we exploit a recent approximability result in the context of automata theory, that is, the problem $\sharp\mathsf{NFTA}$ of counting the number of trees of a certain size accepted by a non-deterministic finite tree automaton admits an FPRAS~\cite{ACJR21}. In fact, we reduce 
each function in $\mathsf{SpanTL}$ to $\sharp\mathsf{NFTA}$.
Finally, for placing our problems in $\mathsf{SpanTL}$, we build upon an idea from~\cite{BrMe23} on how the generalized hypertree decomposition of the query can be used to encode a database as a node-labeled rooted tree. Note, however, that transferring the idea from~\cite{BrMe23} to our setting requires a non-trivial adaptation since we need to encode operational repairs and sequences of operations.

At this point, we would like to stress that $\mathsf{SpanTL}$ generalizes the known complexity class $\mathsf{SpanL}$, which collects all the functions that compute the number of distinct accepted outputs of a non-deterministic logspace Turing machine with output~\cite{AlJe93}. We know that each problem in $\mathsf{SpanL}$ admits an FPRAS~\cite{ACJR21-NFA}, but
our problems of interest are unlikely to be in $\mathsf{SpanL}$ (unless $\mathsf{NL} = \mathsf{LOGCFL}$). This is because their decision version, which asks whether the number that we want to compute is greater than zero, is $\mathsf{LOGCFL}$-complete (this is implicit in~\cite{GoLS02}), whereas the decision version of each counting problem in $\mathsf{SpanL}$ is in $\mathsf{NL}$.


\OMIT{
The obvious question is how the complexity of exact and approximate operational CQA is affected if we assign probabilities to operations according to the above refined ways. In particular, we would like to understand whether uniform operational CQA allows us to go beyond the relatively simple case of primary keys.


Consistent query answering (CQA) is an elegant framework, introduced in the late 1990s by Arenas, Bertossi, and Chomicki~\cite{ArBC99}, that allows us to compute conceptually meaningful answers to queries posed over inconsistent databases, that is, databases that do not conform to their specifications given in the form of integrity constraints.
The key elements underlying CQA are (i) the notion of {\em (database) repair} of an inconsistent database $D$, that is, a consistent database whose difference with $D$ is somehow minimal, and (ii) the notion of query answering based on {\em consistent answers}, that is, answers that are entailed by every repair.


\vspace{10mm}
Since deciding whether a candidate answer is a consistent answer is most commonly intractable in data complexity (in fact, even for primary keys and conjunctive queries, the problem is coNP-hard~\cite{ChMa05}), there was a great effort on drawing the tractability boundary for CQA; see, e.g.,~\cite{FuFM05,FuMi07,GePW15,KoSu14,KoWi15,KoWi21}.
Much of this effort led to interesting dichotomy results that precisely characterize when CQA is tractable/intractable in data complexity.
However, the tractable fragments do not cover many relevant scenarios that go beyond primary keys.

As extensively argued in~\cite{CaLP18}, the goal of a practically applicable CQA approach should be efficient approximate query answering with explicit error guarantees rather than exact query answering. In the realm of the CQA approach described above, one could try to devise efficient probabilistic algorithms with bounded one- or two-sided error. However, it is unlikely that such algorithms exist since, even for very simple scenarios (e.g., primary keys and conjunctive queries), placing the problem in tractable randomized complexity classes such as RP or BPP would imply that the polynomial hierarchy collapses~\cite{KaLi80}.
Another promising idea is to replace the rather strict notion of consistent answers with the more refined notion of relative frequency, that is, the percentage of repairs that entail an answer, and then try to approximate it via a fully polynomial-time randomized approximation scheme (FPRAS); computing it exactly is, unsurprisingly, $\sharp ${\rm P}-hard~\cite{MaWi13}. Indeed, for primary keys and conjunctive queries, one can approximate the relative frequency via an FPRAS; this is implicit in~\cite{DaSu07}, and it has been made explicit in~\cite{CaCP19}. Moreover, a recent experimental evaluation revealed that approximate CQA in the presence of primary keys and conjunctive queries is not unrealistic in practice~\cite{CaCP21}.
However, it seems that the simple case of primary keys is the limit of this approach. We have strong indications that in the case of FDs the problem of computing the relative frequency does not admit an FPRAS, while in the case of keys it is a highly non-trivial problem~\cite{CLPS21}.


The above limitations of the classical CQA approach led the authors of~\cite{CaLP18} to propose a new framework for CQA, based on revised definitions of repairs and consistent answers, which opens up the possibility of efficient approximations with error guarantees. The main idea underlying this new framework is to replace the declarative approach to repairs with an {\em operational} one that explains the process of constructing a repair.
In other words, we can iteratively apply operations (e.g., fact deletions), starting from an inconsistent database, until we reach a database that is consistent w.r.t.~the given set of constraints. This gives us the flexibility of choosing the probability with which we apply an operation, which in turn allows us to calculate the probability of an operational repair, and thus, the probability with which an answer is entailed. 
Probabilities can be naturally assigned to operations in many scenarios leading to inconsistencies. This is illustrated using an example from~\cite{CaLP18}:

\begin{example}\label{exa:operational}
	Consider a data integration scenario that results in a database containing the facts ${\rm Emp}(1, {\rm Alice})$ and ${\rm Emp}(1,{\rm Tom})$ that violate the constraint that the first attribute of the relation name ${\rm Emp}$ (the id) is a key.
	Suppose we have a level of
	trust in each of the sources; say we believe that each is 50\% reliable. With probability $0.5 \cdot 0.5 = 0.25$ we do not trust either tuple and apply the operation that removes both facts. With probability $(1-0.25)/2=0.375$ we remove either ${\rm Emp}(1,{\rm Alice})$ or ${\rm Emp}(1,{\rm Tom})$.
	Therefore, we have three repairs, each with its probability: the empty repair with probability $0.25$, and the repairs consisting of ${\rm Emp}(1,{\rm Alice})$ or ${\rm Emp}(1,{\rm Tom})$ with probability $0.375$. 
	%
	Note that the standard CQA approach only allows the removal of one of the two facts (with equal probability $0.5$). It somehow assumes that we trust at least one of the sources, even though they are in conflict. \hfill\markfull
	%
\end{example}

The preliminary data complexity analysis of operational CQA performed in~\cite{CaLP18} revealed that computing the probability of a candidate answer is $\sharp ${\rm P}-hard and inapproximable, even for primary keys and conjunctive queries.
%
%
%
However, these negative results should not be seen as the end of the story, but rather as the beginning since operational CQA gives us the flexibility to choose the probabilities assigned to operations.
Indeed, the main question left open by~\cite{CaLP18} is the following: how can we choose the probabilities assigned to operations in a conceptually meaningful way and, at the same time, the existence of an FPRAS is guaranteed?

A natural way of choosing those probabilities is to follow the uniform probability distribution over a reasonable space. The obvious candidates for such a space are (i) the set of operational repairs, (ii) the set of sequences of operations that lead to a repair (note that multiple such sequences can lead to the same repair), and (iii) the set of available operations at a certain step of the repairing process. This leads to the so-called {\em uniform operational CQA}; further justification of the above choices is given in Section~\ref{sec:uniform}, where the formal definitions underlying uniform operational CQA are given.
The obvious question is how the complexity of exact and approximate operational CQA is affected if we assign probabilities to operations according to the above refined ways. In particular, we would like to understand whether uniform operational CQA allows us to go beyond the relatively simple case of primary keys.

Our goal is to perform a complexity analysis of uniform operational CQA, and provide answers to the above central questions.
Our main findings can be summarized as follows:

\begin{enumerate}
	\item Exact uniform operational CQA remains $\sharp ${\rm P}-hard, even in the case of primary keys and conjunctive queries.
	
	\item Uniform operational CQA admits an FPRAS if we focus on primary keys and conjunctive queries.
	
\end{enumerate}

\OMIT{
A database is inconsistent if it does not conform to its specifications given in the form of integrity constraints. There is a consensus that inconsistency is a real-life phenomenon that arises due to many reasons such as integration of conflicting sources. With the aim of obtaining conceptually meaningful answers to queries posed over inconsistent databases, Arenas, Bertossi, and Chomicki introduced in the late 1990s the notion of Consistent Query Answering (CQA)~\cite{ArBC99}. The key elements underlying CQA are (i) the notion of {\em (database) repair} of an inconsistent database $D$, that is, a consistent database whose difference with $D$ is somehow minimal, and (ii) the notion of query answering based on {\em certain answers}, that is, answers that are entailed by every repair. A simple example, taken from~\cite{CaCP19}, that illustrates the above notions follows:

\begin{example}\label{exa:cqa}
	Consider the relational schema consisting of a single relation name $\text{\rm Employee}(\text{\rm id}, \text{\rm name}, \text{\rm dept})$ that comes with the constraint that the attribute {\text{\rm id}} functionally determines \text{\rm name} and \text{\rm dept}.
	Consider also the database $D$ consisting of the tuples:
	(1, \text{\rm Bob}, \text{\rm HR}), (1, \text{\rm Bob}, \text{\rm IT}), (2, \text{\rm Alice}, \text{\rm IT}), (2, \text{\rm Tim}, \text{\rm IT}).
	It is easy to see that $D$ is inconsistent since we are uncertain about Bob's department, and the name of the employee with id $2$. To devise a repair, we need to keep one tuple from each conflicting pair, which leads to a maximal subset of $D$ that is consistent.
	Observe now that the query that asks whether employees $1$ and $2$ work in the same department is true only in two out of four repairs, and thus, not entailed. \hfill\markfull
\end{example}

\noindent
\paragraph{Counting Repairs Entailing a Query.}
A key task in this context is to count the number of repairs of an inconsistent database $D$ w.r.t.~a set $\dep$ of constraints that entail a given query $Q$; for clarity, we base our discussion on Boolean queries.
Depending on the shape of the constraints and the query, the data complexity of the above problem can be tractable, i.e., in \text{\rm FP} (the counting analogue of \textsc{PTime}), or intractable, i.e., $\sharp \text{\rm P}$-complete (with $\sharp \text{\rm P}$ being the counting analogue of \text{\rm NP}).
In other words, given a set $\dep$ of constraints and a query $Q$, the problem $\sharp \prob{Repairs}(\dep,Q)$ that takes as input a database $D$, and asks for the number of repairs of $D$ w.r.t.~$\dep$ that entail $Q$, can be tractable or intractable depending on the shape of $\dep$ and $Q$. This leads to the natural question whether we can establish a complete classification, i.e., for every $\dep$ and $Q$, classify $\sharp \prob{Repairs}(\dep,Q)$ as tractable or intractable by simply inspecting $\dep$ and $Q$.

This is a highly non-trivial question for which Maslowski and Wijsen gave an affirmative answer providing that we concentrate on primary keys, i.e., at most one key constraint per relation name, and self-join-free conjunctive queries (SJFCQs), i.e., CQs that cannot mention a relation name more than once~\cite{MaWi13}. 
More precisely, they have established the following dichotomy result: given a set $\dep$ of primary keys, and an SJFCQ $Q$, $\sharp \prob{Repairs}(\dep,Q)$ is either in \text{\rm FP} or $\sharp$\text{\rm P}-complete, and we can determine in polynomial time, by simply analyzing $\dep$ and $Q$, which complexity statement holds.
An analogous dichotomy for arbitrary CQs with self-joins was established by the same authors in~\cite{MaWi14} under the assumption that the primary keys are simple, i.e., they consist of a single attribute. The question whether such a dichotomy result exists for arbitrary primary keys and CQs with self-joins remains a challenging open problem.


Although the picture is rather well-understood for primary keys and SJFCQs, once we go beyond primary keys we know very little concerning the existence of a complete data complexity classification as the one described above.
In particular, the dichotomy result by Maslowski and Wijsen does not apply when we consider the more general class of functional dependencies (FDs), i.e., constraints of the form $R : X \ra Y$, where $X,Y$ are subsets of the set of attributes of $R$, stating that the attributes of $X$ functionally determine the attributes of $Y$.
This brings us to the following question:

\smallskip

\noindent {\em \textbf{Question 1:} Can we lift the dichotomy result for primary keys and SJFCQs to the more general case of functional dependencies?}

\smallskip

The closest known result to the complexity classification asked by Question 1 is for the problem $\sharp \prob{Repairs}(\dep)$, where $\dep$ is a set of FDs, that takes as input a database $D$, and asks for the number of repairs of $D$ w.r.t.~$\dep$ (without considering a query). In particular, we know from~\cite{LiKW21} that whenever $\dep$ has a so-called left-hand side (LHS, for short) chain (up to equivalence), $\sharp \prob{Repairs}(\dep)$ is in \text{\rm FP}; otherwise, it is $\sharp\text{\rm P}$-complete. We also know that checking whether $\dep$ has an LHS chain (up to equivalence) is feasible in polynomial time. Let us recall that a set $\dep$ of FDs has a LHS chain if, for every two FDs $R : X_1 \ra Y_1$ and $R : X_2 \ra Y_2$ of $\dep$, $X_1 \subseteq X_2$ or $X_2 \subseteq X_1$.


\noindent
\paragraph{Approximate Counting.} Another key task in the context of database repairing is to classify $\sharp \prob{Repairs}(\dep,Q)$, for a set of constraints $\dep$ and a query $Q$, as approximable, that is, the target value can be efficiently approximated with error guarantees via a fully polynomial-time randomized approximation scheme (FPRAS), or as inapproximable. Of course, whenever $\sharp \prob{Repairs}(\dep,Q)$ is tractable, then it is trivially approximable. Thus, the interesting task is to classify the intractable cases as approximable or inapproximable.

For a set $\dep$ of primary keys, and a CQ $Q$ (even with self-joins), $\sharp \prob{Repairs}(\dep,Q)$ is always approximable; this is implicit in~\cite{DaSu07}, and it has been made explicit in~\cite{CaCP19}. However, for FDs this is not the case. Depending on the syntactic shape of $\dep$ and $Q$, $\sharp \prob{Repairs}(\dep,Q)$ can be approximable or not; these are actually results of the present work. This leads to the following question:




\noindent
\paragraph{Summary of Contributions.} 
Concerning Question (1), we lift the dichotomy of~\cite{MaWi13} for primary keys and SJFCQs to the general case of FDs (Theorem~\ref{the:fds-dichotomy}). To this end, we build on the dichotomy for the problem of counting repairs (without a query) from~\cite{LiKW21}, which allows us to concentrate on FDs with an LHS chain (up to equivalence) since for all the other cases we can inherit the $\sharp \text{P}$-hardness.
Therefore, our main technical task was actually to lift the dichotomy for primary keys and SJFCQs from~\cite{MaWi13} to the case of FDs with an LHS chain (up to equivalence). Although the proof of this result borrows several ideas from the proof of~\cite{MaWi13}, the task of lifting the result to FDs with an LHS chain (up to equivalence) was a non-trivial one. This is due to the significantly more complex structure of database repairs under FDs with an LHS chain compared to those under primary keys; further details are given in Section~\ref{sec:lhs-chain-fds}.

\OMIT{
Concerning Question (2), although we do not establish a complete classification, we provide results that, apart from being interesting in their own right, are crucial steps towards a complete classification (Theorem~\ref{the:apx-main-result}). After discussing the difficulty underlying a proper 
dichotomy (it will resolve the challenging open problem of whether counting maximal matchings in a bipartite graph is approximable), we show that, for every set $\dep$ of FDs with an LHS chain (up to equivalence) and a CQ $Q$ (even with self-joins), $\sharp \prob{Repairs}(\dep,Q)$ admits an FPRAS. On the other hand, we show that there is a very simple set $\dep$ of FDs such that, for every SJFCQ $Q$, $\sharp \prob{Repairs}(\dep,Q)$ does not admit an FPRAS (under a standard complexity assumption). 
}
}
}

\section{Preliminaries}\label{sec:preliminaries}
%

%
We assume the disjoint countably infinite sets $\ins{C}$ and $\ins{V}$ of {\em constants} and {\em variables}, respectively. For $n > 0$, let $[n]$ be the set $\{1,\ldots,n\}$.

\medskip

\noindent\paragraph{Relational Databases.}
A {\em (relational) schema} $\ins{S}$ is a finite set of relation names with associated arity; we write $R/n$ to denote that $R$ has arity $n > 0$.
A {\em fact} over $\ins{S}$ is an expression of the form $R(c_1,\ldots,c_n)$, where $R/n \in \ins{S}$ and $c_i \in \ins{C}$ for each $i \in [n]$. A {\em database} $D$ over $\ins{S}$ is a finite set of facts over $\ins{S}$. The {\em active domain} of $D$, denoted $\adom{D}$, is the set of constants occurring in $D$.

\medskip

\noindent
\paragraph{Key Constraints.}
A {\em key constraint} (or simply {\em key}) $\kappa$ over a schema $\ins{S}$ is an expression $\key{R} = A$, where $R/n \in \ins{S}$ and $A \subseteq [n]$. Such an expression is called an $R$-key.
Given an $n$-tuple of constants $\bar t = (c_1,\ldots,c_n)$ and a set $A = \{i_1,\ldots,i_m\} \subseteq [n]$, for some $m \in [n]$, we write $\bar t[A]$ for the tuple $(c_{i_1},\ldots,c_{i_m})$, i.e., the projection of $\bar t$ over the positions in $A$.
A database $D$ satisfies $\kappa$, denoted $D \models \kappa$, if, for every two facts $R(\bar t),R(\bar s) \in D$, $\bar t[A] = \bar s[A]$ implies $\bar t = \bar s$. We say that $D$ is {\em consistent} w.r.t.~a set $\dep$ of keys, written $D \models \dep$, if $D \models \kappa$ for each $\kappa \in \dep$; otherwise, it is {\em inconsistent} w.r.t.~$\dep$.
In this work, we focus on sets of {\em primary keys}, i.e., sets of keys that, for each predicate $R$ of the underlying schema, have at most one $R$-key.

\medskip

\noindent
\paragraph{Conjunctive Queries.}
A {\em (relational) atom} $\alpha$ over a schema $\ins{S}$ is an expression of the form $R(t_1,\ldots,t_n)$, where $R/n \in \ins{S}$ and $t_i \in \ins{C} \cup \ins{V}$ for each $i \in [n]$.
A {\em conjunctive query} (CQ) $Q$ over $\ins{S}$ is an expression of the form $\textrm{Ans}(\bar x)\ \text{:-}\ R_1(\bar y_1), \ldots, R_n(\bar y_n)$, where $R_i(\bar y_i)$, for $ i \in [n]$, is an atom over $\ins{S}$, $\bar x$ are the {\em answer variables} of $Q$, and each variable in $\bar x$ is mentioned in $\bar y_i$ for some $i \in [n]$. We may write $Q(\bar x)$ to indicate that $\bar x$ are the answer variables of $Q$. When $\bar x$ is empty, $Q$ is called {\em Boolean}. Moreover, when $Q$ mentions every relation name of $\ins{S}$ at most once it is called {\em self-join-free} and we write $\sjf$ for the class of self-join-free CQs.
The semantics of CQs is given via homomorphisms. Let $\var{Q}$ and $\const{Q}$ be the set of variables and constants in $Q$, respectively. A {\em homomorphism} from a CQ $Q$ of the form $\textrm{Ans}(\bar x)\ \text{:-}\ R_1(\bar y_1), \ldots, R_n(\bar y_n)$ to a database $D$ is a function $h : \var{Q} \cup \const{Q} \ra \adom{D}$, which is the identity over $\ins{C}$, such that $R_i(h(\bar y_i)) \in D$ for each $i \in [n]$.
A tuple $\bar c \in \adom{D}^{|\bar x|}$ is an {\em answer to $Q$ over $D$} if there is a homomorphism $h$ from $Q$ to $D$ with $h(\bar x) = \bar c$. Let $Q(D)$ be the answers to $Q$ over $D$. For Boolean CQs, we write $D \models Q$, and say that $D$ {\em entails} $Q$, if $() \in Q(D)$.

A central class of CQs, which is crucial for our work, is that of {\em bounded generalized hypertreewidth}. The definition of generalized hypertreewidth relies on the notion of tree decomposition.
A \emph{tree decomposition} of a CQ $Q$ of the form $\textrm{Ans}(\bar x)\ \text{:-}\ R_1(\bar y_1), \ldots, R_n(\bar y_n)$ is a pair $(T, \chi)$, where $T = (V,E)$ is a tree and $\chi$ is a labeling function $V \ra 2^{\var{Q} \setminus \bar x}$, i.e., it assigns a subset of $\var{Q} \setminus \bar x$ to each node of $T$, such that (1) for each $i \in [n]$, there exists $v \in V$ such that $\chi(v)$ contains all the variables in $\bar y_i \setminus \bar x$, and (2) for each $t \in \var{Q} \setminus \bar x$, the set $\{v \in V \mid t \in \chi(v)\}$ induces a connected subtree of $T$; the second condition is generally known as the {\em connectedness condition}.
A {\em generalized hypertree decomposition} of $Q$ is a triple $(T,\chi,\lambda)$, where $(T,\chi)$ is a tree decomposition of $Q$ and, assuming $T = (V,E)$, $\lambda$ is a labeling function that assigns a subset of $\{R_1(\bar y_1), \ldots, R_n(\bar y_n)\}$ to each node of $T$ such that, for each $v \in V$, the terms in $\chi(v)$ are ``covered'' by the atoms in $\lambda(v)$, i.e., $\chi(v) \subseteq \bigcup_{R(t_1,\ldots,t_n) \in \lambda(v)} \{t_1,\ldots,t_n\}$.
The {\em width} of $(T,\chi,\lambda)$ is the number $\max_{v \in V} \{|\lambda(v)|\}$, i.e., the maximal size of a set of the form $\lambda(v)$ over all nodes $v$ of $T$. The {\em generalized hypertreewidth} of $Q$ is the minimum width over all its generalized hypertree decompositions.
For $k > 0$, we denote by $\ghw_k$ the class of CQs of generalized hypertreewidth $k$. Recall that $\ghw_1$ coincides with the class of {\em acyclic} CQs.
\section{Uniform Operational CQA}\label{sec:operational-cqa}

We recall the basic notions underlying the operational approach to consistent query answering, focusing on keys, as defined in~\cite{CaLP18}. We then use those notions to recall the recent uniform operational approach to consistent query answering considered in~\cite{CLPS22}.

\medskip

\noindent\paragraph{Operations.}
The notion of operation is the building block of the operational approach, which essentially removes some facts from the database in order to resolve a conflict. As usual, we write $\PS(S)$ for the powerset of a set $S$.

\begin{definition}[\textbf{Operation}]\label{def:operation}
	For a database $D$, a {\em $D$-operation} is a function $\op : \PS(D) \ra
	\PS(D)$ such that, for some non-empty set $F \subseteq D$, for every $D' \in \PS(D)$, $\op(D') = D' \setminus F$. We write $-F$ to refer to this operation. We say that $-F$ is {\em $(D',\dep)$-justified}, where $D' \subseteq D$ and $\dep$ is a set of keys, if $F \subseteq \{f,g\} \subseteq D'$ and $\{f,g\} \not\models \dep$.\hfill\markfull
\end{definition}

The operations $-F$ depend on the database $D$ as they are defined over $D$. Since $D$ will be clear from the context, we may refer to them simply as operations, omitting $D$. 
%
The main idea of the operational approach to CQA is to iteratively apply justified operations, starting from an inconsistent database $D$, until we reach a database $D' \subseteq D$ that is consistent w.r.t. the given set $\dep$ of keys.
Note that justified operations do not try to minimize the number of atoms that need to be removed. As argued in~\cite{CaLP18}, a set of facts that collectively contributes to a key violation should be considered as a justified operation during the iterative repairing process since we do not know a priori which atoms should be deleted.

\OMIT{
However, as discussed in~\cite{CaLP18}, we need to ensure that at each step of this repairing process, at least one violation is resolved. To this end, we need to keep track of all the reasons that cause the inconsistency of $D$ w.r.t.~$\dep$. This brings us to the notion of key violation.

\begin{definition}[\textbf{Key Violation}]\label{def:violation}
	For a database $D$ over a schema $\ins{S}$, a {\em $D$-violation} of a key $\kappa = \key{R} = A$ over $\ins{S}$ is a set $\{f,g\} \subseteq D$ of facts such that $\{f,g\} \not\models \kappa$.	
	We denote the set of $D$-violations of $\kappa$ by $\viol{D}{\kappa}$. Furthermore, for a set $\dep$ of keys, we denote by $\viol{D}{\dep}$ the set $\{(\kappa,v) \mid \kappa \in \dep \textrm{~~and~~} v \in \viol{D}{\kappa}\}$. \hfill\markfull
\end{definition}

Thus, a pair $(\kappa,\{f,g\}) \in \viol{D}{\dep}$ means that one of the reasons why the database $D$ is inconsistent w.r.t.~$\dep$ is because it violates $\kappa$ due to the facts $f$ and $g$.
As discussed in~\cite{CaLP18}, apart from forcing an operation to be fixing, i.e., to fix at least one violation, we also need to force an operation to remove a set of facts only if it contributes as a whole to a violation. Such operations are called justified.

\begin{definition}[\textbf{Justified Operation}]\label{def:justified}
	Let $D$ be a database over a schema $\ins{S}$ and $\dep$ a set of keys over $\ins{S}$. For a database $D' \subseteq D$, a
	$D$-operation $-F$ is called {\em $(D',\dep)$-justified}
	if there exists $(\kappa,\{f,g\}) \in \viol{D'}{\dep}$ such that $F \subseteq \{f,g\}$. \hfill\markfull
\end{definition}

Note that justified operations do not try to minimize the number of atoms that need to be removed. As argued in~\cite{CaLP18}, a set of facts that collectively contributes to a violation should be considered as a justified operation during the iterative repairing process since we do not know a priori which atoms should be deleted, and therefore, we need to explore all the possible scenarios.
}

\medskip

\noindent\paragraph{Repairing Sequences and Operational Repairs.}
The main idea of the operational approach discussed above is formalized via the notion of repairing sequence.
Consider a database $D$ and a set $\dep$ of keys. Given a sequence $s = (\mi{op}_i)_{1 \leq i \leq n}$ of $D$-operations, we define $D_{0}^{s} = D$ and $D_{i}^{s} = \mi{op}_i(D_{i-1}^{s})$ for $i \in [n]$, that is, $D_{i}^{s}$ is obtained by applying to $D$ the first $i$ operations of $s$. 

\begin{definition}[\textbf{Repairing Sequence}]\label{def:rep-sequence}
	Consider a database $D$ and a set $\dep$ of keys. A sequence of $D$-operations $s = (\op_i)_{1 \leq i \leq n}$ is called {\em $(D,\dep)$-repairing} if, for every $i \in [n]$, $\op_i$ is $(D_{i-1}^{s},\dep)$-justified. Let $\rs{D}{\dep}$ be the set of all $(D,\dep)$-repairing sequences. \hfill\markfull
\end{definition}

It is easy to verify that the length of a $(D,\dep)$-repairing sequence is linear in the size of $D$ and that the set $\rs{D}{\dep}$ is finite. For a $(D,\dep)$-repairing sequence $s = (\op_i)_{1 \leq i \leq n}$, we define its {\em result} as the database $s(D) = D_{n}^{s}$ and call it {\em complete} if $s(D) \models \dep$.
Let $\crs{D}{\dep}$ be the set of all complete $(D,\dep)$-repairing sequences.
We can now introduce the central notion of operational repair.

\begin{definition}[\textbf{Operational Repair}]\label{def:operational-repair}
	Let $D$ be a database over a schema $\ins{S}$ and $\dep$ a set of keys over $\ins{S}$. An {\em operational repair} of $D$ w.r.t.~$\dep$ is a database $D'$ such that $D' = s(D)$ for some $s \in \crs{D}{\dep}$. Let $\opr{D}{\dep}$ be the set of all operational repairs of $D$ w.r.t.~$\dep$. \hfill\markfull
\end{definition}


\OMIT{
\medskip

\noindent\paragraph{Operational Repairs.} 
An {\em operational repair} of a database $D$ w.r.t.~a set $\dep$ of keys is a database $D'$ such that $D' = s(D)$ for some $s \in \crs{D}{\dep}$. Let $\opr{D}{\dep}$ be the set of all operational repairs of $D$ w.r.t.~$\dep$.

Although every database of $\copr{D}{\dep}$ corresponds to a conceptually meaningful way of repairing the database $D$, we would like to have a mechanism that allows us to choose which candidate repairs should be considered for query answering purposes, and assign likelihoods to those repairs.

The fact that we can operationally repair an inconsistent database via repairing sequences gives us the flexibility of choosing which operations (that is, fact deletions) are more likely than others, which in turn allows us to talk about the probability of a repair, and thus, the probability with which an answer is entailed.
The idea of assigning likelihoods to operations extending sequences can be described as follows: for all possible extensions $s\cdot \op_1,\ldots, s\cdot \op_k$ of a repairing sequence $s$, we assign
probabilities $p_1,\ldots, p_k$ to them so they add up to $1$. This is done by exploiting a {\em tree-shaped Markov chain} that arranges its states (i.e., repairing sequences) in a rooted tree, where (i) the empty sequence of operations, which is by definition repairing, is the root, (ii) the children of each state are its possible extensions, and (iii) the set of states corresponding to complete sequences coincide with the set of leaves.
We write $\varepsilon$ for the empty sequence of operations.
We further write $\ops{s}{D}{\dep}$ for the set of $(D,\dep)$-repairing sequences $\{s' \in \rs{D}{\dep} \mid s' = s \cdot \op \text{ for some } D\text{-operation } \op\}$.

\begin{definition}[\textbf{Repairing Markov Chain}]\label{def:repaiting-mc}
	For a database $D$ and a set $\dep$ of keys, a {\em $(D,\dep)$-repairing Markov chain} is an edge-labeled rooted tree $T = (V,E,\ins{P})$, where $V = \rs{D}{\dep}$, $E \subseteq V \times V$, and $\ins{P} : E \ra [0,1]$, such that:
	\begin{enumerate}
		\item the root is the empty sequence $\varepsilon$,
		\item for a non-leaf node $s \in V$, $\{s' \mid (s,s') \in E\} = \ops{s}{D}{\dep}$,
		\item for a non-leaf node $s \in V$, $\sum_{t \in \{s' \mid (s,s') \in E\}} \ins{P}(s,t) = 1$, and
		\item $\{s \in V \mid s \text{ is a leaf}\} = \crs{D}{\dep}$.
	\end{enumerate}
	A {\em repairing Markov chain generator} w.r.t.~$\dep$ is a function $M_\dep$ assigning to every database $D$ a $(D,\dep)$-repairing Markov chain. \hfill\markfull
\end{definition}

For a database $D$ and a set $\dep$ of keys, we assume that a $(D,\dep)$-repairing Markov chain $(V,E,\insP)$ is compactly represented as a function $f : \rs{D}{\dep} \times \rs{D}{\dep} \ra [0,1] \cup \{\bot\}$ such that, for every pair $(s,s') \in \rs{D}{\dep} \times \rs{D}{\dep}$, $f(s,s') = \insP(s,s')$ if $(s,s') \in E$; otherwise, $f(s,s') = \bot$.
\OMIT{
We give a simple example, taken from~\cite{CLPS22}, that illustrates the notion of repairing Markov chain:

\begin{figure}[t]
	\centering
	\includegraphics[width=.45\textwidth]{example-mc.pdf}
	\caption{Repairing Markov Chain}
	\label{fig:markov-chain}
\end{figure}

\begin{example}\label{exa:repairing-mc}
	Consider the database $D=\{f_1,f_2,f_3\}$ over the schema $\ins{S} = \{R/3\}$, where $f_1 = R(a_1,b_1,c_1)$, $f_2 = R(a_1,b_2,c_2)$ and $f_3 = R(a_2,b_1,c_2)$. Consider also the set $\dep = \{\phi_1,\phi_2\}$ of FDs over $\ins{S}$, where $\phi_1 = R: A \ra B$ and $\phi_2 = R: C \ra B$, assuming that $(A,B,C)$ is the tuple of attributes of $R$. 
	It is easy to see that $D \not\models \dep$. In particular, we have that $\viol{D}{\dep} = \{(\phi_1,\{f_1,f_2\}), (\phi_2,\{f_2,f_3\})\}$.
	It is easy to verify that for the edge-labeled rooted tree $T = (V,E,\ins{P})$ in Figure~\ref{fig:markov-chain}, $V = \rs{D}{\dep}$, for a non-leaf node $s$ the set of its children is $\ops{s}{D}{\dep}$, and the set of leaves coincides with $\crs{D}{\dep}$. Hence, providing that $p_1+p_2+p_3+p_4+p_5=1$, $p_6+p_7+p_8=1$ and $p_9+p_{10}+p_{11}=1$, $T$ is a $(D,\dep)$-repairing Markov chain. \hfill\markfull
\end{example}
}
The purpose of a repairing Markov chain generator is to provide a way for defining a family of repairing Markov chains independently of the database. One can design a repairing Markov chain generator $M_\dep$ once and for a database $D$, the desired $(D,\dep)$-repairing Markov chain is simply $M_\dep(D)$.

We now recall the notion of operational repair: they are candidate operational repairs obtained via repairing sequences that are {\em reachable} leaves of a repairing Markov chain, i.e., leaves with non-zero probability. The probability of a leaf is coming from the so-called leaf distribution of a repairing Markov chain. Formally, given a database $D$ and a set $\dep$ of keys, the {\em leaf distribution} of a $(D,\dep)$-repairing Markov chain $T = (V,E,\ins{P})$ is a function $\pi$ that assigns to each leaf $s$ of $T$ a number from $[0,1]$ as follows: assuming that $(s_0,s_1)$, $(s_1,s_2)$, $\ldots$, $(s_{n-1},s_n)$, where $n \geq 0$, $\varepsilon = s_0$ and $s = s_n$, is the unique path in $T$ from $\varepsilon$ to $s$, $\pi(s) = \ins{P}(s_0,s_1) \cdot \ins{P}(s_1,s_2) \cdot \cdots \cdot \ins{P}(s_{n-1},s_n)$.
The set of {\em reachable leaves} of $T$, denoted $\abs{T}$, is the set of leaves of $T$ that have non-zero probability according to the leaf distribution of $T$.

\begin{definition}[\textbf{Operational Repair}]\label{def:operational-repair}
	Given a database $D$, a set $\dep$ of keys, and a repairing Markov chain generator $M_\dep$ w.r.t.~$\dep$, an {\em (operational) repair} of $D$ w.r.t.~$M_\dep$ is a database $D' \in \copr{D}{\dep}$ such that $D' = s(D)$ for some $s \in \abs{M_\dep(D)}$. 	Let $\opr{D}{M_\dep}$ be the set of all operational repairs of $D$ w.r.t.~$M_\dep$. \hfill\markfull
\end{definition}

An operational repair may be obtainable via multiple repairing sequences that are reachable leaves of the underlying
repairing Markov chain. The probability of a repair $D'$ is calculated by summing up the probabilities of all reachable leaves $s$ so that $D' = s(D)$.

\begin{definition}[\textbf{Operational Semantics}]\label{def:operational-semantics}
	Given a database $D$, a set $\dep$ of keys, and a repairing Markov chain generator $M_\dep$ w.r.t.~$\dep$, the probability of an operational repair $D'$ of $D$ w.r.t.~$M_\dep$ is
	\[
	\probrep{D,M_\dep}{D'}\ =\ \sum\limits_{s \in \abs{M_\dep(D)} \text{ and } D'=s(D)} \pi(s),
	\]
	where $\pi$ is the leaf distribution of $M_\dep(D)$.
	The {\em operational semantics} of $D$ w.r.t.~$M_\dep$ is defined as the set of repair-probability pairs $\sem{D}_{M_\dep} =
	\left\{\left(D',\probrep{D,M_\dep}{D'}\right) \mid D' \in \opr{D}{M_\dep}\right\}$.
	\hfill\markfull
\end{definition}
}

\medskip

\noindent\paragraph{Uniform Operational CQA.} We can now 
recall the uniform operational approach to consistent query answering introduced in~\cite{CLPS22} and define the main problems of interest.
The goal is to provide a way to measure how certain we are that a candidate answer is indeed a consistent answer. A natural idea is to compute the percentage of operational repairs that entail a candidate answer, which essentially measures how often a candidate answer is an answer to the query if we evaluate it over all operational repairs. Here, we consider all the operational repairs to be equally important; hence the term ``uniform''. This leads to the notion of repair relative frequency. For a database $D$, a set $\dep$ of keys, a CQ $Q(\bar x)$, and a tuple $\bar c \in \adom{D}^{|\bar x|}$, the {\em repair relative frequency} of $\bar c$ w.r.t.~$D$, $\dep$, and $Q$ is
\begin{eqnarray*}
	\mathsf{RF}^{\ur}(D,\dep,Q,\bar c) &=& \frac{|\{D' \in \opr{D}{\dep} \mid \bar c \in Q(D')\}|}{|\opr{D}{\dep}|}.
\end{eqnarray*}

Considering all the operational repairs to be equally important is a self-justified choice. However, as discussed in~\cite{CLPS22}, this does not take into account the support in terms of the repairing process. In other words, an operational repair obtained by very few repairing sequences is equally important as an operational repair obtained by several repairing sequences. This is rectified by the notion of sequence relative frequency, which is the percentage of repairing sequences that lead to an operational repair that entails a candidate answer.
For a database $D$, a set $\dep$ of keys, a CQ $Q(\bar x)$, and $\bar c \in \adom{D}^{|\bar x|}$, the {\em sequence relative frequency} of $\bar c$ w.r.t.~$D$, $\dep$, and $Q$ is
\begin{eqnarray*}
	\mathsf{RF}^{\us}(D,\dep,Q,\bar c) &=& \frac{|\{s \in \crs{D}{\dep} \mid \bar c \in Q(s(D))\}|}{|\crs{D}{\dep}|}.
\end{eqnarray*}

\subsection{Problems of Interest}

The problems of interest in the context of uniform operational CQA, focusing on {\em primary keys}, are defined as follows: for $\star \in \{\ur,\us\}$ and a class $\mathsf{Q}$ of CQs (e.g., $\sjf$ or  $\ghw_k$ for $k > 0$),

\medskip

\begin{center}
	\fbox{\begin{tabular}{ll}
			{\small PROBLEM} : & $\ocqa^\star[\mathsf{Q}]$
			\\
			{\small INPUT} : & A database $D$, a set $\dep$ of primary keys,\\
			& a query $Q(\bar x)$ from $\mathsf{Q}$, a tuple $\bar c \in \adom{D}^{|\bar x|}$.
			\\
			{\small OUTPUT} : &  $\mathsf{RF}^\star(D,\dep,Q,\bar c)$.
	\end{tabular}}
\end{center}

\medskip

\noindent We can show that the above problems are hard, even for self-join-free CQs of bounded generalized hypertreewidth. In particular, with $\sjf \cap \ghw_k$ for $k > 0$ being the class of self-join-free CQs of generalized hypertreewidth $k$, we show that:

\def\theocqaexact{
	For every $k > 0$, $\ocqa^\star[\sjf \cap \ghw_k]$ is $\sharp ${\rm P}-hard, where $\star \in \{\ur,\us\}$.
}

\begin{theorem}\label{the:ocqa-exact}
\theocqaexact
\end{theorem}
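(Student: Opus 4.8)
The plan is to prove the stronger claim that hardness already holds in \emph{data complexity}, i.e., for one fixed query and one fixed set of primary keys. Fix the schema with relations $P/2$ and $W/2$, each with a primary key on its first position, and $E/2$ with a primary key consisting of both positions, and let $Q$ be the Boolean self-join-free query with body $P(x,\mathsf{on}),E(x,y),W(y,\mathsf{on})$. The atoms of $Q$ form the path $P-E-W$, so $Q$ is acyclic and thus $Q\in\sjf\cap\ghw_1\subseteq\sjf\cap\ghw_k$ for every $k>0$ (recall $\ghw_1$ is the class of acyclic CQs, which admit width-$\le k$ decompositions; if $\ghw_k$ is read as width exactly $k$, pad $Q$ with a fixed, always-satisfied cyclic gadget of width $k$ over permanent facts). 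Hence it suffices to prove $\sharp\mathrm{P}$-hardness for this single $Q$. Moreover, $\mathsf{RF}^\star(D,\dep,Q,())$ is a ratio whose denominator ($|\opr{D}{\dep}|$ for $\star=\ur$ and $|\crs{D}{\dep}|$ for $\star=\us$) is computable in polynomial time by~\cite{CLPS22}; multiplying an oracle answer by this denominator returns the numerator, so it is enough to show that computing the numerator is $\sharp\mathrm{P}$-hard.

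For $\ur$ I would reduce from $\sharp\mathsf{PP2DNF}$, the $\sharp\mathrm{P}$-complete problem of counting the satisfying assignments of a positive partitioned $2$-DNF $\phi=\bigvee_{(i,j)\in E_\phi}(x_i\wedge y_j)$ over variable sets $X,Y$. For a size parameter $s\ge 2$, build a database $D_s$ that contains, for each $x_i$, the block $\{P(i,\mathsf{on}),P(i,\mathsf{off}_1),\dots,P(i,\mathsf{off}_{s-1})\}$ (all facts sharing the key value $i$), symmetrically a $W$-block for each $y_j$, and the fact $E(i,j)$ for each clause $(i,j)$. Under primary keys the operational repairs factorize over blocks, and a block of size $m\ge 2$ has exactly the $m+1$ outcomes given by its $m$ singletons (each obtained by deleting the other facts) and the empty set (reached by reducing to a conflicting pair and removing both); a size-$1$ block has a single, permanent outcome. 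Hence $|\opr{D_s}{\dep}|=(s+1)^{|X|+|Y|}$, and $Q$ holds in a repair precisely when some clause $(i,j)$ retains both $P(i,\mathsf{on})$ and $W(j,\mathsf{on})$, i.e.\ when the induced assignment satisfies $\phi$. Counting the empty outcome as ``false'', a variable set true contributes one outcome while a variable set false contributes $s$ outcomes, so the numerator equals $N(s)=\sum_{\sigma\models\phi}s^{z(\sigma)}$, a polynomial of degree at most $|X|+|Y|$ in $s$, where $z(\sigma)$ is the number of false variables. Querying the oracle at $s=2,\dots,|X|+|Y|+2$, scaling by the polynomially computable denominators, and interpolating recovers $N(1)=\sharp\mathsf{PP2DNF}(\phi)$, which settles the case $\star=\ur$.

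For $\us$ the numerator counts complete repairing \emph{sequences}, and the main obstacle is the multiplicity of sequences producing a given repair: in general this multiplicity involves interleaving the per-block elimination sequences, and a block of size $>2$ may be reduced in several steps of varying length, so it does not factor over blocks and the interpolation above breaks down. I would circumvent this by fixing $s=2$: every size-$2$ block is resolved by exactly one justified operation (delete the on-fact, delete the off-fact, or delete both), so each repair of $D_2$ is reached by exactly $b!$ complete repairing sequences, where $b=|X|+|Y|$ and the factor $b!$ accounts for the orders in which the one operation per block is applied. Thus $|\crs{D_2}{\dep}|=b!\cdot 3^{b}$, the sequence-numerator is $b!$ times the repair-numerator, and the factor cancels, giving $\mathsf{RF}^{\us}(D_2,\dep,Q,())=\mathsf{RF}^{\ur}(D_2,\dep,Q,())=N(2)/3^{b}$. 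An oracle for $\mathsf{RF}^{\us}$ therefore returns the weighted count $N(2)=\sum_{\sigma\models\phi}2^{z(\sigma)}$; by complementation this equals $3^{b}$ minus $2^{b}$ times the value at activity $1/2$ of the hard-core partition function of the bipartite clause graph of $\phi$, whose exact evaluation is $\sharp\mathrm{P}$-hard, so $\mathsf{RF}^{\us}$ is $\sharp\mathrm{P}$-hard as well.

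The conceptual difficulty throughout is that operational repairs are not classical subset-repairs: the ever-present empty outcome turns the target into a \emph{weighted} model count, which the interpolation over block sizes strips away in the $\ur$ case, whereas for $\us$ the additional sequence-interleaving multiplicities are what force the size-$2$ restriction and hence the reliance on the fixed-weight hardness.
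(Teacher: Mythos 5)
Your argument for $\star=\ur$ is correct but follows a genuinely different route from the paper. The paper reduces from $\sharp H\text{-}\mathsf{Coloring}$ for a hand-picked six-vertex bipartite graph $H$ (missing exactly the edge $(1_L,1_R)$), chosen so that the three vertices per side of $H$ correspond exactly to the three outcomes of a size-2 block (keep the $1$-fact, keep the $0$-fact, keep neither); the number of repairs \emph{not} entailing the query then equals the homomorphism count directly, with no interpolation. Your reduction from $\sharp\mathsf{PP2DNF}$ instead treats the target as a weighted model count $N(s)=\sum_{\sigma\models\phi}s^{z(\sigma)}$ and strips the weight by polynomial interpolation over the block size $s$. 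Both are valid for $\ur$; the paper's choice of source problem buys a single construction with only size-2 blocks, which is precisely what makes its $\us$ case free (every repair arises from exactly $|V_G|!$ sequences, so the two relative frequencies coincide).

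The gap is in your $\us$ case. Fixing $s=2$ correctly forces $\mathsf{RF}^{\us}=\mathsf{RF}^{\ur}$ (one justified operation per block, hence a uniform $b!$ multiplicity), but it destroys the interpolation, and you are left needing that the single value $N(2)=\sum_{\sigma\models\phi}2^{z(\sigma)}=3^{b}-2^{b}\,Z_{G_\phi}(1/2)$ is itself $\sharp\mathrm{P}$-hard to compute exactly on bipartite graphs. You assert this without proof or citation, and it is not a statement one can simply read off from $\sharp\mathsf{PP2DNF}$-hardness (which is the case of activity $1$, not $1/2$). The claim is true, but establishing it requires its own reduction --- e.g., replacing each vertex of $G_\phi$ by $t$ non-adjacent twins sends $Z_{G}(1/2)$ to $Z_{G}\bigl((3/2)^{t}-1\bigr)$, preserves bipartiteness, and lets one interpolate back to $Z_G(1)=\sharp\mathsf{BIS}$ --- so as written the $\us$ half is incomplete. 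Note that this thickening can be carried out inside your own database construction: represent each variable by $t$ parallel size-2 blocks and connect all copies via $E$-facts, so that ``effectively true'' contributes $3^{t}-2^{t}$ block outcomes and ``effectively false'' contributes $2^{t}$. All blocks stay at size 2, so $\mathsf{RF}^{\us}=\mathsf{RF}^{\ur}$ still holds, and interpolating over $t$ rather than over the block size recovers $\sharp\phi$ for both semantics at once. Either that repair or an explicit proof of the activity-$1/2$ hardness is needed before the $\us$ case is complete.
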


With the above intractability result in place, the question is whether the problems of interest are approximable, i.e., whether the target ratio can be approximated via a {\em fully polynomial-time randomized approximation scheme} (FPRAS).
An FPRAS for $\ocqa^\star[\mathsf{Q}]$, where $\star \in \{\ur,\us\}$, is a randomized algorithm $\mathsf{A}$ that takes as input a database $D$, a set $\dep$ of primary keys, a query $Q(\bar x)$ from $\mathsf{Q}$, a tuple $\bar c \in \adom{D}^{|\bar x|}$, $\epsilon > 0$ and $0 < \delta < 1$, runs in polynomial time in $||D||$, $||\dep||$, $||Q||$, $||\bar c||$,\footnote{As usual, $||o||$ denotes the size of the encoding of a syntactic object $o$.} $1/\epsilon$ and $\log(1/\delta)$, and produces a random variable $\mathsf{A}(D,\dep,Q,\bar c,\epsilon,\delta)$ such that
$
\text{\rm Pr}(|\mathsf{A}(D,\dep,Q,\bar c,\epsilon,\delta) - \mathsf{RF}^\star(D,\dep,Q,\bar c)|\ \leq\ \epsilon \cdot \mathsf{RF}^\star(D,\dep,Q,\bar c))\ \geq\
1-\delta.
$
It turns out that the answer to the above question is negative, even if we focus on self-join-free CQs or CQs of bounded generalized hypertreewidth, under the widely accepted complexity assumption that ${\rm RP} \neq {\rm NP}$; RP is the complexity class of problems that are efficiently solvable via a randomized algorithm with a bounded one-sided error~\cite{arora-book}.

\def\theocqaapx{
	Unless  ${\rm RP} = {\rm NP}$, for $\star \in \{\ur,\us\}$,
\begin{enumerate}
	\item There is no FPRAS for $\ocqa^\star[\sjf]$.
	
	\item For every $k > 0$, there is no FPRAS for $\ocqa^\star[\ghw_k]$.
\end{enumerate}
}

\begin{theorem}\label{the:ocqa-apx}
\theocqaapx
\end{theorem}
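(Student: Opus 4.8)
The plan is to reduce both statements to the \textbf{positivity of the numerator}, exploiting that a multiplicative FPRAS can separate the value $0$ from any positive value: running an FPRAS for $\mathsf{RF}^\star$ with, say, $\epsilon = 1/2$ and thresholding its output at $0$ decides in randomized polynomial time whether $\mathsf{RF}^\star > 0$ (if the true value is $0$ the guarantee forces the output to be $0$ with high probability, and if it is positive the output is at least half of it). Since $\opr{D}{\dep}$ and $\crs{D}{\dep}$ are always nonempty, and since $\opr{D}{\dep} = \{s(D) \mid s \in \crs{D}{\dep}\}$, for \emph{both} $\star \in \{\ur,\us\}$ the condition $\mathsf{RF}^\star > 0$ is the \emph{same}, namely the existence of an operational repair $D' \in \opr{D}{\dep}$ with $\bar c \in Q(D')$. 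Hence it suffices to exhibit, for each of the two query classes, a polynomial-time reduction from an NP-hard problem to deciding whether some operational repair entails $Q$ at $\bar c$; this forces ${\rm NP} \subseteq {\rm BPP}$, and, invoking downward self-reducibility of the source problem to turn the resulting two-sided-error test into a one-sided one, ${\rm RP} = {\rm NP}$.

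For statement~(1) I would use \emph{consistent} databases. If $D \models \dep$ then no operation is justified, so $\crs{D}{\dep} = \{\varepsilon\}$ and $\opr{D}{\dep} = \{D\}$; consequently $\mathsf{RF}^{\ur}(D,\dep,Q,\bar c) = \mathsf{RF}^{\us}(D,\dep,Q,\bar c)$ is exactly the indicator of $\bar c \in Q(D)$. It then remains to show that Boolean evaluation of self-join-free CQs is NP-hard in combined complexity, which I would do by reducing from $3$-colorability: given $G=(V,E)$, introduce a fresh binary relation $R_e$ for each edge $e$, set each $R_e$ to the inequality relation $\{(a,b) \mid a,b \in \{1,2,3\},\ a \ne b\}$, let $\dep$ contain the full-tuple key of each $R_e$ (so $D \models \dep$), and let $Q$ be the Boolean self-join-free CQ $\bigwedge_{e=(u,v)\in E} R_e(x_u,x_v)$. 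Then $D \models Q$ iff $G$ is $3$-colorable, and the generalized hypertreewidth of $Q$ grows with the treewidth of $G$, hence is unbounded on hard instances; so the produced queries lie in $\sjf$ but in no fixed $\ghw_k$, exactly as the statement requires.

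For statement~(2) the databases must be genuinely inconsistent, since acyclic CQ evaluation is tractable in combined complexity and the hardness has to come from the repair choices. I would reduce from $3$-SAT over variables $x_1,\dots,x_n$ and clauses $C_1,\dots,C_m$. For each variable $i$, put a two-fact block $\{V(i,1),V(i,0)\}$ of a relation $V$ keyed on its first attribute, so that an operational repair keeps one value of variable $i$ or empties its block. Add a relation $\mathsf{Sat}$ keyed on all its attributes (hence consistent, retained in every repair) containing, for every clause $j$ and every triple $(a,b,c)$ satisfying $C_j$, the fact $\mathsf{Sat}(j,a,b,c)$. Let $Q$ be the Boolean CQ that, for each clause $j$ on variables $i_{j,1},i_{j,2},i_{j,3}$, conjoins $\mathsf{Sat}(j,u_{j,1},u_{j,2},u_{j,3})$ with $V(i_{j,1},u_{j,1}),V(i_{j,2},u_{j,2}),V(i_{j,3},u_{j,3})$, using variables $u_{j,t}$ fresh per clause-position. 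Since each variable block retains at most one fact, all occurrences of a variable are forced to a single value, so $Q$ holds in a repair iff that repair encodes a total satisfying assignment; thus some operational repair entails $Q$ iff the formula is satisfiable. Finally $Q$ is acyclic: the atom $\mathsf{Sat}(j,\cdot)$ covers the whole bag $\{u_{j,1},u_{j,2},u_{j,3}\}$ of clause $j$, and distinct clauses share no variables, giving a width-$1$ generalized hypertree decomposition, so $Q \in \ghw_1 \subseteq \ghw_k$ (padding with an always-satisfiable component of width $k$ if one insists on width exactly $k$).

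The two encodings are the routine parts; the step needing real care is the correctness of the $3$-SAT gadget, i.e.\ verifying that \emph{no} operational repair entails $Q$ unless it corresponds to a satisfying assignment — in particular that emptying a block by deleting both of its facts creates no spurious match — together with the check that $Q$ genuinely has generalized hypertreewidth $1$ despite its self-joins. The only other delicate point is the final complexity step: an FPRAS only yields a two-sided-error decision procedure for positivity, so to conclude ${\rm RP} = {\rm NP}$ rather than merely ${\rm NP} \subseteq {\rm BPP}$ I would appeal to the self-reducibility of the source problems (or simply record that ${\rm NP} \subseteq {\rm BPP}$ already collapses to ${\rm NP} = {\rm RP}$).
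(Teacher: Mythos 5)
Your proof is correct, and the two items deserve different verdicts. For item (1) you follow essentially the paper's route: a consistent database (empty set of keys), a reduction from $3$-colorability producing a self-join-free query of unbounded width, and the observation that an FPRAS would place the positivity question in $\mathrm{BPP}$, hence $\mathrm{NP}=\mathrm{RP}$; your single inequality relation per edge versus the paper's pair $C^{u,v},C^{v,u}$ is immaterial. For item (2), however, you take a genuinely different route. The paper gives an \emph{approximation-preserving} reduction from $\sharp\mathsf{MON2SAT}$: on its hard instances the numerator is always strictly positive (every monotone 2CNF is satisfiable), so the hardness there is about approximating the magnitude of $\mathsf{RF}^\star$ and positivity is never the issue. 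You instead make positivity itself $\mathrm{NP}$-hard for acyclic CQs with self-joins, via a $3$-SAT gadget in which the per-block choice of an operational repair encodes a truth value and the self-joins on $V$ enforce consistency across clauses; your gadget is sound (an emptied block kills every $V(i,\cdot)$ atom mentioning it, so no spurious matches arise), the query is of width $1$ since distinct clauses share no query variables and each clause's bag is covered by its $\mathsf{Sat}$ atom, and padding with a fresh width-$k$ component handles the exact-width requirement just as the paper's clique on the $E$-atoms does. Both arguments suffice under $\mathrm{RP}\neq\mathrm{NP}$: the paper's buys the stronger conclusion that inapproximability persists even under a promise of positivity, while yours is more elementary and disposes of $\ur$ and $\us$ in one stroke by noting that positivity of $\mathsf{RF}^{\ur}$ and $\mathsf{RF}^{\us}$ is the same condition (since $\opr{D}{\dep}=\{s(D)\mid s\in\crs{D}{\dep}\}$), whereas the paper verifies separately that the two frequencies coincide on its instances.
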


After a quick inspection of the proof of the above result, one can verify that the CQ used in the proof of item (1) is of unbounded generalized hypertreewidth, whereas the CQ used in the proof of item (2) is with self-joins. Hence, the key question that comes up is whether we can provide an FPRAS if we focus on self-join-free CQs of bounded generalized hypertreewidth.
The main result of this work provides an affirmative answer to this question:

\def\themainfpras{
	For every $k > 0$, $\ocqa^\star[\sjf \cap \ghw_k]$ admits an FPRAS, where $\star \in \{\ur,\us\}$.
}

\begin{theorem}\label{the:main-fpras}
\themainfpras
\end{theorem}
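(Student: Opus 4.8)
The plan is to follow the roadmap sketched in the introduction: reduce the approximation of each ratio to the approximate counting of its numerator, and then place these counting problems in $\spantl$. For $\star = \ur$ the denominator of the ratio is $|\opr{D}{\dep}|$ and for $\star = \us$ it is $|\crs{D}{\dep}|$, both of which are computable exactly in polynomial time by~\cite{CLPS22} and are positive. Since dividing a quantity that admits an FPRAS by a positive quantity computable exactly in polynomial time again yields an FPRAS, with the relative error preserved, it suffices to exhibit an FPRAS for the two numerators
\[
N_{\ur} \ =\ |\{D' \in \opr{D}{\dep} \mid \bar c \in Q(D')\}| \qquad\text{and}\qquad N_{\us} \ =\ |\{s \in \crs{D}{\dep} \mid \bar c \in Q(s(D))\}|.
\]
By the earlier result that every function in $\spantl$ admits an FPRAS, the whole task collapses to showing that, for each fixed $k$, the maps $(D,\dep,Q,\bar c) \mapsto N_{\ur}$ and $(D,\dep,Q,\bar c) \mapsto N_{\us}$, with $Q \in \sjf \cap \ghw_k$, belong to $\spantl$; the real work is therefore the design of a suitable alternating Turing machine with output.

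For the membership of $N_{\ur}$, I would first compute, in polynomial time, a (generalized) hypertree decomposition $(T,\chi,\lambda)$ of $Q$ of width bounded by a constant depending only on $k$ (bounded generalized hypertreewidth entails bounded hypertreewidth, and a hypertree decomposition of bounded width is polynomial-time computable). The alternating machine then traverses $T$ in the style of the classical $\mathsf{LOGCFL}$/$\mathsf{SAC}^1$ evaluation of bounded-width conjunctive queries~\cite{GoLS02}: using universal branching to split over the children of a decomposition node and existential branching to guess, within the bag $\lambda(v)$, the facts realizing a homomorphism $h$ with $h(\bar x) = \bar c$, it runs in logarithmic space with a computation tree of polynomial size, as required by $\spantl$. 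The output tree emitted along an accepting computation encodes the operational repair $D'$ rather than the witness $h$: for primary keys an operational repair amounts to selecting, in each conflict block, either a surviving fact or the empty choice, so the machine records the block selections forced by the guessed witness together with additional branching ranging over the free, query-irrelevant, blocks. Self-join-freeness is what makes this encoding faithful, since each query atom is matched to a fact of a distinct relation and no two atoms compete for the same block; the encoding is arranged so that distinct satisfying repairs produce distinct output trees and each such repair produces a unique canonical accepted output, whence the number of distinct accepted outputs equals $N_{\ur}$.

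The membership of $N_{\us}$ uses the same decomposition-driven skeleton, but now the objects to be counted are complete repairing sequences, not repairs, so the output tree must encode a sequence $s$ — its per-block final selections, and the interleaving and per-step branching that distinguish one $(D,\dep)$-repairing sequence from another — while the condition $\bar c \in Q(s(D))$ is still verified only existentially on the result $s(D)$. This is precisely where the idea of~\cite{BrMe23} for encoding a database along a hypertree decomposition must be adapted non-trivially, since there one encodes a static database, whereas here one must encode the dynamic repairing process and recover the query-relevant content of $s(D)$ from it.

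I expect the main obstacle to be establishing the exact bijection between accepted outputs and the combinatorial objects being counted. The danger is two-sided: a single repair can satisfy $Q(\bar c)$ through many homomorphisms (for $\ur$), and a single repair can be produced by many repairing sequences (for $\us$), so a naive machine that exposes the guessed homomorphism in its output would badly overcount; conversely, one must not drop any object. Designing the canonical output so that the query verification stays existential, and thus invisible to the identity of the output, while the per-block selections and, for $\us$, the full sequence structure are recorded exactly, all within the logarithmic-space and polynomial-tree-size budget of $\spantl$, is the crux. Both syntactic restrictions are essential here and cannot be dropped: Theorem~\ref{the:ocqa-apx} shows that abandoning self-join-freeness or bounded generalized hypertreewidth destroys approximability, and these are exactly the features that, respectively, prevent different atoms from correlating over a shared block and keep the decomposition-guided computation within the $\spantl$ resource bounds.
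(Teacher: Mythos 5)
Your overall architecture is exactly the paper's: compute the denominators $|\opr{D}{\dep}|$ and $|\crs{D}{\dep}|$ exactly in polynomial time via~\cite{CLPS22}, reduce to an FPRAS for the two numerators, and obtain that by placing the corresponding counting problems in $\spantl$ (modulo the polynomial-time computation of a decomposition of width at most $3k+1$). For the uniform-repairs numerator your sketch is essentially the paper's procedure $\mathsf{Rep}[k]$, and your observation that the output must record the per-block selections while the homomorphism guess stays existential (hence invisible in the output) is the right one.

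The genuine gap is in the uniform-sequences case, and it is precisely the point you defer as ``the crux'' without resolving it. A complete $(D,\dep)$-repairing sequence is an ordered interleaving of operations drawn from different blocks, where an operation on a block may remove one fact or a pair; a logspace machine that processes blocks one after another along the decomposition cannot remember the history of operations already applied to a block, nor can it directly enumerate the interleavings across blocks. Saying that the output ``must encode the interleaving and per-step branching'' does not tell you how to produce, for each internally guessed repair $D'$, exactly $|\{s \in \crs{D}{\dep} \mid s(D)=D'\}|$ distinct accepted output trees within the $\spantl$ resource bounds. The paper's solution is a concrete device: guess the total length $N$ of the sequence up front; for each block output only a \emph{template} path of operation shapes ($-1$ or $-2$) paired with an integer identifier $p$ ranging over the number of concrete operations of that shape available at that step; and account for the interleavings by multiplying in the multinomial coefficient, realised as a product of binomials $\binom{b}{b'}$ whose guessed value $p \in \left[\binom{b}{b'}\right]$ is emitted bit by bit, using that iterated multiplication and division are computable in logspace. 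Without this, or an equivalent mechanism, the span of your machine counts repairs rather than sequences. A secondary, smaller omission: well-behavedness of the ATOs (every decomposition vertex emitting at least one output node, and a bounded number of universal configurations on labeled-free paths) is not automatic; this is why the paper first normalises $(D,Q,H)$ to a strongly complete, $2$-uniform decomposition and invokes closure of $\spantl$ under logspace reductions.
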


\subsection{Plan of Attack}

For a database $D$ and a set $\dep$ of primary keys, both $|\opr{D}{\dep}|$ and $|\crs{D}{\dep}|$, that is, the denominators of the ratios in question, can be computed in polynomial time~\cite{CLPS22}.
Therefore, to establish Theorem~\ref{the:main-fpras}, it suffices to show that the numerators of the ratios can be efficiently approximated. In fact, it suffices to show that the following auxiliary counting problems admit an FPRAS.
For $k>0$,

\medskip

\begin{center}
	\fbox{\begin{tabular}{ll}
			{\small PROBLEM} : & $\sharp\mathsf{Repairs}[k]$
			\\
			{\small INPUT} : & A database $D$, a set $\dep$ of primary keys,\\
			& a query $Q(\bar x)$ from $\sjf$, a generalized hypetree\\ & decomposition of $Q$ of width $k$, $\bar c \in \adom{D}^{|\bar x|}$.
			\\
			{\small OUTPUT} : &  $|\{D' \in \opr{D}{\dep} \mid \bar c \in Q(D')\}|$.
	\end{tabular}}
\end{center}

\smallskip

\begin{center}
	\fbox{\begin{tabular}{ll}
			{\small PROBLEM} : & $\sharp\mathsf{Sequences}[k]$
			\\
			{\small INPUT} : & A database $D$, a set $\dep$ of primary keys,\\
			& a query $Q(\bar x)$ from $\sjf$, a generalized hypetree\\ & decomposition of $Q$ of width $k$, $\bar c \in \adom{D}^{|\bar x|}$.
			\\
			{\small OUTPUT} : &  $|\{s \in \crs{D}{\dep} \mid \bar c \in Q(s(D))\}|$.
	\end{tabular}}
\end{center}

\medskip

Note that the above problems take as input, together with the CQ, a generalized hypertree decomposition of it, whereas the main problems of interest $\ocqa^\star[\sjf \cap \ghw_k]$, for $\star \in \{\ur,\us\}$, take as input only the CQ. Despite this mismatch, the existence of an FPRAS for $\sharp\mathsf{Repairs}[k]$ and $\sharp\mathsf{Sequences}[k]$, for every $k > 0$, implies Theorem~\ref{the:main-fpras}. This is due to the well-known fact that, given a CQ $Q$ from $\ghw_k$, although it is hard to compute a generalized hypertree decomposition of $Q$ of width $k$, we can compute in polynomial time a generalized hypertree decomposition of $Q$ of width $\ell$, where $k \leq \ell \leq 3k + 1$~\cite{AdGG07,GoLS02}.
%
Hence, our task is to show that $\sharp\mathsf{Repairs}[k]$ and $\sharp\mathsf{Sequences}[k]$, for $k > 0$, admit an FPRAS. 
%
To this end, we are going to introduce the novel counting complexity class $\spantl$, 
show that each counting problem in $\spantl$ admits an FPRAS, and place $\sharp\mathsf{Repairs}[k]$ and $\sharp\mathsf{Sequences}[k]$, for every $k>0$, in $\spantl$.
%

\OMIT{
 More precisely, for a database $D$, a set $\dep$ of keys, a query $Q(\bar x)$, and a tuple $\bar c \in \adom{D}^{|\bar x|}$, the {\em repair relative frequency} of $\bar c$ being an answer to $Q$ over some operational repair of $D$ is defined as
\begin{eqnarray*}
	\probrep{}{D,M_\dep,Q,\bar c} &=& \sum\limits_{(D',p) \in \sem{D}_{M_\dep} \textrm{~and~} \bar c \in
		Q(D')} p.
\end{eqnarray*} 
We can now talk about operational consistent answers. In particular, the set of {\em operational consistent answers} to $Q$ over $D$ according to $M_\dep$ is defined as the set $\big\{\big(\bar c, \probrep{}{D,M_\dep,Q,\bar c}\big) \mid \bar c \in \adom{D}^{|\bar x|}\big\}$.

The problem of interest in this context is defined as follows:

\medskip

\begin{center}
	\fbox{\begin{tabular}{ll}
			{\small PROBLEM} : & $\ocqa$
			\\
			{\small INPUT} : & A database $D$, a set $\dep$ of primary keys,\\
			& a repairing Markov chain generator $M_\dep$ w.r.t.~$\dep$,\\
			& a query $Q(\bar x)$, and a tuple $\bar c \in \adom{D}^{|\bar x|}$.
			\\
			{\small OUTPUT} : &  $\probrep{M_\dep,Q}{D,\bar c}$.
	\end{tabular}}
\end{center}
}
\section{A Novel Complexity Class}\label{sec:spantl}

We proceed to introduce the counting complexity class $\spantl$ ($\mathsf{T}$ stands for tree and $\mathsf{L}$ for logspace). Its definition relies on alternating Turing machines with output. 
We define an output of an alternating Turing machine $M$ on input $w$ as a node-labeled rooted tree, where its nodes are labeled with (finite) strings over the alphabet of the machine, obtained from a computation of $M$ on $w$.

\subsection{Alternating Turing Machines with Output}

We consider alternating Turing machines with a read-only input tape, a read-write working tape, and a write-only one-way labeling tape. Furthermore, some of the states of the machine are classified as labeling states.
The labeling tape and states, which are  rather non-standard in the definition of a Turing machine, should be understood as auxiliary constructs that allow us to build the strings that label the nodes of an output. In particular, we use a standard alternating Turing machine that, whenever it enters a labeling state, it produces a node of the output with label the content of the labeling tape, which is then erased.
The formal definition follows.

\begin{definition}[\textbf{Model}]\label{def:atm}
	An \emph{alternating Turing machine with output} (ATO) $M$ is a tuple 
	$
	(S,\Lambda,s_\text{\rm init},s_\text{\rm accept},s_\text{\rm reject},S_\exists,S_\forall,S_{L},\delta),
	$ 
	where
	\begin{itemize}
		\item[-] $S$ is the finite set of states of $M$,
		\item[-] $\Lambda$ is a finite set of symbols, the alphabet of $M$, including the symbols $\bot$ (blank symbol) and $\triangleright$ (left marker),
		\item[-] $s_\text{\rm init} \in S$ is the initial state of $M$,
		\item[-] $s_\text{accept},s_\text{reject} \in S$ are the accepting and rejecting states of $M$, respectively,
		\item[-] $S_\exists,S_\forall$ are the existential and universal states of $M$, respectively, and they form a partition of $S \setminus \{s_\text{\rm accept},s_\text{\rm reject}\}$,
		\item[-] $S_L \subseteq S$ are the labeling states of $M$ including $s_\text{\rm init}$,
		\item[-] $\delta : (S \setminus \{s_\text{\rm accept},s_\text{\rm reject}\}) \times \Lambda \times \Lambda \rightarrow \PS(S \times \{-1,0,+1\} \times \{-1,0,+1\} \times \Lambda \times \Lambda^*)$ is the transition function of $M$. \hfill\markfull
	\end{itemize}
\end{definition}

Fix an ATO $M = (S,\Lambda,s_\text{\rm init},s_\text{\rm accept},s_\text{\rm reject},S_\exists,S_\forall,S_{L},\delta)$.
A {\em configuration} of $M$ is a tuple $C = (s,x,y,z,h_x,h_y)$, where $s \in S$, $x,y,z$ are strings of $\Lambda^*$, and $h_x,h_y$ are positive integers. If $M$ is in configuration $C$, then the input (resp., working, labeling) tape contains the infinite string $x \bot \bot \cdots$ (resp., $y \bot \bot \cdots$, $z \bot \bot \cdots$) and the cursor of the input (resp., working) tape points to the cell $h_x$, (resp., $h_y$). As usual, we use the left marker, which means that $x$ and $y$ are always starting with $\triangleright$. Moreover, $\delta$ is restricted in such a way that $\triangleright$ occurs exactly once in $x$ and $y$ and always as the first symbol.

Assume now that $M$ is in configuration $C = (s,x,y,z,h_x,h_y)$, where $x[h_x] = \alpha$ and $y[h_y] = \beta$, and assume that $\delta(s,\alpha,\beta) = \{(s_i,d_i,d_i',\beta_i,z_i)\}_{i \in [n]}$ for some $n > 0$. Then, in one step, $M$ enters the configurations $C_i = (s_i,x,y'_i,z_i',h_x',h_{y'_i})$ for $i \in [n]$, where $y'_i$ is obtained from $y$ by replacing $y[h_y] = \beta$ with $\beta_i$, $z_i' = z_i$ (resp., $z_i' = z \cdot z_i$) if $s \in S_L$ (resp., $s \not\in S_L$), $h_x' = h_x + d_i$, and $h_{y'_i} = h_y + d_i'$. In this case we write $C \ra_M \{C_1,\ldots,C_n\}$. Note that the cursors of the input and the working tapes cannot move left of the symbol $\triangleright$. This is achieved by restricting the transition function.

The ATO $M$ receives an input $w = w_1 \cdots w_n \in (\Lambda \setminus \{\bot,\triangleright\})^*$. The {\em initial configuration} of $M$ on input $w$ is $(s_\text{\rm init},\triangleright w, \triangleright, \epsilon,1,1)$. We call a configuration of $M$ {\em accepting} (resp., {\em rejecting}) if its state is $s_\text{\rm accept}$ (resp., $s_\text{\rm reject}$). We further call $C$ {\em existential} (resp., {\em universal}, {\em labeling}) if its state belongs to $S_{\exists}$ (resp., $S_\forall$, $S_L$).
We proceed to define the notion of computation of an ATO.

\begin{definition}[\textbf{Computation}]\label{def:computation}
	A {\em computation} of $M$ on input $w \in (\Lambda \setminus \{\bot,\triangleright\})^*$ is a finite node-labeled rooted tree $T = (V,E,\lambda)$, where $\lambda$ assigns to the nodes of $T$ configurations of $M$, such that:
	\begin{enumerate}
		\item[-] if $v \in V$ is the root node of $T$, then $\lambda(v)$ is the initial configuration of $M$ on input $w$,
		
		\item[-] if $v \in V$ is a leaf node of $T$, then $\lambda(v)$ is either an accepting or a rejecting configuration of $M$, and
		
		\item[-] if $v \in V$ is a non-leaf node of $T$ with $\lambda(v) \ra_M \{C_1,\ldots,C_n\}$ for some $n > 0$, then: $\lambda(v)$ is existential implies $v$ has exactly one child $u \in V$ with $\lambda(u) = C_i$ for some $i \in [n]$, and $\lambda(v)$ is universal implies $v$ has $n$ children $u_1,\ldots,u_n \in V$ with $\lambda(u_i) = C_i$ for each $i \in [n]$.
	\end{enumerate}
	We say that $T$ is {\em accepting} if all its leaves are labeled with accepting configurations of $M$; otherwise, it is {\em rejecting}. \hfill\markfull 
\end{definition}

It remains to define the notion of output of $M$. To this end, we need some auxiliary notions.
A path $v_1,\ldots,v_n$, for some $n \geq 0$, in a computation $T = (V,E,\lambda)$ of $M$ is called {\em labeled-free} if $\lambda(v_i)$ is not a labeling configuration for each $i \in [n]$.
The {\em output} of a computation $T = (V,E,\lambda)$ of $M$ on input $w \in (\Lambda \setminus \{\bot,\triangleright\})^*$ is the node-labeled rooted tree $O = (V',E',\lambda')$, where $\lambda' : V' \ra \Lambda^*$, such that:
\begin{itemize}
	\item[-] $V' = \{v \in V \mid \lambda(v) \text{ is a labeling configuration of } M\}$,
	\item[-] $E' = \{(u,v) \mid u \text{ reaches } v \text{ in } T \text{ via a } \text {labeled-free path}\}$,
	
	\item[-] for every $v' \in V'$, $\lambda'(v') = z$ assuming that $\lambda(v')$ is of the form $(\cdot,\cdot,\cdot,z,\cdot,\cdot)$.
\end{itemize}
We are now ready to define the output of $M$ on a certain input.

\begin{definition}[\textbf{Output}]\label{def:output}
	A node-labeled rooted tree $O$ is an {\em output} of $M$ on input $w$ if $O$ is the output of some computation $T$ of $M$ on input $w$. We further call $O$ a {\em valid output} if $T$ is accepting. \hfill\markfull
\end{definition}

\OMIT{
\begin{definition}[\textbf{Output}]\label{def:output}
	%
	The {\em output} of a computation $T = (V,E,\lambda)$ of $M$ on input $w \in (\Lambda \setminus \{\bot,\triangleright\})^*$ is the node-labeled rooted tree $T' = (V',E^\star,\lambda')$, where $\lambda' : V' \ra \Lambda^*$, such that:
	\begin{itemize}
		\item[-] if $v' \in V'$ is the root of $T'$, then $\lambda'(v') = \epsilon$, and
		
		\item[-] if $v' \in V'$, then $v'$  has $n$ children $u_1,\ldots,u_n$, for some $n \geq 0$, with $\lambda'(u_i) = z_i$ for each $i \in [n]$, where $v_1,\ldots,v_n$ are all the nodes of $T$ labeled with a configuration of the form $(q_i,\cdot,\cdot,z_i,\cdot,\cdot,\cdot)$, where $q_i \in Q_L$, that are reachable from the root of $T$ via a (possibly empty) label-free path.
	\end{itemize}
	We say that $T'$ is a {\em valid output} of $M$ on input $w$ if $T$ is an accepting configuration of $M$ on input $w$. \hfill\markfull
\end{definition}
}

\OMIT{
\subsection{Alternating Turing Machines with Output}

We consider standard alternating Turing machines with a read-only input tape, a read-write working tape, and a write-only output tape. The formal definition follows.

\begin{definition}[\textbf{Alternating Turing Machine}]\label{def:atm}
An \emph{alternating Turing machine with output} (ATO) $M$ is a tuple 
\[
(Q,\dep,q_\text{\rm init},q_\text{\rm accept},q_\text{\rm reject},Q_\exists,Q_\forall,Q_{L},\delta),
\] 
where
	\begin{itemize}
		\item[-] $Q$ is the finite set of states of $M$,
		\item[-] $\Lambda$ is a finite set of symbols, the {\em alphabet} of $M$, including the symbols $\bot$ (blank symbol) and $\triangleright$ (left marker),
		\item[-] $q_\text{\rm init} \in Q$ is the initial state of $M$,
		\item[-] $q_\text{accept},q_\text{reject} \in Q$ are the accepting and rejecting states of $M$, respectively,
		\item[-] $Q_\exists,Q_\forall$ are the existential and universal states of $M$, respectively, and they form a partition of $Q \setminus \{q_\text{\rm accept},q_\text{\rm reject}\}$,
		\item[-] $Q_L$ are the labelling states of $M$ including $q_\text{\rm init}$,
		\item[-] $\delta : (Q \setminus \{q_\text{\rm accept},q_\text{\rm reject}\}) \times \Lambda \times \Lambda \rightarrow \PS(Q \times \{-1,0,+1\} \times \{-1,0,+1\} \times \Lambda \times \Lambda)$ is the transition function of $M$. \hfill\markfull
	\end{itemize}
\end{definition}

Fix an ATO $M = (Q,\dep,q_\text{\rm init},q_\text{\rm accept},q_\text{\rm reject},Q_\exists,Q_\forall,Q_{L},\delta)$.
A {\em configuration} of $M$ is a tuple $C = (q,x,y,z,h_x,h_y,h_z)$, where $q \in Q$, $x,y,z$ are strings of $\Lambda^*$, and $h_x,h_y,h_z$ are positive integers. If $M$ is in configuration $C$, then the input (resp., working, output) tape contains the infinite string $x \bot \bot \cdots$ (resp., $y \bot \bot \cdots$, $z \bot \bot \cdots$) and the cursor points to the cell $h_x$ (resp., $h_y$, $h_z$). As usual, we use the left marker, which means that $x$ and $z$ are always starting with $\triangleright$. Moreover, the transition function $\delta$ is restricted in such a way that $\triangleright$ occurs exactly once in $x$ and $z$ and always as the first symbol.

Assume now that $M$ is in configuration $C = (q,x,y,z,h_x,h_y,h_z)$, where $x[h_x] = \alpha$ and $y[h_y] = \beta$, and assume that $\delta(q,\alpha,\beta) = \{(q_i,d_i,d_i',\beta_i,\gamma_i)\}_{i \in [n]}$ for some $n > 0$. Then, in one step, $M$ enters the configurations $C_i = (q_i,x,y_i,z_i,h_x',h_{y_i},h_{z_i})$ for $i \in [n]$, where $y_i$ is obtained from $y$ by replacing $y[h_y] = \beta$ with $\beta_i$, $z_i = \gamma_i$ (resp., $z_i = z\gamma_i$) if $q \in Q_L$ (resp., $q \not\in Q_L$), $h_x' = h_x + d_i$, $h_{y_i} = h_y + d_i'$, and $h_{z_i} = 1$ (resp., $h_{z_i} = h_z + 1$) if $q \in Q_L$ (resp., $q \not\in Q_L$). In this case we write $C \ra_M \{C_1,\ldots,C_n\}$. Note that the cursors of the input and the working tapes cannot move left of the symbol $\triangleright$. This is achieved by restricting the transition function.

The ATO $M$ receives an input $w = w_1 \cdots w_n \in (\Lambda \setminus \{\bot,\triangleright\})^*$. The {\em initial configuration} of $M$ on input $w$ is $(q_\text{\rm init},\triangleright w, \triangleright, \epsilon,1,1,1)$. We call a configuration of $M$ {\em accepting} (resp., {\em rejecting}) if its state is $q_\text{\rm accept}$ (resp., $q_\text{\rm reject}$). We further call $C$ {\em existential} (resp., {\em universal}, {\em labeling}) if its state belongs to $Q_{\exists}$ (resp., $Q_\forall$, $Q_L$).
We proceed to define the notion of computation of an ATO.

\begin{definition}[\textbf{Computation}]\label{def:computation}
A {\em computation} of $M$ on input $w \in (\Lambda \setminus \{\bot,\triangleright\})^*$ is a node-labeled rooted tree $T = (V,E,\lambda)$, where $\lambda$ assigns to the nodes of $T$ configurations of $M$, such that:
\begin{enumerate}
	\item[-] if $v \in V$ is the root node of $T$, then $\lambda(v)$ is the initial configuration of $M$ on input $w$,
	
	\item[-] if $v \in V$ is a leaf node of $T$, then $\lambda(v)$ is either an accepting or a rejecting configuration of $M$, and
	
\item[-] if $v \in V$ is an internal node of $T$ with $\lambda(v) \ra_M \{C_1,\ldots,C_n\}$ for $n > 0$, then $\lambda(v) \in Q_\exists$ implies $v$ has exactly one child $u \in V$ with $\lambda(u) = C_i$ for some $i \in [n]$, and $\lambda(v) \in Q_\forall$ implies $v$ has $n$ children $u_1,\ldots,u_n \in V$ with $\lambda(u_i) = C_i$ for $i \in [n]$.
\end{enumerate}
We say that $T$ is {\em accepting} if all its leaves are labeled with accepting configurations of $M$; otherwise, it is {\em rejecting}. \hfill\markfull 
\end{definition}

It remains to define the notion of output of $M$. A path $v_1,\ldots,v_n$, for some $n > 0$, in a computation $T = (V,E,\lambda)$ of $M$ is called {\em labeled-free} if $\lambda(v_i)$ is not a labeling configuration for each $i \in [n]$.

\begin{definition}[\textbf{Output}]\label{def:output}
%
The {\em output} of a computation $T = (V,E,\lambda)$ of $M$ on input $w \in (\Lambda \setminus \{\bot,\triangleright\})^*$ is the node-labeled rooted tree $T' = (V',E',\lambda')$, where $\lambda' : V' \ra \Lambda^*$, such that:
\begin{itemize}
	\item[-] if $v' \in V'$ is the root of $T'$, then $\lambda'(v') = \epsilon$, and
	
	\item[-] if $v' \in V'$ is a node other that the root of $T'$, then $v'$ has $n$ children $u_1,\ldots,u_n$, for some $n \geq 0$, with $\lambda'(u_i) = z_i$ for each $i \in [n]$, where $v_1,\ldots,v_n$ are all the nodes of $T$ labeled with a configuration of the form $(q_i,\cdot,\cdot,z_i,\cdot,\cdot,\cdot)$, where $q_i \in Q_L$, that are reachable from the root of $T$ via a (possibly empty) label-free path.
\end{itemize}
We say that $T'$ is a {\em valid output} of $M$ on input $w$ if $T$ is an accepting configuration of $M$ on input $w$. \hfill\markfull
\end{definition}
}

\subsection{The Complexity Class $\mathsf{SpanTL}$}

We now proceed to define the complexity class $\mathsf{SpanTL}$ and establish that each of its problems admits an efficient approximation scheme. To this end, we need to introduce the notion of well-behaved ATO, that is, an ATO with some resource usage restrictions.
We require (i) each computation to be of polynomial size, (ii) the working and labeling tapes to use logarithmic space, and (iii) each labeled-free path to have a bounded number of universal configurations.

\begin{definition}[\textbf{Well-behaved ATO}]\label{def:well-behaved}
Consider an ATO $M = (S,\Lambda,s_\text{\rm init},s_\text{\rm accept},s_\text{\rm reject},S_\exists,S_\forall,S_{L},\delta)$. We call $M$ {\em well-behaved} if there is a polynomial function $\text{\rm pol} : \mathbb{N} \ra \mathbb{N}$ and an integer $k \geq 0$ such that, for every string $w = w_1 \cdots w_n \in (\Lambda \setminus \{\bot,\triangleright\})^*$ and computation $T = (V,E,\lambda)$ of $M$ on input $w$, the following hold:
\begin{itemize}
	\item[-] $|V| \in O(\text{\rm pol}(n))$,
	\item[-] for each $v \in V$ with $\lambda(v)$ being of the form $(\cdot,\cdot,y,z,\cdot,\cdot)$, $|y|,|z| \in O(\log(n))$, and
	\item[-] for every labeled-free path $v_1,\ldots,v_m$ of $T$, for $m \geq 0$, $|\{v_i \mid i \in [m]~~\text{and}~~\lambda(v_i) \text{ is universal}\}| \leq k$. \hfill\markfull
\end{itemize}
\end{definition}

Let $\mathsf{span}_M : (\Lambda \setminus \{\bot,\triangleright\})^* \rightarrow \mathbb{N}$ be the function that assigns to each $w \in (\Lambda \setminus \{\bot,\triangleright\})^*$ the number of distinct valid outputs of $M$ on input $w$, i.e., $\mathsf{span}_M(w) = |\{T \mid T \text{ is a valid output of } M \text{ on } w\}|$. Note that whenever we use $\mathbb{N}$ as the codomain of such a function, we assume that $0 \in \mathbb{N}$.
We finally define the complexity class
\[
\mathsf{SpanTL}\ =\ \left\{\mathsf{span}_M \mid M \text{ is a well-behaved ATO}\right\}.
\]

A property of $\mathsf{SpanTL}$, which will be useful for our later development, is the fact that it is {\em closed under logspace reductions}. This means that, given two functions $f : \Lambda_1^* \rightarrow \mathbb{N}$ and $g : \Lambda_2^* \rightarrow \mathbb{N}$, for some alphabets $\Lambda_1$ and $\Lambda_2$, if there is a logspace computable function $h : \Lambda_1^* \rightarrow \Lambda_2^*$ with $f(w) = g(h(w))$ for all $w \in \Lambda_1^*$, and $g$ belongs to $\spantl$, then $f$ also belongs to $\spantl$.
	
\def\prologspaceclosure{
	$\spantl$ is closed under logspace reductions.
}

\begin{proposition}\label{pro:logspace-closure}
\prologspaceclosure
\end{proposition}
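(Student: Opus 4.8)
The plan is to prove closure by a direct simulation. Suppose $g = \mathsf{span}_{M'}$ for a well-behaved ATO $M'$ over input alphabet $\Lambda_2$, and let $h : \Lambda_1^* \to \Lambda_2^*$ be logspace computable with $f(w) = g(h(w))$. I would build a well-behaved ATO $M$ over $\Lambda_1$ so that, on every input $w$, the computations of $M$ on $w$ are in bijection with the computations of $M'$ on $h(w)$ in a way that preserves outputs and acceptance. This at once gives that the sets of valid outputs coincide, so $\mathsf{span}_M(w) = \mathsf{span}_{M'}(h(w)) = f(w)$ and hence $f \in \spantl$.

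The machine $M$ simulates $M'$ on the virtual input $h(w)$. Since $h$ is logspace computable, $|h(w)|$ is polynomial in $|w|$, so $M'$'s input-head position fits in a counter of $O(\log|w|)$ bits; and because $M'$ is well-behaved, the contents of $M'$'s working and labeling tapes have length $O(\log|h(w)|) = O(\log|w|)$. Thus $M$ can keep on its own working tape a faithful copy of $M'$'s working tape together with $M'$'s input-head position, all within logarithmic space, while mirroring each of $M'$'s labeling writes on $M$'s own labeling tape. The only obstacle is that $M$ cannot read $h(w)$ directly, as its input tape holds $w$. Whenever the simulation needs the symbol $h(w)[p]$, $M$ runs, as a deterministic subroutine, the logspace transducer for $h$ on $w$, counting (and discarding) the output symbols until position $p$ is produced; this uses only the transducer's logarithmic workspace plus an $O(\log|w|)$ output-position counter, i.e.\ the standard composition of logspace computations.

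The key is to insert these subroutines without disturbing the output. I would make every subroutine step (reading a symbol of $h(w)$, navigating the stored working tape, updating counters after a transition) use an \emph{existential} state with a single successor, \emph{not} in $S_L$, and writing the empty string to the labeling tape. Each step of $M'$ is then simulated by a block of such deterministic steps followed by one ``ready'' step that applies the transition $\delta'(s',\alpha,\beta)$, where $\alpha = h(w)[p]$ and $\beta$ are the symbols just computed and held in the finite control; the ready step inherits the type of $s'$ (existential, universal, or labeling) and performs the corresponding branching and labeling write, reproducing $M'$'s writes exactly. Because the subroutine configurations are neither labeling nor universal, items (i) and (ii) hold: (i) the labeling configurations of $M$ are precisely the ready configurations that simulate labeling configurations of $M'$, joined by the same labeled-free paths and carrying the same labels, so each computation of $M$ produces the very node-labeled tree produced by the corresponding computation of $M'$; and (ii) every labeled-free path of $M$ contains exactly as many universal configurations as the corresponding labeled-free path of $M'$. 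The initial configuration of $M$, whose state lies in $S_L$ by definition, serves as the root node labeled $\epsilon$, exactly as the initial configuration of $M'$ does.

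Finally I would verify well-behavedness. The block-expansion is a bijection between computations of $M'$ on $h(w)$ and of $M$ on $w$ (the deterministic subroutines are forced, so only the ready steps branch, matching $M'$), it sends accepting computations to accepting ones, and by items (i)--(ii) it sends a computation to one with an identical output; hence the valid outputs coincide. For the resource bounds: each $M'$-step expands into polynomially many $M$-steps (the transducer halts in polynomial time), and $M'$ has polynomial-size computations on $h(w)$, of size polynomial in $|h(w)|$ and hence in $|w|$, so $M$ has polynomial-size computations on $w$; the working and labeling tapes stay logarithmic as noted; and the per-labeled-free-path universal bound of $M$ equals that of $M'$. Thus $M$ is well-behaved and $f = \mathsf{span}_M \in \spantl$. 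I expect the main obstacle to be exactly items (i)--(ii): arranging the on-the-fly recomputation of $h(w)$ so that the extra deterministic steps leave both the output tree (its structure and node labels) and the universal-configuration count untouched; once the subroutine states are pinned down as deterministic, non-labeling, existential, and writing $\epsilon$ to the labeling tape, the remainder is bookkeeping.
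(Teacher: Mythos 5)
Your proposal is correct and follows essentially the same route as the paper: simulate $M'$ on the virtual input $h(w)$ by recomputing the $i$-th symbol of $h(w)$ on demand via the standard logspace-composition trick, keeping an input-head counter, and making all subroutine steps deterministic, existential, and non-labeling so that outputs, acceptance, and the per-labeled-free-path universal bound are preserved. No gaps.
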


Another key property of $\mathsf{SpanTL}$, which is actually the crucial one that we need for establishing Theorem~\ref{the:main-fpras}, is that each of its problems admits an FPRAS. The notion of FPRAS for an arbitrary function $f : \Lambda^* \ra \mathbb{N}$, for some alphabet $\Lambda$, is defined in the obvious way, i.e., an FPRAS for $f$ is a randomized algorithm $\mathsf{A}$ that takes as input $w \in \Lambda^*$, $\epsilon > 0$ and $0 < \delta < 1$, runs in polynomial time in $|w|$, $1/\epsilon$ and $\log(1/\delta)$, and produces a random variable $\mathsf{A}(w,\epsilon,\delta)$ such that
$
\text{\rm Pr}\left(|\mathsf{A}(w,\epsilon,\delta) - f(w)|\ \leq\ \epsilon \cdot f(w)\right)\ \geq\ 1-\delta.
$

\def\thespantlfpras{
	Every function $f \in \mathsf{SpanTL}$ admits an FPRAS.
}

\begin{theorem}\label{the:spantl-fpras}
\thespantlfpras
\end{theorem}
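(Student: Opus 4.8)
The plan is to reduce every function $f \in \spantl$ to $\sharp\mathsf{NFTA}$, the problem of counting the trees of a given size accepted by a non-deterministic finite tree automaton, which is known to admit an FPRAS~\cite{ACJR21}. Concretely, I would fix a well-behaved ATO $M$ witnessing $f = \mathsf{span}_M$, with associated polynomial $\text{\rm pol}$ and constant $k$ from Definition~\ref{def:well-behaved}, and, given an input $w$ of length $n$, construct in polynomial time an NFTA $\mathcal{A}_w$ together with a size bound $N = N(n)$ so that the trees of size $N$ accepted by $\mathcal{A}_w$ are in bijection with the distinct valid outputs of $M$ on $w$. Then $f(w) = \mathsf{span}_M(w)$ equals the number counted by $\sharp\mathsf{NFTA}$ on input $(\mathcal{A}_w, N)$, and composing the reduction with the FPRAS for $\sharp\mathsf{NFTA}$ yields an FPRAS for $f$; the error and confidence parameters pass through unchanged since the reduction is exact (a parsimonious bijection), and the whole algorithm runs in polynomial time in $|w|$, $1/\epsilon$, $\log(1/\delta)$.

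The heart of the construction is the design of the states and transitions of $\mathcal{A}_w$ so that its accepting runs correspond to accepting computations of $M$ on $w$, while the \emph{tree accepted} encodes precisely the \emph{output} (the node-labeled tree of labeling configurations), not the full computation. The key enabling observations are the resource bounds: because the working and labeling tapes use only $O(\log n)$ space, the ``surface'' part of a configuration that $\mathcal{A}_w$ must remember between two consecutive labeling configurations (the state, the head positions $h_x, h_y$, and the contents $y,z$ of the working and labeling tapes) is describable by a string of $O(\log n)$ bits, so there are only polynomially many such partial-configuration descriptors, and hence the state set of $\mathcal{A}_w$ can be made polynomial in $n$. The labeling states of $M$ become the tree-generating events: each node of the output corresponds to a labeling configuration, its label is the content of the labeling tape, and the edges of the output collapse the labeled-free paths. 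I would simulate a labeled-free path by a single transition of $\mathcal{A}_w$, guessing the intermediate existential/universal steps; the bounded number $k$ of universal configurations per labeled-free path is exactly what guarantees that the branching enacted within one such path (hence the rank that the tree alphabet of $\mathcal{A}_w$ must support) is bounded, so the NFTA has bounded rank and polynomial size.

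The step I expect to be the main obstacle is arranging the correspondence to be a genuine \emph{bijection} between accepted trees and distinct valid outputs, rather than merely counting accepting computations of $M$. Two computations of $M$ may differ internally (different existential choices, or different orderings of universal branches along labeled-free paths) yet yield the same output tree $O$; if $\mathcal{A}_w$ naively accepted one tree per accepting computation, it would overcount. The clean fix is to let the object generated by $\mathcal{A}_w$ be exactly the output tree $O$ itself, relegating all the computational detail to the run (the labeling of the run by states), and to invoke the standard determinism-free semantics of NFTAs, where $\sharp\mathsf{NFTA}$ counts \emph{distinct accepted trees} irrespective of how many accepting runs each admits. Thus I would encode $O$ as a ranked tree over an alphabet whose symbols carry the node label $z \in \Lambda^*$ (of length $O(\log n)$) together with the number of children, pad all outputs to a common size $N \in O(\text{\rm pol}(n))$ using a dummy symbol (legitimate since $|V| \in O(\text{\rm pol}(n))$ bounds the output size uniformly), and verify in the NFTA, via its run, that $O$ is realizable as the output of some accepting computation of $M$. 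The correctness argument then splits into two directions: every valid output of $M$ has at least one accepting run of $\mathcal{A}_w$ on its encoding (completeness), and every tree of size $N$ accepted by $\mathcal{A}_w$ decodes to a valid output of $M$ (soundness), giving the desired exact count $\mathsf{span}_M(w)$.
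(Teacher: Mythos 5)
Your proposal follows essentially the same route as the paper: reduce $\mathsf{span}_M$ to $\sharp\mathsf{NFTA}$ by building, from the polynomially bounded configuration structure of the well-behaved ATO, an automaton whose accepted trees are exactly the distinct valid outputs---collapsing labeled-free paths into single transitions, using the logarithmic tape bounds to keep the state set polynomial and the bound $k$ on universal configurations per labeled-free path to keep the rank and the transition relation polynomial---and then invoking the FPRAS for $\sharp\mathsf{NFTA}$ from~\cite{ACJR21}. The only divergence is minor: you pad all outputs to a common size $N$ so as to use the exact-size counting problem, whereas the paper leaves outputs unpadded and instead shows that the cumulative variant of $\sharp\mathsf{NFTA}$ (counting trees of size at most $n$) also admits an FPRAS by summing independent runs of the exact-size FPRAS over sizes $0,\ldots,n$; both devices handle the variable output size correctly, though yours requires fixing a canonical attachment point for the dummy chain so that the padding does not break the bijection.
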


To establish the above result we exploit a recent approximability result in the context of automata theory, that is, the problem of counting the number of trees of a certain size accepted by a non-deterministic finite tree automaton admits an FPRAS~\cite{ACJR21}.
Hence, for establishing Theorem~\ref{the:spantl-fpras}, we reduce in polynomial time each function in $\mathsf{SpanTL}$ to the above counting problem for tree automata.

\OMIT{
In what follows, we recall the basics about non-deterministic finite tree automata and the associated counting problem, and then discuss the key ideas underlying our reduction.

\medskip

\noindent \paragraph{Ordered Trees and Tree Automata.} For an integer $k \geq 1$, a {\em finite ordered $k$-tree} (or simple {\em $k$-tree}) is a prefix-closed non-empty finite subset $T$ of $[k]^*$, that is, if $w \cdot i \in T$ with $w \in [k]^*$ and $i \in [k]$, then $w \cdot j \in T$ for every $w \in [i]$.
The root of $T$ is the empty string, and every maximal element of $T$ (under prefix ordered) is a leaf. For every $u,v \in T$, we say that $u$ is a child of $v$, or $v$ is a parent of $u$, if $u = v \cdot i$ for some $i \in [k]$. The size of $T$ is $|T|$. Given a finite alphabet $\Lambda$, let $\trees{k}{\Lambda}$ be the set of all $k$-trees in which each node is labeled with a symbol from $\Lambda$. By abuse of notation, for $T \in \trees{k}{\Lambda}$ and $u \in T$, we write $T(u)$ for the label of $u$ in $T$.

A {\em (top-down) non-deterministic finite tree automation} (NFTA) over $\trees{k}{\Lambda}$ is a tuple $A = (S,\Lambda,s_\text{\rm init},\delta)$, where $S$ is the finite set of states of $A$, $\Lambda$ is a finite set of symbols (the alphabet of $A$), $s_\text{\rm init} \in S$ is the initial state of $A$, and $\delta \subseteq S \times \Lambda \times \left(\bigcup_{i = 0}^{k} S^k\right)$ is the transition relation of $A$.
A {\em run} of $A$ over a tree $T \in \trees{k}{\Lambda}$ is a function $\rho : T \ra S$ such that, for every $u \in T$, if $u \cdot 1,\ldots,u \cdot n$ are the children of $u$ in $T$, then $(\rho(u),T(u),(\rho(u \cdot 1),\ldots,\rho(u \cdot n))) \in \delta$. In particular, if $u$ is a leaf, then $(\rho(u),T(u),()) \in \delta$. We say that $A$ {\em accepts} $T$ if there is a run $\rho$ of $A$ over $T$ with $\rho(\epsilon) = s_\text{\rm init}$, i.e., $\rho$ assigns to the root the initial state. We write $L(A) \subseteq \trees{k}{\Lambda}$ for the set of all trees accepted by $A$, i.e., the language of $A$. We further write $L_n(A)$ for the set of trees $\{T \in L(A) \mid |T| = n\}$, i.e., the set of trees of size $n$ accepted by $A$. The relevant counting problem for NFTA follows:

\medskip

\begin{center}
	\fbox{\begin{tabular}{ll}
			{\small PROBLEM} : & $\sharp\mathsf{NFTA}$
			\\
			{\small INPUT} : & An NFTA $A$  and a string $0^n$ for some $n \geq 0$.
			\\
			{\small OUTPUT} : &  $\left|\bigcup_{i = 0}^n L_i(A)\right|$.
	\end{tabular}}
\end{center}

\medskip

The notion of FPRAS for $\sharp\mathsf{NFTA}$ is defined in the obvious way. 
We know from~\cite{ACJR21} that $\sharp\mathsf{NFTA}_=$, defined as $\sharp\mathsf{NFTA}$ with the difference that it asks for $|L_n(A)|$, i.e., the number trees of size $n$ accepted by $A$, admits an FPRAS. By using this result, we can easily show that:

\def\thenfta{
	$\sharp\mathsf{NFTA}$ admits an FPRAS.
}

\begin{theorem}\label{the:nfta}
	\thenfta
\end{theorem}

\noindent\paragraph{The Reduction.} Recall that our goal is reduce in polynomial time every function of $\mathsf{SpanTL}$ to $\sharp\mathsf{NFTA}$, which, together with Theorem~\ref{the:nfta}, will immediately imply Theorem~\ref{the:spantl-fpras}. In particular, we need to establish the following technical result:

\def\proreductiontonfta{
Fix a function $f : \Lambda^* \ra \mathbb{N}$ of $\mathsf{SpanTL}$. For every $w \in \Lambda^*$, we can construct in polynomial time in $|w|$ an NFTA $A$ and a string $0^n$, for some $n \geq 0$, such that $f(w) = \left|\bigcup_{i = 0}^n L_i(A)\right|$.
}

\begin{proposition}\label{pro:reduction}
	\proreductiontonfta
\end{proposition}

We discuss the main ideas underlying the proof of Proposition~\ref{pro:reduction}. Since $f : \Lambda^* \ra \mathbb{N}$ belongs to $\mathsf{SpanTL}$, there exists a well-behaved ATO $M = (S,\Lambda',s_\text{\rm init},s_\text{\rm accept},s_\text{\rm reject},S_{\exists},S_{\forall},S_{L},\delta)$, where $\Lambda' = \Lambda \cup \{\bot,\triangleright\}$ and $\bot,\triangleright \not\in \Lambda$, such that $f$ is the function $\mathsf{span}_M$. 
The goal is, for an arbitrary string $w \in \Lambda^*$, to construct in polynomial time in $|w|$ an NFTA $A = (S^A,\Lambda^A,s_{\text{\rm init}}^{A}\delta^A)$ and a string $0^n$, for some $n \geq 0$, such that $\mathsf{span}_M(w) = |\bigcup_{i = 0}^{n} L_i(A)|$. This is done in two steps:
\begin{enumerate}
	\item We first construct an NFTA $A$ in polynomial time in $|w|$ such that $\mathsf{span}_M (w) = |L(A)|$, i.e., $A$ accepts $\mathsf{span}_M(w)$ trees.
	
	\item We define a polynomial function $\mathsf{pol} : \mathbb{N} \ra \mathbb{N}$ such that $L(A) = \bigcup_{i = 0}^{\mathsf{pol}(|w|)} L_i(A)$. 
\end{enumerate}
After completing the above two steps, it is clear that Proposition~\ref{pro:reduction} follows with $n = \mathsf{pol}(|w|)$. Let us now discuss the above two steps.

\medskip

\noindent\paragraph{\underline{Step 1: The NFTA}}

\smallskip

\noindent For the construction of the desired NFTA, we first need to introduce the auxiliary notion of the computation directed acyclic graph (DAG) of the ATO $M$ on input $w \in \Lambda^*$, which compactly represents all the computations of $M$ on $w$. Recall that a DAG $G$ is rooted if it has exactly one node, the root, with no incoming edges. We also say that a node of $G$ is a leaf if it has no outgoing edges.

\begin{definition}[\textbf{Computation DAG}]\label{def:computation_dag}
	The {\em computation DAG} of $M$ on input $w \in \Lambda^*$ is the DAG $G = (\mathcal{C},\mathcal{E})$, where $\mathcal{C}$ is the set of configurations of $M$ on $w$, defined as follows:
	\begin{enumerate}
		\item[-] if $C \in \mathcal{C}$ is the root node of $G$, then $C$ is the initial configuration of $M$ on input $w$,
		
		\item[-] if $C \in \mathcal{C}$ is a leaf node of $G$, then $C$ is either an accepting or a rejecting configuration of $M$, and
		
		\item[-] if $C \in \mathcal{C}$ is a non-leaf node of $G$ with $C \ra_M \{C_1,\ldots,C_n\}$ for $n > 0$, then (i) for each $i \in [n]$, $(C,C_i) \in \mathcal{E}$, and (ii) for every configuration $C' \not\in \{C_1,\ldots,C_n\}$ of $M$ on $w$, $(C,C') \not\in \mathcal{E}$. \hfill\markfull 
	\end{enumerate}
\end{definition}

As said above, the computation DAG $G$ of $M$ on input $w$ compactly represents all the computations of $M$ on $w$. In particular, a computation $(V,E,\lambda)$ of $M$ on $w$ can be constructed from $G$ by traversing $G$ from the root to the leaves and (i) for every universal configuration $C$ with outgoing edges $(C,C_1),\dots,(C,C_n)$, add a node $v$ with $\lambda(v)=C$ and children $u_1,\dots,u_n$ with $\lambda(u_i)=C_i$ for all $i\in[n]$, and (ii) for every existential configuration $C$ with outgoing edges $(C,C_1),\dots,(C,C_n)$, add a node $v$ with $\lambda(v)=C$ and a single child $u$ with $\lambda(u)=C_i$ for some $i\in [n]$.

\OMIT{
proof of Proposition~\ref{pro:reduction}. Since, by hypothesis, $f : \Lambda^* \ra \mathbb{N}$ belongs to $\mathsf{SpanTL}$, there exists a well-behaved ATO $M = (S,\Lambda',s_\text{\rm init},s_\text{\rm accept},s_\text{\rm reject},S_{\exists},S_{\forall},S_{L},\delta)$, where $\Lambda' = \Lambda \cup \{\bot,\triangleright\}$ and $\bot,\triangleright \not\in \Lambda$, such that $f$ is the function $\mathsf{span}_M$. 
The goal is, for an arbitrary string $w \in \Lambda^*$, to construct in polynomial time in $|w|$ an NFTA $A = (S^A,\Lambda^A,s_{\text{\rm init}}^{A}\delta^A)$ and a string $0^n$ from some $n \geq 0$ such that $\mathsf{span}_M(w) = |\bigcup_{i = 0}^{n} L_i(A)|$. This is done in two steps:
\begin{enumerate}
	\item We first construct an NFTA $A$ in polynomial time in $|w|$ such that $\mathsf{span}_M (w) = |L(A)|$, i.e., $A$ accepts $\mathsf{span}_M(w)$ trees.
	
	\item We define a polynomial function $\mathsf{pol} : \mathbb{N} \ra \mathbb{N}$ such that $L(A) = |\bigcup_{i = 0}^{\mathsf{pol}(|w|)} L_i(A)|$. 
\end{enumerate}
After completing the above two steps, it is clear that Proposition~\ref{pro:reduction} follows with $n = \mathsf{pol}(|w|)$. We proceed to discuss those two steps.

\medskip

\noindent\paragraph{\underline{Step 1: The NFTA}}

\smallskip

%
}

The construction of the NFTA $A$ is performed by $\mathsf{BuildNFTA}$, depicted in Algorithm~\ref{alg:dagtonfta}, which takes as input a string $w \in \Lambda^*$. It first constructs the computation DAG $G = (\mathcal{C},\mathcal{E})$ of $M$ on $w$, which will guide the construction of $A$. It then initializes the sets $S^A,\Lambda^A,\delta^A$, as well as the auxiliary set $Q$, which will collect pairs of the form $(C,U)$, where $C$ is a configuration of $\mathcal{C}$ and $U$ a set of tuples of states of $A$, to empty.
Then, it calls the recursive procedure $\mathsf{Process}$, depicted in Algorithm~\ref{alg:process}, which constructs the set of states $S^A$, the alphabet $\Lambda^A$, and the transition relation $\delta^A$, while traversing the computation DAG $G$ from the root to the leaves. Here, $S^A$, $\Lambda^A$, $\delta^A$, $G$, and $Q$ should be seen as global structures that can be used and updated inside the procedure $\mathsf{Process}$. Eventually, $\mathsf{Process}(\rt{G})$ returns a state $s \in S^A$, which acts as the initial state of $A$, and $\mathsf{BuildNFTA}(w)$ returns the NFTA $(S^A,\Lambda^A,s,\delta^A)$.

Concerning the procedure $\mathsf{Process}$, when we process a labeling configuration $C \in \mathcal{C}$, we add to $S^A$ a state $s_C$ representing $C$. 
Then, for every computation $T=(V,E,\lambda)$ of $M$ on $w$, if $V$ has a node $v$ with $u_1,\dots,u_n$ being the nodes reachable from $v$ via a labeled-free path, and $\lambda(u_i)$ is a labeling configuration for every $i\in[n]$, we add the transition $(s_C,z,(s_{C_1},\dots,s_{C_n}))$ to $\delta^A$, where $\lambda(v)=C$, $C$ is of the form $(\cdot,\cdot,\cdot,z,\cdot,\cdot,\cdot)$, and $\lambda(u_i)=C_i$ for every $i\in[n]$, for some arbitrary order $C_1,\dots,C_n$ over those configurations.
Since, by definition, the output of $T$ has a node corresponding to $v$ with children corresponding to $u_1,\dots,u_n$, using these transitions we ensure that there is a one-to-one correspondence between the outputs of $M$ on $w$ and the trees accepted by $A$.

Now, when processing non-labeling configurations, we accumulate all the information needed to add all these transitions to $\delta^A$. In particular, the procedure $\process$ always returns a set of tuples of states of $S^A$. For labeling configurations $C$, it returns a single tuple $(s_C)$, but for non-labeling configurations $C$, the returned set depends on the outgoing edges $(C,C_1),\dots,(C,C_n)$ of $C$. In particular, if $C$ is an existential configuration, it simply takes the union $P$ of the sets $P_i$ returned by $\process(C_i)$, for $i\in[n]$, because in every computation of $M$ on $w$ we choose only one child of each node associated with an existential configuration; each $P_i$ represents one such choice. If, on the other hand, $C$ is a universal configuration, then $P$ is the set $\bigotimes_{i \in [n]} P_i$ obtained by first computing the cartesian product $P' = \bigtimes_{i \in [n]} P_i$ and then merging each tuple of $P'$ into a single tuple of states of $S^A$. For example, with $P_1 = \{(),(s_1,s_2),(s_3)\}$ and $P_2 = \{(s_5),(s_6,s_7)\}$, $P_1 \otimes P_2$ is the set $\{(s_5),(s_6,s_7),(s_1,s_2,s_5),(s_1,s_2,s_6,s_7),(s_3,s_5),(s_3,s_6,s_7)\}$. We define $\bigotimes_{i \in [n]} P_i=\emptyset$ if $P_i=\emptyset$ for some $i\in[n]$.
We use the $\otimes$ operator since in every computation of $M$ on $w$ we choose all the children of each node associated with a universal configuration. When we reach a labeling configuration $C \in \mathcal{C}$, we add transitions to $\delta^A$ based on this accumulated information. In particular, we add a transition from $s_C$ to every tuple $(s_1,\dots,s_\ell)$ of states in $P$.

Concerning the running time of $\mathsf{BuildNFTA}(w)$, we first observe that the size (number of nodes) of the computation DAG $G$ of $M$ on input $w$ is polynomial in $|w|$. This holds since for each configuration $(\cdot,\cdot,y,z,\cdot,\cdot,\cdot)$ of $M$ on $w$, we have that $|y|,|z|\in O(\log(|w|))$. Moreover, we can construct $G$ in polynomial time in $|w|$ by first adding a node for the initial configuration of $M$ on $w$, and then following the transition function to add the remaining configurations of $M$ on $w$ and the
outgoing edges from each configuration.
Now, in the procedure $\process$ we use the auxiliary set $Q$ to ensure that we process each node of $G$ only once; thus, the number of calls to the $\process$ procedure is polynomial in the size of $|w|$. Moreover, in $\process(C)$, where $C$ is a non-labeling universal configuration that has $n$ outgoing edges $(C,C_1),\dots,(C,C_n)$, the size of the set $P$ is $|P_1| \times \dots \times |P_n|$, where $P_i=\process(C_i)$ for $i\in[n]$. When we process a non-labeling existential configuration $C$ that has $n$ outgoing edges $(C,C_1),\dots,(C,C_n)$, the size of the set $P$ is $|P_1|+\dots+ |P_n|$. We also have that $|P|=1$ for every labeling configuration $C$. Hence, in principle, many universal states along a labeled-free path could cause an exponential blow-up of the size of $P$. However, since $M$ is a well-behaved ATO, there exists $k\ge 0$ such that every labeled-free path of every computation $(V,E,\lambda)$ of $M$ on $w$ has at most $k$ nodes $v$ for which $\lambda(v)$ is a universal configuration. It is rather straightforward to see that every labeled-free path of $G$ also enjoys this property since this path occurs in some computation of $M$ on $w$. Hence, the size of the set $P$ is bounded by a polynomial in the size of $|w|$. From the above discussion, we get that $\mathsf{BuildNFTA}(w)$ runs in polynomial time in $|w|$, and the next lemma can be shown:

\def\lemmabuildnfta{
For a string $w\in \Lambda^*$, $\mathsf{BuildNFTA}(w)$ runs in polynomial time in $|w|$ and returns an NFTA $A$ such that $\spanm_M(w)=|L(A)|$.
}

\begin{lemma}\label{lem:buildnfta}
\lemmabuildnfta
\end{lemma}


\begin{algorithm}[t]
	\KwIn{A string $w \in \Lambda^*$}
	\KwOut{An NFTA $A$}
	\vspace{2mm}
	
	{Construct the computation DAG $G = (\mathcal{C},\mathcal{E})$ of $M$ on $w$;}
 {\\ $S^A := \emptyset$; $\Lambda^A := \emptyset$; $\delta^A := \emptyset$; $Q :=\emptyset$;}
	{\\ $P :=\mathsf{Process}(\rt{G})$;}
	{\\ Assuming $P = \{(s)\}$, $A := \left(S^A,\Lambda^A, s, \delta^A\right)$;}
	{\\\Return{$A$;}}
	\caption{The algorithm $\mathsf{BuildNFTA}$}\label{alg:dagtonfta}
\end{algorithm}

\begin{algorithm}[t]
	\KwIn{A configuration $C=(s,x,y,z,h_x,h_y)$ of $\mathcal{C}$}
	\vspace{2mm}
	\If{$(C,U) \in Q$}
	{\Return $U$;}
	\If{$C$ is a leaf of $G$}
	{\If{$s\in S_\labeling$}
		{{$S^A := S^A \cup \{s_C\}$;\\}
			{$\Lambda^A := \Lambda \cup \{z\}$;\\}
            \If{$s=s_\accept$}
            {$\delta^A := \delta^A \cup \{(s_C,z,())\}$;}
			{$Q :=Q \cup \{(C,\{(s_C)\})\}$;}
		}
		\lElseIf{$s=s_\accept$}{$Q := Q \cup \{(C,\{()\})\}$}
            \lElse{$Q : = Q \cup \{(C,\emptyset)\}$}}
	\Else{
		{let $C_1,\dots,C_n$ be an arbitrary order over the nodes of $G$ with an incoming edge from $C$;\\}
		\lForEach{$i \in [n]$}{
			{$P_i := \process(C_i)$}}
		\lIf{$s\in S_\exists$}
		{$P := \bigcup_{i \in [n]} P_i$}
		\lElse{    
			$P := \bigotimes_{i \in [n]} P_i$
			%
		}
		\If{$s\in S_\labeling$}
		{{$S^A := S^A \cup \{s_C\}$;\\} 
			{$\Lambda^A := \Lambda^A \cup \{z\}$;\\}
			\ForEach{$(s_1,\dots,s_\ell)\in P$}{$\delta^A := \delta^A \cup \{(s_C,z,(s_1,\dots,s_\ell))\}$;}
			{$Q := Q \cup \{(C,\{(s_C)\})\}$;}}
		\lElse{$Q := Q \cup \{(C,P)\}$}}
	{\Return $U$ with $(C,U) \in Q$;}
	\caption{The recursive procedure $\mathsf{Process}$}\label{alg:process}
\end{algorithm}

\medskip

\noindent\paragraph{\underline{Step 2: The Polynomial Function}}

\smallskip

\noindent 
%
Since $M$ is a well-behaved ATO, there exists a polynomial function $\mathsf{pol}:\mathbb{N}\rightarrow \mathbb{N}$ such that the size of every computation of $M$ on $w$ is bounded by $\mathsf{pol}(|w|)$. Clearly, $\mathsf{pol}(|w|)$ is also a bound on the size of the valid outputs of $M$ on $w$. From the proof of Lemma~\ref{lem:buildnfta}, we get that every tree accepted by $A$ has the same structure as some valid output of $M$ on $w$. Hence, we have that $\mathsf{pol}(|w|)$ is also a bound on the size of the trees accepted by $A$, and the next lemma follows:

\def\lempolynomial{
	It holds that $L(A) = \bigcup_{i = 0}^{\mathsf{pol}(|w|)} L_i(A)$.
}

\begin{lemma}\label{lem:polynomial}
\lempolynomial
\end{lemma}

Proposition~\ref{pro:reduction} readily follows from Lemmas~\ref{lem:buildnfta} and~\ref{lem:polynomial}.
}
\newcommand{\lIfElse}[3]{\lIf{#1}{#2 \textbf{else}~#3}}
\section{Combined Approximations}\label{sec:algorithms}


As discussed in Section~\ref{sec:operational-cqa}, for establishing Theorem~\ref{the:main-fpras}, it suffices to show that, for every $k>0$, $\sharp\mathsf{Repairs}[k]$ and $\sharp\mathsf{Sequences}[k]$ admit an FPRAS. To this end, by Theorem~\ref{the:spantl-fpras}, it suffices to show that

\begin{theorem}\label{the:in-spantl}
	For every $k>0$, the following hold:
	\begin{enumerate} 
		\item $\sharp\mathsf{Repairs}[k]$ is in $\spantl$.
		\item $\sharp\mathsf{Sequences}[k]$ is in $\spantl$.
	\end{enumerate}
\end{theorem}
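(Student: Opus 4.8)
The plan is, for each fixed $k>0$, to place both problems in $\spantl$ by constructing a well-behaved ATO $M$ (Definition~\ref{def:well-behaved}) whose distinct valid outputs are in bijection with the set the problem counts; membership then follows immediately, and Theorem~\ref{the:spantl-fpras} delivers the FPRAS. I would fix the input to comprise a database $D$, a set $\dep$ of primary keys, a query $Q(\bar x) \in \sjf$, a width-$k$ generalized hypertree decomposition $(T,\chi,\lambda)$ of $Q$, and a tuple $\bar c$. The key structural fact I would rely on is that, for primary keys, an operational repair is obtained by an independent, per-\emph{block} choice of outcome, where a block is a maximal set of facts agreeing on the key: keep one fact of the block, or (if the block has at least two facts) keep none. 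Hence $\opr{D}{\dep}$ factorises over blocks, and, since $Q$ is self-join-free, each relation of $Q$ is matched by a single atom.

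For $\sharp\mathsf{Repairs}[k]$, $M$ would traverse $T$ top-down, keeping on its working tape only the assignment of the at most $k$ variables of the current bag $\chi(v)$; as each value is a pointer into $\adom{D}$, this fits in logarithmic space. At each node $v$ it existentially guesses how the atoms of $\lambda(v)$ are realised, uses the connectedness condition to enforce agreement with the parent on shared variables, and universally branches into the children to verify all subtrees. This is precisely the alternating, $\mathsf{LOGCFL}$-style evaluation of a bounded-generalized-hypertreewidth CQ, and emitting a label at every node keeps the number of universal configurations along each labeled-free path bounded by a constant depending only on $k$, as required by well-behavedness; the polynomial bound on the computation size follows since $T$ and the block gadgets described below are of polynomial size.

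The delicate point — and what I expect to be the main obstacle — is to ensure that each repair entailing $Q$ yields \emph{exactly one} valid output, despite possibly admitting many witnessing homomorphisms. I would arrange the labels so that the output encodes only the repair $D'$ and is independent of the guessed homomorphism. Concretely, anchor each relation $R$ of $Q$ to one node covering it and hang there a gadget: a labeled chain with one node per $R$-block, each node emitting that block's repair choice. The homomorphism's use of $R$ is realised by \emph{existentially} selecting which $R$-block supplies the matching fact and checking, when that block is processed, that its kept fact agrees with the bag assignment; every other $R$-block freely emits a kept fact or the empty choice. Since the witness block is chosen existentially while every block emits the repair's choice regardless of whether it is the witness, the emitted labels depend on $D'$ alone, so all homomorphisms into a fixed $D'$ produce the same output and $D'$ is counted once. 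A further free gadget encodes the choices on the blocks of relations \emph{not} occurring in $Q$, so that distinct full repairs give distinct outputs and the multiplicative block factors are accounted for; injectivity of this encoding yields the desired bijection with $\{D' \in \opr{D}{\dep} \mid \bar c \in Q(D')\}$.

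For $\sharp\mathsf{Sequences}[k]$, I would reuse the decomposition-guided entailment check verbatim but replace the repair gadgets by gadgets encoding a complete repairing sequence. Here I would exploit that, for primary keys, justification of an operation is block-local, so every sequence in $\crs{D}{\dep}$ is exactly a shuffle of per-block operation sequences, each reaching its block's outcome; distinct sequences must therefore map to distinct outputs. The gadgets would emit, block by block, the order of deletions reaching the chosen outcome together with the interleaving data, so that the span counts sequences rather than repairs. The hardest part throughout is precisely this decoupling of counting from witness multiplicity while respecting the two tight resource constraints of a well-behaved ATO — logarithmic working and labeling space, and a constant bound on universal configurations per labeled-free path. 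For sequences this is compounded by the need to realise the interleaving (multinomial) counts and to verify operation justification and completeness without storing the evolving database, which I would handle by processing one block at a time and carrying only logarithmically much progress information, following and non-trivially adapting the tree-encoding idea of~\cite{BrMe23}.
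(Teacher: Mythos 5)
Your strategy for item (1) is essentially the paper's: restrict attention to inputs where every relation of $D$ occurs in $Q$ (you fold this into the machine as a ``free gadget''; the paper instead normalizes $(D,Q,H)$ in logspace and invokes closure of $\spantl$ under logspace reductions), traverse the decomposition guessing one partial homomorphism per bag, and anchor each relation's per-block choices at a single covering vertex so that the emitted labels record only the repair and not the witnessing homomorphism --- which is exactly how the paper's $\mathsf{Rep}[k]$ gets one output per repair, the guessed homomorphism merely forcing the choice in the block containing its witness fact. Two details you assert rather than secure, both concerning well-behavedness. First, the branching degree of a single configuration of an ATO is a constant fixed by the transition function, so universally branching to the $d$ children of a decomposition node costs a cascade of $\Omega(\log d)$ consecutive universal configurations on one labeled-free path; since $d$ is input-dependent, the constant bound fails unless the tree is first binarized (the paper's normal form makes $H$ $2$-uniform). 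Second, every decomposition node must actually emit a label --- a chain of nodes that anchor no atom would again accumulate unboundedly many universal configurations on a labeled-free path (the paper's ``strongly complete'' condition guarantees this). Both are fixable, but neither follows from ``emitting a label at every node'' as stated.

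The genuine gap is in item (2). Viewing a complete repairing sequence as a shuffle of per-block operation sequences is correct, but ``emit the interleaving data'' is precisely the step that needs a construction: to have one distinct valid output per sequence you must, after finishing the $i$-th block, nondeterministically produce one of ${k_1+\cdots+k_i \choose k_1+\cdots+k_{i-1}}$ distinguishable labelings --- a quantity of exponential magnitude --- while each node's label is confined to $O(\log n)$ symbols. The paper resolves this by guessing the total sequence length $N$ up front and partitioning it at every universal branch (so the running counts $b,b'$ of operations applied so far are consistent across parallel subtrees), computing ${b \choose b'}$ in logspace via iterated multiplication and division, and emitting the guessed index $p \in \left[{b\choose b'}\right]$ bit by bit as a polynomial-length path of labeled nodes. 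Likewise, individual deletions cannot be named directly because the machine cannot remember which facts of a block remain after earlier operations; the paper labels them by shape-plus-rank pairs $(-g,p)$ with $p \in [\#\mathsf{opsFor}(n,-g)]$, where $n$ is maintained as a simple counter. Your proposal names these obstacles as ``the hardest part'' but does not resolve them, and they are where the substance of the proof of item (2) lies; without them it is not established that the span of your machine equals $|\{s \in \crs{D}{\dep} \mid \bar c \in Q(s(D))\}|$ rather than some coarser count.
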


To prove the above result, for $k>0$, we need to devise two procedures, which can be implemented as well-behaved ATOs $M_R^k$ and $M_S^k$, respectively, that take as input a database $D$, a set $\dep$ of primary keys, a CQ $Q(\bar x)$ from $\sjf$, a generalized hypetree decomposition $H$ of $Q$ of width $k>0$, and a tuple $\bar c \in \adom{D}^{|\bar x|}$, such that
\begin{eqnarray*}
\mathsf{span}_{M_{R}^{k}}(D,\dep,Q,H,\bar c) &=& |\{D' \in \opr{D}{\dep} \mid \bar c \in Q(D')\}|\\
\mathsf{span}_{M_{S}^{k}}(D,\dep,Q,H,\bar c) &=& |\{s \in \crs{D}{\dep} \mid \bar c \in Q(s(D))\}|.
\end{eqnarray*}
For the sake of simplicity, we are going to assume, w.l.o.g., that the input to those procedures is in a certain normal form. Let us first introduce this normal form, and then discuss the procedures.

%

\medskip
\noindent \paragraph{Normal Form.} 
Consider a CQ $Q(\bar x)$ from $\sjf$ and a generalized hypertree decomposition $H=(T, \chi, \lambda)$ of $Q$. We say that $H$ is $\ell$-uniform, for $\ell > 0$, if every non-leaf vertex of $T$ has exactly $\ell$ children. For any two vertices $v_1,v_2$ of $T$, we write $v_1 \prec_T v_2$ if $\dept{v_1} < \dept{v_2}$, or $\dept{v_1} = \dept{v_2}$ and $v_1$ precedes $v_2$ lexicographically. It is easy to verify that $\prec_T$ is a total ordering over the vertices of $T$.
Consider an atom $R(\bar y)$ of $Q$. We say that a vertex $v$ of $T$ is a \emph{covering vertex for} $R(\bar y)$ (in $H$) if $\bar y \subseteq \chi(v)$ and $R(\bar y) \in \lambda(v)$~\cite{BrMe23}. 
We say that $H$ is \emph{complete} if each atom of $Q$ has at least one covering vertex in $H$.
%
%
%
The \emph{$\prec_T$-minimal covering vertex} for some atom $R(\bar y)$ of $Q$ (in $H$) is the covering vertex $v$ for $R(\bar y)$ for which there is no other covering vertex $v'$ for $R(\bar y)$ with $v' \prec_T v$; if $H$ is complete, every atom of $Q$ has a $\prec_T$-minimal covering vertex. We say that $H$ is \emph{strongly complete} if it is complete, and every vertex of $T$ is the $\prec_T$-minimal covering vertex of some atom of $Q$.

Now, given a database $D$, a CQ $Q$ from $\sjf$, and a generalized hypertree decomposition $H$ of $Q$, we say that the triple $(D,Q,H)$ is in \emph{normal form} if \emph{(i)} every relation name in $D$ also occurs in $Q$, and \emph{(ii)} $H$ is strongly complete and 2-uniform.
We can show that, given a database $D$, a set $\dep$ of primary keys, a CQ $Q(\bar x)$ from $\sjf$, a generalized hypetree decomposition $H$ of $Q$ of width $k>0$, and a tuple $\bar c \in \adom{D}^{|\bar x|}$, we can convert in logarithmic space the triple $(D,Q,H)$ into a triple $(\hat{D},\hat{Q},\hat{H})$, where $\hat{H}$ is a generalized hypetree decomposition of $\hat{Q}$ of width $k+1$, that is in normal form while
$|\{D' \in \opr{D}{\dep} \mid \bar c \in Q(D')\}| = |\{D' \in \opr{\hat{D}}{\dep} \mid \bar c \in \hat{Q}(D')\}|$ and $|\{s \in \crs{D}{\dep} \mid \bar c \in Q(s(D))\}| = |\{s \in \crs{\hat{D}}{\dep} \mid \bar c \in \hat{Q}(s(\hat{D}))\}|$.
%
This, together with the fact that $\spantl$ is closed under logspace reductions (Proposition~\ref{pro:logspace-closure}), allows us to assume, w.l.o.g., that the input $D,\dep, Q, H, \bar c$ to the procedures that we are going to devise in the rest of the section always enjoys the property that the triple $(D,Q,H)$ is in normal form.

\OMIT{
\smallskip
In what follows, consider a query  $Q(\bar x)$ from $\sjf$, and a generalized hypertree decomposition $H=(T, \chi, \lambda)$ of $Q$. We say that $H$ is $k$-uniform, for some integer $k > 0$, if every non-leaf vertex of $T$ has exactly $k$ children. For any two vertices $v_1,v_2$ of $T$, we write $v_1 \prec_T v_2$ if $\dept{v_1} < \dept{v_2}$, or, in case $\dept{v_1} = \dept{v_2}$, if $v_1$ precedes $v_2$ lexicographically. It is easy to verify that $\prec_T$ is a total ordering of the vertices of $T$.
Consider an atom $R(\bar y)$ occuring in the body of $Q$. We say that a vertex $v$ of $T$ is a \emph{covering vertex for} $R(\bar y)$ (in $H$) if $\bar y \subseteq \chi(v)$ and $R(\bar y) \in \lambda(v)$. 
We say that $H$ is \emph{complete} if each atom in the body of $Q$ has at least one covering vertex in $H$.
%
%
%
The \emph{$\prec_T$-minimal covering vertex} for some atom $R(\bar y)$ (in $H$) is the covering vertex $v$ for $R(\bar y)$ for which there is no other covering vertex $v'$ for $R(\bar y)$ with $v' \prec_T v$; note that every atom has a $\prec_T$-minimal covering vertex, if $H$ is complete. We say that $H$ is \emph{strongly complete} if $H$ is complete, and every vertex of $T$ is the $\prec_T$-minimal covering vertex of some atom of $Q$.

\smallskip
\noindent
\textbf{Normal form.} For a database $D$, a query $Q$ from $\sjf$, and a generalized hypertree decomposition $H$ of $Q$, we say that $(D,Q,H)$ is in \emph{normal form} if \emph{1)} every relation in $D$ also occurs in $Q$, and \emph{2)} $H$ is strongly complete and 2-uniform.


\def\pronormalform{
Consider a database $D$, a set of primary keys $\dep$, a query $Q(\bar x)$ in $\sjf$, a generalized hypertree decomposition $H$ of $Q$ of width $k$, and $\bar c \in \adom{D}^{|\bar x|}$. Then, there exists a database $\hat{D}$, a query $\hat{Q}(\bar x)$ from $\sjf$, and a generalized hypertree decomposition $\hat{H}$ of $\hat{Q}$ of width $k+1$ such that:
\begin{itemize}
	\item $(\hat{D},\hat{Q},\hat{H})$ is in normal form,
	\item $|\{D' \in \opr{D}{\dep} \mid \bar c \in Q(D')\}| = |\{D' \in \opr{\hat{D}}{\dep} \mid \bar c \in \hat{Q}(D')\}|$,
	\item $|\{s \in \crs{D}{\dep} \mid \bar c \in Q(s(D))\}| = |\{s \in \crs{\hat{D}}{\dep} \mid \bar c \in \hat{Q}(s(\hat{D}))\}|$, and
	\item $(\hat{D},\hat{Q},\hat{H})$ can be computed in logarithmic space w.r.t.\ $D,\dep,Q,H,\bar c$.
\end{itemize}
}

\begin{proposition}\label{pro:normal-form}
\pronormalform
\end{proposition}
With the above proposition in place, together with the fact that $\spantl$ is closed under logspace reductions, to prove that $\sharp\mathsf{Repairs}[k]$ and $\sharp\mathsf{Sequences}[k]$ are in $\spantl$, for each $k > 0$, it is enough to focus on databases $D$, queries $Q$, and generalized hypertree decompositions $H$ such that $(D,Q,H)$ is in normal form.
}

\subsection{The Case of Uniform Repairs}\label{sec:rep-k}

We proceed to show item (1) of Theorem~\ref{the:in-spantl}. This is done via the procedure $\mathsf{Rep}[k]$, depicted in Algorithm~\ref{alg:repairs}, for which we can show the following technical lemma:

\def\lemrepairsato{
	For every $k > 0$, the following hold:
	\begin{enumerate}
		\item $\mathsf{Rep}[k]$ can be implemented as a well-behaved ATO $M_R^k$.
		\item For a database $D$, a set $\dep$ of primary keys, a CQ $Q(\bar x)$ from $\sjf$, a generalized hypertree decomposition $H$ of $Q$ of width $k$, and a tuple $\bar c \in \adom{D}^{|\bar x|}$, where $(D,Q,H)$ is in normal form, $\mathsf{span}_{M_R^k}(D,\dep,Q,H,\bar c) = |\{D' \in \opr{D}{\dep} \mid \bar c \in Q(D')\}|$.
	\end{enumerate}
}
\begin{lemma}\label{lem:repairs-ato}
	\lemrepairsato
\end{lemma}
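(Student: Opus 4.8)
The plan is to design the procedure $\mathsf{Rep}[k]$ so that its computation tree, when viewed as an ATO, produces exactly one valid output per operational repair $D'$ of $D$ (w.r.t.\ $\dep$) that satisfies $\bar c \in Q(D')$, and no others. Since an operational repair is fully determined by a maximal consistent subset selection over the conflict structure induced by the primary keys, the key design decision is to \emph{encode each operational repair as a node-labeled tree that mirrors the generalized hypertree decomposition $H$}, exploiting the normal-form assumptions (strong completeness and $2$-uniformity) that we may assume by Proposition~\ref{pro:logspace-closure}. Following the idea adapted from~\cite{BrMe23}, I would walk the tree $T$ of $H$ in the $\prec_T$ order, and at each vertex $v$ use the labeling mechanism of the ATO to emit a node whose label records the chosen facts witnessing a homomorphism restricted to the atoms whose $\prec_T$-minimal covering vertex is $v$, together with enough bookkeeping about which facts of $D$ survive in $D'$. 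The connectedness condition of the decomposition is what guarantees that local choices at each bag glue together into a globally consistent homomorphism and a globally consistent repair.

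The core of the construction splits into two interacting concerns that the ATO must handle with its alternation. First, \emph{existential} branching is used to guess, for each conflict class (block of facts sharing a key value), which single fact is retained in the repair $D'$ — or the empty choice where permitted — thereby guessing the repair itself; these guesses are the data that the labeling tape records so that distinct repairs yield distinct outputs. Second, \emph{universal} branching is used to verify, across the two children at each internal vertex of the $2$-uniform tree, that the partial homomorphism certifying $\bar c \in Q(D')$ extends consistently into both subtrees and agrees on the shared variables dictated by $\chi$. The bounded width $k$ ensures that the information stored at each bag — namely an assignment to the at most $k$ atoms in $\lambda(v)$ and hence to a bounded number of variables — fits in $O(\log n)$ space on the working and labeling tapes, and the bounded number of universal configurations along any labeled-free path (the third well-behavedness condition) follows because between two successive labeling states we only branch over the (constantly many, namely two) children of a single decomposition vertex.

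To establish part (1), that $\mathsf{Rep}[k]$ is a well-behaved ATO, I would verify the three resource constraints from Definition~\ref{def:well-behaved} directly against the algorithm: the computation has polynomial size because $T$ has size bounded by $\|Q\|$ and each vertex contributes polynomially many branches indexed by candidate facts of $D$; the working and labeling tapes hold only a bag-assignment of bounded arity plus pointers, hence $O(\log n)$ space; and each labeled-free path traverses the existential/universal machinery associated with a single transition between consecutive labeling states, so the number of universal configurations on it is bounded by a constant depending only on $k$. For part (2), the correctness of $\mathsf{span}_{M_R^k}$, I would prove a bijection between valid outputs of $M_R^k$ and the set $\{D' \in \opr{D}{\dep} \mid \bar c \in Q(D')\}$: \emph{injectivity} comes from the fact that the emitted labels fully encode the retained-fact choices, so two distinct repairs cannot produce identical output trees; \emph{surjectivity and well-definedness} come from showing that a computation is accepting precisely when the guessed repair is a genuine operational repair (every retained block is conflict-free and the deletion sequence is justified) and the guessed homomorphism genuinely maps $Q$ into $D'$ with $h(\bar x)=\bar c$, which is exactly where the connectedness condition and strong completeness are invoked.

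The main obstacle I anticipate is the faithful correspondence between operational repairs and my tree encoding, not in the FPRAS machinery (that is handed to us by Theorem~\ref{the:spantl-fpras}). Specifically, an operational repair is defined via \emph{repairing sequences} of justified operations, and I must argue that the set of such repairs coincides with the set of maximal conflict-free selections the ATO guesses — i.e.\ that every block of mutually key-conflicting facts is resolved to at most one survivor and that this selection is reachable by some complete repairing sequence. Getting the encoding to be \emph{one-to-one} per repair rather than per homomorphism or per sequence is delicate, since a single repair may admit many homomorphisms and many justifying sequences; the labeling must therefore be canonicalized so that it depends only on $D'$ (and the fixed witness structure imposed by $H$ and $\prec_T$) and not on incidental choices. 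I expect that enforcing this canonicity — emitting labels in the deterministic $\prec_T$ order and recording the repair's surviving facts rather than the operational history — is the crux of the argument and the place where the adaptation of~\cite{BrMe23} to the operational-repair setting requires the most care.
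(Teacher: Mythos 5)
Your proposal follows essentially the same route as the paper's own proof: encode each repair as a tree shaped by the decomposition, emit one label per block at its $\prec_T$-minimal covering vertex in a fixed order (which is exactly the canonicalization you identify as the crux, and is what makes the output depend only on $D'$ and not on the witnessing homomorphism or repairing sequence), verify well-behavedness via pointer-based logspace bookkeeping and the $2$-uniformity/strong-completeness of $H$, and establish the count via a correctness/injectivity/surjectivity argument for the map sending a valid output to its set of non-$\bot$ labels. The one point you flag as needing care — that the guessed selections (at most one survivor per block, possibly none) coincide with operational repairs — is dispatched in the paper by the observation that for primary keys every consistent subset of $D$ is reachable by some complete repairing sequence.
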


Item (1) of Theorem~\ref{the:in-spantl} readily follows from Lemma~\ref{lem:repairs-ato}. The rest of this section is devoted to discuss the procedure $\mathsf{Rep}[k]$ and why Lemma~\ref{lem:repairs-ato} holds. We first start with some auxiliary notions and a discussion on how the valid outputs of the procedure $\mathsf{Rep}[k]$ (which are labeled trees) look like, which will greatly help the reader to understand how the procedure $\mathsf{Rep}[k]$ works.

%

\medskip
\noindent \paragraph{Auxiliary Notions and Valid Outputs.}
Consider a database $D$, a set $\dep$ of primary keys, and a fact $\alpha = R(c_1,\ldots,c_n) \in D$. The \emph{key value} of $\alpha$ w.r.t. $\dep$ is defined as
\begin{eqnarray*}
	\keyval{\dep}{\alpha}\
	= \left\{
	\begin{array}{ll}
		\langle R, \langle c_{i_1},\ldots,c_{i_m} \rangle \rangle & \text{if }\key{R} = \{i_1,\ldots,i_m\} \in \dep,\\
		&\\
		\langle R, \langle c_1,\ldots,c_n \rangle \rangle & \text{otherwise.}
	\end{array} \right.
\end{eqnarray*}
Moreover, we let $\block{\dep}{\alpha,D} = \{\beta \in D \mid \keyval{\dep}{\beta} = \keyval{\dep}{\alpha}\}$, 
and $\block{\dep}{R,D} = \{ \block{\dep}{R(\bar c),D} \mid R(\bar c) \in D\}$, for a relation name $R$.
%


	Consider now the following Boolean CQ $Q$
	\[
	\text{Ans}() \text{ :- } P(x,y),S(y,z),T(z,x),U(y,w),
	\]
	which has generalized hypertree width $2$. A generalized hypertree decomposition $H=(T,\chi,\lambda)$ of $Q$ of width $2$ is
		
	\vspace*{2mm}
	\centerline{\includegraphics[width=.25\textwidth]{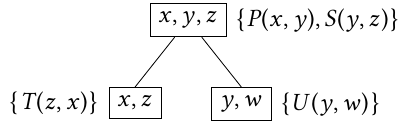}}
	\vspace*{2mm}
	
	\noindent where each box denotes a vertex $v$ of $T$, and the content of the box is 
	$\chi(v)$ while on its side is
	$\lambda(v)$.
	Consider now the set of primary keys $\dep = \{\key{R} = \{1\} \mid R \in \{P,S,T,U\}\}$ and the database
	\begin{multline*}
		D\ =\ \{ P(a_1,b), P(a_1,c), P(a_2,b), P(a_2,c), P(a_2,d),  S(c,d), S(c,e), \\
		T(d,a_1), U(c,f), U(c,g), U(h,i), U(h,j), U(h,k)\}.
	\end{multline*}
	%
	Each operational repair in $\opr{D}{\dep}$ is such that, for each relation $R$ in $D$, and for each block $B \in \block{\dep}{R,D}$, either $B = \{\beta\}$ and $\beta \in D'$, or \emph{at most} one fact of $B$ occurs in $D'$; recall that justified operations can remove pairs of facts, and thus a repairing sequence can leave a block completely empty.
	For example, the database
	$$ D' = \{P(a_1,c), S(c,d), T(d,a_1), U(c,f), U(h,i)\}$$
	is an operational repair of $\opr{D}{\dep}$ where we keep the atoms occurring in $D'$ from their respective block in $D$, while for all the other blocks we do not keep any atom.
	Assuming a lexicographical order among the key values of all facts of $D$, we can unambiguously encode the database $D'$ as the following labeled tree $\mathcal{T}$:
	
	\vspace*{2mm}
	\centerline{\includegraphics[width=0.3\textwidth]{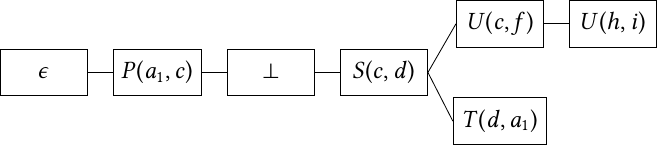}}
	\vspace*{2mm}
	
	\noindent
	The root of $\mathcal{T}$ is labeled with the empty string $\epsilon$, and the label of each other node denotes the choice made on a block $B$ of $\block{\dep}{R,D}$, for some relation name $R$, in order to construct $D'$, i.e., either we keep a certain fact from $B$, or we keep none ($\bot$).
	Crucially, the shape of $\mathcal{T}$ is determined by $H$. In particular, since the root $v$ of $H$ is such that $\lambda(v)$ mentions the relation names $P$ and $S$, then, after its root, $\mathcal{T}$ proceeds with a path where nodes correspond to choices relative to the blocks of $P$ and $S$, and the choices are ordered according to the order over key values discussed before. Then, since $v$ has two children $u_1,u_2$ such that $\lambda(u_1)$ and $\lambda(u_2)$ mention $T$ and $U$, respectively, $\mathcal{T}$ continues in two parallel branches, where in the first branch it lists the choices for the blocks of $T$, while in the second branch it lists the choices for the blocks of $U$.

We can now discuss the procedure $\mathsf{Rep}[k]$. In what follows, a \emph{tuple mapping} is an expression of the form $\bar x \mapsto \bar c$, where $\bar x$ is a tuple of terms and $\bar c$ a tuple of constants with $|\bar x| = |\bar c|$, and we say that a set $A$ of tuple mappings is \emph{coherent} if, for any $\bar x \mapsto \bar t \in A$, $\bar x[i] \in \ins{C}$ implies $\bar x[i] = \bar t[i]$, for all $i \in [|\bar x|]$, and for any $\bar x \mapsto \bar t, \bar y \mapsto \bar u \in A$, $\bar x[i] = \bar y[j]$ implies $\bar t[i] = \bar u[j]$, for all $i \in [|\bar x|]$, $j \in [|\bar y|]$.

\begin{algorithm}[t]
	\SetInd{0.7em}{0.7em}
	\DontPrintSemicolon
	\SetArgSty{textnormal}
	\KwIn{A database $D$, a set $\dep$ of primary keys, a CQ $Q(\bar x)$ from $\sjf$, a generalized hypertree decomposition $H=(T,\chi,\lambda)$ of $Q$ of width $k$, and $\bar c \in \adom{D}^{|\bar x|}$, where $(D,Q,H)$ is in normal form}
	\vspace{2mm}
	
	{$v := \mathsf{root}(T)$; $A := \emptyset$;\\}
	\vspace{1mm}
	
	{Assuming $\lambda(v) = \{ R_{i_1}(\bar y_{i_1}), \ldots, R_{i_\ell}(\bar y_{i_\ell})\}$, guess a set $A' =$ $\{ \bar y_{i_1} \mapsto \bar c_1,\ldots,\bar y_{i_\ell} \mapsto \bar c_\ell\}$, with $R_{i_j}(\bar c_j) \in D$ for $j \in [\ell]$, and verify $A \cup A' \cup \{\bar x \mapsto \bar c\}$ is coherent; if not, \textbf{reject};\\}\label{line:begin-repairs}
	
	\vspace{1mm}
	\For{$j=1, \ldots, \ell$}{
		\If{$v$ is the $\prec_T$-minimal covering vertex for $R_{i_j}(\bar y_{i_j})$}{\label{line:minimalc-repairs}
			\ForEach{$B \in \block{\dep}{R_{i_j},D}$}{ \label{line:being-choice-repairs}
				\lIf{$B = \{\beta\}$}{$\alpha := \beta;$}
				\lElseIf{$R_{i_j}(\bar c_j) \in B$}{$\alpha := R_{i_j}(\bar c_j)$;}
				\lElse{Guess $\alpha \in B \cup \{\bot\}$;}
				
				{Label with $\alpha$;\\} \label{line:end-choice-repairs}
			}
		}
	}
	
	\If{$v$ is not a leaf of $T$}{\label{line:end-repairs}
		{Universally guess a child $u$ of $v$ in $T$;\\}
		{$v := u$; $A := A'$;\\}
		{\textbf{goto} line \ref{line:begin-repairs};\\}
	}\lElse{\textbf{accept};}
	\caption{The alternating procedure $\mathsf{Rep}[k]$}\label{alg:repairs}
\end{algorithm}

\medskip
\noindent \paragraph{The Procedure $\mathsf{Rep}[k]$.} Roughly speaking, $\mathsf{Rep}[k]$ describes the computation of a well-behaved ATO $M_{R}^{k}$ that, given $D,\dep,Q,H,\bar c$, with $(D,Q,H)$ being in normal form, non-deterministically outputs a labeled tree which, if accepted, encodes in the way discussed above 
a repair of $\{D' \in \opr{D}{\dep} \mid \bar c \in Q(D') \}$.

We assume that when $\mathsf{Rep}[k]$ starts, it immediately moves from the initial (necessarily labeling) state to a non-labeling state, and then proceeds with its computation; this automatically creates the root with label $\epsilon$ of the output tree. Then, starting from the root $v$ of $H$, $\mathsf{Rep}[k]$ guesses a set $A'$ of tuple mappings from the tuples of terms in $\lambda(v)$ to constants. This set witnesses that the tree being constructed encodes an operational repair $D'$ such that $\bar c \in Q(D')$. For this to be the case, $A'$ must be coherent with $\{\bar x \mapsto \bar c\}$ and the set $A$ of tuple mappings guessed at the previous step.
%
%
%
Having $A'$ in place, lines~\ref{line:being-choice-repairs}-\ref{line:end-choice-repairs} non-deterministically construct a path of the output tree that encodes, as discussed above,
the choices needed to obtain the operational repair $D'$ being considered. The statement ``Label with $\alpha$'' means that $M_{R}^{k}$ writes a \emph{pointer} to $\alpha$ in the labeling tape, then moves to a labeling state, and finally moves to a non-labeling state; writing a pointer to $\alpha$ allows to use at most logarithmic space in the labeling tape.
Note that due to the ``if'' statement in line~\ref{line:minimalc-repairs}, $\mathsf{Rep}[k]$ does not consider the same block more than once when visiting different nodes of $H$. Moreover, since $H$ is strongly complete, there is a guarantee that, for every visited node $v$ of $H$, $\mathsf{Rep}[k]$ will always output a node of the tree encoding $D'$ (this guarantees that $M_{R}^{k}$ is well-behaved).
The procedure then proceeds in a parallel branch for each child of $v$.
Since $H$ enjoys the connectedness condition, only the current set $A'$ of tuple mappings needs to be kept in memory when moving to a new node of $H$; such a set can be stored in logarithmic space using pointers. Moreover, since $H$ is 2-uniform, only a constant number of universal, non-labeling configurations are needed to universally move to each child of $v$, and thus $M_{R}^{k}$ is well-behaved.
%
When all branches accept, the output of $\mathsf{Rep}[k]$ is a tree that encodes an operational repair $D'$ of $D$ w.r.t.\ $\dep$ such that $\bar c \in Q(D')$; this is because each relation of $D$ occurs in $Q$, and thus no blocks of $D$ are left unrepaired.

Let us stress that by guessing $\alpha \in B$ in line 8 of $\mathsf{Rep}[k]$, without considering the symbol $\bot$, we can show that the problem $\sharp\mathsf{SRepairs}[k]$, defined as $\sharp\mathsf{Repairs}[k]$ with the difference that we focus on the classical subset repairs from~\cite{ArBC99}, is in $\mathsf{SpanTL}$, and thus it admits an FPRAS. To the best of our knowledge, this is the first result concerning the combined complexity of the problem of counting classical repairs entailing a query.

\OMIT{
We proceed to show item (1) of Theorem~\ref{the:in-spantl}. This is done via the procedure $\mathsf{Rep}[k]$, depicted in Algorithm~\ref{alg:repairs}, for which we can show the following technical lemma:
\OMIT{
\def\thmrepairsspantl{
	For every $k > 0$, $\sharp\mathsf{Repairs}[k] \in \spantl$.
}

\begin{theorem}\label{thm:repairs-spantl}
\thmrepairsspantl
\end{theorem}

The procedure that 

To prove the above theorem, we introduce a procedure (i.e., Algorithm~\ref{alg:repairs}) describing the computation of an ATO $M$ with input a database $D$, a set $\dep$ of primary keys, a query $Q(\bar x)$ from $\sjf$, a generalized hypertree decomposition $H$ of $Q$ of width $k$, and a tuple $\bar c \in \adom{D}^{|\bar x|}$, such that $(D,Q,H)$ is in normal form. With Algorithm~\ref{alg:repairs} in place, our goal is to prove the following lemma.

\def\lemrepairsato{
	For every $k > 0$, the following hold:
	\begin{itemize}
		\item Algorithm~\ref{alg:repairs} can be implemented as a well-behaved ATO $M$, and
		\item for every database $D$, set $\dep$ of primary keys, query $Q(\bar x)$, generalized hypertree decomposition $H$ of $Q$ of width $k$, and $\bar c \in \adom{D}^{|\bar x|}$ with $(D,Q,H)$ in normal form, $\mathsf{span}_M(D,\dep,Q,H,\bar c) = \sharp\mathsf{Repairs}[k](D,\dep,Q,H,\bar c)$.
	\end{itemize}
}
}

\def\lemrepairsato{
	For every $k > 0$, the following hold:
	\begin{enumerate}
		\item $\mathsf{Rep}[k]$ can be implemented as a well-behaved ATO $M_R^k$.
		\item For a database $D$, a set $\dep$ of primary keys, a CQ $Q(\bar x)$ from $\sjf$, a generalized hypertree decomposition $H$ of $Q$ of width $k$, and a tuple $\bar c \in \adom{D}^{|\bar x|}$, where $(D,Q,H)$ is in normal form, $\mathsf{span}_{M_R^k}(D,\dep,Q,H,\bar c) = |\{D' \in \opr{D}{\dep} \mid \bar c \in Q(D')\}|$.
	\end{enumerate}
}
\begin{lemma}\label{lem:repairs-ato}
\lemrepairsato
\end{lemma}

Item (1) of Theorem~\ref{the:in-spantl} readily follows from Lemma~\ref{lem:repairs-ato}. The rest of this section is devoted to discussing the procedure $\mathsf{Rep}[k]$ and why Lemma~\ref{lem:repairs-ato} holds. But first we need some auxiliary notions.

%

\medskip
\noindent \paragraph{Auxiliary Notions.}
Consider a database $D$, a set $\dep$ of primary keys, and a fact $\alpha = R(c_1,\ldots,c_n) \in D$. The \emph{key value} of $\alpha$ w.r.t. $\dep$ is
\begin{eqnarray*}
	\keyval{\dep}{\alpha}\
	= \left\{
	\begin{array}{ll}
		\langle R, \langle c_{i_1},\ldots,c_{i_m} \rangle \rangle & \text{if }\key{R} = \{i_1,\ldots,i_m\} \in \dep,\\
		&\\
		\langle R, \langle c_1,\ldots,c_n \rangle \rangle & \text{otherwise.}
	\end{array} \right.
\end{eqnarray*}
Moreover, we define
\[
\block{\dep}{\alpha,D}\ =\ \{\beta \in D \mid \keyval{\dep}{\beta} = \keyval{\dep}{\alpha}\},
\]
and, for a relation name $R$, we define
\[
\block{\dep}{R,D}\ =\ \{ \block{\dep}{R(\bar c),D} \mid R(\bar c) \in D\}.
\]
A \emph{tuple mapping} is an expression of the form $\bar x \mapsto \bar c$, where $\bar x$ is a tuple of variables and $\bar c$ a tuple of constants with $|\bar x| = |\bar c|$. A set $A$ of tuple mappings is \emph{coherent} if for any two tuple mappings $\bar x \mapsto \bar t, \bar y \mapsto \bar u \in A$, $\bar x[i] = \bar y[j]$ implies $\bar t[i] = \bar u[j]$, for all $i \in [|\bar x|]$, $j \in [|\bar y|]$.
%
We are now ready to discuss the procedure $\mathsf{Rep}[k]$.

\begin{algorithm}[t]
	\SetInd{0.7em}{0.7em}
	\DontPrintSemicolon
	\SetArgSty{textnormal}
	\KwIn{A database $D$, a set $\dep$ of primary keys, a CQ $Q(\bar x)$ from $\sjf$, a generalized hypertree decomposition $H=(T,\chi,\lambda)$ of $Q$ of width $k$, and $\bar c \in \adom{D}^{|\bar x|}$, where $(D,Q,H)$ is in normal form}
	\vspace{2mm}
	
	{$v := \mathsf{root}(T)$; $A := \emptyset$;\\}
	\vspace{1mm}
	
	{Assuming $\lambda(v) = \{ R_{i_1}(\bar y_{i_1}), \ldots, R_{i_\ell}(\bar y_{i_\ell})\}$, guess a set $A' =$ $\{ \bar y_{i_1} \mapsto \bar c_1,\ldots,\bar y_{i_\ell} \mapsto \bar c_\ell\}$, with $R_{i_j}(\bar c_j) \in D$ for $j \in [\ell]$, and verify $A \cup A' \cup \{\bar x \mapsto \bar c\}$ is coherent; if not, \textbf{reject};\\}\label{line:begin-repairs}
	
	\vspace{1mm}
	\For{$j=1, \ldots, \ell$}{
		\If{$v$ is the $\prec_T$-minimal covering vertex for $R_{i_j}(\bar y_{i_j})$}{
			\ForEach{$B \in \block{\dep}{R_{i_j},D}$}{
				\lIf{$B = \{\beta\}$}{$\alpha := \beta;$}
				\lElseIf{$R_{i_j}(\bar c_j) \in B$}{$\alpha := R_{i_j}(\bar c_j)$;}
				\lElse{Guess $\alpha \in B \cup \{\bot\}$;}
				
				{Label with $\alpha$;\\}
			}
		}
	}
	
	\If{$v$ is not a leaf of $T$}{\label{line:end-repairs}
		{Universally guess a child $u$ of $v$ in $T$;\\}
		{$v := u$; $A := A'$;\\}
		{\textbf{goto} line \ref{line:begin-repairs};\\}
	}\lElse{\textbf{accept};}
	\caption{The alternating procedure $\mathsf{Rep}[k]$}\label{alg:repairs}
\end{algorithm}

\medskip
\noindent \paragraph{The Procedure $\mathsf{Rep}[k]$.} Let us first say that we assume that $\mathsf{Rep}[k]$ immediately moves from the initial (necessarily labeling) state to a non-labeling state, and then proceeds with its computation. Moreover, we use the statement ``Label with $o$'', for some object $o$, to say that the underlying ATO writes the encoding of $o$ in the labeling tape, then moves to a labeling state, and finally moves to a non-labeling state.
At a high level, $\mathsf{Rep}[k]$ traverses the generalized hypertree decomposition $(T,\chi,\lambda)$ of the input CQ $Q$. Starting from the root, for each node $v$ of $T$, the procedure guesses a mapping from the variables of the atoms in $\lambda(v)$ to constants (the set of tuple mappings $A'$) that witnesses a homomorphism from $\lambda(v)$ to $D$. Of course, such a homomorphism must map $\bar x$ to $\bar c$, and must be also coherent with the homomorphism used to map the atoms of the previously visited node of $T$ (at the beginning, no previous node has been visited, and thus, $A = \emptyset$). 
Note that the set $A'$ contains a constant number of tuple mappings (i.e., at most $k$) of the form $\bar y_{i_j} \mapsto \bar c_j$, which can be stored in logarithmic space using two pointers to $\bar y_{i_j}$ and $\bar c_j$, respectively, since both appear in the input.

Having $A'$ in place, the procedure now non-deterministically outputs a path, whose vertices are labeled with all the facts over the relation names in $\lambda(v)$ of some operational repair $D'$. That is, for each relation name $R_{i_j}$ in $\lambda(v)$, in some fixed order, the procedure chooses, for each block $B$ of $R_{i_j}$, again in some fixed order, which fact of $B$ (if any) should be part of $D'$, and outputs this choice as a node labeled with this fact. If $B$ is a singleton, or the mappings of $A'$ force a certain fact of $B$ to be part of $D'$, then the choice is deterministic; otherwise, a fact from $B$ (or none at all) is chosen non-deterministically. To use no more than logarithmic space in the labeling tape, a \emph{pointer} to the fact is written instead of the fact.

Since multiple nodes of $T$ might share the same atom, the choices for the blocks  relative to the atom with relation $R_{i_j}$ are made only if $v$ is the $\prec_T$-minimal covering vertex. If not, it means that the procedure will make such a choice in a vertex $v'$ of $T$ with $v' \prec_T v$. Note that thanks to the fact that $Q$ is self-join free, no conflicting choices can be made on two different nodes for the same relation. Moreover, since $(T,\chi,\lambda)$ is strongly complete, $v$ is the $\prec_T$-minimal covering vertex of at least one atom in $\lambda(v)$, and thus, the procedure always outputs some facts, for each vertex of $T$ (this is needed to guarantee that the underlying ATO is well-behaved).

Once the path relative to $v$ is computed, if the current vertex $v$ is not a leaf, then the procedure proceeds in parallel to all children of $v$ in $T$, and updates the last set of mappings $A$ to be the current one (that is, $A'$). Note that the procedure does not need to remember $A \cup A'$ due to the connectedness condition satisfied by $H$. Moreover, since $(T,\chi,\lambda)$ is 2-uniform, the underlying ATO does not require an arbitrarily long sequence of universal, non-labeling configurations, which ensures that it is well-behaved.
Now, if $v$ is a leaf, then there are no other vertices to process, and the procedure accepts on this branch of the computation. If all branches of the computation are accepting, the output is a tree, encoding an operational repair $D'$ of $D$ w.r.t.\ $\dep$ such that $\bar c \in Q(D')$; in fact, since every relation of $D$ occurs in $Q$, no blocks of $D$ are left unrepaired.

We proceed to give an example that illustrates the above discussion concerning the procedure $\mathsf{Rep}[k]$.

\begin{example}\label{ex:repairs-algo}
Consider the Boolean CQ $Q$ 
\[
\text{Ans}() \text{ :- } P(x,y),S(y,z),T(z,x),U(y,w),
\]
which has generalized hypertree width $2$. A generalized hypertree decomposition $H=(T,\chi,\lambda)$ of $Q$ of width $2$ is the following:

\vspace*{2mm}
\centerline{\includegraphics[width=.25\textwidth]{example-htd}}
\vspace*{2mm}

\noindent where each box denotes a vertex $v$ of $T$, and the content of the box is the set of variables $\chi(v)$ while on its side is the set $\lambda(v)$.
Consider now the set of primary keys $\dep = \{\key{R} = \{1\} \mid R \in \{P,S,T,U\}\}$, and the database $D$
\begin{multline*}
\{ P(a_1,b), P(a_1,c), P(a_2,b), P(a_2,c), P(a_2,d),  S(c,d), S(c,e), \\
T(d,a_1), U(c,f), U(c,g), U(h,i), U(h,j), U(h,k)\}.
\end{multline*}
Note that $(D,Q,H)$ is in normal form. A possible accepting computation of the ATO described by $\mathsf{Rep}[2]$ with input $D$, $\dep$, $Q$, $H$, and the empty tuple $()$, is the one where, for the root of $H$, $A' = \{ (x,y) \mapsto (a_1,c), (y,z) \mapsto (c,d)\}$ and the computed path contains three nodes labeled with $P(a_1,c)$, $\bot$, and $S(c,d)$, respectively. Then, for the first child of the root, $A' = \{(z,x) \mapsto (d,a_1)\}$ and the path is made of one node with label $T(d,a_1)$, while for the second child, $A' = \{(y,w) \mapsto (c,f)\}$ and the path contains two nodes labeled with $U(c,f)$ and $U(h,i)$ respectively. The overal output of this computation is the tree\footnote{For the sake of presentation, we omit the root node that is always labeled with the empty string (recall that the initial state of an ATO is always labeling).} on the left below

\vspace*{2mm}
\centerline{\includegraphics[width=0.21\textwidth]{example-tree1} \qquad \includegraphics[width=0.21\textwidth]{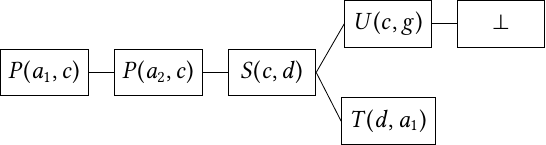}}
\vspace*{2mm}

\noindent that encodes the operational repair $D'$ where $P(a_1,c)$, $S(c,d)$, $T(d,a_1)$, $U(c,f)$, and $U(h,i)$ are all kept from their corresponding block, while no fact from $P(a_2,b), P(a_2,c), P(a_2,d)$ is kept. It is easy to verify that $D' \models Q$. Another valid output is the tree on the right in the figure above,
where the procedure chooses, at the root, one fact for each block of $P$, at the first child, one fact for each block of $T$, and at the second child, the fact $U(c,g)$ is chosen, while no fact from $U(h,i),U(h,j),U(h,k)$ is chosen.\hfill\markfull
\end{example}
}

\subsection{The Case of Uniform Sequences}
We now proceed to show item (2) of Theorem~\ref{the:in-spantl}. This is done via the procedure $\mathsf{Seq}[k]$, depicted in Algorithm~\ref{alg:sequences}. 
We show that:

\def\lemsequencesato{
	For every $k > 0$, the following hold:
	\begin{enumerate}
		\item $\mathsf{Seq}[k]$ can be implemented as a well-behaved ATO $M_S^k$.
		\item For a database $D$, a set $\dep$ of primary keys, a CQ $Q(\bar x)$ from $\sjf$, a generalized hypertree decomposition $H$ of $Q$ of width $k$, and a tuple $\bar c \in \adom{D}^{|\bar x|}$, where $(D,Q,H)$ is in normal form, $\mathsf{span}_{M_S^k}(D,\dep,Q,H,\bar c) = |\{s \in \crs{D}{\dep} \mid \bar c \in Q(s(D))\}|$.
	\end{enumerate}
}

\begin{lemma}\label{lem:sequences-ato}
	\lemsequencesato	
\end{lemma}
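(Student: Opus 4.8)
The plan is to mirror the structure of the procedure $\mathsf{Rep}[k]$ and its analysis in Lemma~\ref{lem:repairs-ato}, adapting it so that the valid outputs encode complete repairing \emph{sequences} rather than operational repairs. As for $\mathsf{Rep}[k]$, the procedure $\mathsf{Seq}[k]$ will traverse the (strongly complete, $2$-uniform) decomposition $H=(T,\chi,\lambda)$ top-down, guessing at each vertex $v$ a coherent set $A'$ of tuple mappings for the atoms of $\lambda(v)$, thereby building incrementally a homomorphism witnessing $\bar c \in Q(s(D))$; block choices at $v$ are made only when $v$ is the $\prec_T$-minimal covering vertex of the atom, and coherence is propagated using only the current mappings thanks to the connectedness condition. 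Establishing item~(1) (well-behavedness of the implementing ATO $M_S^k$) will then follow the same lines as for $M_R^k$: each computation is of polynomial size, the working and labeling tapes store only pointers to the current tuple mappings together with a constant number of $O(\log)$-size counters, and the universal branching stems solely from the $2$-uniform branching of $T$, so that every labeled-free path carries a bounded number of universal configurations.

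The essential difference, and the source of the real work, lies in item~(2). Whereas a valid output of $\mathsf{Rep}[k]$ records, per block $B\in\block{\dep}{R,D}$, only the single surviving fact (or $\bot$), for sequences we must count elements of $\crs{D}{\dep}$, so that distinct repairing sequences leading to the \emph{same} repair yield distinct outputs. Two features must therefore be encoded. First, for each block we must encode a full \emph{local} repairing sequence, i.e.\ the ordered list of justified operations (each deleting one or two facts of the block) reducing it to its final state; the number of these local sequences replaces the single block choice of $\mathsf{Rep}[k]$. Second, and crucially, since operations on distinct blocks are independent, every shuffle of the per-block local sequences is a distinct element of $\crs{D}{\dep}$; hence the number of sequences producing a fixed repair $D'$ equals the product over blocks $B$ of the number of local sequences reducing $B$ to its final state, multiplied by a shuffle multinomial $\binom{L}{\ell_{B_1},\dots,\ell_{B_m}}$, with $\ell_B$ the length of the chosen local sequence of $B$ and $L=\sum_B \ell_B$. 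Capturing this multinomial inside a tree whose shape is dictated by $H$ is the main obstacle.

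The key idea to overcome it is that the multinomial factorizes along $T$ by the telescoping identity $\frac{L!}{\prod_B \ell_B!}=\prod_{v\in T}\binom{L_v}{\,\ell_v,(L_u)_{u\text{ child of }v}\,}$, where $\ell_v$ is the number of operations produced at $v$ and $L_v$ the total number of operations in the subtree of $T$ rooted at $v$. Accordingly, at $v$ the procedure will not merely list the operations of $v$'s own blocks, but guess a \emph{merge pattern}: a sequence of $\ell_v+\sum_u L_u$ labels, each marking either one of $v$'s own operations (in their internal order) or a \emph{placeholder} for the next operation of a specific child stream. The target lengths $L_u$ are guessed at $v$, passed to the children when universally branching into them, and each child verifies with an $O(\log)$-size counter that its subtree produces exactly that many operations. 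A global sequence $s$ is reconstructed from an accepting computation by merging, at every vertex, $v$'s own operations with the child streams according to the merge pattern. Since the partition of operations into blocks (hence into vertices) is fixed, each $s\in\crs{D}{\dep}$ with $\bar c\in Q(s(D))$ decomposes \emph{uniquely} into (local sequences, merge patterns, a satisfying homomorphism), and because the homomorphism is guessed but never written to the labeling tape, distinct sequences biject with distinct valid outputs; the telescoping identity guarantees that the existential merge choices reproduce exactly the multinomial factor, whence $\mathsf{span}_{M_S^k}=|\{s\in\crs{D}{\dep}\mid \bar c\in Q(s(D))\}|$.

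The hardest part will be to verify that this richer encoding keeps $M_S^k$ well-behaved and that the correspondence is exactly one-to-one. For well-behavedness, I will argue that the total output size stays polynomial (the placeholders for a child $u$ add $L_u$ nodes at its parent, and $\sum_u L_u$ is bounded by the polynomial sequence length times the number of vertices of $T$), that only counters and pointers of logarithmic size are ever stored, and that interspersing the merge-pattern labels with the $2$-uniform universal branching keeps the number of universal configurations on every labeled-free path bounded by a constant. For the bijection, the delicate points are that every block is visited exactly once (guaranteed by strong completeness together with the $\prec_T$-minimal-covering-vertex test, exactly as in $\mathsf{Rep}[k]$), that the length-consistency checks rule out ill-formed merges, and that self-join-freeness prevents two vertices from making conflicting choices on the same block. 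Item~(2), and hence the lemma, will then follow.
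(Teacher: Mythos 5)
Your overall architecture (traverse $H$, guess coherent tuple mappings, emit per-block information, and amplify by the number of shuffles of the per-block operation streams) matches the paper's, but you handle the interleaving by a genuinely different device: the paper fixes a single global order $B_1,\dots,B_m$ over all blocks and, upon finishing the $i$-th block, emits (as a path of bits) an identifier $p\in\left[\binom{b}{b'}\right]$ for one of the ways to interleave that block's operations with all earlier ones, the product of these binomials telescoping to the multinomial; you instead factorize the multinomial along $T$ via explicit merge patterns with placeholders. Your route is workable in principle, but note that the identity $\frac{L!}{\prod_B \ell_B!}=\prod_{v}\binom{L_v}{\ell_v,(L_u)_u}$ is false whenever a vertex is the $\prec_T$-minimal covering vertex of more than one atom (the generic case): telescoping the right-hand side yields $L!/\prod_v \ell_v!$, which undercounts by the factor $\prod_v\binom{\ell_v}{(\ell_B)_{B\text{ at }v}}$ accounting for the interleaving of the several blocks handled at the same vertex. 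This is repairable by letting the merge pattern at $v$ distinguish the stream of each individual block of $v$ rather than lumping them into one ``own'' stream, but as stated the count is wrong.

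The more serious gap is your plan to encode, per block, ``the ordered list of justified operations'' as actual operations. To certify that the $r$-th operation removes facts that are still present in the block (and spares the designated survivor $\alpha$), the machine would have to remember which facts were removed by the first $r-1$ operations; that is $\Omega(|B|)$ bits on the working tape, and the labeling tape is write-only, so previously emitted labels cannot be re-read. This is precisely why the paper replaces actual operations by pairs $(-g,p)$, where $-g\in\{-1,-2\}$ is only the \emph{shape} of the operation and $p\in[\#\mathsf{opsFor}(n,-g)]$ indexes the operations of that shape over the residual block, whose identity is irrelevant to the machine: only its size $n$ is tracked, and the concrete operation is recovered a posteriori from a fixed order on the operations over the residual set. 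Without this (or an equivalent) device, item (1) of the lemma --- that $M_S^k$ is a well-behaved ATO with logarithmically bounded working tape --- fails for your construction, and the claimed bijection in item (2) cannot even be set up.
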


Item (2) of Theorem~\ref{the:in-spantl} readily follows from Lemma~\ref{lem:sequences-ato}. The rest of this section is devoted to discussing the procedure $\mathsf{Seq}[k]$ and why Lemma~\ref{lem:sequences-ato} holds. But first we need some auxiliary notions.

\OMIT{
\def\thmsequencesspantl{
	For every $k > 0$, $\sharp\mathsf{Sequences}[k] \in \spantl$.
}

\begin{theorem}\label{thm:sequences-spantl}
\thmsequencesspantl
\end{theorem}

To prove the above theorem, we introduce a procedure (i.e., Algorithm~\ref{alg:sequences}) describing the computation of an ATO $M$ with input a database $D$, a set $\dep$ of primary keys, a query $Q(\bar x)$ from $\sjf$, a generalized hypertree decomposition $H$ of $Q$ of width $k$, and a tuple $\bar c \in \adom{D}^{|\bar x|}$, such that $(D,Q,H)$ is in normal form. With Algorithm~\ref{alg:sequences} in place, our goal is to prove the following lemma.

\def\lemsequencesato{
	For every $k > 0$, the following hold:
	\begin{itemize}
		\item Algorithm~\ref{alg:sequences} can be implemented as a well-behaved ATO $M$, and
		\item for every database $D$, set $\dep$ of primary keys, query $Q(\bar x)$, generalized hypertree decomposition $H$ of $Q$ of width $k$, and $\bar c \in \adom{D}^{|\bar x|}$ with $(D,Q,H)$ in normal form, $\mathsf{span}_M(D,\dep,Q,H,\bar c) = \sharp\mathsf{Sequences}[k](D,\dep,Q,H,\bar c)$.
	\end{itemize}
}

\begin{lemma}\label{lem:sequences-ato}
\lemsequencesato	
\end{lemma}

Theorem~\ref{thm:sequences-spantl} immediately follows from Lemma~\ref{lem:sequences-ato}, Proposition~\ref{pro:normal-form}, and Proposition~\ref{pro:logspace-closure}.
The rest of this section is devoted to discuss Algorithm~\ref{alg:sequences}, and proving Lemma~\ref{lem:sequences-ato}. We first need some auxiliary notions.
}

\medskip
\noindent \paragraph{Auxiliary Notions.}
Consider a database $D$ and a set $\dep$ of primary keys over a schema $\ins{S}$. 
Moreover, for an integer $n>0$ and $\alpha \in  (\{R(\bar c) \mid R/m \in \ins{S} \text{ and } \bar c \in \adom{D}^m\} \cup \{\bot\})$, we define the set
\[
	\mathsf{shape}(n,\alpha)\ =\ \begin{cases}
		\{-1,-2\} & \text{if } n>1, \\
		\{-1\} & \text{if } n=1 \text{ and } \alpha \neq \bot,\\
		\emptyset & \text{otherwise.}
	\end{cases}
\]
The above set essentially collects the ``shapes'' of operations (i.e., remove a fact or a pair of facts) that one can apply on a block of the database, assuming that only $n$ facts of the block must be removed overall, and either the block must remain empty (i.e., $\alpha = \bot$), or the block must contain exactly one fact (i.e., $\alpha \neq \bot$).
Finally, for an integer $n>0$ and $i \in \{-1,-2\}$, let $\#\mathsf{opsFor}(n,i) = n$, if $i = -1$, and $\#\mathsf{opsFor}(n,i) = \frac{n(n-1)}{2}$, if $i = -2$. Roughly, assuming that $n$ facts of a block must be removed, $\#\mathsf{opsFor}(n,i)$ is the total number of operations of the shape determined by $i$ that can be applied.

\setlength{\textfloatsep}{2.5em}
\begin{algorithm}[t]
	\SetInd{0.7em}{0.7em}
	\DontPrintSemicolon
	\SetArgSty{textnormal}
	\KwIn{A database $D$, a set $\dep$ of primary keys, a CQ $Q(\bar x)$ from $\sjf$, a generalized hypertree decomposition $H=(T,\chi,\lambda)$ of $Q$ of width $k$, and $\bar c \in \adom{D}^{|\bar x|}$, where $(D,Q,H)$ is in normal form}
	\vspace{2mm}
	{$v := \mathsf{root}(T)$; $A := \emptyset$, $b := 0$;\\}
	{Guess $N \in [|D|]$;\\}
	\vspace{1mm}
	{Assuming $\lambda(v) = \{ R_{i_1}(\bar y_{i_1}), \ldots, R_{i_\ell}(\bar y_{i_\ell})\}$, guess a set $A' =$ $\{ \bar y_{i_1} \mapsto \bar c_1,\ldots,\bar y_{i_\ell} \mapsto \bar c_\ell\}$, with $R_{i_j}(\bar c_j) \in D$ for $j \in [\ell]$, and verify $A \cup A' \cup \{\bar x \mapsto \bar c\}$ is coherent; if not, \textbf{reject};\\}\label{line:begin-sequences}
	\vspace{1mm}
	\For{$j=1, \ldots, \ell$}{
		\If{$v$ is the $\prec_T$-minimal covering vertex for $R_{i_j}(\bar y_{i_j})$}{
			\ForEach{$B \in \block{\dep}{R_{i_j},D}$}{
				
				\lIf{$B = \{\beta\}$}{$\alpha := \beta$;}
				\lElseIf{$R_{i_j}(\bar c_j) \in B$}{$\alpha := R_{i_j}(\bar c_j)$;}
				\lElse{Guess $\alpha \in B \cup \{\bot\}$;}
				
				\lIfElse{$\alpha = \bot$}{$n := |B|$}{$n:= |B| - 1$;}\label{line:being-inner-seq}
				
				{$b' := b$;\\}
				
				\While{ $n>0$ }{
					\lIf{$\mathsf{shape}(n,\alpha) = \emptyset$}{\textbf{reject};}\lElse{Guess $-g \in \mathsf{shape}(n,\alpha)$;}
					{Guess $p \in [\#\mathsf{opsFor}(n,-g)]$;\\}
					{Label with $(-g,p)$;\\}
					{$b := b + 1$; $N := N -1$; $n := n - g$;\\}
				}
				{Guess $p \in \left[b \choose b'\right]$;\\}
				{Label with $(\alpha,p)$;\\}\label{line:end-inner-seq}
			}
		}
	}
	
	\If{$v$ is not a leaf of $T$}{
		{Guess $p \in \{0,\ldots,N\}$;\\}\label{line:being-outer-seq}
		{Universally guess a child $u$ of $v$ in $T$;\\}
		\lIf{$u$ is the first child of $v$}{$N := p$;}
		\lElse{$N := N - p$; $b := b + p$;}\label{line:end-outer-seq}
		{$v := u$; $A := A'$;\\}
		{\textbf{goto} line \ref{line:begin-sequences};\\}
	}\lElse{\textbf{if} $N = 0$ \textbf{then} \textbf{accept}; \textbf{else} \textbf{reject};}
	
	\caption{The alternating procedure $\mathsf{Seq}[k]$}\label{alg:sequences}
\end{algorithm}

\medskip
\noindent \paragraph{The Procedure $\mathsf{Seq}[k]$.}
Let $M_S^k$ be the ATO underlying $\mathsf{Seq}[k]$. As for $\mathsf{Rep}[k]$, we assume that $M_S^k$ immediately moves from the initial state to a non-labeling state.
$\mathsf{Seq}[k]$ proceeds similarly to $\mathsf{Rep}[k]$, i.e., it traverses the generalized hypertree decomposition $H$, where for each vertex $v$, it guesses a set of tuple mappings $A'$ coherent with $\{\bar x \mapsto \bar c\}$ and with the set of tuple mappings of the previous vertex, and non-deterministically chooses which atom (if any) to keep from each block relative to the relation names in $\lambda(v)$.

The key difference between $\mathsf{Seq}[k]$ and $\mathsf{Rep}[k]$ is that, for a certain operational repair $D'$ that $\mathsf{Seq}[k]$ internally guesses, $\mathsf{Seq}[k]$ must output a number of trees that coincides with the number of complete $(D,\dep)$-repairing sequences $s$ such that $s(D) = D'$. To achieve this, at the beginning of the computation, $\mathsf{Seq}[k]$ guesses the length $N$ of the repairing sequence being computed. Then, whenever it chooses some $\alpha \in B \cup \{\bot\}$ for a certain block $B$ (lines~7-9), then, in lines~10-17, it will non-deterministically output a \emph{path} of the form $L_1 \rightarrow \cdots \rightarrow L_n$, where $L_1,\ldots,L_n$ are special symbols describing the ``shape'' of operations to be performed in order to leave in $B$ only the fact $\alpha$, if $\alpha \neq \bot$, or no facts, if $\alpha = \bot$.
%
%
Each symbol $L_i$ is a pair of the form $(-g,p)$, where $-g \in \{-1,-2\}$ denotes whether the operation removes one or a pair of facts. For example, a path $(-1,n_1) \rightarrow (-2,n_2) \rightarrow (-2,n_3) \rightarrow (-1,n_4)$ is a template specifying that the first operation on the block should remove a single fact, then the next two operations should remove pairs of facts, and the last operation should remove a single fact.

However, a symbol in $\{-1,-2\}$ alone only denotes a \emph{family} of operations over the block $B$, i.e., there may be multiple ways for translating such a symbol into an actual operation, depending on the size of $B$ and the value of $\alpha$. 
%
%
Thus, in a pair $(-g,p)$, the integer $p$ acts as an \emph{identifier} of the concrete operation being applied having the shape of $-g$.\footnote{Identifiers are used in place of actual operations as the procedure would not be able to remember the whole ``history'' of applied operations in logarithmic space.}
The last important observation is that $\mathsf{Seq}[k]$ outputs a template of \emph{all} operations of a certain block, before moving to another block, and further, blocks are considered in a fixed order. However, a complete $(D,\dep)$-repairing sequence might interleave operations coming from different blocks in an arbitrary way as each block is independent. The procedure accounts for this in lines 18-19 as follows.
Let $k_1,\dots,k_m$ be the number of operations applied over each individual block during a run of $\mathsf{Seq}[k]$ (where $m$ is the total number of blocks of $D$). Any arbitrary interleaving of the above operations gives rise to a complete $(D,\dep)$-repairing sequence. Hence, the number of complete $(D,\dep)$-repairing sequences that uses those numbers of operations is the multinomial coefficient
\[{k_1+\dots+k_m\choose k_1,\dots,k_m}={k_1\choose 0}\times {k_1+k_2\choose k_1}\times\dots\times {k_1+\dots+k_m\choose k_1+\dots+k_{m-1}}.\]
Hence, when $\mathsf{Seq}[k]$ completely ``repairs'' the $i$-th block of $D$ (i.e., it reaches line~18), in lines~18-19 it ``amplifies'' the number of trees by guessing an integer $p \in \left[{b\choose b'}\right]$, with $b = k_1 + \cdots + k_i$ and $b' = k_1 + \cdots k_{i-1}$, and by running the statement ``Label with $(\alpha,p)$''.
%
%
We can show that $b \choose b'$, and actually any number $p \in \left[{b \choose b'}\right]$, can be computed in logspace using ideas from~\cite{Chiu2021}. However, the number of bits required to write $p \in \left[{b \choose b'}\right]$ in the labeling tape is polynomial. So, by ``Label $(\alpha,p)$'' we actually mean that $M_S^k$ outputs a \emph{path} of polynomial length of the form $(\alpha,b_1) \rightarrow (\alpha,b_2) \rightarrow \cdots \rightarrow (\alpha,b_n)$, where $b_i$ is the $i$-th bit of the binary representation of $p$.

With all blocks relative to a vertex $v$ completely processed, the procedure moves in parallel to the two children of $v$. To this end, it guesses how many operations are going to be applied in the sequences produced in the left and the right branch, respectively. It does this by guessing an integer $p \in \{0,\ldots,N\}$, which essentially partitions $N$ as the sum of two numbers, i.e., $p$ and $N-p$, and then adapts the values of $N$ (number of operations left) and $b$ (number of operations applied so far) in each branch, accordingly. A computation is accepting if in each of its leaves all required operations have been applied (i.e., $N = 0$).


\begin{example}\label{ex:sequences-algo}
	Consider the database $D$, the set $\dep$ of primary keys, the Boolean CQ $Q$, and the generalized hypertree decomposition $H$ of $Q$ used in the discussion on valid outputs in Section~\ref{sec:rep-k}. 
	A possible accepting computation of the ATO $M_S^2$ described by $\mathsf{Seq}[2]$ on input $D$, $\dep$, $Q$, $H$, and the empty tuple $()$, is the one corresponding to the operational repair $D' = \{P(a_1,c), S(c,d), T(d,a_1), U(c,f), U(h,i)\}$, with $D' \models Q$, that outputs a tree of one of the following forms:
	
	\medskip
	\noindent
	\includegraphics[width=0.48\textwidth]{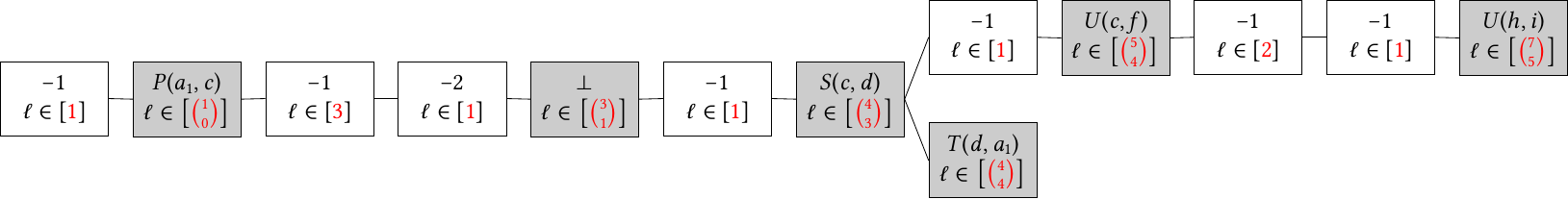}
	\ \\
	\noindent
	\includegraphics[width=0.44\textwidth]{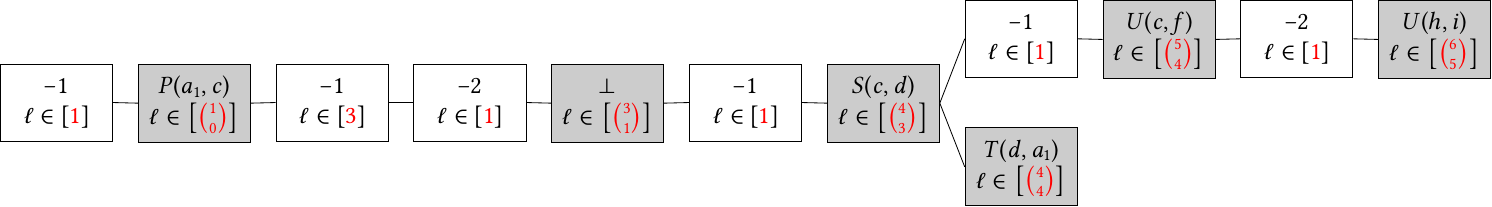}
	%
	
	\noindent Note that, for simplicity, we omit the root with label $\epsilon$, and each pair that labels a node is presented by showing each of its elements on a separate line inside the node. Moreover, the nodes with a gray background containing a pair $(\alpha,\ell)$ are actually a shorthand for paths of the form $(\alpha,b_1) \rightarrow (\alpha,b_2) \rightarrow \cdots \rightarrow (\alpha,b_n)$, where $b_i$ is the $i$-th bit of $\ell$.
	Different accepting computations of the procedure that internally guess the same operational repair $D'$ will output a tree obtained from one of the above two forms, having each $\ell$ equal to some value in the corresponding set. Hence, the total number of such trees is $s_1 + s_2$, where $s_1$ (resp., $s_2$) is the product of all ``amplifying factors'' written in red in each label of the first (resp., second) tree in the figure above, i.e., 
	\[
	\begin{array}{ll}
		s_1 = & 1 \times {1 \choose 0} \times 3 \times 1 \times {3 \choose 1} \times 1 \times {4 \choose 3} \times {4 \choose 4} \times 1 \times {5 \choose 4} \times 2 \times 1 \times {7 \choose 5}, \\\ \\
		s_2 = & 1 \times {1 \choose 0} \times 3 \times 1 \times {3 \choose 1} \times 1 \times {4 \choose 3} \times {4 \choose 4} \times 1 \times {5 \choose 4} \times 1 \times {6 \choose 5}.
	\end{array}
	\]
	One can verify that $s_1 + s_2$ is precisely the number of complete $(D,\dep)$-repairing sequences $s$ such that $s(D) = D'$, as needed. \hfill\markfull
\end{example}

\OMIT{
\medskip
\noindent \paragraph{The Procedure $\mathsf{Seq}[k]$.}
As for $\mathsf{Rep}[k]$, we assume that $\mathsf{Seq}[k]$ immediately moves from the initial state to a non-labeling state.
It proceeds similarly to $\mathsf{Rep}[k]$, i.e., it performs a traversal of the generalized hypertree decomposition $H$, and during the traversal it guesses a set of mappings $A'$  that map $\bar x$ to $\bar c$ and is coherent with the set of mappings of the previous vertex. It then makes a (possibly non-deterministic) choice of which atom (if any) to keep from each block of each relation name $R_{i_j}$ occurring in $\lambda(v)$, where $v$ is the currently processed vertex of $H$.

The key difference of $\mathsf{Seq}[k]$ compared to $\mathsf{Rep}[k]$ is that, for a certain operational repair $D'$ that the algorithm internally guesses by non-deterministically choosing the final state of each block of $D$, the algorithm needs to produce a number of trees that coincides with the number of complete $(D,\dep)$-repairing sequences $s$ such that $s(D) = D'$. To achieve this, at the beginning of the computation, $\mathsf{Seq}[k]$ guesses the length $N$ of the repairing sequence being computed. Then, whenever the algorithm chooses some $\alpha \in B \cup \{\bot\}$ for a certain block $B$ of the database $D$, it will non-deterministically output a path that provides a ``template'' describing the operations to be performed on that block in order to leave in $B$ only the fact $\alpha$, if $\alpha \neq \bot$, or no facts if $\alpha = \bot$.
This template is represented in the output of the procedure as a path of nodes labeled with $-1$ or $-2$, indicating the removal of a single fact or a pair of facts, respectively.
For example, consider the sequence $-1,-2,-2,-1$. Here, the first operation on the block should remove a single fact, then the next two operations should remove pairs of facts, and the last operation should remove a single fact. After this sequence of $-1$ and $-2$ symbols, the last symbol for the block will be $\alpha$, which indicates the final state of the block, i.e., either empty or contains a single fact.

A crucial observation is that each symbol $-1$ or $-2$ actually denotes a \emph{family} of operations over the block $B$, i.e., there may be multiple ways for translating it into an actual operation. For instance, in a block of five facts, either four or five potential operations could remove a single fact, depending on the block's final state. If the block is left empty, any of the five facts can be removed. On the other hand, if a specific fact is kept, only the remaining four facts can be removed. Thus, the actual label the algorithm produces for each operation is a \emph{pair} of a symbol $-i$, with $i \in \{1,2\}$, and an integer $p$ guessed from $[\#\mathsf{opsFor}(n,-i)]$, where $\#\mathsf{opsFor}(n,-i)$ is precisely the number of possible operations of the shape $-i$ that can be applied; this accurately accounts for all the possible translations of the symbol $-i$ into an actual operation.\footnote{The reason why the algorithm produces such templates, rather than actual sequences of operations is because, given it has chosen the first $\ell$ operations for a block $B$, in order to choose the next operation, it must remember al the previous $\ell$ operations, which cannot be done in logarithmic space.}
 
The last important observation is that $\mathsf{Seq}[k]$ produces \emph{all} operations of a certain block, before moving to another block, and blocks are considered in a fixed order. However, a complete $(D,\dep)$-repairing sequence might interleave operations coming from different blocks in an arbitrary way as each block is independent from the others. The procedure accounts for this as follows.
Let $k_1,\dots,k_m$ be the number of operations applied over each individual block during a run of $\mathsf{Seq}[k]$ (where $m$ is the total number of blocks of $D$). The number of ways to create a complete $(D,\dep)$-repairing sequence using those operations is provided by the multinomial coefficient
\[
{k_1+\dots+k_m\choose k_1,\dots,k_m}\ =\ \dfrac{(k_1 + \cdots + k_m)}{k_1! \times \cdots \times k_m!}.
\]
It is well-known that such a multinomial coefficient can be represented as a product of binomial coefficients as follows:
\[{k_1+\dots+k_m\choose k_1,\dots,k_m}={k_1\choose 0}\times {k_1+k_2\choose k_1}\times\dots\times {k_1+\dots+k_m\choose k_1+\dots+k_{m-1}}.\]
Hence, when $\mathsf{Seq}[k]$ completely ``repairs'' the $i$-th block of $D$, it uses the corresponding binomial coefficient ${k_1 + \cdots + k_i} \choose {k_1 + \cdots k_{i-1}}$ to ``amplify'' the number of trees produced; the binomial coefficient of the current block is kept updated using the variables $b$ and $b'$. For instance, when the first block $B$ of $D$ has been completely processed, the algorithm guesses an integer $p \in \left[{b\choose b'}\right]$, where $b' = 0$ and $b= k_1$, and outputs a node labeled with $(\alpha,p)$, where $\alpha$ is the atom to keep in the block (encoded as a pointer), or $\bot$ if no atom should be kept. When the algorithm completes the second block, the binomial coefficient will be $b \choose b'$, where $b' = k_1$ and $b = k_1 + k_2$, and so on until all blocks determined by the current vertex $v$ of $H$ have been processed.
Note that, although one can show that the binomial coefficient $b \choose b'$, and thus, any number $p \in \left[{b \choose b'}\right]$, can be computed in logspace (this is a consequence of~\cite{Chiu2021}), the number of bits required to store $p \in \left[{b \choose b'}\right]$ is polynomial. Hence, we cannot afford to label a single node with $(\alpha,p)$, as this would violate the well-behavedness of the underlying ATO. Instead, ``Label $(\alpha,p)$'' means that the machine outputs a \emph{path} of polynomial length the form 
$(\alpha,b_1) \rightarrow (\alpha,b_2) \rightarrow \cdots \rightarrow (\alpha,b_n)$, where $b_i$ is the $i$-th bit of the binary representation of $p$.

With all blocks relative to a vertex $v$ completely processed, the procedure moves in parallel to the two children of $v$. To this end, it guesses how many operations are going to be applied in the sequences produced in the left and the right branch, respectively. It does this by guessing an integer $p \in \{0,\ldots,N\}$, which essentially partitions $N$ as the sum of two numbers, i.e., $p$ and $N-p$, and then adapts the values of $N$ and $b$ in each branch, accordingly. A computation is accepting if in each of its leaves the required number $N$ of operations have all been applied (i.e., $N = 0$).

We proceed to give an example that illustrates the above discussion concerning the procedure $\mathsf{Seq}[k]$.

\begin{example}\label{ex:sequences-algo}
Consider the database $D$, the set $\dep$ of primary keys, the Boolean CQ $Q$, and the generalized hypertree decomposition $H$ of $Q$ given in Example~\ref{ex:repairs-algo}.
A possible accepting computation of the ATO described by $\mathsf{Seq}[2]$ with input $D$, $\dep$, $Q$, $H$, and the empty tuple $()$, is the one corresponding to the operational repair $D' = \{P(a_1,c), S(c,d), T(d,a_1), U(c,f), U(h,i)\}$, with $D' \models Q$, that outputs a tree of one of the following forms:

\vspace*{2mm}
\noindent
\includegraphics[width=0.48\textwidth]{example-tree1-seq}
\ \\
\noindent
\includegraphics[width=0.44\textwidth]{example-tree2-seq}
\vspace*{2mm}

\noindent where each pair that labels a node is presented by showing each of its elements on a separate line inside the node. Moreover, the nodes with a thick border containing a pair $(\alpha,\ell)$ are actually a shorthand for paths of the form $(\alpha,b_1) \rightarrow (\alpha,b_2) \rightarrow \cdots \rightarrow (\alpha,b_n)$, where $b_i$ is the $i$-th bit of $\ell$.
Different accepting computations of the procedure that internally guess the same operational repair $D'$ will output a tree obtained from one of the above two forms, having each $\ell$ equal to some value in the corresponding set. Hence, the total number of such trees is $s_1 + s_2$, where $s_1$ (resp., $s_2$) is the product of all ``amplifying factors'' written in red in each label of the first (resp., second) tree in the figure above, i.e., 
\[
\begin{array}{ll}
s_1 = & 1 \times {1 \choose 0} \times 3 \times 1 \times {3 \choose 1} \times 1 \times {4 \choose 3} \times {4 \choose 4} \times 1 \times {5 \choose 4} \times 2 \times 1 \times {7 \choose 5}, \\\ \\
s_2 = & 1 \times {1 \choose 0} \times 3 \times 1 \times {3 \choose 1} \times 1 \times {4 \choose 3} \times {4 \choose 4} \times 1 \times {5 \choose 4} \times 1 \times {6 \choose 5}.
\end{array}
\]
One can verify that $s_1 + s_2$ is precisely the number of complete $(D,\dep)$-repairing sequences $s$ such that $s(D) = D'$, as needed. \hfill\markfull
\end{example}
}

\section{Future Work}\label{sec:conclusion}

The problems of interest are well-understood in the case of primary keys for both settings of data and combined complexity. However, it remains open whether those results can be extended to more general constraints such as arbitrary keys or even functional dependencies. Settling those open problems will be our next research goal.

\OMIT{
The take-home message of our work is that uniform operational CQA is flexible enough to lead to approximability results that go beyond the simple case of primary keys, which seems to be the limit of the classical approach to CQA.

Although we understand well uniform operational CQA, there are still interesting open problems  concerning approximability:
\begin{enumerate}
	\item the case of keys and uniform repairs (we only have a negative result for the problem of counting repairs), 
	\item the case of keys/FDs and uniform sequences, and 
	\item the case of FDs and uniform operations (we only have a positive result assuming singleton operations).
\end{enumerate}

Another interesting direction for future research is to consider conceptually relevant distributions that deviate from the uniform ones considered in this work, and perform the same complexity analysis of exact and approximate operational CQA.
}



\newpage
\appendix
\section{Proof of Theorem~\ref{the:ocqa-exact}}

We prove the following result:

\begin{manualtheorem}{\ref{the:ocqa-exact}}
\theocqaexact
\end{manualtheorem}

For brevity, for a fact $f$, we write $-f$ instead of $-\{f\}$ to denote the operation that removes $f$.
We first deal with the uniform repairs case ($\star = \ur$) and then with the uniform sequences case ($\star = \us$).

\subsection{The Uniform Repairs Case}

Fix an arbitrary $k>0$. We are going to show that $\ocqa^\ur[\sjf \cap \ghw_k]$ is $\sharp ${\rm P}-hard. Let us first introduce the $\sharp ${\rm P}-hard problem that we are going to reduce to $\ocqa^\ur[\sjf \cap \ghw_k]$. 
Consider the undirected bipartite graph $H = (V_H,E_H)$, depicted in Figure~\ref{fig:hard-graph}, where $V_{H,L} = \{1_L,0_L,?_L\}$ and $V_{H,R} = \{1_R,0_R,?_R\}$, with $\{V_{H,L},V_{H,R}\}$ being a partition of $V_H$, and $E_H = \{\{u,v\} \mid (u,v) \in (V_{H,L} \times V_{H,R}) \setminus \{(1_L,1_R)\}\}$.

\medskip
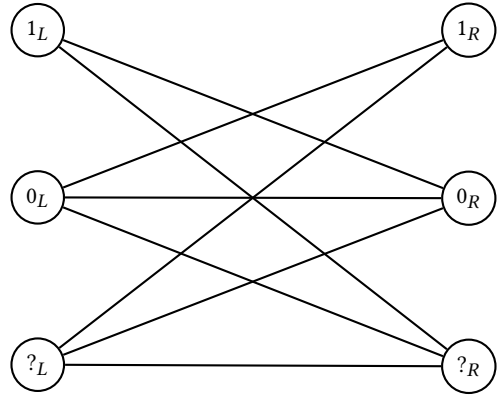
\begin{figure}[ht]
	\centering
	\begin{tikzpicture}[thick, main/.style = {draw, circle}]
	\node[main] (1L) {$1_L$}; 
	\node[main] (0L) [below=15mm of 1L] {$0_L$}; 
	\node[main] (?L) [below=15mm of 0L] {$?_L$}; 
	\node[main] (1R) [right=50mm of 1L] {$1_R$}; 
	\node[main] (0R) [below=15mm of 1R] {$0_R$}; 
	\node[main] (?R) [below=15mm of 0R] {$?_R$};
	\draw (1L) -- (0R);
	\draw (1L) -- (?R);
	\draw (0L) -- (1R);
	\draw (0L) -- (0R);
	\draw (0L) -- (?R);
	\draw (?L) -- (1R);
	\draw (?L) -- (0R);
	\draw (?L) -- (?R);
	\end{tikzpicture}
	\caption{The graph $H$.}
	\label{fig:hard-graph}
\end{figure}

\medskip

\noindent Given an undirected graph $G = (V_G,E_G)$, a homomorphism from $G$ to $H$ is a mapping $h : V_G \rightarrow V_H$ such that $\{u,v\} \in E_G$ implies $\{h(u),h(v)\} \in E_H$. We write $\mathsf{hom}(G,H)$ for the set of homomorphisms from $G$ to $H$.
The problem $\sharp H\text{-}\mathsf{Coloring}$ is defined as follows:

\medskip

\begin{center}
	\fbox{\begin{tabular}{ll}
			{\small PROBLEM} : & $\sharp H\text{-}\mathsf{Coloring}$\\
			{\small INPUT} : & A connected undirected graph $G$.\\
			{\small OUTPUT} : &  The number $|\mathsf{hom}(G,H)|$.
	\end{tabular}}
\end{center}

\medskip

\noindent It is implicit in~\cite{Dyer00} that $\sharp H\text{-}\mathsf{Coloring}$ is $\sharp ${\rm P}-hard. In fact,~\cite{Dyer00} establishes the following dichotomy result: $\sharp \hat{H}\text{-}\mathsf{Coloring}$ is $\sharp ${\rm P}-hard if $\hat{H}$ has a connected component which is neither an isolated node without a loop, nor a complete graph with all loops present, nor a complete bipartite graph without loops; otherwise, it is solvable in polynomial time. Since our fixed graph $H$ above consists of a single connected component which is neither a single node, nor a complete graph with all loops present, nor a complete bipartite graph without loops (the edge $(1_L,1_R)$ is missing), we conclude that $\sharp H\text{-}\mathsf{Coloring}$ is indeed $\sharp ${\rm P}-hard.
Note that we can assume the input graph $G$ to be connected since the total number of homomorphisms from a (non-connected) graph $G'$ to $H$ is the product of the number of homomorphisms from each connected component of $G'$ to $H$.
We proceed to show via a polynomial-time Turing reduction from $\sharp H\text{-}\mathsf{Coloring}$ that $\ocqa^\ur[\sjf \cap \ghw_k]$ is $\sharp ${\rm P}-hard. 
In other words, we need to show that, given a connected undirected graph $G$, the number $|\mathsf{hom}(G,H)|$ can be computed in polynomial time in the size of $G$ assuming that we have access to an oracle for the problem $\ocqa^\ur[\sjf \cap \ghw_k]$.

Fix a connected undirected bipartite graph $G$ with vertex partition $\{V_{G,L},V_{G,R}\}$. Let $\ins{S}_k$ be the schema 
\[
\{V_L/2, V_R/2, E/2, T/1, T'/1\}\cup \{C_{i,j}/2 \mid i,j\in [k+1],\ i<j\},
\]
with $(A,B)$ being the tuple of attributes of both $V_L$ and $V_R$.
We define the database $D_G^k$ over $\ins{S}_k$ encoding $G$ as follows:
\begin{eqnarray*}
&& \{V_L(u,0),V(u,1) \mid u \in V_{G,L}\}\\ 
& \cup& \{V_R(u,0),V(u,1) \mid u \in V_{G,R}\}\\
&\cup& \{E(u,v) \mid (u,v) \in E_G\}\ \cup\ \{T(1)\}\ \cup\ \{T'(1)\}\\
&\cup& \{C_{i,j}(i,j) \mid i,j\in [k+1],\ i<j\}.    
\end{eqnarray*}
We also define the set $\dep$ of keys consisting of
\[
\key{V_L} = \{1\} \quad \textrm{and} \quad \key{V_R} = \{1\}
\]
and the (constant-free) self-join-free Boolean CQ $Q_k$ 
\[
\textrm{Ans}()\ \text{:-}\ E(x,y), V_L(x,z), V_R(y,z'), T(z), T'(z'), \bigwedge_{\substack{i,j\in [k+1]\\i<j}} C_{i,j}(w_i,w_j).
\]
It is easy to verify that $Q_k\in \ghw_k$ since the big conjunction over $C_{i,j}$-atoms encodes a clique of size $k+1$. 
%
Let us stress that the role of this conjunction is only to force the CQ $Q_k$ to be of generalized hypertreewidth $k$ as this part of the query will be entailed in every repair (since there are no inconsistencies involving a $C_{i,j}$-atom).
%

\OMIT{
\noindent
Given an undirected bipartite graph $G = (V_G,E_G)$, with vertex partition $V_{G,L}\cup V_{G,R}$, we define the following database over $\ins{S}$ encoding $G$:
\begin{align*}
D_G^k\ =\ &\{V_L(u,0),V(u,1) \mid u \in V_{G,L}\}\ \cup\ \{V_R(u,0),V(u,1) \mid u \in V_{G,R}\}\\
&\cup\ \{E(u,v) \mid (u,v) \in E_G\}\ \cup\ \{T(1)\}\ \cup\ \{T'(1)\}\\
&\cup\ \{C_{i,j}(i,j) \mid i,j\in [k+1],\ i<j\}.    
\end{align*}

\noindent
Further, we stress that the role of relations $C_{i,j}$ is only to provide the query with the required generalized hypertree-width, as this part of the query will be entailed in every repair (since there are no inconsistencies involving a $C_{i,j}$-atom).
}

Consider now the algorithm $\mathsf{HOM}$, which accepts as input a connected undirected graph $G = (V_G,E_G)$ and has access to an oracle for $\ocqa^\ur[\sjf \cap \ghw_k]$, 
and performs as follows:
\begin{enumerate}
	\item If $|V_G|=1$ and $E_G=\emptyset$, then output the number $6$; otherwise, continue.
	\item If $G$ is not bipartite, then output the number $0$; otherwise, continue.
	\item Let $r = \mathsf{RF}^\ur(D_G^k,\dep,Q_k,())$.
	\item Output the number $2 \cdot 3^{|V_G|} \cdot (1- r)$.
\end{enumerate}
It is clear that $\mathsf{HOM}(G)$ runs in polynomial time in $||G||$ assuming access to an oracle for the problem $\ocqa^\ur[\sjf \cap \ghw_k]$. It remains to show that $|\mathsf{hom}(G,H)| = \mathsf{HOM}(G)$.

If $G$ consists of only one isolated node, then there are six homomorphism from $G$ to $H$ (to each of the six vertices of $H$). Further, if $G$ is not bipartite, then clearly $|\mathsf{hom}(G,H)|=0$.
Let us now consider the case where $G$ is a connected undirected bipartite graph with vertex partition $\{V_{G,L},V_{G,R}\}$.
%
Recall that
\[\mathsf{RF}^\ur(D_G^k,\dep,Q_k,()) = \frac{|\{D \in \opr{D_G^k}{\dep} \mid D \models Q_k\}|}{|\opr{D_G^k}{\dep}|}.\]
%
Observe that there are $3^{|V_G|}$ operational repairs of $D_G^k$ w.r.t.~$\dep$. In particular, in each such a repair $D$, for each node $u \in V_{G,L}$ of $G$, either $V_L(u,0) \in D$ and $V_L(u,1) \not \in D$, or $V_L(u,0) \not \in D$ and $V_L(u,1) \in D$, or $V_L(u,0),V_L(u,1) \not \in D$. Further, for each node $u \in V_{G,R}$ of $G$, either $V_R(u,0) \in D$ and $V_R(u,1) \not \in D$, or $V_R(u,0) \not \in D$ and $V_R(u,1) \in D$, or $V_R(u,0),V_L(u,1) \not \in D$. Thus, there are $3^{|V_{G,L}|}\cdot 3^{|V_{G,R}|}=3^{|V_G|}$ many operational repairs of $D_G^k$ w.r.t.~$\dep$.
%
%
Therefore,
\[
\mathsf{RF}^\ur(D_G^k,\dep,Q_k,())\ =\ \dfrac{|\{D \in \opr{D_G^k}{\dep} \mid D \models Q_k\}|}{3^{|V_G|}}.
\]
Thus, $\mathsf{HOM}(G)$ coincides with
\begin{align*}
&2 \cdot 3^{|V_G|} \cdot \left(1 - \dfrac{|\{D \in \opr{D_G^k}{\dep} \mid D \models Q_k\}|}{3^{|V_G|}}\right)\\
=\; &2 \cdot \left(3^{|V_G|} - |\{D \in \opr{D_G^k}{\dep} \mid D \models Q_k\}|\right).
\end{align*}
Since $D_G^k$ has $3^{|V_G|}$ operational repairs w.r.t.~$\dep$, we can conclude that $3^{|V_G|} - |\{D \in \opr{D_G^k}{\dep} \mid D \models Q_k\}|$ is precisely the cardinality of the set $\{D \in \opr{D_G^k}{\dep} \mid D \not\models Q_k\}$.

Since $G$ and $H$ are both connected and bipartite, we know that any homomorphism from $G$ to $H$ must preserve the partitions, i.e., every vertex in one partition of $G$ must be mapped to the same partition in $H$ (and the other partition of $G$ must be mapped to the other partition of $H$). Let $\mathsf{hom'}(G,H)$ be the set of homomorphisms from $G$ to $H$ such that $V_{G,L}$ is mapped to $V_{H,L}$ and $V_{G,R}$ is mapped to $V_{H,R}$. Since $H$ is symmetric, we know that for every homomorphism in $h\in \mathsf{hom'}(G,H)$, there are two homomorphisms in $\mathsf{hom}(G,H)$. In particular, $h\in \mathsf{hom}(G,H)$ and $h'\in \mathsf{hom}(G,H)$, where $h'$ is the homomorphism that is essentially $h$ but maps the vertices of $V_{G,L}$ to $V_{H,R}$ and the vertices of $V_{G,R}$ to $V_{H,L}$. Thus, $|\mathsf{hom}(G,H)|=2\cdot|\mathsf{hom'}(G,H)|$. 
We proceed to show that $|\{D \in \opr{D_G^k}{\dep} \mid D \not\models Q_k\}|$ coincides with $|\mathsf{hom'}(G,H)|$.

\begin{lemma}\label{lem:keys-aux}
	$|\mathsf{hom'}(G,H)|\ =\ |\{D \in \opr{D_G^k}{\dep} \mid D \not\models Q_k\}|$.
\end{lemma}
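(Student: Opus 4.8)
The plan is to exhibit a bijection between the operational repairs of $D_G^k$ w.r.t.\ $\dep$ that do \emph{not} satisfy $Q_k$ and the homomorphisms in $\mathsf{hom'}(G,H)$, and then conclude by comparing cardinalities.

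First I would characterize the operational repairs of $D_G^k$. The only key violations occur inside the two-element blocks $\{V_L(u,0),V_L(u,1)\}$ for $u \in V_{G,L}$ and $\{V_R(u,0),V_R(u,1)\}$ for $u \in V_{G,R}$; every fact of the form $E(\cdot,\cdot)$, $T(1)$, $T'(1)$, or $C_{i,j}(i,j)$ has a unique key value and thus forms a singleton block that no justified operation can touch. Since these blocks are mutually independent and a justified repairing sequence may either keep one of the two facts of a block or delete both (the operation $-\{f,g\}$ being justified), each operational repair $D$ amounts to an independent three-way choice per vertex. This lets me define a map $\Phi$ sending $D$ to the vertex map $h_D : V_G \to V_H$ with, for $u \in V_{G,L}$, $h_D(u)$ equal to $0_L$, $1_L$, or $?_L$ according to whether $D$ retains $V_L(u,0)$, retains $V_L(u,1)$, or retains neither, and symmetrically for $u \in V_{G,R}$ using $0_R, 1_R, ?_R$. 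Because the $E$-, $T$-, $T'$-, and $C_{i,j}$-facts are always present and distinct choice-tuples yield distinct repairs, $\Phi$ is a bijection from $\opr{D_G^k}{\dep}$ onto the set of all maps $V_G \to V_H$ sending $V_{G,L}$ into $V_{H,L}$ and $V_{G,R}$ into $V_{H,R}$.

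The crux is to show that, under $\Phi$, the repairs with $D \not\models Q_k$ are exactly the homomorphisms in $\mathsf{hom'}(G,H)$. On the query side, I would first note that the clique gadget $\bigwedge C_{i,j}(w_i,w_j)$ is satisfied in every repair, and that the atoms $T(z)$ and $T'(z')$ force $z = z' = 1$ since $T(1)$ and $T'(1)$ are the only $T$- and $T'$-facts. Hence $D \models Q_k$ if and only if there is an edge $(x,y) \in E_G$ (with $x \in V_{G,L}$, $y \in V_{G,R}$) such that $V_L(x,1) \in D$ and $V_R(y,1) \in D$, that is, such that $h_D(x) = 1_L$ and $h_D(y) = 1_R$. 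On the homomorphism side, since $h_D$ respects the partition and the unique missing edge of $H$ across the partition is $(1_L,1_R)$, the pair $\{h_D(x),h_D(y)\}$ fails to belong to $E_H$ exactly when $h_D(x)=1_L$ and $h_D(y)=1_R$; thus $h_D$ is \emph{not} a homomorphism precisely when some edge of $G$ witnesses this combination. Comparing the two conditions yields that $D \not\models Q_k$ holds if and only if $h_D \in \mathsf{hom'}(G,H)$, so $\Phi$ restricts to a bijection between the two sets in the statement, and the claimed equality of cardinalities follows.

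The main obstacle I anticipate is the bookkeeping in the first step: pinning down the exact three-way correspondence between operational repairs and per-vertex choices. This requires confirming that no justified operation ever deletes an $E$-, $T$-, $T'$-, or $C_{i,j}$-fact (each lies in a singleton block), that every $V_L$/$V_R$ block has exactly two facts so that ``delete both'' is a genuine, reachable outcome of a complete repairing sequence, and that distinct choices produce distinct repairs. Once this structural correspondence is secured, the query analysis is routine, resting entirely on the observation that the single missing edge $(1_L,1_R)$ of $H$ is matched by the single ``bad'' combination $V_L(x,1),V_R(y,1)$ that $Q_k$ is designed to detect.
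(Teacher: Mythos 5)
Your proposal is correct and follows essentially the same route as the paper's proof: both rest on the observation that each operational repair corresponds to an independent three-way choice per vertex (keep the $0$-fact, keep the $1$-fact, or keep neither, mirroring $0,1,?$ in $H$), and that $Q_k$ is entailed exactly when some edge realizes the combination $1_L,1_R$, i.e., the unique missing edge of $H$. The only difference is presentational — you define the bijection from repairs to vertex maps and restrict it, whereas the paper defines the inverse map $\mu$ from $\mathsf{hom'}(G,H)$ to repairs and checks correctness, injectivity, and surjectivity separately.
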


\begin{proof}
	It suffices to show that there exists a bijection from the set $\mathsf{hom'}(G,H)$ to the set $\{D \in \opr{D_G^k}{\dep} \mid D \not\models Q_k\}$. To this end, we define the mapping $\mu : \mathsf{hom'}(G,H) \ra \PS(D_G^k)$, where $\PS(D_G^k)$ is the power set of $D_G^k$, as follows: for each $h \in \mathsf{hom'}(G,H)$,
	\begin{align*}
	\mu(h)\ =\ &\{V_L(u,0) \mid u \in V_{G,L} \text{ and } h(u) = 0_L\}\\
	&\cup\ \{V_L(u,1) \mid u \in V_{G,L} \text{ and } h(u) = 1_L\}\\
	&\cup\ \{V_R(u,0) \mid u \in V_{G,R} \text{ and } h(u) = 0_R\}\\
	&\cup\ \{V_R(u,1) \mid u \in V_{G,R} \text{ and } h(u) = 1_R\}\\
	&\cup\ \{E(u,v) \mid \{u,v\} \in E_G\}\ \cup\ \{T(1)\}\ \cup\ \{T'(1)\}\\
	&\cup\ \{C_{i,j}(i,j) \mid i,j\in [k+1],\ i<j\}.
	\end{align*}
	We proceed to show the following three statements:
	\begin{enumerate}
		\item $\mu$ is correct, that is, it is indeed a function from $\mathsf{hom'}(G,H)$ to $\{D \in \opr{D_G^k}{\dep} \mid D \not\models Q_k\}$.
		\item $\mu$ is injective.
		\item $\mu$ is surjective.
	\end{enumerate}
	
	\medskip 
	
	\noindent
	\paragraph{The mapping $\mu$ is correct.} Consider an arbitrary homomorphism $h \in \mathsf{hom'}(G,H)$. We need to show that there exists a $(D_G^k,\dep)$-repairing sequence $s_h$ such that $\mu(h) = s_h(D_G^k)$, $s_h(D_G^k) \models \dep$ (i.e., $s_h$ is complete), and $Q_k(s_h(D_G^k)) = \emptyset$. Let $V_G = \{u_1,\ldots,u_n\}$. Consider the sequence $s_h = \op_1,\ldots,\op_n$ such that, for every $i \in [n]$:
	\[
	\op_{i}\ =\
	\begin{cases}
	-V_L(u_i,1) & \text{if } h(u_i) = 0_L \\
	-V_L(u_i,0) & \text{if } h(u_i) = 1_L \\
	-\{V_L(u_i,0),V_L(u_i,1)\} & \text{if } h(u_i) =\ ?_L \\
	-V_R(u_i,1) & \text{if } h(u_i) = 0_R \\
	-V_R(u_i,0) & \text{if } h(u_i) = 1_R \\
	-\{V_R(u_i,0),V_R(u_i,1)\} & \text{if } h(u_i) =\ ?_R \\
	\end{cases}
	\]
	In simple words, the homomorphism $h$ guides the repairing process, i.e., $h(u_i) = 0_L$ (resp., $h(u_i)=1_L$) implies $V_L(u_i,0)$ (resp., $V_L(u_i,1)$) should be kept, while $h(u_i) =\ ?_L$ implies none of the atoms $V_L(u_i,0),V_L(u_i,1)$ should be kept. The analogous is true for the vertices that are mapped to one of $\{0_R, 1_R, ?_R\}$. It is easy to verify that $s_h$ is indeed a $(D_G^k,\dep)$-repairing sequence $s_h$ such that $\mu(h) = s_h(D_G^k)$ and $s_h(D_G^k) \models \dep$.
	The fact that $Q_k(s_h(D_G^k)) = \emptyset$ follows from the fact that, for every edge $\{u,v\} \in E_G$, $\{h(u),h(v)\} \in E_H$ cannot be the edge $(1_L,1_R)$, since it is not in $H$. This implies that for every $\{u,v\} \in E_G$, it is not possible that the atoms $V_L(u,1)$ and $V_R(v,1)$ (or $V_R(u,1)$ and $V_L(v,1)$) coexist in $s_h(D_G^k)$, which in turn implies that $Q_k(s_h(D_G^k)) = \emptyset$, as needed.

	\medskip

	\noindent
	\paragraph{The mapping $\mu$ is injective.} Assume that there are two distinct homomorphisms $h,h' \in \mathsf{hom'}(G,H)$ such that $\mu(h) = \mu(h')$. By the definition of $\mu$, we get that $h(u) = h'(u)$, for every node $u \in V_G$. But this contradicts the fact that $h$ and $h'$ are different homomorphisms of $\mathsf{hom'}(G,H)$. Therefore, for every two distinct homomorphisms $h,h' \in \mathsf{hom'}(G,H)$, $\mu(h) \neq \mu(h')$, as needed.

	\medskip

	\noindent
	\paragraph{The mapping $\mu$ is surjective.} Consider an arbitrary operational repair $D \in \{D \in \opr{D_G^k}{\dep} \mid D \not\models Q_k\}$. We need to show that there exists $h \in \mathsf{hom'}(G,H)$ such that $\mu(h) = D$. We define the mapping $h_D : V_G \ra V_H$ as follows: for every $u \in V_G$:
	\[
	h_D(u)\ =\
	\begin{cases}
	1_L & \text{if } u\in V_{G,L} \text{ and } V_L(u,1) \in D \text{ and } V_L(u,0) \not\in D \\
	0_L & \text{if } u\in V_{G,L} \text{ and } V_L(u,1) \not\in D \text{ and } V_L(u,0) \in D \\
	?_L & \text{if } u\in V_{G,L} \text{ and } V_L(u,1) \not\in D \text{ and } V_L(u,0) \not\in D \\
	1_R & \text{if } u\in V_{G,R} \text{ and } V_R(u,1) \in D \text{ and } V_R(u,0) \not\in D \\
	0_R & \text{if } u\in V_{G,R} \text{ and } V_R(u,1) \not\in D \text{ and } V_R(u,0) \in D \\
	?_R & \text{if } u\in V_{G,R} \text{ and } V_R(u,1) \not\in D \text{ and } V_R(u,0) \not\in D \\
	\end{cases}
	\]
	It is clear that $h_D$ is well-defined: for every $u \in V_G$, $h_D(u) = x$ and $h_D(u) = y$ implies $x=y$. It is also clear that $\mu(h_D) = D$. It remains to show that $h_D \in \mathsf{hom'}(G,H)$. Consider an arbitrary edge $\{u,v\} \in E_G$. W.l.o.g.~let $u\in V_{G,L}$ and $v\in V_{G,R}$. By contradiction, assume that $\{h_D(u),h_D(v)\} \not\in E_H$. This implies that $h_D(u) = 1_L$ and $h_D(v) = 1_R$, since $u$ is mapped to $V_{H,L}$ and $v$ is mapped to $V_{H,R}$. Therefore, $D$ contains both atoms $V_L(u,1)$ and $V_R(v,1)$, which in turn implies that $Q_k(D) \neq \emptyset$, which contradicts the fact that $D \in \{D \in \opr{D_G^k}{\dep} \mid D \not\models Q_k\}$.
\end{proof}

Summing up, since $\mathsf{HOM}(G) = 2\cdot|\{D \in \opr{D_G^k}{\dep} \mid D \not\models Q_k\}|$, Lemma~\ref{lem:keys-aux} implies that
\[
\mathsf{HOM}(G)\ =\ 2\cdot|\mathsf{hom'}(G,H)|\ =\ |\mathsf{hom}(G,H)|.
\]
This shows that indeed $\mathsf{HOM}$ is a polynomial-time Turing reduction from $\sharp H\text{-}\mathsf{Coloring}$ to $\ocqa^\ur[\sjf \cap \ghw_k]$. Thus, $\ocqa^\ur[\sjf \cap \ghw_k]$ is $\sharp ${\rm P}-hard, as needed.

\subsection{The Uniform Sequences Case}

The fact that $\ocqa^\us[\sjf \cap \ghw_k]$ is $\sharp ${\rm P}-hard is proven similarly to the uniform repairs case. In particular, we use the exact same construction and simply show that the two relative frequencies, i.e., the repair relative frequency and the sequence relative frequency coincide. Let us give more details.

Fix an arbitrary $k>0$. We show that $\ocqa^\us[\sjf \cap \ghw_k]$ is $\sharp ${\rm P}-hard via a polynomial-time Turing reduction from $\sharp H\text{-}\mathsf{Coloring}$, where $H$ is the graph employed in the first part of the proof.
%
%
For a given graph $G$, let $S_k$, $D_G^k$, $\dep$, and $Q_k$ be the schema, database, set of primary keys, and Boolean self-join-free CQ of generalized hypertreewidth $k$, respectively, from the construction in the proof for the uniform repairs case.
We show that
\[
\mathsf{RF}^\ur(D_G^k,\dep,Q_k,())\ =\ \mathsf{RF}^\us(D_G^k,\dep,Q_k,()),
\] 
which implies that the polynomial-time Turing reduction from $\sharp H\text{-}\mathsf{Coloring}$ to $\ocqa^\ur[\sjf \cap \ghw_k]$ is a polynomial-time Turing reduction from $\sharp H\text{-}\mathsf{Coloring}$ to $\ocqa^\us[\sjf \cap \ghw_k]$.
Recall that
\[
\mathsf{RF}^\us(D_G^k,\dep,Q_k,())\ =\ \frac{|\{s \in \crs{D_G^k}{\dep} \mid s(D) \models Q_k\}|}{|\crs{D_G^k}{\dep}|}.
\]
By construction of $D_G^k$, each operational repair $D \in \opr{D_G^k}{\dep}$ can be obtained via $|V_G|!$ complete sequences of $\crs{D_G^k}{\dep}$. Thus, 
\begin{eqnarray*}
\mathsf{RF}^\us(D_G^k,\dep,Q_k,()) &=& \frac{|\{D \in \opr{D_G^k}{\dep} \mid D \models Q_k\}| \cdot |V_G|!}{|\opr{D_G^k}{\dep}| \cdot |V_G|!}\\
&=& \frac{|\{D \in \opr{D_G^k}{\dep} \mid D \models Q_k\}|}{|\opr{D_G^k}{\dep}|}.
\end{eqnarray*}
The latter expression is precisely $\mathsf{RF}^\ur(D_G^k,\dep,Q_k,())$, as needed.

\section{Proof of Theorem~\ref{the:ocqa-apx}}

We proceed to prove the following result:

\begin{manualtheorem}{\ref{the:ocqa-apx}}
\theocqaapx
\end{manualtheorem}

For brevity, for a fact $f$, we write $-f$ instead of $-\{f\}$ to denote the operation that removes $f$.
We prove each item of the theorem separately. For each item, we first deal with the uniform repairs case ($\star = \ur$) and then with the uniform sequences case ($\star = \us$).

\subsection{Item (1)}

\noindent
\underline{\textbf{The Uniform Repairs Case}}
\smallskip

\noindent We proceed to show that, unless ${\rm RP} = {\rm NP}$, $\ocqa^\ur[\sjf]$ does not admit an FPRAS.
Consider the decision version of $\ocqa^\ur[\sjf]$, dubbed $\mathsf{Pos}\ocqa^\ur[\sjf]$, defined as follows:

\medskip

\begin{center}
	\fbox{\begin{tabular}{ll}
			{\small PROBLEM} : & $\mathsf{Pos}\ocqa^\ur[\sjf]$
			\\
			{\small INPUT} : & A database $D$, a set $\dep$ of primary keys,\\
			& a query $Q(\bar x)$ from $\mathsf{SJF}$, a tuple $\bar c \in \adom{D}^{|\bar x|}$.
			\\
			{\small QUESTION} : & $\mathsf{RF}^\ur(D,\dep,Q,\bar c)>0$.
	\end{tabular}}
\end{center}

\medskip

\noindent To establish our claim it suffices to show that $\mathsf{Pos}\ocqa^\ur[\sjf]$ is NP-hard.
Indeed, assuming that an FPRAS for $\ocqa^\ur[\sjf]$ exists, we can place the problem $\mathsf{Pos}\ocqa^\ur[\sjf]$ in BPP, which implies that $\textrm{NP} \subseteq \textrm{BPP}$. This in turn implies that ${\rm RP} = {\rm NP}$~\cite{Jerrum03}, which contradicts our assumption that ${\rm RP}$ and ${\rm NP}$ are different.

To show that $\mathsf{Pos}\ocqa^\ur[\sjf]$, we reduce from the problem of deciding whether an undirected graph is 3-colorable, which is a well-known NP-hard problem.
Recall that, for an undirected graph $G = (V,E)$, we say that $G$ is {\em 3-colorable} if there is a mapping $\mu: V \ra \{1,2,3\}$ such that, for every edge $\{u,v\}\in E$, we have that $\mu(u)\neq \mu(v)$.
The $3{\rm -}\mathsf{Colorability}$ problem is defined as follows:

\medskip

\begin{center}
	\fbox{\begin{tabular}{ll}
			{\small PROBLEM} : & $3{\rm-}\mathsf{Colorability}$\\
			{\small INPUT} : & An undirected graph $G$.\\
			{\small QUESTION} : &  Is $G$ 3-colorable?
	\end{tabular}}
\end{center}

\medskip


Our reduction is actually an adaptation of the standard reduction from $3{\rm-}\mathsf{Colorability}$ for showing that CQ evaluation NP-hard.
Consider an undirected graph $G=(V,E)$. Let $\ins{S}_G$ be the schema
\[
\left\{C^{u,v}/2, C^{v,u}/2 \mid \{u,v\}\in E\right\}.
\]
We define the database $D_G$ over $\ins{S}_G$ as follows:
\[
\left\{C^{u,v}(i,j), C^{v,u}(i,j) \mid \{u,v\}\in E;\ i,j\in \{1,2,3\};\ i\neq j\right\}.
\]
We further consider the set of primary 
keys $\dep$ to be empty and define the (constant-free) Boolean CQ $Q_G$ as
\[
\textrm{Ans}()\ \text{:-}\  \bigwedge_{\{u,v\}\in E} C^{u,v}(x_u,x_v)\wedge C^{v,u}(x_v,x_u).
\]
Intuitively, the database stores all possible ways to colour the vertices of an edge in different colours. Note that we store every edge twice since we do not know in which order $u$ and $v$ occur in the edge. The role of the query is to check whether there is a coherent mapping from vertices to colours across all vertices.

Since $\dep$ is empty, the only repair of $D_G$ is $D_G$ itself. Therefore, $\mathsf{RF}^\ur(D_G,\dep,Q_G,())=1>0$ iff $D_G \models Q_G$. Hence, it remains to show that $G$ is 3-colorable iff there exists a homomorphism $h: \var{Q_G} \ra \adom{D_G}$. We proceed to show the latter equivalence:

\medskip
\noindent $(\Rightarrow)$
Assume that $G$ is 3-colorable. Thus, there is a mapping $\mu: V \ra \{1,2,3\}$ such that for every edge $\{u,v\}\in E$ we have that $\mu(u)\neq \mu(v)$. Now, consider the mapping $h: \var{Q_G} \ra \{1,2,3\}$ such that, for every $x_v\in \var{Q_G}$, $h(x_v)=\mu(v)$. We need to show that $h$ is indeed a homomorphism from $Q_G$ to $D_G$. 
Consider an arbitrary edge $\{u,v\}\in E$. We need to show that $\{C^{u,v}(h(x_u),h(x_v)), C^{v,u}(h(x_v),h(x_u))\}\subseteq D_G$, which will imply that $h$ is indeed a homomorphism from $Q_G$ to $D_G$. Since $\{u,v\}$ is an edge in $G$, we know that $\mu(u)\neq \mu(v)$; thus, $h(x_u)\neq h(x_v)$. Hence, $\{C^{u,v}(h(x_u),h(x_v)), C^{v,u}(h(x_v),h(x_u))\}\subseteq D_G$, as all possible combinations $h(x_u),h(x_v)\in\{1,2,3\}$ with $h(x_u)\neq h(x_v)$ are stored in the relations $C^{u,v}$ and $C^{v,u}$.

\medskip

\noindent $(\Leftarrow)$ Conversely, assume there is a homomorphism $h: \var{Q_G} \ra \adom{D_G}$. Consider the mapping $\mu: V \ra \{1,2,3\}$ such that, for every $v\in V$, $\mu(v)=h(x_v)$. We show that $G$ is 3-colorable with $\mu$ being the witnessing mapping. First, we note that $\mu$ is well-defined since there exists exactly one variable $x_v$ for every vertex $v\in V$ that is reused across the atoms representing different edges (and thus, making sure the mapping is consistent across different edges). Consider an edge $\{u,v\}\in E$. Since the variables $x_u$ and $x_v$ appear together in the query in the atom $C^{u,v}(x_u,x_v)$, we know that $h$ is such that $C^{u,v}(h(x_u),h(x_v))\in D_G$. By the construction of $D_G$, this implies that $h(x_u)\neq h(x_v)$, and thus, $\mu(u)\neq \mu(v)$, as needed.

\medskip

\noindent
\underline{\textbf{The Uniform Sequences Case}}

\smallskip

\noindent We use the same proof strategy and construction as in the uniform repairs case. Consider the decision version of $\ocqa^\us[\sjf]$, dubbed $\mathsf{Pos}\ocqa^\us[\sjf]$, defined as expected:

\medskip

\begin{center}
	\fbox{\begin{tabular}{ll}
			{\small PROBLEM} : & $\mathsf{Pos}\ocqa^\us[\sjf]$
			\\
			{\small INPUT} : & A database $D$, a set $\dep$ of primary keys,\\
			& a query $Q(\bar x)$ from $\mathsf{SJF}$, a tuple $\bar c \in \adom{D}^{|\bar x|}$.
			\\
			{\small QUESTION} : & $\mathsf{RF}^\us(D,\dep,Q,\bar c)>0$.
	\end{tabular}}
\end{center}

\medskip

\noindent We show that $\mathsf{Pos}\ocqa^\us[\sjf]$ is NP-hard, which in turn implies that $\ocqa^\us[\sjf]$ does not admit an FPRAS, under the assumption that ${\rm RP}$ and ${\rm NP}$ are different.
For an undirected graph $G$, let $\ins{S}_G$, $D_G$, and $Q_G$ be the schema, database, and Boolean CQ from the construction in the proof of the uniform repairs case. Again, the set $\dep$ of keys is considered to be empty. Hence, we only have the empty repairing sequence, denoted $\varepsilon$, with $\varepsilon(D_G) = D_G$. It is easy to see that $\mathsf{RF}^\ur(D_G,\dep,Q_G,\bar c)=\mathsf{RF}^\us(D_G,\dep,Q_G,\bar c)$. This implies that $\mathsf{RF}^\us(D_G,\dep,Q_G,\bar c)=1>0$ iff $G$ is 3-colorable, as needed.

\subsection{Item (2)}

\noindent
\underline{\textbf{The Uniform Repairs Case}}
\smallskip

\noindent A Boolean formula $\varphi$ in Pos2CNF is of the form $\bigwedge_{i\in[m]}C_i$, where $C_i=v_1^i\vee v_2^i$ with $v_1^i,v_2^i$, for each $i\in[m]$, being Boolean variables.
We write $\var{\varphi}$ for the set of variables occurring in $\varphi$ and $\sharp \varphi$ for the number of satisfying assignments of $\varphi$. We consider the following decision problem: 
\medskip

\begin{center}
	\fbox{\begin{tabular}{ll}
			{\small PROBLEM} : & $\sharp \mathsf{MON2SAT}$\\
			{\small INPUT} : & A formula $\varphi$ in Pos2CNF.\\
			{\small OUTPUT} : &  $\sharp \varphi$.
	\end{tabular}}
\end{center}

\medskip

\noindent It is known that $\sharp \mathsf{MON2SAT}$ does not admit an FPRAS, unless ${\rm NP}={\rm RP}$~\cite{+2001}. Fix an arbitrary $k>0$. We show via an approximation preserving reduction from $\sharp \mathsf{MON2SAT}$ that also $\ocqa^\ur[\ghw_k]$ does not admit an FPRAS, unless ${\rm NP}={\rm RP}$. 
Given a Pos2CNF formula $\varphi=\bigwedge_{i\in[m]}C_i$, let $\ins{S}_\varphi$ be the schema 
\[
\{C_i/2 \mid i\in[m]\}\cup\{\mathsf{Var}_v/1 \mid v\in\var{\varphi}\}\cup\{V/2, E/2\}
\]
with $(A,B)$ being the tuple of attributes of relation $V$.
We define the database $D_\varphi^k$ as
\begin{eqnarray*}
&&\{C_i(v_1^i, 1), C_i(v_2^i, 1) \mid i\in[m]\}\\
&\cup& \{\mathsf{Var}_v(v) \mid v\in\var{\varphi}\}\\
&\cup& \{V(v,0), V(v,1) \mid v\in\var{\varphi}\}\\
&\cup& \{E(i,j) \mid i,j\in [k+1],\ i<j\}.
\end{eqnarray*}
Let $\dep$ be the singleton set of keys consisting of
\[
\key{V} = \{1\}.
\]
We finally define the Boolean CQ $Q_\varphi^k$ as $\textrm{Ans}()\ \text{:-}\  \psi_1 \wedge \psi_2 \wedge \psi_3$, where 
\begin{align*}
\psi_1\ &=\ \bigwedge_{i\in[m]}C_i(x_i,y_i)\wedge V(x_i,y_i),\\
\psi_2\ &=\ \bigwedge_{v\in\var{\varphi}}\mathsf{Var}_v(z_v)\wedge V(z_v,\_),\\
\psi_3\ &=\ \bigwedge_{\substack{i,j\in [k+1]\\i<j}} E(w_i,w_j).
\end{align*}
We use $\_$ to denote a variable that occurs only once. Note that $\psi_1$ and $\psi_2$ are acyclic subqueries, whereas $\psi_3$ encodes a clique of size $k+1$ and has generalized hypertreewidth $k$.
Since $\psi_1$, $\psi_2$, and $\psi_3$ do not share variable, we can conclude that  $Q_\varphi^k$ has generalized hypertreewidth $k$.
Note that the sole purpose of the relation $E$ and the subquery $\psi_3$ is to force the generalized hypertreewidth of $Q_\varphi^k$ to be $k$. It is clear that since this part of $Q_\varphi^k$ will be satisfied in every repair of $D_\varphi^k$. 
Intuitively, $\psi_1$ ensures that the repair satisfying the query entails an assignment to the variables such that $\varphi$ is satisfied, $\psi_2$ ensures that a repair entailing the query encodes some variable assignment for every variable $v$ occurring in $\varphi$, and $\psi_3$, as said above, ensures that $Q_\varphi^k$ has generalized hypertreewidth $k$.

Let $n=|\var{\varphi}|$. It is clear that there are $3^n$ operational repairs of $D_\varphi^k$ since for every $v\in\var{\varphi}$ we can either keep $V(v,0)$ and remove $V(v,1)$ in the repair, or we keep $V(v,1)$ and remove $V(v,0)$, or we remove both facts in the repair. Hence, $|\opr{D_\varphi^k}{\dep}|=3^n$.

We proceed to show that $|\{D \in \opr{D_\varphi^k}{\dep} \mid D \models Q_\varphi^k\}|=\sharp\varphi$. 
%
%
To this end, let $\mathsf{sat}(\varphi)$ be the set of satisfying assignments from the variables of $\varphi$ to $\{0,1\}$. We will show that there exists a bijection from the set $\mathsf{sat}(\varphi)$ to the set $\{D \in \opr{D_\varphi^k}{\dep} \mid D \models Q_\varphi^k\}$. We define the mapping $\mu$ as follows: for each $h \in \mathsf{sat}(\varphi)$,
\begin{align*}
\mu(h)\ =\ &\{V(v,0) \mid v \in \var{\varphi} \text{ and } h(v) = 0\}\\
&\cup\ \{V(v,1) \mid v \in \var{\varphi} \text{ and } h(v) = 1\}\\
&\cup\ \{C_i(v_1^i, 1), C_i(v_2^i, 1) \mid i\in[m]\}\\
&\cup\ \{\mathsf{Var}_v(v) \mid v\in\var{\varphi}\}\\
&\cup\ \{E(i,j) \mid i,j\in [k+1],\ i<j\}.
\end{align*}
We proceed to show the following three statements:
\begin{enumerate}
	\item $\mu$ is correct, that is, it is indeed a function from $\mathsf{sat}(\varphi)$ to $\{D \in \opr{D_\varphi^k}{\dep} \mid D \models Q_\varphi^k\}$.
	\item $\mu$ is injective.
	\item $\mu$ is surjective.
\end{enumerate}

\medskip 

\noindent
\paragraph{The mapping $\mu$ is correct.} Consider an arbitrary satisfying assignment $h\in\mathsf{sat}(\varphi)$. We need to show that there exists a $(D_\varphi^k,\dep)$-repairing sequence $s_h$ such that $\mu(h) = s_h(D_\varphi^k)$, $s_h(D_\varphi^k) \models \dep$ (i.e., $s_h$ is complete), and $s_h(D_\varphi^k) \models Q_\varphi^k$ (i.e., the resulting repair entails the query). Let $\var{\varphi} = \{v_1,\ldots,v_n\}$. Consider the sequence $s_h = \op_1,\ldots,\op_n$ such that, for every $i \in [n]$:
\[
\op_{i}\ =\
\begin{cases}
-V(v_i,1) & \text{if } h(v_i) = 0 \\
-V(v_i,0) & \text{if } h(v_i) = 1.
\end{cases}
\]
In simple words, the homomorphism $h$ guides the repairing process, i.e., $h(v_i) = 0$ (resp., $h(v_i)=1$) implies $V(v_i,0)$ (resp., $V(v_i,1)$) should be kept in the repair. It is easy to verify that $s_h$ is indeed a $(D_\varphi^k,\dep)$-repairing sequence, such that $\mu(h) = s_h(D_\varphi^k)$ and $s_h(D_\varphi^k) \models \dep$.
%
To show that $s_h(D_\varphi^k) \models Q_\varphi^k$ it suffices to see that each of the subqueries $\psi_1$, $\psi_2$, and $\psi_3$ is satisfied in $s_h(D_\varphi^k)$. Since $h$ is a satisfying assignment of the variables of $\varphi$, we know that for every clause $C_i$, where $i\in [m]$, one of its variables $v_1^i$ or $v_2^i$ evaluates to $1$ under $h$, i.e.~$h(v_l^i) = 1$ for some $l\in \{1, 2\}$. Hence, for every $i\in [m]$ and (at least) one $l\in \{1, 2\}$, we have that $C_i(v_l^i, 1)$ and $V(v_l^i, 1)$ remains in the repair $s_h(D_\varphi^k)$ and thus $\psi_1$ is satisfied. To see that $\psi_2$ is satisfied as well it suffices to see that for every variable $v\in\var{\varphi}$ either $V(v,0)$ or $V(v,1)$ is kept in the repair and the relation $\mathsf{Var}_v$ remains untouched by the repairing sequence. Since the relation $E$ remains also unchanged in every repair, it is clear that $\psi_3$ is also satisfied. We can therefore conclude that $s_h(D_\varphi^k) \models Q_\varphi^k$.

\medskip 

\noindent
\paragraph{The mapping $\mu$ is injective.} Assume that there are two distinct satisfying assignments $h,h'\in\mathsf{sat}(\varphi)$ such that $\mu(h) = \mu(h')$. By the definition of $\mu$, we get that $h(v) = h'(v)$, for every $v \in \var{\varphi}$. But this contradicts the fact that $h$ and $h'$ are different satisfying assignments of the variables in $\varphi$. Therefore, for every two distinct satisfying assignments $h,h' \in \mathsf{sat}(\varphi)$, $\mu(h) \neq \mu(h')$, as needed.

\medskip

\noindent
\paragraph{The mapping $\mu$ is surjective.} Consider an arbitrary operational repair $D \in \opr{D_\varphi^k}{\dep}$ such that $D \models Q_\varphi^k\}$. We need to show that there exists some $h \in \mathsf{sat}(\varphi)$ such that $\mu(h) = D$. Note that, since $D \models Q_\varphi^k$, necessarily the subquery $\psi_2$ must also be satisfied in $D$. In other words, for every variable $v\in\var{\varphi}$ either $V(v,0)\in D$ or $V(v,1)\in D$ (and never both since $D \models \dep$). We define the mapping $h_D : \var{\varphi} \ra \{0,1\}$ as follows: for every $v \in \var{\varphi}$:
\[
h_D(v)\ =\
\begin{cases}
1 & \text{if } V(v,1) \in D \\
0 & \text{if } V(v,0) \in D.
\end{cases}
\]
It is clear that $h_D$ is well-defined: for every $v \in \var{\varphi}$, $h_D(v) = x$ and $h_D(v) = y$ implies $x=y$. It is also clear that $\mu(h_D) = D$. The fact that $h_D \in \mathsf{sat}(\varphi)$ follows directly from the fact that $D$ entails the subquery $\psi_1$. Since, for every $i\in [m]$ and some $l\in\{1,2\}$, $V(v_l^i, 1)\in D$ and thus $h_D(v_l^i) = 1$, for every $C_i$, where $i\in [m]$, one of its literals evaluates to true under $h_D$ and thus $h_D\in\mathsf{sat}(\varphi)$. 

\medskip

Summing up, recall that the repair relative frequency is the ratio
\[
\mathsf{RF}^\ur(D_\varphi^k,\dep,Q_\varphi^k,()) = \frac{|\{D \in \opr{D_\varphi^k}{\dep} \mid D \models Q_\varphi^k\}|}{|\opr{D_\varphi^k}{\dep}|} = \frac{\sharp\varphi}{3^n}.
\]
Hence, assuming an FPRAS $A$ for the problem $\ocqa^\ur[\ghw_k]$, we can build an FPRAS $A'$ for $\sharp \mathsf{MON2SAT}$ by simply running $A$ on the input $(D_\varphi^k, \dep, Q_\varphi^k, \varepsilon, \delta)$ and returning its result multiplied by $3^n$. We note that $A'$ is an FPRAS for $\sharp \mathsf{MON2SAT}$ with the same probabilistic guarantees $\varepsilon$ and $\delta$, contradicting the assumption ${\rm NP}\neq{\rm RP}$. Thus, $\ocqa^\ur[\ghw_k]$ does not admit an FPRAS, unless ${\rm NP}={\rm RP}$.

\medskip

\noindent
\underline{\textbf{The Uniform Sequences Case}}
\smallskip

\noindent Fix $k>0$. For proving the desired claim, it suffices to show that
\[
\mathsf{RF}^\ur(D_\varphi^k,\dep,Q_\varphi^k,())\ =\ \mathsf{RF}^\us(D_\varphi^k,\dep,Q_\varphi^k,()).
\]
We can then easily use the same construction from the proof for the uniform repairs case to show that there cannot exist an FPRAS for $\ocqa^\us[\ghw_k]$, as this would imply the existence of an FPRAS for $\sharp \mathsf{MON2SAT}$, which contradicts the assumption ${\rm NP}\neq{\rm RP}$.

For a Pos2CNF formula $\varphi$, let $\ins{S}_\varphi$, $D_\varphi^k$, $\dep$, and $Q_\varphi^k$ be the schema, database, set of keys, and CQ, respectively, from the construction in the proof for the uniform repairs case. Further, let $n=\var{\varphi}$. Since every complete repairing sequence of $D_\varphi^k$ consists of $n$ operations, one to resolve each violation of the form $\{V(v,0),V(v,1)\}$ for every $v\in\var{\varphi}$, and the fact that these operations can be applied in any order, it is easy to verify that 
\[
|\crs{D_\varphi^k}{\dep}|\ =\ |\opr{D_\varphi^k}{\dep}| \cdot n!
\]
and  
\begin{eqnarray*}
&&|\{s \in \crs{D_\varphi^k}{\dep} \mid s(D_\varphi^k) \models Q_\varphi^k\}|\\
&=&|\{D \in \opr{D_\varphi^k}{\dep} \mid D \models Q_\varphi^k\}| \cdot n!.
\end{eqnarray*}
Hence,
\begin{eqnarray*}
\mathsf{RF}^\us(D_\varphi^k,\dep,Q_\varphi^k,()) &=& \frac{|\{D \in \opr{D_\varphi^k}{\dep} \mid D \models Q_\varphi^k\}| \cdot n!}{|\opr{D_\varphi^k}{\dep}| \cdot n!}\\
&=& \frac{|\{D \in \opr{D_\varphi^k}{\dep} \mid D \models Q_\varphi^k\}|}{|\opr{D_\varphi^k}{\dep}|}\\
&=& \frac{\sharp\varphi}{3^n}\\
&=& \mathsf{RF}^\ur(D_\varphi^k,\dep,Q_\varphi^k,()),
\end{eqnarray*}
and the claim follows.

\section{Proof of Proposition~\ref{pro:logspace-closure}}

We proceed to show that:

\begin{manualproposition}{\ref{pro:logspace-closure}}
	\prologspaceclosure
\end{manualproposition}

Consider two functions $f : \Lambda_1^* \rightarrow \mathbb{N}$ and $g : \Lambda_2^* \rightarrow \mathbb{N}$, for some alphabets $\Lambda_1$ and $\Lambda_2$, with $g \in \spantl$, and assume there is a logspace computable function $h : \Lambda_1^* \rightarrow \Lambda_2^*$ with $f(w) = g(h(w))$ for all $w \in \Lambda_1^*$. We prove that $f \in \spantl$. The proof employees a standard argument used to prove that logspace computable functions can be composed.
It is well known that $h$ can be converted to a logspace computable function $h' : \Lambda_1^* \times \mathbb{N} \rightarrow \Lambda_2$ that, given a string $w \in \Lambda_1^*$ and an integer $i \in [|h(w)|]$, where $|h(w)|$ denotes the length of the string $h(w)$, outputs the $i$-th symbol of $h(w)$, and $h'$ is computable in logspace (e.g., see~\cite{ArBa09}). This can be achieved by modifying $h$ so that it first initializes a counter $k$ to $1$, and then, whenever it needs to write a symbol on the output tape, it first checks whether $k = i$; if not, then it does not write anything on the output tape, and increases $k$, otherwise, it writes the required symbol, and then halts. The counter $k$ can be stored in logarithmic space since $h$ works in logarithmic space, and thus $h(w)$ contains no more than polynomially many symbols, i.e., $|h(w)| \in O(|w|^c)$, for some constant $c$. 

With the above in place, proving that $f \in \spantl$ is straightforward. Let $M$ be the well-behaved ATO such that $g = \mathsf{span}_M$. We modify $M$ to another ATO $M'$ such that $f = \mathsf{span}_{M'}$; again, this construction is rather standard, but we give it here for the sake of completeness. In particular, we modify $M$ so that, with input a string $w$, it keeps a counter $k$ initialized to $1$, and whenever $M$ needs to read the current symbol from the input tape, it executes $h'$ with input $(w,k)$, where its (single) output symbol $\alpha$ is written on the working tape; then $M'$ uses $\alpha$ as the input symbol. Then, when the input tape head of $M$ moves to the right (resp., left, stays), $M'$ increases (resp., decreases, leaves untouched) the counter $k$. The fact that $\mathsf{span}_{M'} = g(h(w))$ follows by construction of $M'$ and the fact that when executing $h'$, no symbols are written on the labeling tape, and no labeling states of $M'$ are visited. Hence, $\mathsf{span}_{M'} = f(w)$. To show that $f \in \spantl$, it remains to to prove that $M'$ is well-behaved.

Obviously, since $h'$ is computable in logspace, $M'$ can execute $h'$ using no more than logarithmic space w.r.t.~$|w|$ on its working tape. Moreover, while executing $h'$, $M'$ does not need to write anything on the labeling tape, and thus, $M'$ uses no more space than the one that $M$ uses on the labeling tape. Furthermore, storing $k$ requires logarithmic space since $k \in [|h(w)|]$ can be encoded in binary. 
Since $h'$ is computable in (deterministic) logspace, an execution of $h'$ corresponds to a sequence of polynomially many (non-labeling) existential configurations in a computation of $M'$ with input $w$. Hence, each computation $T'$ of $M'$ corresponds to a computation $T$ of $M$, where, in the worst case, each node of $T$ becomes a polynomially long path in $T'$; hence, $T'$ is of polynomial size w.r.t.\ $|w|$. Finally, since, as already discussed, each execution of $h'$ coincides with a sequence of non-labeling \emph{existential} configurations, the maximum number of non-labeling universal configurations on a labeled-free path of any computation of $M'$ is the same as the one for $M$, and thus remains a constant. Consequently, $M'$ is well-behaved. 
\section{Proof of Theorem~\ref{the:spantl-fpras}}

We proceed to prove the following result:

\begin{manualtheorem}{\ref{the:spantl-fpras}}
\thespantlfpras
\end{manualtheorem}

As discussed in the main body, to establish the above result we exploit a recent result showing that the problem of counting the number of trees of a certain size accepted by a non-deterministic finite tree automaton admits an FPRAS~\cite{ACJR21}.
In particular, we reduce in polynomial time each function in $\mathsf{SpanTL}$ to the above counting problem for tree automata. In what follows, we recall the basics about non-deterministic finite tree automata and the associated counting problem, and then give our reduction.

\medskip

\noindent \paragraph{Ordered Trees and Tree Automata.} For an integer $k \geq 1$, a {\em finite ordered $k$-tree} (or simply {\em $k$-tree}) is a prefix-closed non-empty finite subset $T$ of $[k]^*$, that is, if $w \cdot i \in T$ with $w \in [k]^*$ and $i \in [k]$, then $w \cdot j \in T$ for every $w \in [i]$.
The root of $T$ is the empty string, and every maximal element of $T$ (under prefix ordered) is a leaf. For every $u,v \in T$, we say that $u$ is a child of $v$, or $v$ is a parent of $u$, if $u = v \cdot i$ for some $i \in [k]$. The size of $T$ is $|T|$. Given a finite alphabet $\Lambda$, let $\trees{k}{\Lambda}$ be the set of all $k$-trees in which each node is labeled with a symbol from $\Lambda$. By abuse of notation, for $T \in \trees{k}{\Lambda}$ and $u \in T$, we write $T(u)$ for the label of $u$ in $T$.

A {\em (top-down) non-deterministic finite tree automation} (NFTA) over $\trees{k}{\Lambda}$ is a tuple $A = (S,\Lambda,s_\text{\rm init},\delta)$, where $S$ is the finite set of states of $A$, $\Lambda$ is a finite set of symbols (the alphabet of $A$), $s_\text{\rm init} \in S$ is the initial state of $A$, and $\delta \subseteq S \times \Lambda \times \left(\bigcup_{i = 0}^{k} S^k\right)$ is the transition relation of $A$.
A {\em run} of $A$ over a tree $T \in \trees{k}{\Lambda}$ is a function $\rho : T \ra S$ such that, for every $u \in T$, if $u \cdot 1,\ldots,u \cdot n$ are the children of $u$ in $T$, then $(\rho(u),T(u),(\rho(u \cdot 1),\ldots,\rho(u \cdot n))) \in \delta$. In particular, if $u$ is a leaf, then $(\rho(u),T(u),()) \in \delta$. We say that $A$ {\em accepts} $T$ if there is a run $\rho$ of $A$ over $T$ with $\rho(\epsilon) = s_\text{\rm init}$, i.e., $\rho$ assigns to the root the initial state. We write $L(A) \subseteq \trees{k}{\Lambda}$ for the set of all trees accepted by $A$, i.e., the language of $A$. We further write $L_n(A)$ for the set of trees $\{T \in L(A) \mid |T| = n\}$, i.e., the set of trees of size $n$ accepted by $A$. The relevant counting problem for NFTA follows:

\medskip

\begin{center}
    \fbox{\begin{tabular}{ll}
            {\small PROBLEM} : & $\sharp\mathsf{NFTA}$
            \\
            {\small INPUT} : & An NFTA $A$  and a string $0^n$ for some $n \geq 0$.
            \\
            {\small OUTPUT} : &  $\left|\bigcup_{i = 0}^n L_i(A)\right|$.
    \end{tabular}}
\end{center}

\medskip

The notion of FPRAS for $\sharp\mathsf{NFTA}$ is defined in the obvious way. 
We know from~\cite{ACJR21} that $\sharp\mathsf{NFTA}_=$, defined as $\sharp\mathsf{NFTA}$ with the difference that it asks for $|L_n(A)|$, i.e., the number trees of size $n$ accepted by $A$, admits an FPRAS. By using this result, we can easily show that:

\def\thenfta{
    $\sharp\mathsf{NFTA}$ admits an FPRAS.
}

\begin{theorem}\label{the:nfta}
    \thenfta
\end{theorem}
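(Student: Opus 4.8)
The plan is to reduce $\sharp\mathsf{NFTA}$ to $\sharp\mathsf{NFTA}_=$ in a black-box way, exploiting that trees of different sizes are necessarily distinct. Fix an input consisting of an NFTA $A$ and a string $0^n$, and write $a_i = |L_i(A)|$. Since the size of a tree is determined by the tree itself, the languages $L_0(A),\ldots,L_n(A)$ are pairwise disjoint, and hence
\[
\left|\bigcup_{i=0}^{n} L_i(A)\right| \;=\; \sum_{i=0}^{n} a_i.
\]
So it suffices to approximate each $a_i$ and add the results. Concretely, I would invoke the FPRAS $B$ for $\sharp\mathsf{NFTA}_=$ provided by~\cite{ACJR21}: for each $i \in \{0,\ldots,n\}$, run $B$ on input $A$ and $0^i$ with error parameter $\epsilon$ and confidence parameter $\delta' = \delta/(n+1)$, obtaining a random variable $\hat a_i$ with $\Pr(|\hat a_i - a_i| \le \epsilon\, a_i) \ge 1-\delta'$. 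The overall algorithm then outputs $\hat S = \sum_{i=0}^{n}\hat a_i$.

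The correctness analysis combines a union bound with the nonnegativity of the counts. Let $S = \sum_{i=0}^n a_i$ and consider the event $E = \bigcap_{i=0}^{n}\{|\hat a_i - a_i|\le \epsilon a_i\}$; by the union bound, $\Pr(E) \ge 1 - (n+1)\delta' = 1-\delta$. On $E$, the triangle inequality gives
\[
|\hat S - S| \;\le\; \sum_{i=0}^{n} |\hat a_i - a_i| \;\le\; \epsilon \sum_{i=0}^{n} a_i \;=\; \epsilon\,S,
\]
so that $\Pr(|\hat S - S| \le \epsilon S) \ge 1-\delta$, which is exactly the FPRAS guarantee required for $\sharp\mathsf{NFTA}$. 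The multiplicative error survives the summation precisely because all the $a_i$ are nonnegative and share the same relative error $\epsilon$; moreover the degenerate case $a_i = 0$ causes no trouble, since the guarantee of $B$ then forces $\hat a_i = 0$ on $E$.

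Finally I would verify the running time. Because $n$ is encoded in unary as $0^n$, we have $n \le |w|$ for the input $w$, so the algorithm makes $n+1 = O(|w|)$ calls to $B$. Each call runs in time polynomial in $|A|$, $i$, $1/\epsilon$ and $\log(1/\delta') = \log(n+1) + \log(1/\delta)$, all of which are polynomial in $|w|$, $1/\epsilon$ and $\log(1/\delta)$; summing over the $O(|w|)$ calls keeps the total polynomial. I do not anticipate a genuine obstacle: the only points needing care are setting the per-call confidence to $\delta/(n+1)$ so the union bound delivers overall confidence $1-\delta$, and observing that the fixed $\epsilon$ passes unchanged through a sum of nonnegative terms. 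The reliance on the disjointness of the $L_i(A)$ is exactly what makes the reduction additive, rather than requiring inclusion–exclusion.
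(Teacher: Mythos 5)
Your proof is correct and follows essentially the same route as the paper: exploit disjointness of the $L_i(A)$ to write the target as $\sum_{i=0}^n |L_i(A)|$, estimate each summand with the FPRAS for $\sharp\mathsf{NFTA}_=$ at boosted confidence, and sum. The only (immaterial) difference is that you aggregate the failure probabilities via a union bound with per-call confidence $\delta/(n+1)$, whereas the paper uses independence of the runs together with the inequality $1-x \le (1-\frac{x}{2m})^m$ at per-call confidence $\delta/(2(n+1))$; both yield the required $1-\delta$ guarantee.
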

\begin{proof}
    The goal is to devise a randomized algorithm $\mathsf{A}$ that takes as input an NFTA $A$, a string $0^n$, with $n \geq 0$, $\epsilon > 0$ and $0 < \delta < 1$, runs in polynomial time in $||A||$, $n$, $1/\epsilon$ and $\log(1/\delta)$, and produces a random variable $\mathsf{A}(A,0^n,\epsilon,\delta)$ such that
\[
\text{\rm Pr}\left(\left|\mathsf{A}(A,0^n,\epsilon,\delta) - L_{0}^{n} \right|\ \leq\ \epsilon \cdot L_{0}^{n}\right)\ \geq\
1-\delta,
\]
where  $L_{0}^{n} = \left|\bigcup_{i = 0}^n L_i(A)\right|$, or equivalently
\[
\text{\rm Pr}\left((1-\epsilon) \cdot L_{0}^{n}\ \leq  \mathsf{A}(A,0^n,\epsilon,\delta)\ \leq\ (1 + \epsilon) \cdot L_{0}^{n}\right)\ \geq\
1-\delta.
\]
We know from~\cite{ACJR21} that the problem $\sharp\mathsf{NFTA}_=$ admits an FPRAS. This means that there is a randomized algorithm $B$  that takes as input an NFTA $A$, a string $0^n$, with $n \ge 0$, $\epsilon > 0$ and $0 < \delta < 1$, runs in polynomial time in $||A||$, $n$, $1/\epsilon$ and $\log(1/\delta)$, and produces a random variable $\mathsf{B}(A,0^n,\epsilon,\delta)$ such that
\[
\text{\rm Pr}\left((1-\epsilon) \cdot |L_n(A)|\ \leq  \mathsf{B}(A,0^n,\epsilon,\delta)\ \leq\ (1 + \epsilon) \cdot |L_n(A)|\right)\ \geq\
1-\delta.
\]
The desired randomized algorithm $\mathsf{A}$ is defined in such a way that, on input $A$, $0^n$, $\epsilon$ and $\delta$, returns the number
\[
\sum_{i = 0}^{n} \mathsf{B}\left(A,0^i,\epsilon,\frac{\delta}{2(n+1)}\right).
\]
Note that the $n+1$ runs of $\mathsf{B}$ in the expression above are all independent,
and thus, the random variables $\mathsf{B}\left(A,0^i,\epsilon,\frac{\delta}{2(n+1)}\right)$ and $\mathsf{B}\left(A,0^j,\epsilon,\frac{\delta}{2(n+1)}\right)$ are independent, for $i,j \in \{0,\ldots,n\}$, with $i \neq j$;
it is also clear that 
\[
\sum_{i=0}^{n} \left|L_i(A)\right|\ =\ \left|\bigcup_{i=0}^{n} L_i(A)\right|
\]
since $L_i(A) \cap L_j(A) = \emptyset$. Therefore, we get that
\begin{multline*}
\text{\rm Pr}\bigg(\left(1-\epsilon\right) \cdot L_{0}^{n}\ \leq  \mathsf{A}(A,0^n,\epsilon,\delta)\ \leq\ \left(1 + \epsilon\right) \cdot L_{0}^{n}\bigg)\ \geq \\
\left(1-\frac{\delta}{2(n+1)}\right)^{n+1}.
\end{multline*}
We know that the following inequality hold (see, e.g.,~\cite{DBLP:journals/tcs/JerrumVV86}) for $0 \leq x \leq 1$ and $m \geq 1$,
\[
1-x\ \leq\ \left(1 - \frac{x}{2m}\right)^{m}.
\]
Consequently,
\[
\text{\rm Pr}\left((1-\epsilon) \cdot L_{0}^{n}\ \leq  \mathsf{A}(A,0^n,\epsilon,\delta)\ \leq\ (1 + \epsilon) \cdot L_{0}^{n}\right)\ \geq\
1-\delta,
\]
as needed. It is also clear that $\mathsf{A}$ runs in polynomial time in
$||A||$, $n$, $1/\epsilon$ and $\log(1/\delta)$, since $\mathsf{B}$ runs in polynomial time in $||A||$, $i$, $1/\epsilon$ and $\log(1/\delta)$, for each $i \in \{0,\ldots,n\}$, and the claim follows.
\end{proof}

\noindent\paragraph{The Reduction.} Recall that our goal is reduce in polynomial time every function of $\mathsf{SpanTL}$ to $\sharp\mathsf{NFTA}$, which, together with Theorem~\ref{the:nfta}, will immediately imply Theorem~\ref{the:spantl-fpras}. In particular, we need to establish the following technical result:

\def\proreductiontonfta{
Fix a function $f : \Lambda^* \ra \mathbb{N}$ of $\mathsf{SpanTL}$. For every $w \in \Lambda^*$, we can construct in polynomial time in $|w|$ an NFTA $A$ and a string $0^n$, for some $n \geq 0$, such that $f(w) = \left|\bigcup_{i = 0}^n L_i(A)\right|$.
}

\begin{proposition}\label{pro:reduction}
    \proreductiontonfta
\end{proposition}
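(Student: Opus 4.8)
The plan is to follow a two-step strategy: first construct, in polynomial time in $|w|$, an NFTA $A$ whose language contains exactly $\mathsf{span}_M(w)$ trees, and then exhibit a polynomial $n$ bounding the size of every tree accepted by $A$, so that $L(A) = \bigcup_{i=0}^{n} L_i(A)$ and the claim follows with this $n$. Throughout, $M = (S,\Lambda',s_\text{\rm init},s_\text{\rm accept},s_\text{\rm reject},S_\exists,S_\forall,S_L,\delta)$ denotes a fixed well-behaved ATO with $f = \mathsf{span}_M$, which exists since $f \in \mathsf{SpanTL}$.

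The first auxiliary object I would introduce is the \emph{computation DAG} $G = (\mathcal{C},\mathcal{E})$ of $M$ on $w$, whose nodes are the reachable configurations of $M$ on $w$ and whose edges follow the one-step relation $\ra_M$. This DAG simultaneously encodes all computations of $M$ on $w$: each computation is obtained by unfolding $G$ from the initial configuration while selecting one successor at every existential node and all successors at every universal node. Since $M$ is well-behaved, its working and labeling tapes hold only $O(\log|w|)$ symbols, so the number of distinct configurations, and hence $|\mathcal{C}|$, is polynomial in $|w|$; thus $G$ can be built in polynomial time by a forward reachability search.

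I would then turn $G$ into the NFTA $A$ by a single memoized traversal from the root, the memoization (via a set $Q$) guaranteeing that each configuration is handled once. The states of $A$ are exactly the labeling configurations of $G$, each such $C$ with labeling-tape content $z$ contributing a state $s_C$; this mirrors the definition of an output of $M$, whose nodes are precisely the labeling configurations and whose edges join a labeling configuration to those labeling configurations reachable from it along a labeled-free path. Concretely, $\process(C)$ returns a set of tuples of states encoding the ``first'' labeling configurations reachable below $C$: at an existential $C$ it is the union of the sets returned by the children (one branch is picked), at a universal $C$ it is their $\otimes$-combination (all branches are taken), at a labeling $C$ it is the singleton $\{(s_C)\}$ together with a new transition $(s_C,z,\bar s)$ for each tuple $\bar s$ accumulated below, and at a rejecting leaf it is $\emptyset$, so that non-accepting computations are discarded and only valid outputs survive. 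The correctness claim, proved by induction on $G$ matching the inductive description of outputs against the runs of $A$, is that this yields a bijection between the valid outputs of $M$ on $w$ and the trees of $L(A)$, whence $|L(A)| = \mathsf{span}_M(w)$.

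The step I expect to be the main obstacle is bounding the construction time, i.e.\ ruling out an exponential blow-up of the accumulated sets under the $\otimes$ operator, since each universal configuration multiplies the sizes of its children's sets. This is exactly where the third well-behavedness condition is indispensable: every labeled-free path of a computation of $M$ carries at most $k$ universal configurations, and since every labeled-free path of $G$ occurs in some computation of $M$ on $w$, it inherits the same bound; hence any $\otimes$-product along such a path has at most $k$ factors and stays polynomial in $|w|$, so $A$ has polynomially many states and transitions and is produced in polynomial time. For the size bound, well-behavedness further supplies a polynomial $\mathsf{pol}$ with $|V| \in O(\mathsf{pol}(|w|))$ for every computation $T = (V,E,\lambda)$; as every valid output, and therefore every accepted tree, has at most as many nodes as the computation producing it, setting $n = \mathsf{pol}(|w|)$ yields $L(A) = \bigcup_{i=0}^{n} L_i(A)$ and completes the reduction.
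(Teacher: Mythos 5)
Your proposal follows essentially the same route as the paper's own proof: it builds the polynomial-size computation DAG, converts it into an NFTA via a memoized traversal that creates one state per labeling configuration, combines child results by union at existential nodes and by the merging product at universal nodes, invokes the bounded-universal-configurations-per-labeled-free-path condition to control the blow-up, and uses the polynomial bound on computation size to cap the accepted tree sizes. The argument is correct and matches the paper's construction (Algorithms $\mathsf{BuildNFTA}$ and $\mathsf{Process}$ together with Lemmas~\ref{lem:buildnfta} and~\ref{lem:polynomial}) in all essential respects.
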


We proceed to show the above proposition. Since, by hypothesis, $f : \Lambda^* \ra \mathbb{N}$ belongs to $\mathsf{SpanTL}$, there exists a well-behaved ATO $M = (S,\Lambda',s_\text{\rm init},s_\text{\rm accept},s_\text{\rm reject},S_{\exists},S_{\forall},S_{L},\delta)$, where $\Lambda' = \Lambda \cup \{\bot,\triangleright\}$ and $\bot,\triangleright \not\in \Lambda$, such that $f$ is the function $\mathsf{span}_M$. 
The goal is, for an arbitrary string $w \in \Lambda^*$, to construct in polynomial time in $|w|$ an NFTA $A = (S^A,\Lambda^A,s_{\text{\rm init}}^{A}\delta^A)$ and a string $0^n$, for some $n \geq 0$, such that $\mathsf{span}_M(w) = |\bigcup_{i = 0}^{n} L_i(A)|$. This is done in two steps:
\begin{enumerate}
    \item We first construct an NFTA $A$ in polynomial time in $|w|$ such that $\mathsf{span}_M (w) = |L(A)|$, i.e., $A$ accepts $\mathsf{span}_M(w)$ trees.
    
    \item We show that $L(A) = \bigcup_{i = 0}^{\mathsf{pol}(|w|)} L_i(A)$ for some polynomial function $\mathsf{pol} : \mathbb{N} \ra \mathbb{N}$. 
\end{enumerate}
After completing the above two steps, it is clear that Proposition~\ref{pro:reduction} follows with $n = \mathsf{pol}(|w|)$. Let us now discuss the above two steps.

\medskip

\noindent\paragraph{\underline{Step 1: The NFTA}}

\smallskip

\noindent For the construction of the desired NFTA, we first need to introduce the auxiliary notion of the computation directed acyclic graph (DAG) of the ATO $M$ on input $w \in \Lambda^*$, which compactly represents all the computations of $M$ on $w$. Recall that a DAG $G$ is rooted if it has exactly one node, the root, with no incoming edges. We also say that a node of $G$ is a leaf if it has no outgoing edges.

\begin{definition}[\textbf{Computation DAG}]\label{def:computation_dag}
    The {\em computation DAG} of $M$ on input $w \in \Lambda^*$ is the DAG $G = (\mathcal{C},\mathcal{E})$, where $\mathcal{C}$ is the set of configurations of $M$ on $w$, defined as follows:
    \begin{enumerate}
        \item[-] if $C \in \mathcal{C}$ is the root node of $G$, then $C$ is the initial configuration of $M$ on input $w$,
        
        \item[-] if $C \in \mathcal{C}$ is a leaf node of $G$, then $C$ is either an accepting or a rejecting configuration of $M$, and
        
        \item[-] if $C \in \mathcal{C}$ is a non-leaf node of $G$ with $C \ra_M \{C_1,\ldots,C_n\}$ for $n > 0$, then (i) for each $i \in [n]$, $(C,C_i) \in \mathcal{E}$, and (ii) for every configuration $C' \not\in \{C_1,\ldots,C_n\}$ of $M$ on $w$, $(C,C') \not\in \mathcal{E}$. \hfill\markfull 
    \end{enumerate}
\end{definition}

As said above, the computation DAG $G$ of $M$ on input $w$ compactly represents all the computations of $M$ on $w$. In particular, a computation $(V,E,\lambda)$ of $M$ on $w$ can be constructed from $G$ by traversing $G$ from the root to the leaves and (i) for every universal configuration $C$ with outgoing edges $(C,C_1),\dots,(C,C_n)$, add a node $v$ with $\lambda(v)=C$ and children $u_1,\dots,u_n$ with $\lambda(u_i)=C_i$ for all $i\in[n]$, and (ii) for every existential configuration $C$ with outgoing edges $(C,C_1),\dots,(C,C_n)$, add a node $v$ with $\lambda(v)=C$ and a single child $u$ with $\lambda(u)=C_i$ for some $i\in [n]$.

\OMIT{
proof of Proposition~\ref{pro:reduction}. Since, by hypothesis, $f : \Lambda^* \ra \mathbb{N}$ belongs to $\mathsf{SpanTL}$, there exists a well-behaved ATO $M = (S,\Lambda',s_\text{\rm init},s_\text{\rm accept},s_\text{\rm reject},S_{\exists},S_{\forall},S_{L},\delta)$, where $\Lambda' = \Lambda \cup \{\bot,\triangleright\}$ and $\bot,\triangleright \not\in \Lambda$, such that $f$ is the function $\mathsf{span}_M$. 
The goal is, for an arbitrary string $w \in \Lambda^*$, to construct in polynomial time in $|w|$ an NFTA $A = (S^A,\Lambda^A,s_{\text{\rm init}}^{A}\delta^A)$ and a string $0^n$ from some $n \geq 0$ such that $\mathsf{span}_M(w) = |\bigcup_{i = 0}^{n} L_i(A)|$. This is done in two steps:
\begin{enumerate}
    \item We first construct an NFTA $A$ in polynomial time in $|w|$ such that $\mathsf{span}_M (w) = |L(A)|$, i.e., $A$ accepts $\mathsf{span}_M(w)$ trees.
    
    \item We define a polynomial function $\mathsf{pol} : \mathbb{N} \ra \mathbb{N}$ such that $L(A) = |\bigcup_{i = 0}^{\mathsf{pol}(|w|)} L_i(A)|$. 
\end{enumerate}
After completing the above two steps, it is clear that Proposition~\ref{pro:reduction} follows with $n = \mathsf{pol}(|w|)$. We proceed to discuss those two steps.

\medskip

\noindent\paragraph{\underline{Step 1: The NFTA}}

\smallskip

%
}

The construction of the NFTA $A$ is performed by $\mathsf{BuildNFTA}$, depicted in Algorithm~\ref{alg:dagtonfta}, which takes as input a string $w \in \Lambda^*$. It first constructs the computation DAG $G = (\mathcal{C},\mathcal{E})$ of $M$ on $w$, which will guide the construction of $A$. It then initializes the sets $S^A,\Lambda^A,\delta^A$, as well as the auxiliary set $Q$, which will collect pairs of the form $(C,U)$, where $C$ is a configuration of $\mathcal{C}$ and $U$ a set of tuples of states of $A$, to empty.
Then, it calls the recursive procedure $\mathsf{Process}$, depicted in Algorithm~\ref{alg:process}, which constructs the set of states $S^A$, the alphabet $\Lambda^A$, and the transition relation $\delta^A$, while traversing the computation DAG $G$ from the root to the leaves. Here, $S^A$, $\Lambda^A$, $\delta^A$, $G$, and $Q$ should be seen as global structures that can be used and updated inside the procedure $\mathsf{Process}$. Eventually, $\mathsf{Process}(\rt{G})$ returns a state $s \in S^A$, which acts as the initial state of $A$, and $\mathsf{BuildNFTA}(w)$ returns the NFTA $(S^A,\Lambda^A,s,\delta^A)$.

Concerning the procedure $\mathsf{Process}$, when we process a labeling configuration $C \in \mathcal{C}$, we add to $S^A$ a state $s_C$ representing $C$. 
Then, for every computation $T=(V,E,\lambda)$ of $M$ on $w$, if $V$ has a node $v$ with $u_1,\dots,u_n$ being the nodes reachable from $v$ via a labeled-free path, and $\lambda(u_i)$ is a labeling configuration for every $i\in[n]$, we add the transition $(s_C,z,(s_{C_1},\dots,s_{C_n}))$ to $\delta^A$, where $\lambda(v)=C$, $C$ is of the form $(\cdot,\cdot,\cdot,z,\cdot,\cdot)$, and $\lambda(u_i)=C_i$ for every $i\in[n]$, for some arbitrary order $C_1,\dots,C_n$ over those configurations.
Since, by definition, the output of $T$ has a node corresponding to $v$ with children corresponding to $u_1,\dots,u_n$, using these transitions we ensure that there is a one-to-one correspondence between the outputs of $M$ on $w$ and the trees accepted by $A$.

Now, when processing non-labeling configurations, we accumulate all the information needed to add all these transitions to $\delta^A$. In particular, the procedure $\process$ always returns a set of tuples of states of $S^A$. For labeling configurations $C$, it returns a single tuple $(s_C)$, but for non-labeling configurations $C$, the returned set depends on the outgoing edges $(C,C_1),\dots,(C,C_n)$ of $C$. In particular, if $C$ is an existential configuration, it simply takes the union $P$ of the sets $P_i$ returned by $\process(C_i)$, for $i\in[n]$, because in every computation of $M$ on $w$ we choose only one child of each node associated with an existential configuration; each $P_i$ represents one such choice. If, on the other hand, $C$ is a universal configuration, then $P$ is the set $\bigotimes_{i \in [n]} P_i$ obtained by first computing the cartesian product $P' = \bigtimes_{i \in [n]} P_i$ and then merging each tuple of $P'$ into a single tuple of states of $S^A$. For example, with $P_1 = \{(),(s_1,s_2),(s_3)\}$ and $P_2 = \{(s_5),(s_6,s_7)\}$, $P_1 \otimes P_2$ is the set $\{(s_5),(s_6,s_7),(s_1,s_2,s_5),(s_1,s_2,s_6,s_7),(s_3,s_5),(s_3,s_6,s_7)\}$. We define $\bigotimes_{i \in [n]} P_i=\emptyset$ if $P_i=\emptyset$ for some $i\in[n]$.
We use the $\otimes$ operator since in every computation of $M$ on $w$ we choose all the children of each node associated with a universal configuration. When we reach a labeling configuration $C \in \mathcal{C}$, we add transitions to $\delta^A$ based on this accumulated information. In particular, we add a transition from $s_C$ to every tuple $(s_1,\dots,s_\ell)$ of states in $P$.

Concerning the running time of $\mathsf{BuildNFTA}(w)$, we first observe that the size (number of nodes) of the computation DAG $G$ of $M$ on input $w$ is polynomial in $|w|$. This holds since for each configuration $(\cdot,\cdot,y,z,\cdot,\cdot,\cdot)$ of $M$ on $w$, we have that $|y|,|z|\in O(\log(|w|))$. Moreover, we can construct $G$ in polynomial time in $|w|$ by first adding a node for the initial configuration of $M$ on $w$, and then following the transition function to add the remaining configurations of $M$ on $w$ and the
outgoing edges from each configuration.
Now, in the procedure $\process$ we use the auxiliary set $Q$ to ensure that we process each node of $G$ only once; thus, the number of calls to the $\process$ procedure is polynomial in the size of $|w|$. Moreover, in $\process(C)$, where $C$ is a non-labeling universal configuration that has $n$ outgoing edges $(C,C_1),\dots,(C,C_n)$, the size of the set $P$ is $|P_1| \times \dots \times |P_n|$, where $P_i=\process(C_i)$ for $i\in[n]$. When we process a non-labeling existential configuration $C$ that has $n$ outgoing edges $(C,C_1),\dots,(C,C_n)$, the size of the set $P$ is $|P_1|+\dots+ |P_n|$. We also have that $|P|=1$ for every labeling configuration $C$. Hence, in principle, many universal states along a labeled-free path could cause an exponential blow-up of the size of $P$. However, since $M$ is a well-behaved ATO, there exists $k\ge 0$ such that every labeled-free path of every computation $(V,E,\lambda)$ of $M$ on $w$ has at most $k$ nodes $v$ for which $\lambda(v)$ is a universal configuration. It is rather straightforward to see that every labeled-free path of $G$ also enjoys this property since this path occurs in some computation of $M$ on $w$. Hence, the size of the set $P$ is bounded by a polynomial in the size of $|w|$. From the above discussion, we get that $\mathsf{BuildNFTA}(w)$ runs in polynomial time in $|w|$, and the next lemma can be shown:

\def\lemmabuildnfta{
For a string $w\in \Lambda^*$, $\mathsf{BuildNFTA}(w)$ runs in polynomial time in $|w|$ and returns an NFTA $A$ such that $\spanm_M(w)=|L(A)|$.
}

\begin{lemma}\label{lem:buildnfta}
\lemmabuildnfta
\end{lemma}

Before we give the full proof of Lemma~\ref{lem:buildnfta}, let us discuss the rather easy second step of the reduction.


\begin{algorithm}[t]
    \KwIn{A string $w \in \Lambda^*$}
    \KwOut{An NFTA $A$}
    \vspace{2mm}
    
    {Construct the computation DAG $G = (\mathcal{C},\mathcal{E})$ of $M$ on $w$;}
 {\\ $S^A := \emptyset$; $\Lambda^A := \emptyset$; $\delta^A := \emptyset$; $Q :=\emptyset$;}
    {\\ $P :=\mathsf{Process}(\rt{G})$;}
    {\\ Assuming $P = \{(s)\}$, $A := \left(S^A,\Lambda^A, s, \delta^A\right)$;}
    {\\\Return{$A$;}}
    \caption{The algorithm $\mathsf{BuildNFTA}$}\label{alg:dagtonfta}
\end{algorithm}

\begin{algorithm}[t]
    \KwIn{A configuration $C=(s,x,y,z,h_x,h_y)$ of $\mathcal{C}$}
    \vspace{2mm}
    \If{$(C,U) \in Q$}
    {\Return $U$;}
    \If{$C$ is a leaf of $G$}
    {\If{$s\in S_\labeling$}
        {{$S^A := S^A \cup \{s_C\}$;\\}
            {$\Lambda^A := \Lambda \cup \{z\}$;\\}
            \If{$s=s_\accept$}
            {$\delta^A := \delta^A \cup \{(s_C,z,())\}$;}
            {$Q :=Q \cup \{(C,\{(s_C)\})\}$;}
        }
        \lElseIf{$s=s_\accept$}{$Q := Q \cup \{(C,\{()\})\}$}
            \lElse{$Q : = Q \cup \{(C,\emptyset)\}$}}
    \Else{
        {let $C_1,\dots,C_n$ be an arbitrary order over the nodes of $G$ with an incoming edge from $C$;\\}
        \lForEach{$i \in [n]$}{
            {$P_i := \process(C_i)$}}
        \lIf{$s\in S_\exists$}
        {$P := \bigcup_{i \in [n]} P_i$}
        \lElse{    
            $P := \bigotimes_{i \in [n]} P_i$
            %
        }
        \If{$s\in S_\labeling$}
        {{$S^A := S^A \cup \{s_C\}$;\\} 
            {$\Lambda^A := \Lambda^A \cup \{z\}$;\\}
            \ForEach{$(s_1,\dots,s_\ell)\in P$}{$\delta^A := \delta^A \cup \{(s_C,z,(s_1,\dots,s_\ell))\}$;}
            {$Q := Q \cup \{(C,\{(s_C)\})\}$;}}
        \lElse{$Q := Q \cup \{(C,P)\}$}}
    {\Return $U$ with $(C,U) \in Q$;}
    \caption{The recursive procedure $\mathsf{Process}$}\label{alg:process}
\end{algorithm}

\medskip

\noindent\paragraph{\underline{Step 2: The Polynomial Function}}

\smallskip

\noindent 
%
Since $M$ is a well-behaved ATO, there exists a polynomial function $\mathsf{pol}:\mathbb{N}\rightarrow \mathbb{N}$ such that the size of every computation of $M$ on $w$ is bounded by $\mathsf{pol}(|w|)$. Clearly, $\mathsf{pol}(|w|)$ is also a bound on the size of the valid outputs of $M$ on $w$. From the proof of Lemma~\ref{lem:buildnfta}, which is given below, it will be clear that every tree accepted by $A$ has the same structure as some valid output of $M$ on $w$. Hence, we have that $\mathsf{pol}(|w|)$ is also a bound on the size of the trees accepted by $A$, and the next lemma follows:

\def\lempolynomial{
    It holds that $L(A) = \bigcup_{i = 0}^{\mathsf{pol}(|w|)} L_i(A)$.
}

\begin{lemma}\label{lem:polynomial}
\lempolynomial
\end{lemma}

Proposition~\ref{pro:reduction} readily follows from Lemmas~\ref{lem:buildnfta} and~\ref{lem:polynomial}.


\subsection{Proof of Lemma~\ref{lem:buildnfta}}
We now prove the following lemma.

\begin{manuallemma}{\ref{lem:buildnfta}}
    \lemmabuildnfta
\end{manuallemma}

We have already discussed the running time of $\mathsf{BuildNFTA(w)}$ and shown that it runs in polynomial time in $|w|$. We proceed to show that there is one-to-one correspondence between the valid outputs of $M$ on $w$ and the $k$-trees accepted by $A$.

\medskip

\noindent\paragraph{Valid Output to Accepted Tree.} Let $O=(V',E',\lambda')$ be a valid output of $M$ on $w$. Then, $O$ is the output of some accepting computation $T=(V,E,\lambda)$ of $M$ on $w$. For every node $v$ of $T$ with $\lambda(v)$ being a labeling configuration, if $u_1,\dots,u_n$ are the nodes associated with a labeling configuration that are reachable from $v$ in $T$ via a labeled-free path, we denote by $\succ_v$ the order defined over $u_1,\dots,u_n$ in the following way. For $i\neq j\in[n]$, let $v'$ be the lowest common ancestor of $u_i,u_j$. Since $v$ is a common ancestor, clearly, $v'$ lies somewhere along the path from $v$ to $u_i$ (similarly, to $u_j$). Now, let $w_1$ be the child of $v'$ that is the ancestor of $u_i$ (or $u_i$ itself if $v'$ is the parent of $u_i$) and let $w_2$ be the child of $v'$ that is the ancestor of $u_j$ (or $u_j$ itself if $v'$ is the parent of $u_j$); clearly, $w_1\neq w_2$. We define $u_j\succ_v u_i$ iff $\lambda(w_1)$ occurs before $\lambda(w_2)$ in the order defined over the nodes with an incoming edge from $\lambda(v')$ in line~13 of $\process(\lambda(v'))$ (inside $\mathsf{BuildNFTA}(w)$). Clearly, this induces a total order $\succ_v$ over $u_1,\dots,u_n$.
We will show that the $k$-tree $\tau$ inductively defined as follows is accepted by $A$:
\begin{enumerate}
    \item The root $\epsilon$ of $\tau$ corresponds to the root $v$ of $T$, with $\tau(\epsilon)=z$ assuming that $\lambda(v)$ is of the form $(\cdot,\cdot,\cdot,z,\cdot,\cdot)$.
    \item For every node $u$ of $\tau$ corresponding to a node $v\in V$, with 
     $v_1,\dots,v_n\in V$ being the nodes associated with a labeling configuration that are reachable from $v$ via a labeled-free path, if $v_n\succ_v,\dots,\succ_v v_1$, then the children $u\cdot 1,\dots,u\cdot n$ of $u$ are such that $\tau(u\cdot i)=z_i$ assuming that $\lambda(v_i)$ is of the form $(\cdot,\cdot,\cdot,z_i,\cdot,\cdot)$ for all $i\in [n]$.
\end{enumerate}
Note that every node $u$ of $\tau$ corresponds to some node $v\in V$ such that $\lambda(v)$ is a labeling configuration, that, in turn, corresponds to some node $v'$ of $O$.
Clearly, in this way, two different valid outputs of $M$ on $w$ give rise to two different $k$-trees. We  prove the following:

\begin{lemma}\label{lemma:recursive}
The following holds for $\mathsf{BuildNFTA}(w)$.
\begin{enumerate}
\item For every node $v\in V$ with $\lambda(v)=(s,x,y,z,h_x,h_y)$ being a labeling configuration, $\mathsf{Process}(\lambda(v))$ adds to $\delta^A$ the transition $(s_{\lambda(v)},z,(s_{\lambda(v_1)},\dots,s_{\lambda(v_n)}))$, 
where $v_1,\dots,v_n$ are the nodes associated with a labeling configuration that are reachable from $v$ in $T$ via a labeled-free path, and $v_n\succ_v,\dots,\succ_v v_1$, and returns the set $\{(s_{\lambda(v)})\}$.
\item For every node $v\in V$ with $\lambda(v)$ being a non-labeling configuration, $\mathsf{Process}(\lambda(v))$ returns a set containing the tuple $(s_{\lambda(v_1)},\dots, s_{\lambda(v_n)})$, 
where $v_1,\dots,v_n$ are the nodes associated with a labeling configuration that are reachable from $v$ in $T$ via a labeled-free path, and $v_n\succ_v,\dots,\succ_v v_1$.
\end{enumerate}
\end{lemma}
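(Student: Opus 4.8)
The plan is to establish both items simultaneously by induction on the height of $v$ in the computation tree $T$ (the length of the longest downward path from $v$), which mirrors the way $\process$ recurses on the children of a configuration in the DAG $G$. For a node $v \in V$ it is convenient to write $\bar{t}(v) = (s_{\lambda(v_1)},\dots,s_{\lambda(v_n)})$ for the tuple determined by the labeling nodes $v_1,\dots,v_n$ reachable from $v$ via a labeled-free path, listed so that $v_n \succ_v \cdots \succ_v v_1$; the two items then assert, respectively, that $\process(\lambda(v))$ adds the transition $(s_{\lambda(v)},z,\bar{t}(v))$ and returns $\{(s_{\lambda(v)})\}$ when $v$ is labeling, and that $\process(\lambda(v))$ returns a set containing $\bar{t}(v)$ when $v$ is non-labeling.

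For the base case I would take $v$ to be a leaf of $T$, so that $\lambda(v)$ is accepting (as $T$ is accepting) and hence a leaf of $G$, with no labeling node reachable below it, i.e. $n = 0$. If $v$ is labeling, $\process(\lambda(v))$ adds $(s_{\lambda(v)},z,())$ and returns $\{(s_{\lambda(v)})\}$, matching item (1) with empty right-hand side; if $v$ is non-labeling, $\process(\lambda(v))$ returns $\{()\}$, whose sole element is $\bar{t}(v) = ()$, matching item (2).

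For the inductive step, let $v$ be a non-leaf node, and recall that $\process(\lambda(v))$ computes $P_i = \process(C_i)$ over all DAG-children $C_1,\dots,C_m$ of $\lambda(v)$, then sets $P = \bigcup_i P_i$ if $v$ is existential and $P = \bigotimes_i P_i$ if $v$ is universal. When $v$ is existential it has a single $T$-child $u_1$, say $\lambda(u_1) = C_{i_0}$; the labeling nodes reachable below $v$ coincide with those below $u_1$ (or with $\{u_1\}$ if $u_1$ is labeling), so applying the induction hypothesis to $u_1$ gives $\bar{t}(v) = \bar{t}(u_1) \in P_{i_0} \subseteq \bigcup_i P_i = P$; the other $P_i$ merely enlarge the union and do not interfere. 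When $v$ is universal, all $m$ DAG-children are $T$-children, so the induction hypothesis yields $\bar{t}(u_j) \in P_j$ for each $j$, and since $P = \bigotimes_j P_j$ the concatenation of $\bar{t}(u_1),\dots,\bar{t}(u_m)$ in the line~13 child order lies in $P$. Finally, if $v$ is labeling the algorithm adds, for every tuple of $P$, a transition out of $s_{\lambda(v)}$ --- in particular $(s_{\lambda(v)},z,\bar{t}(v))$ --- and returns $\{(s_{\lambda(v)})\}$, giving item (1); otherwise it returns $P \ni \bar{t}(v)$, giving item (2). I would also remark that memoization is harmless here: $\process(\lambda(v))$ yields the same set $P$ for every $T$-node carrying the configuration $\lambda(v)$, and the claim is only that the specific tuple $\bar{t}(v)$ realized by $v$ lies in $P$, not that $P$ is a singleton.

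The step I expect to be the main obstacle is the order bookkeeping in the universal case, namely verifying that listing the reachable labeling nodes under $\succ_v$ produces exactly the concatenation $\bar{t}(u_1)\cdots\bar{t}(u_m)$ computed by $\bigotimes$. I would discharge this by unfolding the definition of $\succ_v$: the relative order of two reachable labeling nodes $w,w'$ below $v$ is decided at their lowest common ancestor $v'$ by the line~13 ordering of the two children of $v'$ on the way to $w$ and $w'$, which is precisely the ordering in which the corresponding subtuples are concatenated by $\bigotimes$ in the call processing $v'$. A short induction on the distance from $v$ to this branch point then shows that the global $\succ_v$ order and the order induced by the nested $\bigotimes$ operations agree, so that $\bar{t}(v)$ coincides with the tuple the algorithm actually places in $P$, completing the argument.
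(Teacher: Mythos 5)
Your proposal is correct and follows essentially the same route as the paper's proof: induction on the height of $v$ in $T$, with the same case split (leaf vs.\ non-leaf, labeling vs.\ non-labeling, existential vs.\ universal) and the same lowest-common-ancestor argument to reconcile $\succ_v$ with the order of concatenation in $\bigotimes$. The only nit is that in the universal case the contribution of a \emph{labeling} child $u_j$ is the singleton tuple $(s_{\lambda(u_j)})$ returned by item (1) of the induction hypothesis rather than $\bar{t}(u_j)$ as you defined it, a distinction you already made explicit in the existential case and which does not affect the argument.
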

\begin{proof}
We prove the claim by induction on $h(v)$, that is, the height of the node $v$ in $T$.
The base case, $h(v)=0$, is when $v$ is a leaf of $T$. Clearly, in this case, there are no nodes associated with a labeling configuration reachable from $v$ via a labeled-free path. Since $T$ is an accepting computation, it must be the case that $\lambda(v)$ is of the form $(s_\accept,\cdot,\cdot,z,\cdot,\cdot)$. If $\lambda(v)$ is a labeling configuration, in line~8 of $\process(\lambda(v))$ the transition $(s_{\lambda(v)},z,())$ is added to $\delta^A$. Then, in line~9 we add $(\lambda(v),\{(s_{\lambda(v)})\})$ to $Q$, and in line~24 we return the set $\{(s_{\lambda(v)})\}$. If $\lambda(v)$ is a non-labeling configuration, in line~10 we add $(\lambda(v),\{()\})$ to $Q$, and in line~24 we return the set $\{()\}$. This concludes our proof for the base case.

Next, assume that the claim holds for $h(v)\le \ell$, and we prove that it holds for $h(v)= \ell+1$. We start with the case where $\lambda(v)$ is a labeling configuration. Let $w_1,\dots,w_m$ be the children of $v$ in $T$, and let $v_1,\dots,v_n$ be the nodes associated with a labeling configuration that are reachable from $v$ via a labeled-free path. Clearly, every node $v_i$ is reachable via a (possibly empty) labeled-free path from some child $w_j$ of $v$. Moreover, for each $w_j$ we have that $h(w_j)\le\ell$; hence, the inductive assumption implies that the claim holds for each node $w_j$. Each node $w_j$ that is associated with a labeling configuration is, in fact, one of the nodes $v_1,\dots,v_n$; that is, $w_j=v_i$ from some $i\in[n]$. The inductive assumption implies that $\process(\lambda(w_j))$ returns the set $\{(s_{\lambda(v_i)})\}$ that contains a single tuple. We denote this tuple by $p_j$. If, on the other hand, $\lambda(w_j)$ is a non-labeling configuration, let $v_{i_1},\dots,v_{i_{n_j}}$ be the nodes of $v_1,\dots,v_n$ that are reachable from $w_j$ via a labeled-free path. Clearly, these are the only nodes associated with a labeling configuration that are reachable from $w_j$ via a labeled-free path, as each such node is also reachable from $v$ via a labeled-free path. In this case, the inductive assumption implies that $\process(\lambda(w_j))$ returns a set containing the tuple $(s_{\lambda(v_{i_1})},\dots,s_{\lambda(v_{i_{n_j}})})$. We again denote this tuple by $p_j$. Note that if $\lambda(w_j)$ is a non-labeling configuration, and there are no nodes associated with a labeling configuration that are reachable from $w_j$ via a labeled-free path, we have that $p_j=()$.

If $\lambda(v)$ is a universal configuration,
then in line~16 we define $P := \bigotimes_{i \in [m]} P_i$. In particular, $P$ will contain the tuple obtained by merging the tuples
$p_1,\dots, p_m$, assuming that $\lambda(w_1),\dots,\lambda(w_m)$ is the order defined over the nodes with an incoming edge from $\lambda(v)$ in line~13 of $\process(\lambda(v))$. Note that for two nodes $v_i\neq v_k$ that are reachable from the same child $w_j$ of $v$, if $v_i \succ_{w_j} v_k$, then $v_i \succ_{v} v_k$, because the lowest common ancestor of $v_i, v_k$ occurs under $w_j$ (or it is $w_j$ itself). If $v_i$ is reachable from $w_{j_1}$ while $v_k$ is reachable from $w_{j_2}$ with $j_1<j_2$, then $v$ is the lowest common ancestor of $v_i$ and $v_j$, and since $\lambda(w_{j_1})$ occurs before $\lambda(w_{j_2})$ in the order defined over the nodes with an incoming edge from $\lambda(v)$ in line~13 of $\process(\lambda(v))$, we have that $v_j\succ_v v_i$. We conclude that the tuple obtained by merging the tuples
$p_1,\dots, p_m$ is precisely the tuple $(s_{\lambda(v_1)},\dots,s_{\lambda(v_n)})$, assuming that $v_n\succ_v\dots\succ_v v_1$. Then, in line~21 we add the transition $(s_{\lambda(v)},z,(s_{\lambda(v_1)},\dots,s_{\lambda(v_n)}))$ to $\delta^A$ and in line~24 we return the set $\{(s_{\lambda(v)})\}$.

Now, if $\lambda(v)$ is an existential configuration, $v$ has a single child $w_1$ in $T$. If $\lambda(w_1)$ is a labeling configuration, then $w_1$ is the only node associated with a labeling configuration that is reachable from $v$ via a labeled-free path, and the inductive assumption implies that $\process(\lambda(w_1))$ returns the set $\{(s_{\lambda(w_1)})\}$ that contains a single tuple, which we denote by $p$.
Otherwise, all the nodes $v_1,\dots,v_n$ are reachable from $w_1$ via a labeled-free path, and $\process(\lambda(w_1))$ returns a set that contains the tuple $(s_{\lambda(v_1)},\dots, s_{\lambda(v_n)})$, that we again denote by $p$. (As aforementioned, it cannot be the case that there is an additional node associated with a labeling configuration that is reachable from $w_1$ via a labeled-free path, as each such node is also reachable from $v$ via such a path.)
It is rather straightforward that $\succ_v$ is the same as $\succ_{w_1}$; hence, since $v_n\succ_{w_1}\dots\succ_{w_1} v_1$ based on the inductive assumption, it also holds that $v_n\succ_{v}\dots\succ_{v} v_1$. In line~15 of $\process(\lambda(v))$ we simply take the union of the sets returned by all the nodes of $G$ with an incoming edge from $\lambda(v)$, one of which is $\lambda(w_1)$; hence, the tuple $p$ will appear as it is in $P$. Finally, in line~21 we add the transition $(s_{\lambda(v)},z,p)$ to $\delta^A$ and in line~24 we return the set $\{(s_{\lambda(v)})\}$. 

Finally, if $\lambda(v)$ is a non-labeling configuration, instead of adding the transition, we add $(\lambda(v),P)$ to $Q$ in line~23, and return in line~24 the set $P$ that, as aforementioned, contains the tuple $(s_{\lambda(v_1)},\dots, s_{\lambda(v_n)})$, which concludes our proof.
\end{proof}

With Lemma~\ref{lemma:recursive} in place, it is not hard to show that there is a run $\rho$ of $A$ over $\tau$. In particular, for every node $u$ of $\tau$ corresponding to a node $v\in V$ (with $\lambda(v)$ being a labeling configuration), we define $\rho(u)=s_{\lambda(v)}$.
Lemma~\ref{lemma:recursive} implies that the transition
$(s_{\lambda(v)},z,(s_{\lambda(v_1)},\dots,s_{\lambda(v_n)}))$ occurs in $\delta^A$, where $v_1,\dots,v_n$ are the nodes associated with a labeling configuration that are reachable from $v$ via a labeled-free path and $v_n\succ_v\dots\succ_v v_1$. The children of $u$ in $\tau$ correspond, by construction, to the nodes $v_1,\dots,v_n$ following the same order $\succ_v$. Note that if no nodes associated with a labeling configuration are reachable from $v$ via a labeled-free path, we have the transition $(s_{\lambda(v)},z,())$ in $\delta^A$, and $u$ is a leaf of $\tau$.
It is now easy to verify that the mapping $\rho$ defines a run of $A$ over $\tau$.

\medskip

\noindent\paragraph{Accepted Tree to Valid Output.} 
Let $\tau$ be a $k$-tree accepted by $A$. We will show that the labeled tree $O=(V',E',\lambda')$ inductively defined as follows is a valid output of $M$ on $w$:

\begin{enumerate}
    \item The root $v'$ of $O$ corresponds to the root $\epsilon$ of $\tau$, with $\lambda(v')=\tau(\epsilon)$.
    \item For every node $v'$ of $O$ corresponding to a non-leaf node $u$ of $\tau$, if
$u\cdot 1,\dots,u\cdot n$ are the children of $u$ in $\tau$, then $v_1',\dots,v_n'$ are the children of $v'$ in $O$, with $\lambda'(v_i')=\tau(u\cdot i)$ for all $i\in[n]$.
\end{enumerate}
Clearly, with this definition, two different $k$-trees give rise to two different valid outputs of $M$ on $w$. We will show that there is an accepting computation $T$ of $M$ on $w$ such that $O$ is output of $T$.

Our proof is based on the following simple observation regarding $\mathsf{BuildNFTA}(w)$ and the labeling configurations of $M$ on $w$. Recall that $G$ is the computation DAG of $M$ on $w$.

\begin{observation}\label{obs:labeling}
Let $C$ be a labeling configuration of $M$ on $w$. Let $C_1,\dots,C_n$ be the nodes of $G$ with an incoming edge from $C$, and let $(s_C,z,(s_1,\dots,s_\ell))$ be one of the transitions added to $\delta^A$ in line~21 of $\process(C)$. The following hold:
\begin{itemize}
\item If $C$ is a universal configuration, then there are $p_1,\dots,p_n$ such that $p_i$ is one of the tuples returned by $\process(C_i)$ for all $i\in[n]$, and $(s_1,\dots,s_\ell)$ is the tuple obtained by merging $p_1,\dots,p_n$ in line~16, and
\item if $C$ is an existential configuration, then $(s_1,\dots,s_\ell)$ is one of the tuples returned by $\process(C_i)$ for some $i\in[n]$.
\end{itemize}
\end{observation}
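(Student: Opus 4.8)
The plan is to establish the observation by a direct inspection of the procedure $\process$, unfolding what happens on a labeling configuration together with the definition of the $\otimes$ operator. First I would note that, by the structure of Algorithm~\ref{alg:process}, transitions are added to $\delta^A$ only in the leaf case (line~8) and in the non-leaf labeling case (line~21). Since the hypothesis is that the transition $(s_C,z,(s_1,\dots,s_\ell))$ is added in line~21, the configuration $C$ is a non-leaf of $G$, and so $\process(C)$ takes the \textbf{else} branch: it first computes $P_i := \process(C_i)$ for each of the nodes $C_1,\dots,C_n$ of $G$ with an incoming edge from $C$, then sets $P$ according to the type of $C$, and finally, since $C$ is labeling, adds to $\delta^A$ exactly one transition $(s_C,z,(s_1,\dots,s_\ell))$ for each tuple $(s_1,\dots,s_\ell)\in P$. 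Hence it suffices to read off, from the definition of $P$, the shape of its members.

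For the universal case I would recall that $P = \bigotimes_{i\in[n]} P_i$ is, by definition, obtained by forming the cartesian product of $P_1,\dots,P_n$ (all $n$-tuples $(p_1,\dots,p_n)$ with $p_i\in P_i$) and then merging each such $n$-tuple into a single tuple of states (line~16). Consequently, every $(s_1,\dots,s_\ell)\in P$ is the merge of some $p_1,\dots,p_n$ with $p_i\in P_i=\process(C_i)$ for all $i\in[n]$, which is precisely the first item. For the existential case I would use instead that $P=\bigcup_{i\in[n]}P_i$ (line~15), so any $(s_1,\dots,s_\ell)\in P$ lies in some $P_i=\process(C_i)$, giving the second item.

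I do not expect a genuine obstacle here, as the statement is essentially a restatement of the definitions of line~21 and of $\otimes$. The only points that need a little care are the degenerate cases: when $P_i=\emptyset$ for some $i$ in the universal case, the convention $\bigotimes_{i\in[n]}P_i=\emptyset$ means line~21 adds no transitions, so there is nothing to prove; and one should observe that each $P_i$ is well-defined, which follows since $\process(C_i)$ returns the set $U$ with $(C_i,U)\in Q$ (line~24), the memoization via $Q$ guaranteeing that each child is processed exactly once.
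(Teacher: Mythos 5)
Your proposal is correct and matches the paper's treatment: the paper states this as an Observation with no explicit proof, precisely because it follows immediately from unfolding lines 13--21 of $\process$ and the definitions of $\bigcup$ and $\bigotimes$, which is exactly what you do. Your attention to the degenerate case $\bigotimes_{i\in[n]}P_i=\emptyset$ and to the memoization via $Q$ is a sensible addition but not needed beyond what the paper implicitly assumes.
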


We also need the following observation regarding $\mathsf{BuildNFTA}(w)$ and non-labeling configurations of $M$ on $w$. 

\begin{observation}\label{obs:non-labeling}
Let $C$ be a non-labeling configuration of $M$ on $w$. Let $C_1,\dots,C_n$ be the nodes of $G$ with an incoming edge from $C$, and let $(s_1,\dots,s_\ell)$ be one of the tuples returned by $\process(C)$. Then:
\begin{itemize}
\item If $C$ is a universal configuration, then there are $p_1,\dots,p_n$ such that $p_i$ is one of the tuples returned by $\process(C_i)$ for all $i\in[n]$, and $(s_1,\dots,s_\ell)$ is obtained by merging the tuples $p_1,\dots,p_n$ in line~16, and
\item if $C$ is an existential configuration, then $(s_1,\dots,s_\ell)$ is one of the tuples returned by $\process(C_i)$ for some $i\in[n]$.
\end{itemize}
\end{observation}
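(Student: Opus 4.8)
The plan is to prove Observation~\ref{obs:non-labeling} by a direct reading of the pseudocode of $\process$ (Algorithm~\ref{alg:process}) on the configuration $C$, relying only on the case analysis that the algorithm itself performs. Since the statement fixes nodes $C_1,\dots,C_n$ of $G$ having an incoming edge from $C$, the configuration $C$ is necessarily a non-leaf of $G$; hence, when $\process(C)$ is invoked and no pair $(C,U)$ is yet in $Q$, the test in line~3 fails and the computation enters the \textbf{Else} branch (lines~12--23), where it recursively sets $P_i := \process(C_i)$ for each $i\in[n]$, in the fixed order chosen in line~13.

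First I would handle the existential case. If $C$ is existential, i.e.\ its state lies in $S_\exists$, then line~15 assigns $P := \bigcup_{i\in[n]} P_i$. Because $C$ is non-labeling, its state is not in $S_\labeling$, so the labeling branch (lines~18--22) is skipped and the procedure records $(C,P)$ in $Q$ (line~23) and returns $P$ (line~24). Thus every tuple $(s_1,\dots,s_\ell)$ returned by $\process(C)$ belongs to some $P_i$, which is exactly the set of tuples returned by $\process(C_i)$ for some $i\in[n]$, as claimed.

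Next I would handle the universal case. If $C$ is universal, line~16 assigns $P := \bigotimes_{i\in[n]} P_i$, and, exactly as before, the value returned by $\process(C)$ is this $P$. Here the only real content is to unwind the definition of the merge operator $\otimes$: the set $\bigotimes_{i\in[n]} P_i$ is obtained by forming the cartesian product of $P_1,\dots,P_n$ and then concatenating, for each resulting selection, the chosen $n$ state-tuples into a single state-tuple. Consequently, any $(s_1,\dots,s_\ell)\in P$ arises by picking some $p_i\in P_i=\process(C_i)$ for every $i\in[n]$ and merging $p_1,\dots,p_n$, which is precisely the form the statement asserts.

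The main (and only minor) obstacle is the memoization in lines~1--2: a later invocation of $\process(C)$ may return an already-stored $U$ with $(C,U)\in Q$ rather than recomputing $P$. I would dispose of this by observing that such a $U$ is exactly the value $P$ produced by the first invocation of $\process(C)$, which passed through the branch analysed above; since the use of $Q$ guarantees each configuration is processed at most once in this way, the returned set is unambiguous and the conclusion is unaffected. I therefore expect the observation to reduce to a faithful transcription of lines~13--24 together with the definition of $\otimes$, with no genuine difficulty beyond that bookkeeping.
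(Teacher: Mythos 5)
Your proposal is correct and matches the paper's intent exactly: the paper states this as an unproved ``simple observation,'' treating it as an immediate transcription of lines~13--24 of $\process$ together with the definition of $\otimes$, which is precisely the case analysis you carry out. Your handling of the memoization via $Q$ (later calls return the set computed by the first call, so the returned set is unambiguous) is the only point that needed explicit attention, and you address it adequately.
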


Since $\tau$ is accepted by $A$, there is a run $\rho$ of $A$ over $\tau$.  Based on this and the two observations above, we can now define an inductive procedure to
 construct $T=(V,E,\lambda)$. Intuitively, we construct $T$ while traversing the $k$-tree $\tau$ from the root to the leaves. We start by adding the root of $T$ associated with the initial configuration of $M$ on $w$. For each node $v$ that we add to $V$, we store two pieces of information: \emph{(1)} the current node $\node(v)$ of $\tau$ that we handle (starting from the root of $\tau$), and \emph{(2)} a tuple $\states(v)$ of states of $S^A$ that this node is responsible for. Intuitively, if $\states(v)=(s_1,\dots,s_\ell)$, then there should be labeled-free paths from the node $v$ to nodes $v_1,\dots,v_\ell$ such that $\lambda(v_j)$ is the configuration of $s_j$, for every $j\in[\ell]$. Note that every state in $S^A$ corresponds to a \emph{labeling} configuration of $M$ on $w$, because in $\mathsf{BuildNFTA}(w)$ we add a state $s_C$ to $S^A$ if only if $C$ is a labeling configuration; hence, $\lambda(v_j)$ is a labeling configuration, for every $j\in[\ell]$.

 In addition, for every node $v$ with $\lambda(v)$ being a labeling configuration, we store an additional tuple $\assign(v)$ that is determined by a node of $\tau$. In particular, when we reach a labeling configuration, we continue our traversal of $\tau$, and consider one of the children $u$ of $\node(v)$. Then, if the children of $u$ in $\tau$ are $u\cdot 1,\dots,u\cdot \ell$ with $\rho(u)=s_{\lambda(v)}$ and $\rho(u_i)=s_i$ for $i\in[\ell]$, and we have the transition $(s_{\lambda(v)},\tau(u),(s_1,\dots,s_\ell))$ in $\delta^A$ (such a transition must exist because $\rho$ is a run of $A$ over $\tau$), the tuple of states that we should assign responsibility for is $\assign(v)=(s_1,\dots,s_\ell)$. If $\lambda(v)$ is a universal configuration, this set is split between the children of $v$ (determined by the transition $\lambda(v) \ra_M \{C'_1,\ldots,C'_n\}$ of $M$), so each one of them is responsible for a (possibly empty) subtuple of $\assign(v)$. If, on the other hand, $\lambda(v)$ is an existential configuration, we add one child under $v$ and pass to him the responsibility for the entire tuple. Formally, we define the following procedure.
\begin{enumerate}
\item We add a root node $v$ to $V$ with $\lambda(v)$ being the initial configuration of $M$ on $w$. We define 
\[
\states(v) = (s_{\lambda(v)}).
\]

\item For every node $v\in V$ with $\lambda(v)$ being a labeling configuration, assuming $\lambda(v) \ra_M \{C'_1,\ldots,C'_n\}$ for some $n>0$, let $u$ be a child of $\node(v)$ (or the root of $\tau$ if $\node(v)$ is undefined), such that:

\emph{(1)} $\rho(u)=s_{\lambda(v)}$, 

\emph{(2)} $u\cdot 1,\dots,u\cdot \ell$ are the children of $u$ in $\tau$, 

\emph{(3)} $\rho(u\cdot j)=s_j$ for all $j\in[\ell]$. Consider the transition $(s_{\lambda(v)},\tau(u),(s_1,\dots,s_\ell))$ in $\delta^A$. We define 
\[\assign(v)=(s_1,\dots,s_\ell).\]
Then, if $\lambda(v)$ is an existential configuration,
we add a single child $v_i$ under $v$ with $\lambda(v_i)=C_i'$ for some $i\in[n]$, such that $\process(C'_i)$ returns the tuple $(s_1,\dots,s_\ell)$ based on Observation~\ref{obs:labeling}, and define 
\[
\node(v_i)=u \quad \text{and} \quad \states(v_i)=(s_1,\dots,s_\ell).
\]
 If $\lambda(v)$ is a universal configuration, we add the children $v_1,\dots,v_n$ under $v$ with $\lambda(v_i)=C_i'$ for all $i\in[n]$, and for each child $v_i$ we define
\[
\node(v_i)=u \quad \text{and} \quad \confs(v_i)=p_i
\]
where $p_i$ is the tuple returned by $\process(C_i')$, for every $i\in[n]$, such that $(s_1,\dots,s_\ell)$ is obtained by merging the tuples $p_1,\dots,p_n$ based on Observation~\ref{obs:labeling}.

\item For a node $v\in V$ with $\lambda(v)$ being a non-labeling configuration, assuming $\lambda(v) \ra_M \{C_1',\ldots,C_n'\}$ for some $n>0$ and $\states(v)=(s_1,\dots,s_\ell)$, if $\lambda(v)$ is an existential configuration, we add a single child $v_i$ under $v$ with $\lambda(c_i)=C_i'$ for some $i\in[n]$, such that the tuple $(s_1,\dots,s_\ell)$ is returned by $\process(C_i')$ based on Observation~\ref{obs:non-labeling}, and define 
\[
\node(v_i)=\node(v) \quad \text{and} \quad \states(v_i)=(s_1,\dots,s_\ell).
\]
If $\lambda(v)$ is a universal configuration, we add the children $v_1,\dots,v_n$ under $v$ with $\lambda(v_i)=C_i'$ for all $i\in[n]$, and for each child $v_i$ we define
\[\node(v_i)=\node(v) \quad \text{and} \quad
\states(v_i)=p_i\]
where $p_i$ is the tuple returned by $\process(C_i')$, for every $i\in[n]$, such that $(s_1,\dots,s_\ell)$ is obtained by merging the tuples $p_1,\dots,p_n$ based on Observation~\ref{obs:non-labeling}.

\item For a leaf node $v\in V$ with $\lambda(v)$ being a labeling configuration, we define
 \[
 \assign(v)=().
 \]
\end{enumerate}

We now show that the above procedure is well-defined. In particular, we show that we can always find a node of $\tau$ that satisfies the desired properties when considering labeling configurations, and that we can indeed apply Observations~\ref{obs:labeling} and~\ref{obs:non-labeling}.
\begin{lemma}\label{lemma:valid_procedure}
The following hold:
\begin{itemize}
\item For every added non-root node $v$, it holds that $\states(v)$ is a subtuple of $\assign(r)$, where $r$ is the lowest ancestor of $v$ with $\lambda(r)$ being a labeling configuration.
\item For every added node $v$, $\states(v)$ is a tuple returned by $\process(\lambda(v))$.
\item For every added node $v$ with $\lambda(v)$ being a labeling configuration, $\states(v)=(s_{\lambda(v)})$.
\item For every added node $v$ with $\lambda(v)$ being a labeling configuration, there exists a child $u$ of $\node(v)$ (or the root of $\tau$ if $\node(v)$ is undefined) with $\rho(u)=s_{\lambda(v)}$ such that, if $u\cdot 1,\dots,u\cdot \ell$ are the children of $u$ in $\tau$ with $\rho(u\cdot j)=s_j$ for every $j\in [\ell]$, then the transition $(s_{\lambda(v)},\tau(u),(s_1,\dots,s_\ell))$ occurs in $\delta^A$ and $\assign(v)=(s_1,\dots,s_\ell)$.
\item For every added leaf $v$, $\lambda(v)$ is an accepting configuration.
\end{itemize}
\end{lemma}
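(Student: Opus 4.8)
The plan is to prove the five properties simultaneously by induction on the depth of the node $v$ in the tree $T$ being constructed, following the top-down order in which the procedure adds nodes. A simultaneous induction is unavoidable because the properties are tightly coupled: property~4 (the well-definedness of $\assign(v)$ for a labeling node $v$) is exactly what lets the procedure add the children of $v$ and fix their $\states$ values, while establishing property~2 for a child relies on Observation~\ref{obs:labeling} and Observation~\ref{obs:non-labeling}, which are invoked with the tuple $\assign(v)$ (labeling $v$) or $\states(v)$ (non-labeling $v$) of the parent. Hence for each node I would first establish properties 1--3 and 5, then (if $\lambda(v)$ is labeling) property~4, and only afterwards process $v$ to create its children.

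Property~2 is the backbone. At the root it holds because $\lambda(v)$ is the initial (labeling) configuration and $\process(\rt{G})$ returns the singleton $\{(s_{\lambda(v)})\}$, which is also the initial state of $A$, matching $\rho(\epsilon)$. For the inductive step, whenever we add a child $v_i$ of $v$ we set $\states(v_i)$ to a tuple that Observation~\ref{obs:labeling} or Observation~\ref{obs:non-labeling} certifies is returned by $\process(\lambda(v_i))$: for an existential $\lambda(v)$ the whole responsible tuple is one of the tuples returned by some child's $\process$ call, while for a universal $\lambda(v)$ the responsible tuple is the in-order merge of one tuple $p_i$ from each child and we assign $p_i$ to $v_i$. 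Property~3 is then immediate, since $\process(\lambda(v))$ returns exactly $\{(s_{\lambda(v)})\}$ whenever $\lambda(v)$ is labeling, so any tuple it returns, in particular $\states(v)$, must be the single state $(s_{\lambda(v)})$. Property~1 is maintained in parallel: universal configurations split the current responsible tuple into contiguous blocks (one per child), existential configurations pass it through unchanged, and non-labeling nodes propagate subtuples down labeled-free paths; thus when a labeled-free path from the lowest labeling ancestor $r$ reaches a labeling node $v$, its single state $\states(v)=(s_{\lambda(v)})$ is a contiguous piece of $\assign(r)$.

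Property~4 is the crux, and I expect it to be the main obstacle. The structural fact to isolate is that $\node(v)$ equals the $\tau$-node $u_r$ selected when the lowest labeling proper ancestor $r$ of $v$ was processed: the procedure resets $\node$ to the chosen $u$ when crossing a labeling node and leaves it unchanged across non-labeling nodes. Combined with the definition $\assign(r)=(\rho(u_r\cdot 1),\dots,\rho(u_r\cdot \ell))$ (the $\rho$-labels of the $\tau$-children of $u_r$), property~1 forces $s_{\lambda(v)}$ to coincide with $\rho(u_r\cdot j)$ for some position $j$. Taking $u=\node(v)\cdot j$ then gives a child of $\node(v)$ with $\rho(u)=s_{\lambda(v)}$, and since $\rho$ is a run of $A$ over $\tau$, the transition $(s_{\lambda(v)},\tau(u),(\rho(u\cdot 1),\dots,\rho(u\cdot \ell)))$ lies in $\delta^A$; setting $\assign(v)$ to this target is precisely the procedure's definition. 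The delicate point is tracking the subtuple decomposition accurately enough (contiguity \emph{and} order) that distinct labeling descendants of $r$ claim distinct positions of $\assign(r)$, so that the selected $u$'s are well-defined and the eventual correspondence between $\tau$ and $T$ is a bijection; this is where the in-order nature of the $\otimes$-merge in $\process$ must be exploited with care.

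Finally, property~5 follows once the rest are in place. If $v$ is a leaf of $T$, then $\lambda(v)$ has no $\ra_M$-successors, so it is accepting or rejecting. If it were rejecting and non-labeling, then $\process(\lambda(v))$ returns $\emptyset$, contradicting property~2 (which requires $\states(v)$ to be one of its returned tuples). If it were rejecting and labeling, then $\mathsf{BuildNFTA}$ adds no outgoing transition from $s_{\lambda(v)}$, since the transition $(s_{\lambda(v)},z,())$ is added only when the state is $s_\accept$; then the leaf $u$ of $\tau$ with $\rho(u)=s_{\lambda(v)}$ provided by property~4 could not satisfy $(s_{\lambda(v)},\tau(u),())\in\delta^A$, contradicting that $\rho$ is a run. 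Hence $\lambda(v)$ must be accepting, completing the induction.
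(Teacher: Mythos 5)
Your proposal is correct and follows essentially the same route as the paper's proof: induction on the depth of $v$ in the constructed tree, with property~2 driven by Observations~\ref{obs:labeling} and~\ref{obs:non-labeling}, property~3 read off from the fact that $\process$ returns the singleton $\{(s_{\lambda(v)})\}$ for labeling configurations, property~4 obtained by combining the subtuple containment of property~1 with the run condition on $\rho$, and property~5 by the same case split on labeling versus non-labeling rejecting leaves. The only stylistic difference is that you explicitly flag the contiguity/order bookkeeping of the $\otimes$-merge as the delicate point, which the paper handles implicitly here and more carefully in the surrounding lemmas.
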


\begin{proof}
    We prove the claim by induction on $d(v)$, that is, the depth of the node $v$ in the constructed tree. If $d(v)=0$, then $v$ is the root of the tree. By construction, $\lambda(v)$ is the initial configuration of $M$ on $w$ and it is a labeling configuration by definition. Moreover, in the procedure we define $\states(v)=(s_{\lambda(v)})$, and $\process(C)$ returns the tuple $(s_C)$ for every labeling configuration $C$; in particular, $\process(\lambda(v))$ returns $(s_{\lambda(v)})$. Since $\node(v)$ is undefined, we consider the root $u$ of $\tau$. It must be the case that $\rho(u)=s_{\lambda(v)}$, since the state $s_{\lambda(v)}$ that corresponds to the initial configuration of $M$ on $w$, and so to the root of $G$, is the state $s^A_\init$ (as defined in line~4 of $\mathsf{BuildNFTA}(w)$). If $u\cdot 1,\dots,u\cdot\ell$ are the children of $u$ in $\tau$, and $\rho(u\cdot j)=s_j$ for every $j\in[\ell]$,  since $\rho$ is a run of $A$ over $\tau$, there must be a transition $(s_{\lambda(v)},\tau(u),(s_1,\dots,s_\ell))$ in $\delta^A$. In the procedure, assuming $\lambda(v) \ra_M \{C'_1,\ldots,C'_n\}$ for some $n>0$ (i.e.~$v$ is a non-leaf node), we define $\assign(v)=(s_1,\dots,s_\ell)$.

    As for the first point of the lemma, the non-root nodes $v_1,\dots,v_n$ of depth $1$ are the direct children of the root, and, in the procedure, we define $\states(v_i)=\assign(v)$ if $\lambda(v)$ is an existential configuration and $v_i$ is the only child of $v$ in the constructed tree (in this case, there is only one node of depth $1$), or $\states(v_i)\subseteq\assign(v)$ for every $i\in[n]$ if $\lambda(v)$ is a universal configuration. In both cases, $v$ is the lowest ancestor of $v_1,\dots,v_n$ with $\lambda(v)$ being a labeling configuration, and we have that $\states(v_i)\subseteq\assign(v)$. This concludes the base case.

    Next, assume that the claim holds for all added nodes $v$ with $d(v)\le m$. We prove that it also holds for all added nodes $v$ with $d(v)=m+1$. Let $v$ be such a node, and let $p$ be its parent in the constructed tree. Since $d(p)\le m$, the inductive assumption holds for $p$. 
    If $\lambda(p)$ is a non-labeling configuration, the inductive assumption implies that  $\states(v)=(s_1,\dots,s_\ell)$ is a tuple returned by $\process(\lambda(p))$. If, on the other hand, $\lambda(p)$ is a labeling configuration, the inductive assumption implies that there exists a child $u$ of $\node(p)$ (or the root of $\tau$ if $\node(p)$ is undefined) with $\rho(u)=s_{\lambda(p)}$, such that if $u\cdot 1,\dots,u\cdot \ell$ are the children of $u$ in $\tau$ with $\rho(u\cdot j)=s_j$ for every $j\in [\ell]$, the transition $(s_{\lambda(p)},\tau(u),(s_1,\dots,s_\ell))$ occurs in $\delta^A$ and, since $p$ is a non-leaf node, $\assign(v)=(s_1,\dots,s_\ell)$. Hence, if $\lambda(p)$ is a non-labeling configuration, we now consider the tuple $\states(p)=(s_1,\dots,s_\ell)$, and if $\lambda(p)$ is a labeling configuration, we consider the tuple 
    $\assign(p)=(s_1,\dots,s_\ell)$.
    
    Then, if $\lambda(p)$ is an existential configuration, Observation~\ref{obs:labeling} or~\ref{obs:non-labeling} (depending on whether $\lambda(p)$ is a labeling or non-labeling configuration) implies that, assuming $\lambda(p) \ra_M \{C_1',\ldots,C_n'\}$ for $n>0$, $(s_1,\dots,s_\ell)$ is one of the tuples returned by $\process(C_i')$ for $i\in[n]$. In this case, in the procedure, we add a single child $v$ under $p$ with $\lambda(v)=C_i'$, and define $\states(v)=(s_1,\dots,s_\ell)$.
    Clearly, this child is the node $v$; hence, $\process(\lambda(v))$ returns the tuple $(s_1,\dots,s_\ell)$.
    
    If $\lambda(p)$ is a universal configuration, Observation~\ref{obs:labeling} or Observation~\ref{obs:non-labeling} (depending on whether $\lambda(p)$ is a labeling or non-labeling configuration) implies that, again assuming $\lambda(p) \ra_M \{C_1',\ldots,C_n'\}$ for some $n>0$, there are $p_1,\dots,p_n$, such that for every $i\in[n]$, $p_i$ is one of the tuples returned by $\process(C_i')$, and $(s_1,\dots,s_\ell)$ is the tuple obtained by merging $p_1,\dots,p_n$. In the procedure, we add $n$ children $v_1,\dots,v_n$ under $p$, with $\lambda(v_i)=C_i'$ for $i\in[n]$, and, clearly, $v=v_i$ for some $i\in [n]$; hence, we define $\states(v)=p_i$.
    
    Note that in both cases ($\lambda(p)$ being universal or existential), we have shown that $\process(\lambda(v))$ returns the tuple $\states(v)$.
    Moreover,
    if $\lambda(p)$ is a labeling configuration, $p$ is the lowest ancestor of $v$ with $\lambda(p)$ being a labeling configuration, and we have shown that 
    $\states(v)$ is a subtuple of $\assign(p)$. If $\lambda(p)$ is a non-labeling configuration, the inductive assumption implies that $\states(p)$ is a subtuple of $\assign(r)$, where $r$ is the lowest ancestor of $p$ with $\lambda(r)$ being a labeling configuration. Clearly, $r$ is also the lowest ancestor of $v$ with $\lambda(r)$ being a labeling configuration, and we conclude that $\states(v)$ is a subtuple of $\assign(r)$.

    Now, if $\lambda(v)$ is a non-labeling configuration, then we are done, as we have proved that the first and second points of the lemma hold. However, if $\lambda(v)$ is a labeling configuration, we need to prove that the third and fourth points hold. In particular, we need to establish that there exists a child $u$ of $\node(v)$ with $\rho(u)=s_{\lambda(v)}$, such that if $u\cdot 1,\dots,u\cdot \ell$ are the children of $u$ in $\tau$ with $\rho(u\cdot j)=s_j$ for every $j\in[\ell]$, the transition $(s_{\lambda(v)},\tau(u),(s_1,\dots,s_\ell))$ occurs in $\delta^A$. Consider the lowest ancestor $r$ of $v$ with $\lambda(r)$ being a labeling configuration. Since $d(r)\le m$, the inductive assumption implies that there exists a child $u'$ of $\node(r)$ with $\rho(u')=s_{\lambda(r)}$, such that if $u'\cdot 1,\dots,u'\cdot m$ are the children of $u'$ in $\tau$ with $\rho(u'\cdot j)=S_j'$ for every $j\in[m]$, the transition $(s_{\lambda(r)},\tau(u'),(s_1',\dots,s_m'))$ occurs in $\delta^A$. Note that $\assign(r)=(s'_1,\dots,s_m')$.
   Clearly, we have that $\node(v)=u'$, as only labeling configurations modify this information, and all the nodes along the path from $r$ to $v$ are associated with non-labeling configurations.

As we have already shown, it holds that $\states(v)$ is a subtuple of $\assign(r)$, that is, a subtuple of $(s'_1,\dots,s_m')$. Since $\lambda(v)$ is a labeling configuration, the only tuple returned by $\process(\lambda(v))$ is $(s_{\lambda(v)})$. We have also shown that $\process(\lambda(v))$ returns the tuple $\states(v)$; thus, we conclude that $\states(v)=(s_{\lambda(v)})$ (which prove the fourth point of the lemma) and so $s_{\lambda(v)}=s_j'$ for some $j\in[m]$. Therefore, there is indeed a child $u'\cdot j$ of $\node(v)=u'$ with 
$\rho(u'\cdot j)=s_j'=s_{\lambda(v)}$. If the children of $u'\cdot j$ in $\tau$ are $u'\cdot j\cdot 1,\dots,u'\cdot j\cdot t$ with $\rho(u'\cdot j\cdot i)=s''_i$ for all $i\in[t]$, since $\rho$ is a run of $A$ over $\tau$, clearly there is a transition $(s_{\lambda(v)},\tau(u'\cdot j),(s_1'',\dots,s_t''))$ in $\delta^A$, and in the procedure we define $\assign(v)=(s_1'',\dots,s_t'')$. Note that if $u'\cdot j$ is a leaf of $\tau$, since $\rho$ is a run of $A$ over $\tau$, we must have a transition $(s_{\lambda(v)},\tau(u'\cdot j),())$ in $\delta^A$.

Finally, if $v$ is a leaf and $\lambda(v)$ is a non-labeling configuration, we have already established that $\states(v)$ is a tuple returned by $\process(\lambda(v))$, and it is a subtuple of $\assign(r)$, where $r$ is the lowest ancestor of $v$ with $\lambda(r)$ being a labeling configuration. Since $v$ is a leaf of $T$, there is no transition of the form $\lambda(v) \ra_M \{C_1',\ldots,C_n'\}$, and so $\lambda(v)$ is a leaf of $G$. Therefore, in $\process(\lambda(v))$ we return the set $\{()\}$ in line~24 if $\lambda(v)$ is an accepting configuration or the empty set if $\lambda(v)$ is a rejecting configuration (this set is added to $Q$ in line~10 or~11). Hence, the only possible case is that $\states(v)=()$ and $\lambda(v)$ is an accepting configuration.

If $v$ is a leaf and $\lambda(v)$ is a labeling configuration, we have already shown that the transition $(s_{\lambda(v)},\tau(u),(s_1\dots,s_\ell))$ is in $\delta^A$, where $u$ is the child of $\node(v)$ in $\tau$ that we consider when adding $v$, $u\cdot 1,\dots,u\cdot\ell$ are the children of $u$ in $\tau$, $\rho(u\cdot j)=s_j$ for all $j\in\ell$. Since $v$ is a leaf, in $\process(\lambda(v))$ we return a set containing the single tuple $(s_\lambda(v))$ in line~24 (after adding this set to $Q$ in line~9), 
and the only transition added to $\delta^A$ with $s_\lambda(v)$ on its left-hand side is $(s_{\lambda(v)},\tau(u),())$ if $\lambda(v)$ is an accepting configuration. If $\lambda(v)$ is a rejecting configuration, we do not add any transition with $\lambda(v)$ on its left-hand side to $\delta^A$. 
Thus, the only possible case is that $\lambda(v)$ is an accepting configuration, and $u$ is a leaf of $\tau$.
\end{proof}

 In the described procedure, we start with a node associated with the initial configuration of $M$ on $W$, and we systematically add a single child for each node $v$ with $\lambda(v)$ being an existential configuration and all children for each node $v$ with $\lambda(v)$ being a universal configuration, while following the transitions $\lambda(v) \ra_M \{C'_1,\ldots,C'_n\}$ of $M$. Therefore, in combination with Lemma~\ref{lemma:valid_procedure}, it is easy to verify that the constructed $T$ is indeed a computation of $M$ on $w$. Moreover, since for every leaf $v$ of $T$ we have shown that $\lambda(v)$ is an accepting configuration, $T$ is an accepting computation. It is only left to show that $O=(V',E',\lambda')$ is the output of $T$. The proof consists of three steps. We start by showing the following.

 \begin{lemma}\label{lemma:path-to-labeled-free}
     Let $v$ be a node of $T$. Assume that $\assign(v)=(s_{C_1},\dots,s_{C_\ell})$ if $v$ is a node of $T$ with $\lambda(v)$ being a labeling configuration, or $\states(v)=(s_{C_1},\dots,s_{C_\ell})$ otherwise. Then, there is a labeled-free path from $v$ to a node $v'$ with $\lambda(v')$ being a labeling configuration if and only if $\lambda(v')=C_i$ for some $i\in[\ell]$.
 \end{lemma}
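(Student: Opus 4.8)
The plan is to prove the lemma by induction on the height of $v$ in $T$ (the length of a longest downward path from $v$ to a leaf), exploiting the way the responsibility tuples $\states(\cdot)$ and $\assign(\cdot)$ are propagated from a node to its children in the construction of $T$. Throughout, I would call the tuple $(s_{C_1},\dots,s_{C_\ell})$ in the statement the \emph{responsibility tuple} of $v$: it is $\assign(v)$ when $\lambda(v)$ is a labeling configuration and $\states(v)$ otherwise. The crux is to show that $(C_1,\dots,C_\ell)$ is precisely the list of configurations carried by the labeling nodes of $T$ that are reachable from $v$ by a labeled-free path (i.e.\ the \emph{next} labeling nodes strictly below $v$). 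Once this correspondence is established, the stated biconditional is immediate, read existentially over configurations: a configuration $C$ equals some $C_i$ exactly when some labeling node $v'$ with $\lambda(v')=C$ is reachable from $v$ via a labeled-free path.

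First I would record the two propagation facts that drive the induction, both read off directly from the construction procedure together with Lemma~\ref{lemma:valid_procedure}. For a non-leaf node $v$ with children $v_1,\dots,v_n$ in $T$, the responsibility tuple of $v$ is the concatenation (merge) of $\states(v_1),\dots,\states(v_n)$ when $\lambda(v)$ is universal, and it is simply $\states(v_i)$ for the unique child $v_i$ when $\lambda(v)$ is existential; this holds uniformly, with the responsibility tuple being $\assign(v)$ if $v$ is labeling and $\states(v)$ if $v$ is non-labeling. Moreover, for every labeling node $u$ of $T$ we have $\states(u)=(s_{\lambda(u)})$ by Lemma~\ref{lemma:valid_procedure}, and for every leaf the responsibility tuple is the empty tuple (by step~4 of the procedure for labeling leaves, and by the accepting-leaf analysis in the proof of Lemma~\ref{lemma:valid_procedure} for non-labeling leaves).

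The induction then runs as follows. In the base case $v$ is a leaf: its responsibility tuple is empty and no labeling node lies strictly below $v$, so both sides are vacuous. For the inductive step, let $v$ be a non-leaf with children $v_1,\dots,v_n$ (a single child $v_i$ when $\lambda(v)$ is existential), each of strictly smaller height. In the tree $T$, a labeling node reachable from $v$ by a labeled-free path is reached through exactly one child $v_j$: if $v_j$ is itself labeling, then $v_j$ is the reached node and contributes $\states(v_j)=(s_{\lambda(v_j)})$; if $v_j$ is non-labeling, the labeling nodes reachable through $v_j$ are, by the induction hypothesis applied to $v_j$, exactly those whose configurations form the list underlying $\states(v_j)$. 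Concatenating over the children in the construction order and invoking the first propagation fact, the configurations of all labeling nodes reachable from $v$ form precisely the list underlying the responsibility tuple of $v$, i.e.\ $\assign(v)$ or $\states(v)$ according to whether $v$ is labeling. This gives the correspondence, and hence the biconditional.

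The main obstacle I anticipate is getting the path conventions exactly right across the labeling/non-labeling distinction for the source $v$: when $v$ is non-labeling it is itself part of the non-labeling prefix of the path, whereas when $v$ is labeling the relevant paths begin at its children, so the single inductive identity on children must cover both readings. Care is also needed in the universal case to align the construction order of the children with the merge order of the $\states$-tuples (and, if one wanted the stronger \emph{ordered} version of the correspondence used elsewhere, with $\succ_v$); for the biconditional as stated, however, only the \emph{set} of reachable configurations matters, so this ordering is mere bookkeeping rather than a genuine difficulty.
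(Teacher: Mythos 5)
Your proof is correct and follows essentially the same route as the paper's: an induction on the height of $v$ in $T$, with the empty responsibility tuple at (accepting) leaves as the base case, and the inductive step driven by the fact that the responsibility tuple of $v$ is the merge (universal) or the single child's tuple (existential) of the children's $\states$-tuples, together with $\states(u)=(s_{\lambda(u)})$ for labeling children from Lemma~\ref{lemma:valid_procedure}. The path-convention and ordering caveats you flag are handled the same way in the paper and do not affect the argument.
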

 \begin{proof}
     We prove the claim by induction on $h(v)$, the height of the node $v$ in $T$. If $h(v)=0$, then $v$ is a leaf of $T$. Lemma~\ref{lemma:valid_procedure} implies that $\lambda(v)$ is an accepting configuration. If $\lambda(v)$ is a non-labeling configuration, $\process(\lambda(v))$ returns the set $\{()\}$.
     Then, Lemma~\ref{lemma:valid_procedure} implies that $\states(v)=()$, and the claim follows. If, on the other hand, $\lambda(v)$ is a labeling configuration, in the procedure we define $\assign(v)=()$, and the claim again follows.
     
     Next, assume that the claim holds for all nodes $v$ with $h(v)\le m$. We prove that it holds for all nodes $v$ with $h(v)=m$. If $\lambda(v)$ is a labeling configuration, assuming $\lambda(v) \ra_M \{C'_1,\ldots,C'_n\}$ for some $n>0$, in the procedure we either add one node $w_1$ under $v$ with $\states(w_1)=\assign(v)$ (if $\lambda(v)$ is existential), or we add $n$ nodes $w_1,\dots,w_n$ under $v$ with $\states(w_i)$ being a (possibly empty) subtuple of $\assign(v)$ for every $i\in[n]$, and such that $\assign(v)$ is obtained by merging the tuples $\states(w_1),\dots,\states(w_n)$.
     Similarly, if $\lambda(v)$ is a non-labeling configuration, assuming $\lambda(v) \ra_M \{C'_1,\ldots,C'_n\}$ for some $n>0$, in the procedure we either add one node $w_1$ under $v$ with $\states(w_1)=\states(v)$ (if $\lambda(v)$ is existential), or we add $n$ nodes $w_1,\dots,w_n$ under $v$ with $\states(w_i)$ being a (possibly empty) subtuple of $\states(v)$ for every $i\in[n]$, and such that $\states(v)$ is obtained by merging the tuples $\states(w_1),\dots,\states(w_n)$.
     
     If $\lambda(w_i)$ is a labeling configuration for some $i\in[n]$, Lemma~\ref{lemma:valid_procedure} implies that $\states(w_i)=(s_{\lambda(w_i)})$; hence, $s_{\lambda(w_i)}$ is one of the states in $(s_{C_1},\dots,s_{C_\ell})$, and clearly there is an (empty) labeled-free path from $v$ to $w_i$. If $\lambda(w_i)$ is a non-labeling configuration for some $i\in[n]$, since $h(w_i)\le m$, the inductive assumption implies that there is a labeled-free path from $w_i$ to nodes $v_{i_1},\dots,v_{i_t}$ such that $\states(w_i)=(s_{C_{i_1}},\dots,s_{C_{i_t}})$ and $\lambda(v_{i_j})=C_{i_j}$ for all $j\in[t]$. Clearly, this means that there is a labeled-free path from $v$ to $v_{i_j}$ for all $j\in[t]$. Since, as aforementioned, $(s_{C_1},\dots,s_{C_\ell})$ is obtained by merging the tuples $\states(w_1),\dots,\states(w_n)$ (or it is the single tuple $\states(w_1)$ if $\lambda(v)$ is existential), and every path from $v$ passes through one of its children, we conclude that there is a labeled-free path from $v$ to a node $v'$ with $\lambda(v')$ being a labeling configuration if and only if $\lambda(v')=C_i$ for some $i\in\ell$.
     \end{proof}

 Next, we show that every node of $\tau$ is associated with some node $v$ of $T$ with $\lambda(v)$ being a labeling configuration.

 \begin{lemma}\label{lemma:label-cover-tau}
     For every node $u$ of $\tau$, there is a node $v$ of $T$ with $\lambda(v)$ being a labeling configuration, such that $u$ is a child of $\node(v)$ (or the root of $\tau$ is $\node(v)$ is undefined) with $\rho(u)=s_{\lambda(v)}$.
 \end{lemma}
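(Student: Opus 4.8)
The plan is to prove Lemma~\ref{lemma:label-cover-tau} by induction on the depth $d(u)$ of $u$ in $\tau$, slightly strengthening the statement so that the induction carries: I would show that every node $u$ of $\tau$ is \emph{looked at} by a labeling node $v$ of $T$, meaning that $v$ is precisely the node for which the construction, when processing $v$, selects the child $u$ of $\node(v)$. Equivalently, $\node(v)$ is the parent of $u$ in $\tau$, $\rho(u) = s_{\lambda(v)}$, and $\assign(v)$ is the tuple $(\rho(u\cdot 1),\dots,\rho(u\cdot\ell))$ of the $\rho$-values of the children of $u$. This stronger statement immediately implies the claim, and it is exactly what is needed to feed the next level of the induction. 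For the base case $d(u)=0$, the node $u$ is the root of $\tau$, and I would take $v$ to be the root of $T$: by construction $\lambda(v)$ is the initial (labeling) configuration, $\node(v)$ is undefined, and $\rho$ assigns to the root of $\tau$ the state $s^A_\init = s_{\lambda(v)}$ (as already argued in the proof of Lemma~\ref{lemma:valid_procedure}), so the root of $T$ looks at the root of $\tau$.

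For the inductive step, consider a node $u = u'\cdot j$ at depth $d\ge 1$ with parent $u'$. By the induction hypothesis there is a labeling node $v$ of $T$ that looks at $u'$; hence $\assign(v) = (s_1,\dots,s_\ell)$ where $u'\cdot 1,\dots,u'\cdot\ell$ are the children of $u'$ in $\tau$ and $s_i = \rho(u'\cdot i)$ for all $i\in[\ell]$. I would then apply Lemma~\ref{lemma:path-to-labeled-free} to $v$: the labeling nodes reachable from $v$ via a labeled-free path are exactly the nodes $v_1,\dots,v_\ell$ with $s_{\lambda(v_i)} = s_i$, one for each entry of $\assign(v)$. The second ingredient is a \emph{propagation of} $\node$ argument: along any labeled-free path leaving a labeling node $v$ that looks at $u'$, the value of $\node$ stays equal to $u'$, since labeling nodes pass $u'$ to their children while non-labeling nodes copy their own $\node$; consequently $\node(v_i) = u'$ for each $i\in[\ell]$. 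Therefore $v_i$ is a labeling node whose $\node(v_i)=u'$ is the parent of $u'\cdot i$, with $\rho(u'\cdot i) = s_i = s_{\lambda(v_i)}$, so (by the fourth item of Lemma~\ref{lemma:valid_procedure}) $v_i$ looks at $u'\cdot i$. In particular, $v_j$ looks at $u = u'\cdot j$, which establishes the inductive step.

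The main obstacle I anticipate is making the correspondence between the entries of the tuple $\assign(v)$ and the children of $u'$ \emph{faithful with multiplicities}, and consequently injective. Lemma~\ref{lemma:path-to-labeled-free} is phrased as a set-level reachability statement, but its proof builds $\assign(v)$ by merging the tuples returned along the labeled-free descendants, so the number (counted with multiplicity) of labeling nodes reachable from $v$ via labeled-free paths equals the length $\ell$ of $\assign(v)$, which in turn equals the number of children of $u'$; I would make this multiplicity-preserving bijection explicit, since $\assign(v)$ may contain repeated states (e.g.\ when two children of $u'$ carry the same $\rho$-value). Pinning down that the $i$-th reachable labeling node corresponds to the $i$-th child $u'\cdot i$, rather than merely to \emph{some} child with a matching state, is exactly what guarantees that distinct nodes of $\tau$ are looked at by distinct labeling nodes of $T$, so that the map $u \mapsto v$ is well-defined and injective. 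This bookkeeping, together with a careful verification of the $\node$-propagation claim across the alternating labeling and non-labeling segments of $T$, is where the real work lies; the remaining checks are routine.
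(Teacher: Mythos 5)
Your proposal is correct and follows essentially the same route as the paper: induction on the depth of $u$ in $\tau$, using the run $\rho$ to extract the transition at the parent $u'$, Lemma~\ref{lemma:path-to-labeled-free} to locate a labeling node $v$ reachable from the node looking at $u'$ via a labeled-free path, and the observation that $\node$ is propagated unchanged along such paths so that $\node(v)=u'$. The extra multiplicity bookkeeping you flag as "the real work" is not actually needed for this lemma as stated (only existence of $v$ with $u$ a child of $\node(v)$ and $\rho(u)=s_{\lambda(v)}$ is required, which follows directly once $\node(v)=u'$ and $\lambda(v)=C_i$ with $\rho(u)=s_{C_i}$); it only becomes relevant for the subsequent lemma establishing the one-to-one correspondence between labeling nodes of $T$ and nodes of $\tau$.
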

 \begin{proof}
     We prove the claim by induction on $d(u)$, the depth of the node $u$ in $\tau$. If $d(u)=0$, then $u$ is the root of $\tau$. By construction, the root $v$ of $T$ is such that $\lambda(v)$ is the initial configuration of $M$ on $w$, which is a labeling configuration. Moreover, in the procedure, we consider the node $u$ when adding $v$ to $T$, and we have that $\rho(u)=s_{\lambda(v)}$ because $s_{\lambda(v)}$ is the initial state $s^A_\init$ of the NFTA.

     Next, assume that the claim holds for all nodes $u$ of $\tau$ with $d(u)\le m$, and we prove that it holds for nodes $u$ with $d(u)=m+1$. Let $u'$ be the parent of $u$ in $\tau$ (hence, $u$ is of the form $u'\cdot j$ for some $j$). The inductive assumption implies that there is a node $v'$ of $T$ with $\lambda(v')$ being a labeling configuration, such that $u'$ is a child of $\node(v')$ with $\rho(u')=s_{\lambda(v')}$. Since $\rho$ is a run of $A$ over $\tau$, there is a transition $(s_{\lambda(v')},\tau(u'),(s_{C_1},\dots,s_{C_\ell}))$ in $\delta^A$ such that $\rho(u)=s_{C_i}$ for some $i\in[\ell]$. In the procedure, we define $\assign(v')=(s_{C_1},\dots,s_{C_\ell}))$. Lemma~\ref{lemma:path-to-labeled-free} implies that there is a labeled-free path from $v'$ to some node $v$ with $\lambda(v)=C_i$. Note that $\node(v)=u'$, as this is the case for every node with a labeled-free path from $v'$ (only labeling configurations alter this information). When we consider the node $v$ in the procedure, we choose a child of $\node(v)$ (hence, a child of $u'$) such that $\rho(u)=s_{\lambda(v)}$, and this is precisely the node $u$. Hence, $u$ is a child of $\node(v)$ with $\rho(u)=s_{\lambda(v)}$, and this concludes our proof.
 \end{proof}

Finally, we show that $O$ is indeed the output of $T$.

\begin{lemma}
The following hold.
\begin{enumerate}
    \item $V' = \{v \in V \mid \lambda(v) \text{ is a labeling configuration of } M\}$,
    \item $E' = \{(u,v) \mid u \text{ reaches } v \text{ in } T \text{ via a } \text {labeled-free path}\}$,
    
    \item for every $v' \in V'$, $\lambda'(v') = z$ assuming that $\lambda(v)$ is of the form $(\cdot,\cdot,\cdot,z,\cdot,\cdot)$.
\end{enumerate}
\end{lemma}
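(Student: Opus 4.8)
The plan is to exhibit an explicit bijection $\phi$ between the labeling nodes of the constructed computation $T$ and the nodes of $\tau$, and then read off the three required equalities from it. The map is the one implicit in the construction procedure: to each labeling node $v$ of $T$ we associate the node $\phi(v)$ of $\tau$ that was chosen when $v$ was added, i.e.\ the child $u$ of $\node(v)$ — or the root of $\tau$, if $\node(v)$ is undefined — with $\rho(u)=s_{\lambda(v)}$. First I would record two elementary properties that follow directly from the construction: (i) the root of $T$ (the initial, hence labeling, configuration) is mapped to the root of $\tau$, since $s_{\lambda(\rt{T})}=s^A_\init=\rho(\epsilon)$; and (ii) for every labeling node $v$ with $\phi(v)=u$, the transition used to add $v$ is $(s_{\lambda(v)},\tau(u),\assign(v))$, and since every transition with $s_{\lambda(v)}=s_C$ on its left-hand side is created in $\process(C)$ with second component equal to the labeling-tape content $z$ of $C$, we get $\tau(u)=z$ whenever $\lambda(v)=(\cdot,\cdot,\cdot,z,\cdot,\cdot)$. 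Property (ii) is exactly the label preservation needed for the third condition.

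Next I would prove that $\phi$ is a bijection and that it turns labeled-free reachability in $T$ into the parent--child relation of $\tau$. Surjectivity is immediate from Lemma~\ref{lemma:label-cover-tau}, which guarantees that every node of $\tau$ is $\phi(v)$ for some labeling $v$. For the structural part, fix a labeling node $v$ with $\phi(v)=u$ and $\assign(v)=(s_{C_1},\dots,s_{C_\ell})$; by Lemma~\ref{lemma:valid_procedure} this tuple equals $(\rho(u\cdot 1),\dots,\rho(u\cdot\ell))$ for the children $u\cdot 1,\dots,u\cdot\ell$ of $u$ in $\tau$. Lemma~\ref{lemma:path-to-labeled-free} tells us that the labeling nodes reachable from $v$ along a labeled-free path are exactly those carrying one of the configurations $C_1,\dots,C_\ell$, and Lemma~\ref{lemma:label-cover-tau} attaches to each child $u\cdot j$ a labeling node $\phi(u\cdot j)$ with $\lambda(\phi(u\cdot j))=C_j$ and $\node(\phi(u\cdot j))=u$, hence labeled-free reachable from $v$ (the value of $\node$ changes only at labeling nodes). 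So $\phi$ sends the labeled-free-reachable labeling nodes of $v$ precisely onto the children of $u$ in $\tau$, which is the edge correspondence underlying the second condition.

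The step I expect to be the main obstacle is upgrading Lemma~\ref{lemma:path-to-labeled-free} from a statement about the \emph{set} of reachable configurations to a genuine one-to-one correspondence, which is what injectivity of $\phi$ and the matching of \emph{multiplicities} require: when several entries of $\assign(v)$ carry the same configuration $C_j$, I must argue that the construction produces as many distinct labeled-free-reachable labeling nodes as there are entries, and that two distinct labeling nodes are never assigned the same $\tau$-node. I would handle this by strengthening the bookkeeping already present: track the tuples $\states(\cdot)$ and $\assign(\cdot)$ entry by entry down the tree, showing (by the same induction on height used for Lemma~\ref{lemma:path-to-labeled-free}, but refined) that each occurrence of a state $s_C$ in $\assign(v)$ is resolved by a \emph{unique} descendant labeling node through the merge in line~16, so that the $j$-th entry $s_{C_j}$ corresponds to exactly one reachable labeling node $\phi(u\cdot j)$. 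Injectivity of $\phi$ then follows by induction on depth: if $\phi(v_1)=\phi(v_2)=u$ then $\lambda(v_1)=\lambda(v_2)$ and $\node(v_1)=\node(v_2)$ equals the parent $u'$ of $u$, so both are first labeling nodes on labeled-free paths out of the unique labeling node mapped to $u'$; the one-entry-one-node correspondence at $u'$ forces $v_1=v_2$.

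Finally, identifying each node $v'$ of $O$ that corresponds to $u\in\tau$ with $\phi^{-1}(u)$, the three conditions are read off directly. The first holds because the image of $\phi$ is exactly the set of labeling nodes of $T$ while $O$'s node set is in bijection with the nodes of $\tau$; the second holds because the edges of $O$ are the parent--child pairs of $\tau$, which $\phi$ matches with labeled-free reachability in $T$, i.e.\ with $E'$; and the third holds by property (ii), giving $\lambda'(v')=\tau(u)=z$ for $\lambda(\phi^{-1}(u))=(\cdot,\cdot,\cdot,z,\cdot,\cdot)$. This establishes $O$ as the output of $T$ and, since $T$ was shown accepting in Lemma~\ref{lemma:valid_procedure}, completes the accepted-tree-to-valid-output direction.
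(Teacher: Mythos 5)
Your proof follows essentially the same route as the paper's: the same association of each labeling node $v$ of $T$ with the child of $\node(v)$ carrying state $s_{\lambda(v)}$, with Lemma~\ref{lemma:label-cover-tau} supplying surjectivity, Lemma~\ref{lemma:path-to-labeled-free} supplying the edge correspondence, and the form of the transitions added in $\process$ supplying label preservation. The one place you go beyond the paper is the multiplicity/injectivity point: the paper simply asserts the one-to-one correspondence between labeling nodes of $T$ and nodes of $\tau$, whereas you correctly observe that when the same state occurs in several entries of $\assign(v)$ one needs the entry-by-entry tracking of the tuples to conclude that distinct reachable labeling nodes map to distinct children --- a legitimate tightening of an argument the paper leaves implicit.
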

\begin{proof}
    The first point is rather straightforward. Every node $v$ of $T$ with $\lambda(v)$ being a labeling configuration is associated with a (unique) node $u$ of $\tau$; this is the child of $\node(v)$ that we choose when adding the node $v$ to $T$. Moreover, Lemma~\ref{lemma:label-cover-tau} implies that every node $u$ of $\tau$ is associated with a node $v$ of $T$ with $\lambda(v)$ being a labeling configuration.
    Hence, there is one-to-one correspondence between the nodes $v$ of $T$ with $\lambda(v)$ being a labeling configuration, and the nodes $u$ of $\tau$. Since $O$ contains a node for every node of $\tau$ by construction, and no other nodes, item (1) follows.

    The second item follows from Lemma~\ref{lemma:path-to-labeled-free} showing that there is a labeled-free path from a node $v$ to a node $v'$ with both $\lambda(v)$ and $\lambda(v')$ being labeling configurations, if and only if $\assign(v)=(s_{C_1},\dots,s_{C_\ell})$ and $\lambda(v')=C_i$ for some $i\in[\ell]$. In the procedure, we define $\assign(v)=(s_{C_1},\dots,s_{C_\ell})$ if there is a node $u$ of $\node(v)$ such that $\rho(u)=s_{\lambda(v)}$, with children $u\cdot 1,\dots,u\cdot\ell$ such that $\rho(u\cdot i)=s_{C_i}$ for all $i\in[\ell]$. Clearly, we have that $\node(v')=u$, as only nodes associated with labeling configurations alter this information.
    When we consider the node $v'$ in the procedure, we select a child $u'$ of $u$ such that $\rho(u')=s_{\lambda(v')}$; thus $s_{\lambda(v')}=s_{C_i}$ for some $i\in[\ell]$. Hence, the node $u'$ associated with $v'$ is a child of the node $u$ associated with $v$, and there is an edge $(u,u')$ in $\tau$, and so there is an edge between the corresponding nodes in $O$. 

    Finally, for every node $v\in V$ with $\lambda(v)$ being a labeling configuration, the node $u$ of $\tau$ that we consider when adding $v$ to $T$ is such that $\rho(u)=s_{\lambda(v)}$, where $\rho$ is a run of $A$ over $\tau$, and if $u\cdot 1,\dots,u\cdot \ell$ are the children of $u$ in $\tau$ with $\rho(u\cdot j)=s_{C_j}$ for every $j\in [\ell]$, the transition $(s_{\lambda(v)},\tau(u),(s_{C_1},\dots,s_{C_\ell}))$ occurs in $\delta^A$. When we add a transition to $\delta^A$ in $\mathsf{BuildNFTA(w)}$, we always add transitions of the form $(s_C,z,\cdots)$, where $C$ is a configuration of the form $(\cdot,\cdot,\cdot,z,\cdot,\cdot)$. Hence, if $\lambda(v)$ is a configuration of the form $(\cdot,\cdot,\cdot,z,\cdot,\cdot)$, we have that $\tau(u)=z$ and $\lambda'(v')=z$ for the corresponding node $v'$ of $O$. This concludes our proof.
\end{proof}

This concludes our proof of Lemma~\ref{lem:buildnfta}.

\OMIT{
\paragraph{Finalize the proof.} We have shown that there is one-to-one correspondence between the trees accepted by $A$ and the valid outputs of $M$ on $w$. However, the FPRAS for NFTAs is for trees of a given size. Since $M$ is a well-behaved ATO, there exists some polynomial $\mathsf{pol}:\mathbb{N}\rightarrow \mathbb{N}$ such that the size of every computation is bounded by $\mathsf{pol}(|w|)$. Clearly, $\mathsf{pol}(|w|)$ is also a bound on the size of the valid outputs of $M$ on $w$. We have already shown that every tree accepted by $A$ is obtained from a valid output by adding a single node under each leaf. Hence, if we define a polynomial $\mathsf{pol}'$ such that $\mathsf{pol}'(x)=2\times \mathsf{pol}(x)$ for every $x\in\mathbb{N}$, we have that $\mathsf{pol}'(|w|)$ is a bound on the size of the trees accepted by $A$.

Now, for every $i\in[\mathsf{pol}'(|w|)]$, we denote by $\mathcal{A}_i(A,\epsilon,\delta)$ the randomized algorithm that, given $A$, $\epsilon$ and $\delta$, satisfies:
\[
\text{\rm Pr}\left(|\mathcal{A}_i(A,\epsilon,\delta) - |L_i(A)||\ \leq\ \epsilon \cdot |L_i(A|\right)\ \geq\ 1-\delta.
\]
(Recall that ${L}_i(A)$ is the set of trees of size $i$ accepted by $A$.)
We claim that the randomized algorithm $\mathcal{A}(A,\epsilon,\delta)$ that executes $\mathcal{A}_i(A,\epsilon,\delta')$ for every $i\in[\mathsf{pol}'(|w|)]$, where $\delta'=2\times\mathsf{pol}'(|w|)$, and takes the sum of the results is an FPRAS for the problem of computing $|L(A)|$. Note that $L(A)=\bigcup_{i=1}^{\mathsf{pol}'(|w|)} L_i(A)$; hence, $|L(A)|=\sum_{i=1}^{\mathsf{pol}'(|w|)}|L_i(A)|$ (clearly, $L_i(A)\cap L_j(A)=\emptyset$ for every $i\neq j$).

Since each $\mathcal{A}_i$ runs in time polynomial in $|A|$, $\frac{1}{\epsilon}$, and $\log(\frac{1}{\delta'})$, clearly the algorithm $\mathcal{A}$ runs in time polynomial in $|A|$, $\frac{1}{\epsilon}$, and $\log(\frac{1}{\delta})$. Moreover, it holds that:
\begin{align*}
  &\text{\rm Pr}\left(|\mathcal{A}(A,\epsilon,\delta) - |L(A)||\ \leq\ \epsilon \cdot |L(A)|\right) 
  \\=&\text{\rm Pr}\left(\frac{1}{1+\epsilon}\cdot|L(A)|\ \leq \ \mathsf{A}(S,\epsilon,\delta)\ \leq \ (1+\epsilon) |L(A)|\right) 
  \\=&\text{\rm Pr}\left(\frac{1}{1+\epsilon}\cdot|L(A)|\ \leq \ \sum_{i=1}^{\mathsf{pol}'(|w|)}\mathcal{A}_i(A,\epsilon,\delta)\ \leq \ (1+\epsilon) |L(A)|\right) 
    \\&\text{\rm Pr}\left(\sum_{i=1}^{\mathsf{pol}'(|w|)}\frac{1}{1+\epsilon}\cdot|L_{i}(A)|\ \leq \ \sum_{i=1}^{\mathsf{pol}'(|w|)}\mathcal{A}_i(A,\epsilon,\delta)\ \leq \ \sum_{i=1}^{\mathsf{pol}'(|w|)}(1+\epsilon) |L_i(A)|\right) 
\end{align*}
Since for every $i\in[\mathsf{pol}'(|w|)]$ we have that:
\[\text{\rm Pr}\left(\frac{1}{1+\epsilon}\cdot|L_{i}(A)|\ \leq \ \mathcal{A}_i(A,\epsilon,\delta)\ \leq \ (1+\epsilon) |L_{i}(A)|\right) \ \geq \ 1-\delta'\]
and the events for each $i$ are independent, we conclude that:
\[\text{\rm Pr}\left(|\mathcal{A}(A,\epsilon,\delta) - |L(A)||\ \leq\ \epsilon \cdot |L(A)|\right)  \geq \ (1-\delta')^{\mathsf{pol}'(|w|)}.\]
Finally, we know (see, e.g.,~\cite{DBLP:journals/tcs/JerrumVV86}) that the following inequality holds:
\[(1-\frac{x}{2n})^n\ge 1-x\]
for any $0\le x\le 1$ and $n\ge 1$;
hence, we obtain the required FPRAS.
Therefore, $\mathcal{A}$ is also a randomized algorithm that, given $w$, $\epsilon$ and $\delta$ satisfies:
\[\text{\rm Pr}\left(|\mathcal{A}(A,\epsilon,\delta) - \mathsf{span}_M(w)|\ \leq\ \epsilon \cdot \mathsf{span}_M(w)\right) \ \geq \ 1-\delta.\]
where $A$ is the NFTA that we construct from $M$ using $\mathsf{BuildNFTA}$, and since we construct $A$ is polynomial time in $|w|$, $\mathcal{A}$ runs in time polynomial in $|w|$, $1/\epsilon$, and $\log(1/\delta)$.

\OMIT{With this claim in place, it is not hard to show that there is a run of $\mathcal{T}$ over $\tau$. In particular, we define the mapping $\Lambda$ as follows. For every $v\in V''$ that also occurs in $T$ we define:
\[\Lambda(v)=s_{v''}.\]
For every $v\in V''$ that does not occur in $T$ (these are the leaves of $\tau$ we define:
\[\Lambda(v)=s_{\accept}.\]
Every node $v$ of $\tau$ corresponding to a non-leaf of $T$ is also associated with some  node $v'$ of $T'$, and the claim implies that the transition
\[s_{v''}(\sem{O}(x_1,\dots,x_m))\rightarrow \sem{O}(s_{u_1''}(x_1),\dots, s_{u_m''}(x_m)\] occurs in $\delta$, where $u_1,\dots,u_m$ are the children of $v$ in $T$ (thus, also in $\tau$).
Every node $v$ of $\tau$ corresponding to a leaf of $T$ is also associated with some leaf $v'$ of $T'$, and the claim implies that the transition
\[s_{v''}(\sem{O}(x))\rightarrow \sem{O}(s_\accept(x))\] occurs in $\delta$.
Moreover, we always add the transition
$s_\accept(\epsilon)\rightarrow\epsilon$ to $\delta^A$ in line~3 of $\mathsf{DAGtoNFTA(G)}$. It is now easy to verify, that the mapping $\Lambda$ indeed defined a run of $\mathcal{T}$ over $\tau$.}

\OMIT{Let $T=(V,E,\mu)$ be a valid output of $M$ w.r.t.~$w$. Then, $T$ is the output of some accepting induced tree $T'=(V',E',\lambda)$ of $M$ w.r.t.~$w$.
For every node $v\in V$, we denote by $v^{T'}$ its corresponding node in $T'$, and by $v^G$ the node of $G$ representing the configuration $\lambda(v^{T'})$. Moreover, if the children of $v$ are $u_1,\dots,u_m$, we denote by $w_i^{T'}$ the  first node in the labeled-free path from $v_{T'}$ to $u_i^{T'}$ for every $i\in[m]$. If $u_i^{T'}$ is a direct child of $v^{T'}$, then we denote $w_i^{T'}=u_i^{T'}$.
Hence, the node representing the configuration of $w_i^{T'}$ in $G$ is $w_i^{G}$.

We will show that the ordered labeled tree $\tau=(V'',E'',\varphi,\succ)$ obtained from $T$ via the following procedure is accepted by $\mathcal{T}$:
\begin{enumerate}
    \item We start with $V''=V$, $\varphi(v)=\mu(v)$ for every $v\in V$, and $E''=E$.
    \item For every node $v\in V$ with children $u_1,\dots,u_m$, we use the order defined in line~18 when processing $v^G$ to define $\succ$; that is, if the order defined over the children of $v^G$ is $w_1^{G},\dots,w_m^{G}$, then the successor relation $\succ$ of $\tau$ is such that $u_{i+1}\succ u_i$, for all $i\in[m-1]$.
    \item We add a node $u$ under each leaf of $V''$ with $\varphi(u)=\epsilon$.
\end{enumerate}
Clearly, in this way, two different valid outputs $T_1, T_2$ of $M$ w.r.t.~$w$ give rise to two different ordered labeled trees $\tau_1,\tau_2$.

We prove, by induction on the size (i.e.,~number of vertices) of $\tau$, that there is a mapping $\Lambda:V''\rightarrow S$ such that:
\begin{itemize}
    \item for every $u\in V''$ with an ordered sequence $u_1,\dots,u_m$ of children, if $\Lambda(u)=s$ and $\Lambda(u_i)=s_i$ for $i\in[n]$, then
there exists $s(\varphi(u)(x_1,\dots,x_m))\rightarrow \varphi(u)(s_1(x_1),\dots,s_m(x_m))$ in $\delta$, and
\item for every leaf $u\in V''$, there is $s(\varphi(u))\rightarrow \varphi(u)$ in $\delta$.
\end{itemize}

The base case, $|V''|=2$, is simple. (Note that we always add an additional node when constructing $V''$; hence, the base case is when $V''$ contains two nodes.) In this case, the only node $r$ of $T'$ (the induced tree) is the root of the computation tree of $M$ w.r.t.~$w$. Since $T'$ is an accepting induced tree, we have that $r$ is of the form $(q_\accept,I,W,O,h_I,h_W,h_O)$. Therefore, $q_\init=q_\accept$ and $q_\accept\in Q_\mathcal{L}$. Moreover, since the computation tree contains a single node, the corresponding DAG $G$ is the computation tree itself. Hence, in lines $4,5,6$ of $\mathsf{Process}(\mathcal{T}, G, v)$ we will add a state $s_r$ to $S$ (which, in line~5 of $\mathsf{BuildNFTA}(w)$ will also be added to $S_0$) and a transition $s_r(\sem{O}(x))\rightarrow \sem{O}(s_\accept(x))$ to $\delta^A$. Together with the transition $s_\accept(\epsilon)\rightarrow\epsilon$ that is added to $\delta^A$ in line~3 of $\mathsf{BuildNFTA}(w)$, this is sufficient to allow a mapping $\Lambda$ satisfying the desired properties. In particular, we have that $\Lambda(v)=s_r$ for the node $v$ of $\tau$ corresponding to the node $r$. It is easy to verify that $\varphi(v)=\sem{O}$. Moreover, we define $\Lambda(u)=s_\accept$ for the only child $u$ of $v$ which is a leaf of the tree with $\varphi(u)=\epsilon$.

Next, we assume that the claim holds for $|V''|\in[c-1]$ and prove that it holds for $|V''|=c$. Let $z$ be the root of $\tau$, and let $y_1,\dots y_m$ be its children. For every $i\in[m]$ we denote by $\tau[u_i]$ the subtree of $\tau$ rooted at $u_i$. Each one of these subtrees contains at most $c-1$ nodes; thus, by the inductive assumption, for each such subtree there is a mapping $\Lambda_i$ that satisfies the desired properties. We take the union of these mapping (note that they are disjoint) and denote it by $\Lambda$. We will now show how to extend this mapping to the entire tree. 

Let $v$ be the node of $T$ corresponding to $z$ and let $u_1,\dots,u_m$ be the nodes of $T$ corresponding to $y_1,\dots,y_m$, respectively. For each $i\in[m]$, the node $u^{T'}_i$ (i.e.,~the node of $T'$ corresponding to $u_i$) must be such that $\lambda(u^{T'}_i)$ is a configuration associated with a labeling state of $M$. Therefore, whenever we process the corresponding node $u^G_i$ of $G$ in $\mathsf{Process}(\mathcal{T}, G, u^G_i)$, we return $\{(s_{u^G_i})\}$ in line~30.

Now, for everylet $w^i_1,\dots,w^i_t$ be the labeled-free path in $T'$ from }
}
\section{The Normal Form}

We prove the following result; note that there is no corresponding statement in the main body of the paper since the discussion on the normal form was kept informal.

\def\pronormalform{
	Consider a database $D$, a set $\dep$ of primary keys $\dep$, a CQ $Q(\bar x)$ from $\sjf$, a generalized hypertree decomposition $H$ of $Q$ of width $k$, and $\bar c \in \adom{D}^{|\bar x|}$. There exists a database $\hat{D}$, a CQ $\hat{Q}(\bar x)$ from $\sjf$, and a generalized hypertree decomposition $\hat{H}$ of $\hat{Q}$ of width $k+1$ such that:
	\begin{itemize}
		\item $(\hat{D},\hat{Q},\hat{H})$ is in normal form,
		\item $|\{D' \in \opr{D}{\dep} \mid \bar c \in Q(D')\}| = |\{D' \in \opr{\hat{D}}{\dep} \mid \bar c \in \hat{Q}(D')\}|$,
		\item $|\{s \in \crs{D}{\dep} \mid \bar c \in Q(s(D))\}| = |\{s \in \crs{\hat{D}}{\dep} \mid \bar c \in \hat{Q}(s(\hat{D}))\}|$, and
		\item $(\hat{D},\hat{Q},\hat{H})$ is logspace computable in the combined size of $D,\dep,Q,H,\bar c$.
	\end{itemize}
}

\begin{proposition}\label{pro:normal-form}
	\pronormalform
\end{proposition}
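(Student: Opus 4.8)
The plan is to obtain $(\hat D,\hat Q,\hat H)$ from $(D,Q,H)$ through a short sequence of purely syntactic, logspace-computable transformations, each attaching a \emph{harmless gadget} engineered to leave both counts unchanged. Three defects of $(D,Q,H)$ must be repaired: relation names of $D$ missing from $Q$ (violating condition (i)); atoms of $Q$ without a covering vertex, together with vertices of $T$ that are not $\prec_T$-minimal covering vertices of any atom (violating strong completeness); and vertices whose number of children is not $2$ (violating $2$-uniformity). The guiding principle is that every fresh fact added to $D$ will carry an entirely fresh tuple of constants, so it forms its own \emph{singleton} block under $\dep$ (its key value is new), and every fresh atom added to $Q$ will use only fresh existential variables. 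A singleton block admits a unique operational repair and participates in no justified operation, hence contributes a multiplicative factor of exactly $1$ to both $|\opr{\cdot}{\dep}|$ and $|\crs{\cdot}{\dep}|$; and a fresh-variable atom over a relation that always retains a surviving fact is satisfiable independently of the rest of the query. This is exactly what makes the reduction count-preserving.

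I would carry out the construction in three ordered steps. \emph{Step (a): condition (i) and completeness.} For each relation name $R$ occurring in $D$ but not in $Q$, add a fact $R(\bar d_R)$ with fresh constants to the database and an atom $R(\bar z_R)$ with fresh variables to the query. Then, for every atom $R(\bar y)$ of the resulting query that still lacks a covering vertex, attach a new leaf $v'$ with $\lambda(v')=\{R(\bar y)\}$ and $\chi(v')=\bar y\setminus\bar x$ below a vertex $v$ satisfying $\bar y\setminus\bar x\subseteq\chi(v)$ (such a $v$ exists by the tree-decomposition condition, and the root may be used when $\bar y$ is fresh). Connectedness is preserved — for an original atom because $\chi(v')\subseteq\chi(v)$, and for a fresh atom because its variables occur only in $v'$ — and since $|\lambda(v')|=1\le k$ the width is unchanged. \emph{Step (b): $2$-uniformity.} Reshape the tree so that every non-leaf has exactly two children: give a non-leaf with a single child a second, dummy leaf child, and replace a vertex $v$ with children $c_1,\dots,c_m$ ($m\ge 3$) by a path of copies $v_1,\dots,v_{m-1}$ (all sharing $\chi(v),\lambda(v)$), where $v_i$ has children $c_i$ and $v_{i+1}$, and $v_{m-1}$ has children $c_{m-1},c_m$. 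Since the copies carry the same bag and form a connected path attached where $v$ stood, this is again a valid decomposition of the same width. \emph{Step (c): strong completeness.} In the resulting tree compute the set of \emph{orphan} vertices, those that are not the $\prec_T$-minimal covering vertex of any atom; to each orphan $v$ add a fresh dummy atom $A_v(z_v)$ over a fresh relation, placing $A_v(z_v)$ in $\lambda(v)$, adding $z_v$ to $\chi(v)$, and adding a fresh singleton fact $A_v(d_v)$ to the database. As $A_v$ is fresh, $v$ is its unique, hence $\prec_T$-minimal, covering vertex, so every vertex now covers some atom minimally; this single added atom per bag is exactly what raises the width from $k$ to $k+1$.

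It remains to verify the four required properties. Writing $\hat D=D\cup F$ where $F$ is the set of all fresh facts, every fact of $F$ lies in a singleton block and creates no new violation, so $\crs{\hat D}{\dep}=\crs{D}{\dep}$ and the map $D'\mapsto D'\cup F$ is a bijection $\opr{D}{\dep}\to\opr{\hat D}{\dep}$. Every operational repair of $\hat D$ has the form $D'\cup F$ with $D'\in\opr{D}{\dep}$, and $\bar c\in\hat Q(D'\cup F)$ iff $\bar c\in Q(D')$: the added atoms $R(\bar z_R)$ and $A_v(z_v)$ use fresh variables and are always satisfied through the permanently present facts $R(\bar d_R),A_v(d_v)\in F$, while imposing no constraint on the image of the original variables. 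This yields the two count equalities, both for repairs and — via the same reasoning applied to $s(\hat D)=s(D)\cup F$ — for complete repairing sequences. Self-join-freeness is preserved because every added relation is used in a single atom, and all operations above (attaching leaves, splitting a vertex into a binary chain, computing depths and lexicographic orders for $\prec_T$, emitting fresh symbols) are standard logspace manipulations producing output of polynomial size.

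The main obstacle, deserving the most care, is reconciling $2$-uniformity with strong completeness at a cost of only one extra unit of width while keeping the reduction \emph{exactly} count-preserving. The binary reshaping of Step (b) manufactures many new vertices and duplicated bags that are not minimal covering vertices of any genuine atom, and strong completeness forbids such orphans; the resolution is the fresh-relation gadget of Step (c), whose correctness hinges on the two count-neutrality facts about singleton blocks (factor one, no operations) and fresh-variable atoms (free satisfiability). Establishing these count-neutrality claims rigorously, together with re-checking the connectedness condition after the vertex splitting, is where the real work lies; the remaining bookkeeping is routine.
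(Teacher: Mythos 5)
Your construction is correct and follows essentially the same route as the paper's proof: split each high-degree bag into a binary chain of copies carrying the same $\chi$ and $\lambda$, pad bags with fresh unary atoms backed by fresh singleton-block facts to obtain strong completeness at the cost of one unit of width, and establish count preservation by observing that the fresh facts create no violations (so repairs and complete repairing sequences are in bijection via $D' \mapsto D' \cup F$) while the fresh-variable atoms are freely satisfiable. The one substantive divergence is that, for each relation $R$ occurring in $D$ but not in $Q$, you additionally insert a fresh singleton fact $R(\bar d_R)$ so that the new atom $R(\bar z_R)$ is satisfied in \emph{every} repair; this is the safer choice, since a repair that empties every $R$-block of $D$ (possible whenever all such blocks have size at least two) would satisfy $Q$ but not a padded query whose $R$-atom must be matched by a surviving original $R$-fact.
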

With the above proposition in place, together with the fact that $\spantl$ is closed under logspace reductions, to prove that $\sharp\mathsf{Repairs}[k]$ and $\sharp\mathsf{Sequences}[k]$ are in $\spantl$, for each $k > 0$, it is enough to focus on databases $D$, queries $Q$, and generalized hypertree decompositions $H$ such that $(D,Q,H)$ is in normal form.

We first prove Proposition~\ref{pro:normal-form} in the special case that $H$ is already complete. Then, the more general result will follow from the fact that logspace reductions can be composed (see, e.g.,~\cite{ArBa09}), and from the following well-known result:

\begin{lemma}[\cite{Gottlob2002}]\label{lem:complete-logspace}
	Given a CQ $Q(\bar x)$ and a generalized hypertree decomposition $H$ for $Q$ of width $k$, a complete generalized hypertree decomposition $H'$ of $Q$ of width $k$ always exists, and can be computed in logarithmic space  in the combined size of $Q$ and $H$.
\end{lemma}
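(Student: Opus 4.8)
The plan is to prove the lemma by an explicit \emph{completion} construction that, for each atom of $Q$ lacking a covering vertex, hangs one fresh leaf off a suitable existing node of $T$, and then to observe that this construction is realized by a logspace transducer.

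First I would fix notation. Write the atoms of $Q$ as $R_1(\bar y_1),\dots,R_n(\bar y_n)$, and for each $i\in[n]$ let $X_i=\var{\bar y_i}\setminus\bar x$ be the set of existential (non-answer) variables occurring in $\bar y_i$; recall that bags contain only such variables, since $\chi : V \ra 2^{\var{Q}\setminus\bar x}$. The only way $H=(T,\chi,\lambda)$ can fail to be complete is that some atom $R_i(\bar y_i)$ has no node $v$ with $X_i\subseteq\chi(v)$ and $R_i(\bar y_i)\in\lambda(v)$. However, condition~(1) of the tree decomposition guarantees, for every $i$, the existence of a \emph{witness} node $v_i$ with $X_i\subseteq\chi(v_i)$. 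The construction is then: starting from $H'=H$, for each $i\in[n]$ for which $R_i(\bar y_i)$ has no covering vertex in $H$, introduce a fresh node $u_i$, set $\chi(u_i)=X_i$ and $\lambda(u_i)=\{R_i(\bar y_i)\}$, and attach $u_i$ as a child of a canonically chosen witness $v_i$; all original nodes, bags, and labels are left untouched.

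Next I would verify the three required properties. (i) $H'$ is a generalized hypertree decomposition: the covering condition $\chi(u_i)\subseteq\bigcup_{R(t_1,\dots,t_m)\in\lambda(u_i)}\{t_1,\dots,t_m\}$ holds because $\chi(u_i)=X_i$ consists of variables of $R_i(\bar y_i)$; and connectedness is preserved because each $u_i$ is a leaf adjacent only to $v_i$ with $\chi(u_i)=X_i\subseteq\chi(v_i)$, so for every variable $t\in\chi(u_i)$ the connected subtree of $t$-containing original nodes (which contains $v_i$) remains connected after attaching $u_i$. (ii) The width stays $k$: original bags are unchanged, and each new node satisfies $|\lambda(u_i)|=1\le k$ since $k\ge1$. (iii) $H'$ is complete: every atom already covered in $H$ keeps its covering vertices, as original nodes are untouched, and every previously uncovered atom $R_i(\bar y_i)$ is now covered by $u_i$, because $\chi(u_i)=X_i$ and $R_i(\bar y_i)\in\lambda(u_i)$.

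Finally I would argue logspace computability. The transducer first copies $H$ verbatim to its write-only output. Then it loops over the atom index $i\in[n]$ with a counter of logarithmic size; for each $i$ it scans all nodes $v\in V$ with a second logarithmic counter, testing whether $X_i\subseteq\chi(v)$ and $R_i(\bar y_i)\in\lambda(v)$, where both tests require only pointers into the input. If no covering vertex is found, it rescans to locate the first witness $v_i$ in the fixed node ordering and emits the fresh node $u_i$ (named by a string derived from $i$), its bag $X_i$, its label $\{R_i(\bar y_i)\}$, and the edge $(v_i,u_i)$. Since only a constant number of logarithmic counters and pointers are ever kept on the work tape, the whole procedure runs in logarithmic space. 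The construction itself is routine, so there is no serious obstacle; the two points genuinely needing care are the connectedness verification in step~(i) and the observation that, because $\chi$ records only existential variables, the bag of each new leaf must be taken to be exactly $X_i$ rather than all of $\bar y_i$ (which is also the correct reading of ``$\bar y\subseteq\chi(v)$'' in the covering-vertex condition).
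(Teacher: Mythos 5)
The paper offers no proof of this lemma at all—it is imported verbatim from \cite{Gottlob2002}—and your construction is exactly the standard completion argument used there: hang, for each atom lacking a covering vertex, a fresh leaf labeled by that atom below a witness bag guaranteed by condition~(1) of the tree decomposition, then note the transformation is a logspace transducer. Your proof is correct as written, including the width, connectedness, and logspace checks, and your closing remark that the covering condition $\bar y \subseteq \chi(v)$ must be read as $\bar y \setminus (\bar x \cup \ins{C}) \subseteq \chi(v)$ (since bags contain only existential variables) is the right reading and is indeed how the paper uses covering vertices later in its normal-form construction.
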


In the rest of this section, we focus on proving Proposition~\ref{pro:normal-form}, assuming that $H$ is complete. We start by discussing how $\hat{D}$, $\hat{Q}(\bar x)$ and $\hat{H}$ are constructed. In what follows, fix a database $D$, a set $\dep$ of primary keys, a CQ $Q(\bar x)$ from $\sjf$, a \emph{complete} generalized hypertree decomposition $H$ of $Q$ of width $k$, and a tuple $\bar c \in \adom{D}^{\bar x}$.

\medskip
\noindent\paragraph{\underline{Construction of $(\hat{D},\hat{Q},\hat{H})$}}

\smallskip
\noindent
Assume $H = (T,\chi,\lambda)$, with $T=(V,E)$, and assume $Q(\bar x)$ is a CQ of the form $\text{Ans}(\bar x) \text{ :- } R_1(\bar y_1),\ldots,R_n(\bar y_n)$.

\medskip
\noindent
\textbf{The Query $\hat{Q}$.} The query $\hat{Q}$ is obtained by modifying $Q$ as follows. If $P_1/n_1,\ldots,P_m/n_m$ are all the relations occurring in $D$ but not in $Q$, add two atoms of the form $P_i(\bar z_i)$ and $P'_i(z_i')$ to $Q$, for each $i \in [m]$, where $P_i'/1$ is a fresh unary relation, $\bar z_i$ is a tuple of $n_i$ distinct variables not occurring in $Q$, and $z_i' \not \in \bar z_i$ is a variable not occuring in $Q$. Moreover, if $v_1,\ldots,v_\ell$ are all the vertices in $V$, with $v_i$ having $h_i \ge 0$ children in $T$, create $h_i + 1$ fresh unary relations $S^{(1)}_{v_i}/1,\ldots,S^{(h_i + 1)}_{v_i}/1$, for each $i \in [\ell]$, not occurring anywhere in $Q$, and add to $Q$ the atom $S^{(j)}_{v_i}(w^{(j)}_{v_i})$, for each $i \in [\ell]$, and $j \in [h_i + 1]$, where $w^{(j)}_{v_i}$ is a variable. Hence, $\hat{Q}$ is of the form:
\begin{multline*}
\text{Ans}(\bar x) \text{ :- } R_1(\bar y_1),\ldots,R_n(\bar y_n), P_1(\bar z_1),P'_1(z_1') \ldots, P_m(\bar z_m),P'_m(z_m'), \\
S^{(1)}_{v_1}(w^{(1)}_{v_1}),\ldots,S^{(h_1 + 1)}_{v_1}(w^{(h_1 + 1)}_{v_1}),\ldots,S^{(1)}_{v_\ell}(w^{(1)}_{v_\ell}),\ldots,S^{(h_\ell + 1)}_{v_\ell}(w^{(h_\ell + 1)}_{v_\ell}).
\end{multline*}
The query $\hat{Q}$ is clearly self-join free.

\medskip
\noindent
\textbf{The Database $\hat{D}$.} The database $\hat{D}$ is obtained from $D$ by adding an atom of the form $P'_i(c)$, for each $i \in [m]$, and an atom of the form $S^{(j)}_v(c)$, for each $v \in V$, and $j \in [h+1]$, where $h \ge 0$ is the number of children of $v$ in $T$, and $c$ is some constant.

\medskip
\noindent
\textbf{The Decomposition $\hat{H}$.} We now discuss how $\hat{H} = (\hat{T},\hat{\chi},\hat{\lambda})$, with $\hat{T} = (\hat{V},\hat{E})$, is constructed:
\begin{itemize}
	\item Concerning the vertex set $\hat{V}$, for each vertex $v \in V$ in $T$, having $h \ge 0$ children $u_1,\ldots,u_h$, $\hat{V}$ contains $h+1$ vertices $v^{(1)},\ldots,v^{(h+1)}$, and for each $i \in [m]$, $\hat{V}$ contains the vertices $v_{P_i}$ and $v_{P'_i}$.
	
	\item The edge set $\hat{E}$ contains the edges $v_{P_i} \rightarrow v_{P'_i}$, $v_{P_i} \rightarrow v_{P_{i+1}}$, for each $i \in [m-1]$, and the edges $v_{P_m} \rightarrow v_{P'_m}$, $v_{P_m} \rightarrow v^{(1)}$, where $v \in V$ is the root of $T$. Moreover, for each \emph{non-leaf} vertex $v \in V$ with $h > 0$ children $u_1,\ldots,u_h$, $\hat{E}$ contains the edges $v^{(i)} \rightarrow u^{(1)}_i$ and $v^{(i)} \rightarrow v^{(i+1)}$, for each $i \in [h]$.
	
	\item Concerning $\hat{\chi}$ and $\hat{\lambda}$, for each $i \in [m]$, $\hat{\chi}(v_{P_i}) = \bar z_i$, $\hat{\chi}(v_{P'_i}) = \{z_i'\}$, and $\hat{\lambda}(v_{P_i}) = \{P_i(\bar z_i)\}$, $\hat{\lambda}(v_{P'_i}) = \{P_i'(z_i')\}$, and for each vertex $v \in V$ with $h \ge 0$ children, $\hat{\chi}(v^{(i)}) = \chi(v) \cup \{w^{(i)}_v\}$, and $\hat{\lambda}(v^{(i)}) = \lambda(v) \cup \{S^{(i)}_v(w^{(i)}_v)\}$, for each $i \in [h+1]$.
\end{itemize}

\medskip
\noindent\paragraph{\underline{Proving Proposition~\ref{pro:normal-form} when $H$ is complete}}

\smallskip
\noindent
We now proceed to show that $(\hat{D},\hat{Q},\hat{H})$ enjoys all the properties stated in Proposition~\ref{pro:normal-form}, assuming that $H$ is complete.

\medskip
\noindent
\textbf{Complexity of construction.}
We start by discussing the complexity of constructing $(\hat{D},\hat{Q},\hat{H})$. Constructing the query $\hat{Q}$ requires producing, in the worst case, two atoms for each relation in $D$, and $|V|$ atoms, for each vertex $v \in V$, and so the number of atoms in $\hat{Q}$ is $O(|Q| + |D| + |V|^2)$, where $|Q|$ is the number of atoms in $Q$. Each atom can be constructed individually using logarithmic space, and by reusing the space for each atom. We can show the database $\hat{D}$ is constructible in logspace with a similar argument. Regarding $\hat{H}$, the set $\hat{V}$ contains at most $|V|$ vertices for each node of $V$, plus two nodes for each relation in $D$. Hence, $|\hat{V}| \in O(|V|^2 + |D|)$, and each vertex of $\hat{V}$ can be constructed by a simple scan of $E$, $D$ and $Q$. Moreover, $\hat{E}$ contains $2 \times m \le 2 \times |D|$ edges, i.e. the edges of the binary tree involving the vertices of the form $v_{P_i},v_{P'_i}$, and the vertex $v^{(1)}$, where $v \in V$ is the root of $T$, plus at most $2 \times |V|$ edges for each vertex of $V$, so overal $|\hat{E}| \in O(|V|^2 + |D|)$. Again, each edge is easy to construct. Finally, constructing $\hat{\chi}$ and $\hat{\lambda}$ requires a simple iteration over $D$ and $\hat{V}$, and copying the contents of $\chi$, and $\lambda$.

\medskip
\noindent
\textbf{$\hat{H}$ is a proper decomposition of width $k+1$.}
We now argue that $\hat{H}$ is a generalized hypertree decomposition of $\hat{Q}$ of width $k+1$. We need to prove that \emph{(1)} for each atom $R(\bar y)$ in $\hat{Q}$, there is a vertex $v \in \hat{V}$ with $\bar y \setminus \bar x \subseteq \hat{\chi}(v)$, \emph{(2)} for each $x \in \var{\hat{Q}} \setminus \bar x$, the set $\{v \in \hat{V} \mid x \in \hat{\chi}(v)\}$ induces a connected subtree of $\hat{T}$, \emph{(3)} for each $v \in \hat{V}$, $\hat{\chi}(v) \subseteq \var{\hat{\lambda}(v)}$, and \emph{4)} $\max_{v \in \hat{V}}|\lambda(v)| = k+1$.

We start with \emph{(1)}. For each atom $R_i(\bar y_i)$, for $i \in [n]$ in $\hat{Q}$, since there is a vertex $v \in V$ with $\bar y_i \setminus \bar x \subseteq \chi(v)$, then, e.g., the vertex $v^{(1)} \in \hat{V}$ is such that $\bar y_i \setminus \bar x \subseteq \hat{\chi}(v^{(1)})$, by construction of $\hat{\chi}$. Furthermore, for each atom $P_i(\bar z_i)$ (resp., $P_i'(z_i')$) in $\hat{Q}$, with $i \in [m]$, by construction of $\hat{\chi}$, $\bar z_i \setminus \bar x = \bar z_i \subseteq \hat{\chi}(v_{P_i})$ (resp., $z_i' \not \in \bar x$, and $z_i' \in \hat{\chi}(v_{P'_i})$), and for each atom $S^{(i)}_v(w^{(i)}_v)$ in $\hat{Q}$, with $v \in V$, we have that $w^{(i)}_v \not \in \bar x$, and $w^{(i)}_v \in \hat{\chi}(v^{(i)})$, again by construction of $\hat{\chi}$.

We now show item \emph{(2)}. Consider a variable $x \in \var{\hat{Q}} \setminus \bar x$. We distinguish two cases:
\begin{itemize} 
\item Assume $x \not \in \var{Q} \setminus \bar x$, i.e., $x \in \var{\hat{Q}} \setminus \var{Q}$. Hence, either $x$ occurs in some atom of the form $P_i(\bar z_i)$ (resp., $P_i'(z_i')$) in $\hat{Q}$, which means $x$ occurs only in $\hat{\chi}(v_{P_i})$ (resp., $\hat{\chi}(v_{P_i'})$), and thus $v_{P_i}$ (resp., $v_{P_i'}$) trivially forms a connected subtree, or $x = w^{(i)}_v$, for some atom of the form $S^{(i)}_v(w^{(i)}_v)$ in $\hat{Q}$, with $v \in V$, which means $x$ occurs only in $\hat{\chi}(v^{(i)})$, and thus $v^{(i)}$ trivially forms a connected subtree. 

\item Assume now that $x \in \var{Q} \setminus \bar x$. By contruction of $\hat{H}$, for each $v \in V$ having  $x \in \chi(v)$, $x$ occurs in $\hat{\chi}(v^{(i)})$, for each $i \in [h+1]$, where $h \ge 0$ is the number of children of $v$ in $T$, and $x$ does not occur anywhere else in $\hat{T}$. Since the set of vertices $\{v \mid x \in \chi(v)\}$ forms a subtree of $T$, by assumption, and since for each $v \in V$, with $h \ge 0$ children $u_1,\ldots,u_h$ in $T$, the vertices $v^{(1)},\ldots,v^{(h+1)}$ form a path in $\hat{T}$, and since when $h > 0$, each $v^{(i)}$, with $i \in [h]$, is connected to $u_i^{(1)}$ in $\hat{T}$, the set $\{v \in \hat{V} \mid x \in \hat{\chi}(v)\}$ forms a subtree of $\hat{T}$, as needed.
\end{itemize}

We finally show item \emph{(3)}. Consider a vertex $u \in \hat{V}$. If $v = v_{P_i}$ (resp., $v = v_{P_i'}$), for some $i \in [m]$, then $\hat{\chi}(v) = \bar z_i$ (resp., $z_i'$), and $\hat{\lambda}(v) = \{P_i(\bar z_i)\}$ (resp., $\{P_i'(z_i')\}$), and the claim follows trivially. If $u = v^{(i)}$, for some $v \in V$, and $i \in [h+1]$, where $h \ge 0$ is the number of children of $v$ in $T$, then $\hat{\chi}(v^{(i)}) = \chi(v) \cup \{w^{(i)}_c\}$ and $\hat{\lambda(v)} = \lambda(v) \cup \{S^{(i)}_v(w^{(i)}_v)\}$. Since $\chi(v) \subseteq \var{\lambda(v)}$, the claim follows. Finally, to prove \emph{4)} it is enough to observe that for any vertex $u \in \hat{V}$, $\hat{\lambda}(u)$ contains either one atom, or one atom more than $\lambda(v)$, where $v \in V$ is the node of $T$ from which $u$ has been contructed. Hence, $\hat{H}$ is a generalized hypertree decomposition of $\hat{Q}$ of width $k+1$.

\smallskip
\noindent
\textbf{$(\hat{D},\hat{Q},\hat{H})$ is in normal form.}
We now prove that $(\hat{D},\hat{Q},\hat{H})$ is in normal form, i.e., \emph{(1)} every relation in $\hat{D}$ also occurs in $\hat{Q}$, \emph{(2)} $\hat{H}$ is 2-uniform, and \emph{(3)} strongly complete. 
The fact that \emph{(1)} holds follows by construction of $\hat{D}$ and $\hat{Q}$.
To see why \emph{(2)} holds, note that every non-leaf node in $\hat{T}$  has always exactly two children. 
Finally, we show item \emph{(3)}. Consider an atom $\alpha$ in $\hat{Q}$. By construction of $\hat{\chi}$ and $\hat{\lambda}$, if $\alpha$ is of the form $P_i(\bar z_i)$ (resp., $P_i'(z_i)$), then $v_{P_i}$ (resp., $v_{P_i'}$) is a covering vertex for $\alpha$. Moreover, if $\alpha$ is of the form $S^{(i)}_v(w^{(i)}_v)$, for some $v \in V$, then $v^{(i)}$ is a covering vertex for $S^{(i)}_v(w^{(i)}_v)$. Finally, if $\alpha$ is of the form $R_i(\bar y_i)$, with $i \in [n]$, then if $v \in V$ is the covering vertex of $\alpha$ in $T$ (it always exists, since $H$ is complete, by assumption), then any vertex of the form $v^{(j)}$ in $\hat{V}$ is a covering vertex for $R_i(\bar y_i)$ in $\hat{T}$. Hence, $\hat{H}$ is complete. To see that $\hat{H}$ is \emph{strongly} complete it is enough to observe that every vertex $v$ of $\hat{T}$ is such that $\hat{\lambda}(v)$ contains an atom $\alpha$, whose variables all occur in $\hat{\chi}(v)$, such that that $\alpha \not \in \hat{\lambda}(u)$, for any other $u \in \hat{V}$, different from $v$.
So, $(\hat{D},\hat{Q},\hat{H})$ is in normal form.

\smallskip
\noindent
\textbf{$(\hat{D},\hat{Q},\hat{H})$ preserves the counts.} In this last part we need to show  
\[
|\{D' \in \opr{D}{\dep} \mid \bar c \in Q(D')\}| =\\
|\{D' \in \opr{\hat{D}}{\dep} \mid \bar c \in \hat{Q}(D')\}|
\]
and
\[|\{s \in \crs{D}{\dep} \mid \bar c \in Q(s(D))\}| = |\{s \in \crs{\hat{D}}{\dep} \mid \bar c \in \hat{Q}(s(\hat{D}))\}|. 
\]
To prove both claims, we rely on the following auxiliary notion:

\begin{definition}[\textbf{Key Violation}]\label{def:violation}
	Consider a schema $\ins{S}$, a database $D'$ over $\ins{S}$, and a key $\kappa = \key{R} = A$, with $R \in \ins{S}$. A {\em $D'$-violation} of $\kappa$ is a set $\{f,g\} \subseteq D'$ of facts such that $\{f,g\} \not\models \kappa$.	
	We denote the set of $D'$-violations of $\kappa$ by $\viol{D'}{\kappa}$. Furthermore, for a set $\dep'$ of keys, we denote by $\viol{D'}{\dep'}$ the set $\{(\kappa,v) \mid \kappa \in \dep' \textrm{~~and~~} v \in \viol{D'}{\kappa}\}$. \hfill\markfull
\end{definition}

Note that any $(D',\dep')$-justified operation of the form $-F$, for some database $D'$ and set $\dep'$ of keys, is such that $F \subseteq \{f,g\}$, where $\{f,g\}$ is some $D'$-violation in $\viol{D'}{\dep'}$.
We now reason about the sets $\viol{D}{\dep}$ and $\viol{\hat{D}}{\dep}$ of $D$-violations and $\hat{D}$-violations, respectively, of all keys in $\dep$. In particular, note that $\hat{D} = D \cup C$, where $D \cap C = \emptyset$, and each fact $\alpha \in C$ is over a relation that no other fact mentions in $\hat{D}$. Hence, we conclude that $\viol{D}{\dep} = \viol{\hat{D}}{\dep}$. Moreover, we observe that $\hat{Q}$ contains all atoms of $Q$, and in addition, for each fact $R(\bar t) \in C$, a single atom $R(\bar y)$, where each variable in $\bar y$ occurs exactly ones in $\hat{Q}$. The above observations imply that:
\begin{enumerate}
	\item $\crs{D}{\dep} = \crs{\hat{D}}{\dep}$;
	\item for every database $D'$, $D' \in \opr{D}{\dep}$ iff $D' \cup C \in \opr{\hat{D}}{\dep}$;
	\item for each $D' \in \opr{D}{\dep}$, $\bar c \in Q(D)$ iff $\bar c \in \hat{Q}(D' \cup C)$, where $D' \cup C \in \opr{\hat{D}}{\dep}$.
\end{enumerate}
Hence, item~(2) and item~(3) together imply $|\{D' \in \opr{D}{\dep} \mid \bar c \in Q(D')\}| = |\{D' \in \opr{\hat{D}}{\dep} \mid \bar c \in \hat{Q}(D')\}|$, while all three items imply $|\{s \in \crs{D}{\dep} \mid \bar c \in Q(s(D))\}| = |\{s \in \crs{\hat{D}}{\dep} \mid \bar c \in \hat{Q}(s(\hat{D}))\}|$. This concludes our proof.

\section{Proof of Lemma~\ref{lem:repairs-ato}}
	
We proceed to show that:

\begin{manuallemma}{\ref{lem:repairs-ato}}
	\lemrepairsato
\end{manuallemma}

We start by proving that the algorithm can be implemented as a well-behaved ATO $M^k_R$, and then we show that the number of valid outputs of $M^k_R$ on input $D$, $\dep$, $Q(\bar x)$, $H=(T,\chi,\lambda)$, and $\bar c \in \adom{D}^{\bar x}$, with $(D,Q,H)$ in normal form, is precisely the number of operational repairs $D' \in \opr{D}{\dep}$ such that $\bar c\in Q(D')$.

\subsection{Item (1) of Lemma~\ref{lem:repairs-ato}}
We are going to give a high level description of the ATO $M^k_R$ underlying the procedure $\sharp\mathsf{Rep}[k]$ in Algorithm~\ref{alg:repairs}.
We start by discussing the space used in the working tape. Every object that $\sharp\mathsf{Rep}[k]$ uses from its input and stores in memory, such as vertices of $T$, atoms/tuples from $D$ and $Q$, etc., can be encoded using logarithmically many symbols via a pointer encoded in binary to the part of the input where the object occurs, and thus can be stored in the working tape of $M^k_R$ using logarithmic space. Hence, the node $v$ of $T$, the sets $A$ and $A'$, and the fact $\alpha$, all require logarithmic space in the working tape (recall that $A$ and $A'$ contain a fixed number, i.e., $k$, of tuple mappings). It remains to discuss how $M^k_R$ can check that \emph{1)} $A \cup A' \cup \{\bar x \mapsto \bar c\}$ is coherent, \emph{2)} a vertex $v$ of $T$ is the $\prec_T$-minimal covering vertex of some atom, and \emph{3)} iterate over each element of $\block{\dep}{R_{i_j},D}$, for some relation $R_{i_j}$; checking whether $v$ is a leaf, or guessing a child of $u$ are all trivial tasks. To perform \emph{1)}, it is enough to try every variable $x$ in $A' \cup \{\bar x \mapsto \bar c\}$, and for each such $x$, try every pair of tuple mappings, and check if they map $x$ differently; $x$ and the two tuple mappings can be stored as pointers. To perform \emph{2)}, $M^k_R$ must be able to compute the depth of $v$ in $T$, and iterate over all vertices at the same depth of $v$, in lexicographical order, using logarithmic space. The depth can be easily computed by iteratively following the (unique) path from the root of $T$ to $v$, in reverse, and increasing a counter at each step. By encoding the counter in binary, the depth of $v$ requires logarithmically many bits and can be stored in the working tape. Iterating all vertices with the same depth of $v$ in lexicographical order can be done iteratively, by running one pass over each node of $T$, computing its depth $d$, and in case $d$ is different from the depth of $v$, the node is skipped. During this scan, we keep in memory a pointer to a node that at the end of the scan will point to the lexicographically smaller node, e.g., $u$, among the ones with the same depth of $v$. Then, when the subsequent node must be found, a new scan, as the above, is performed again, by looking for the lexicographically smallest vertex that is still larger than the previous one, i.e., $u$. This is the next smallest node, and $u$ is updated accordingly. The process continues, until the last node is considered. Storing a pointer to $u$ is feasible in logspace, and thus, together with the computation of the depth $v$ in logspace, it is possible to search for all nodes $u$ of $T$ such that $u \prec_T v$ in logspace. Then, checking for each of these nodes whether $\lambda(u)$ contains $R_{i_j}(\bar c_j)$, i.e., $v$ is not the $\prec_T$-minimal covering vertex for $R_{i_j}(\bar c_j)$ is trivial. 
Regarding \emph{3)}, one can follow a similar approach to the one for iterating the nodes of the same depth of $v$ lexicographically, where all the key values of the form $\keyval{\dep}{R_{i_j}(\bar t)}$ are instead traversed lexicographically.

We now consider the space used in the labeling tape of $M^k_R$. For this, according to Algorithm~\ref{alg:repairs}, $M^k_R$ needs to write in the labeling tape facts $\alpha$ of the database $D$, which, as already discussed, can be uniquely encoded as a pointer to the input tape where $\alpha$ occurs.

Regarding the size of a computation of $M^k_R$, note that each vertex $v$ of $T$ makes the machine $M^k_R$ execute all the lines from line~\ref{line:begin-repairs} to line~\ref{line:end-repairs} in Algorithm~\ref{alg:repairs}, before it  universally branches to one configuration for each child of $v$. Since all such steps can be performed in logspace, they induce at most a polynomial number of existential configurations, all connected in a path of the underlying computation of $M^k_R$. Since the number of vertices of $T$ is linear w.r.t.\ the size of the input, and $M^k_R$ never processes the same node of $T$ twice, the total size of any computation of $M^k_R$ be polynomial.

It remains to show that for any computation $T'$ of $M^k_R$, all labeled-free paths in $T'$ contain a bounded number of universal computations. This follows from the fact that before line~\ref{line:end-repairs}, $M^k_R$ has necessarily moved to a labeling state (recall that $H$ is strongly complete, and thus every vertex of $T$ is the $\prec_T$-minimal covering vertex of some atom). Hence, when reaching line~\ref{line:end-repairs}, either $v$ is a leaf, and thus $M^k_R$ halts, and no universal configuration have been visited from the last labeling configuration, or $M^k_R$ universally branches to the (only two) children of $v$, for each of which, a labeling configuration will necessarily be visited (again, because $H$ is strongly complete). Branching to two children requires visiting only one non-labeling configuration on each path. Hence, $M^k_R$ is a well-behaved ATO.

\subsection{Item (2) of Lemma~\ref{lem:repairs-ato}}
Fix a database $D$, a set $\dep$ of primary keys, a CQ $Q(\bar x)$, a generalized hypertree decomposition $H=(T,\chi,\lambda)$ of $Q$ of width $k$, and a tuple $\bar c \in \adom{D}^{\bar x}$, where $(D,Q,H)$ is in normal form.
We show that there is a bijection from the valid outputs of $M^k_R$ on input $(D,\dep,Q,H,\bar c)$ to the set of operational repairs $\{D' \in \opr{D}{\dep} \mid \bar c\in Q(D')\}$. We define a mapping $\mu$ in the following way. Each valid output $O = (V,E,\lambda')$ is mapped to the set $\{\lambda'(v)\mid v\in V\}\setminus\{\bot\}$. Intuitively, we collect all the labels of the nodes of the valid output which are not labeled with $\bot$. To see that $\mu$ is indeed a bijection as needed, we now show the following three statements:
\begin{enumerate}
	\item $\mu$ is correct, that is, it is indeed a function from the set $\{O \mid O \text{ is a valid output of } M \text{ on } (D,\dep,Q,H,\bar c)\}$ to $\{D' \in \opr{D}{\dep} \mid \bar c\in Q(D')\}$.
	\item $\mu$ is injective.
	\item $\mu$ is surjective.
\end{enumerate}

\noindent
\paragraph{The mapping $\mu$ is correct.} Consider an arbitrary valid output $O = (V,E,\lambda')$ of $M^k_R$ on $(D,\dep,Q,H,\bar c)$. Let $D' = \mu(O)$. First note that all labels assigned in Algorithm~\ref{alg:repairs} are database atoms or the symbol $\bot$. Thus, $D'$ is indeed a database. Further, we observe in Algorithm~\ref{alg:repairs} that for every relation $R$ and every block $B\in\block{\dep}{R,D}$ exactly one vertex is labeled in the output $O$ (with either one of the atoms of the block or $\bot$) and, hence, we have that $D'\models \dep$, i.e., $D' \in \opr{D}{\dep}$ since every consistent subset of $D$ is an operational repair of $D$ w.r.t.~$\dep$. What remains to be argued is that $\bar c\in Q(D')$.

Since we consider a complete hypertree decomposition, every atom $R_i(\bar y_i)$ of $Q$ has a $\prec_{T}$-minimal covering vertex $v$. By definition of covering vertex, we have that $\bar y_i\subseteq \chi(v)$ and $R_i(y_i)\in\lambda(v)$. In the computation, for each vertex $v$ of $T$, we guess a set $A' = \{ \bar y_{i_1} \mapsto \bar c_1,\ldots,\bar y_{i_\ell} \mapsto \bar c_\ell\}$ assuming $\lambda(v) = \{ R_{i_1}(\bar y_{i_1}), \ldots, R_{i_\ell}(\bar y_{i_\ell})\}$, such that $R_{i_j}(\bar c_j) \in D$ for $j \in [\ell]$ and all the mappings are coherent with $\bar x\mapsto \bar c$. In particular, when we consider the $\prec_{T}$-minimal covering vertex of the atom $R_i(y_i)$ of $Q$, we choose some mapping $\bar y_i\mapsto \bar c_i$ such that $R_i(\bar c_i)\in D$. When we consider the block $B$ of $R_i(\bar c_i)$ in line~5, we choose $\alpha=R_i(\bar c_i)$ in line~7, which ensures that $D'\cap B=R_i(\bar c)$; that is, this fact will be in the repair. Since $Q$ is self-join-free, the choice of which fact to keep from the other blocks of $\block{\dep}{R_i,D}$ can be arbitrary (we can also remove all the facts of these blocks).  When we transition from a node $v$ of $T$ to its children, we let $A:=A'$, and then, for each child, we choose a new set $A'$ in line~2 that is consistent with $A$ and with $\bar x\mapsto \bar c$.

This ensures that two atoms $R_i(\bar y_i)$ and $R_j(\bar y_j)$ of $Q$ that share at least one variable are mapped to facts $R_i(\bar c_i)$ and $R_j(\bar c_j)$ of $D$ in a consistent way; that is, a variable that occurs in both $\bar y_i$ and $\bar y_j$ is mapped to the same constant in $\bar c_i$ and $\bar c_j$. Let $v$ and $u$ be the $\prec_{T}$-minimal covering vertices of $R_i(\bar y_i)$ and $R_j(\bar y_j)$, respectively. Let $z$ be a variable that occurs in both $\bar y_i$ and $\bar y_j$. By definition of covering vertex, we have that $\bar y_i\subseteq \chi(v)$ and $\bar y_j\subseteq \chi(u)$. By definition of hypertree decomposition, the set of vertices $s\in T$ for which $z\in\chi(s)$ induces a (connected) subtree of $T$. Therefore, there exists a subtree of $T$ that contains both vertices $v$ and $u$, and such that $z\in\chi(s)$ for every vertex $s$ of this subtree. When we process the root $r$ of this subtree, we map the variable $z$ in the set $A'$ to some constant (since there must exist an atom $R_k(\bar y_k)$ in $\lambda(r)$ such that $z$ occurs in $\bar y_k$ by definition of a hypertree decomposition). Then, when we transition to the children of $r$ in $T$, we ensure that the variable $z$ is mapped to the same constant, by choosing a new set $A'$ of mappings that is consistent with the set of mappings chosen for $r$, and so forth. Hence, the variable $z$ will be consistently mapped to the same constant in this entire subtree, and for every atom $R_i(\bar y_i)$ of $Q$ that mentions the variable $z$ it must be the case that the $\prec_{T}$-minimal covering vertex $v$ of $R_i(\bar y_i)$ occurs in this subtree because $z\in\chi(v)$. Hence, the sets $A$ of mappings that we choose in an accepting computation map the atoms of the query to facts of the database in a consistent way, and this induces a homomorphism $h$ from $Q$ to $D$. since we choose mappings that are coherent with the mapping $\bar x\mapsto\bar c$ of the answer variables of $Q$, we get that $\bar c\in Q(D')$.

\medskip
\noindent
\paragraph{The mapping $\mu$ is injective.} Assume that there are two valid outputs $O = (V,E,\lambda')$ and $O' = (V',E',\lambda'')$ of $M^k_R$ on $(D,\dep,Q,H,\bar c)$ such that $\mu(O) = \mu(O')$. Note that the order in which the algorithm goes through relations $R_{i_j}$ in $\lambda(v)$, for some vertex $v$ in $T$, and blocks $B$ of $R_{i_j}$ is arbitrary, but fixed. Further, we only produce one label per block, since the choices for each block of some atom $R_{i_j}(\bar y_i)$ are only made in the $\prec_{T}$-minimal covering vertex of $R_i(\bar y_i)$. Now, since $Q$ is self-join-free, no conflicting choices can be made on two different nodes for the same relation. Since every block corresponds to exactly one vertex in the output, and the order in which the blocks are dealt with in the algorithm is fixed, all outputs are the same up to the labeling function, i.e., $V=V'$ and $E=E'$ (up to variable renaming) and we only need to argue that $\lambda'=\lambda''$ to produce a contradiction to the assumption that $O\neq O'$. To this end, see again that every block of $D$ corresponds to one vertex of $V$, which is labeled with either an atom of that block or the symbol $\bot$ (to represent that no atom should be kept in the repair). Fix some arbitrary vertex $v\in V$. Let $B_v$ be the block that corresponds to the vertex $v$. If $\lambda'(v)=\bot$, we know that $\mu(O)\cap B_v = \emptyset$. Since $\mu(O) = \mu(O')$, we know that also $\mu(O')\cap B_v = \emptyset$ and thus $\lambda''(v)=\bot$ (if $\lambda''(v)=\alpha$ for some $\alpha\in B_v$, then $\mu(O')\cap B_v = \alpha$). On the other hand, if $\lambda'(v)=\alpha$ for some $\alpha\in B_v$, with an analogous argument, we have that also $\lambda''(v)=\alpha$ as needed.

\medskip
\noindent
\paragraph{The mapping $\mu$ is surjective.} Consider an arbitrary operational repair $D'\in \{D' \in \opr{D}{\dep} \mid \bar c\in Q(D')\}$. We need to show that there exists a valid output $O = (V,E,\lambda')$ of $M^k_R$ on $(D,\dep,Q,H,\bar c)$ such that $\mu(O)=D'$. Since $D' \in \opr{D}{\dep}$ such that $\bar c\in Q(D')$, we know that there exists a homomorphism $h$ from $Q(\bar x)$ to $D'$ such that $h(\bar x) = \bar c$. This means that in line~2 of the algorithm, we can always guess the set $A'$ according to $h$, since this will guarantee that $A' \cup A \cup \{\bar x \mapsto \bar c\}$ is coherent. Then, when traversing the hypertree decomposition (and the relations in each node and the blocks in each relation), we can always choose the labels according to $D'$. Note also that since every relation of $D$ occurs in $Q$, every block of $D$ is considered in the algorithm. Fix some arbitrary block $B$ in $D$. If $|B|=1$, then $B$ does not participate in a violation and it will always be present in every operational repair $D'$. Then, when $B$ is considered in Algorithm~\ref{alg:repairs}, we set $\alpha$ to the only fact in $B$ in line~6 and label the corresponding node in the output with $\alpha$. If $|B|\geq 2$ and some fact $R_{i_j}(\bar{c_j})$ of $B$ is in the image of the homomorphism $h$, since we chose $A'$ in such a way that it is coherent with $h$, we guess $\alpha$ as the fact $R_{i_j}(\bar{c_j})$ in line~7 of the algorithm. Otherwise, we guess $\alpha$ as the fact that is left in $B$ in $D'$ (if it exists), i.e., $\alpha:=B\cap D'$, if $B\cap D'\neq\emptyset$, and $\alpha:=\bot$ otherwise. It should now be clear that for an output $O$ generated in such a way, we have that $\mu(O)=D'$.

\section{Proof of Lemma~\ref{lem:sequences-ato}}
We prove the following lemma:

\begin{manuallemma}{\ref{lem:sequences-ato}}
    \lemsequencesato
\end{manuallemma}

We start by proving that the algorithm can be implemented as a well-behaved ATO $M^k_S$, and then we proceed to show that the number of valid outputs of $M^k_S$ is precisely the number of complete repairing sequences $s$ such that $\bar c\in Q(s(D))$.

\subsection{Item (1) of Lemma~\ref{lem:sequences-ato}}
The proof of this item proceeds similarly to the one for the first item of Lemma~\ref{lem:repairs-ato}. The additional part we need to argue about is how the  ATO $M^k_S$ underlying Algorithm~\ref{alg:sequences} implements lines~\ref{line:being-inner-seq}-\ref{line:end-inner-seq}, and lines~\ref{line:being-outer-seq}-\ref{line:end-outer-seq} using logarithmic space in the working and labeling tape. Regarding lines~\ref{line:being-inner-seq}-\ref{line:end-inner-seq}, the first two lines require storing a number no larger than $|D|$, which can be easily stored in logarithmic space. Then, computing $\mathsf{shape}(n,\alpha)$ and $\sharp\mathsf{opsFor}(n,-g)$ is trivially doable in logarithmic space. Similarly, guessing the number $p \in [\sharp\mathsf{opsFor}(n,-g)]$ requires logarithmic space. The operation "Label with $(-g,p)$" requires only logarithmic space in the labeling tape, since both $-g$ and $p$ can be encoded in logarithmic space. Similarly, additions and subtractions between two logspace-encoded numbers can be carried out in logarithmic space.

The challenging part is guessing $p \in \left[{b \choose b'}\right]$ and writing $(\alpha,p)$ in the labeling tape, without explicitly storing $p$ and ${b \choose b'}$ in the working tape, and not going beyond the space bounds of the labeling tape. This is because ${b \choose b'}$ and $p$ require, in principle, a polynomial number of bits, since they encode exponentially large numbers w.r.t.\ the size of the input. To avoid the above, $M^k_S$ does the following: it guesses one bit $d$ of $p$ at the time, starting from the most significant one. Then, $M^k_S$ writes $(\alpha,d)$ in the labeling tape, moves to a labeling state, and then moves to a non-labeling state, and continues with the next bit of $p$. We now need to show how each bit of $p$ can be guessed in logarithmic space, by also guaranteeing that the bits chosen so far place $p$ in the interval $\left[{b \choose b'}\right]$. Note that
$${b \choose b'} = \dfrac{b!}{b'! \times (b-b')!} = \dfrac{b \times (b-1) \times \cdot \times (b-b'+1)}{b'!},$$
and thus ${b \choose b'}$ can be computed by performing two iterated multiplications: $B_1 = b \times (b-1) \times \cdot \times (b-b'+1)$, and $B_2 = b'! = b' \times (b'-1) \times \cdots \times 1$, of polynomially many terms, and then by dividing $B_1 / B_2$. Iterated multiplication and integer division are all doable in deterministic logspace~\cite{Chiu2021}, and thus, by composition of logspace-computable functions~\cite{ArBa09}, $M^k_S$ can compute one bit of ${b \choose b'}$ at the time, using at most logspace, and use such a bit to guide the guess of the corresponding bit of $p$.

We conclude by discussing lines~\ref{line:being-outer-seq}-\ref{line:end-outer-seq}. Guessing $p \in \{0,\ldots,N\}$ is clearly feasible in logspace, since $N \le |D|$. The remaining lines can be implemented in logarithmic space because of arguments similar to the ones used for the proof of item~(1) of Lemma~\ref{lem:repairs-ato}.

\subsection{Item (2) of Lemma~\ref{lem:sequences-ato}}
We prove that there is a one-to-one correspondence between the valid outputs of $M^k_S$ on $D,\dep,Q,H,\bar c$ and the complete repairing sequences $s$ such that $\bar c\in Q(s(D))$. Assume that $H=(T^H,\chi^H,\lambda^H)$ with $T^H=(V^H,E^H)$. In what follows, for each set $P$ of $D$-operations, we assume an arbitrary fixed order over the operations in $P$ that we denote by $\prec_P$. In addition, let $B_1,\dots,B_m$ be the order defined over all the blocks of $D$ by first considering the order $\prec_{T^H}$ over the nodes of $T^H$, then for each node $v$ of $T^H$ considering the fixed order used in Algorithm~\ref{alg:sequences} over the relation names $R_i$ such that $v$ is the $\prec_{T^H}$-minimal covering vertex of the atom $R_i(\bar y_i)$ of $Q$, and finally, for each relation name $R_i$, considering the fixed order used in Algorithm~\ref{alg:sequences} over the blocks of $\block{\dep}{R_i,D}$. For each $i\in[m]$, consider the first $i$ blocks $B_1,\dots,B_i$, and let $s_{\le i-1}$ be a sequence of operations over the blocks $B_1,\dots,B_{i-1}$ and $s_i$ be a sequence of operations over $B_i$. We assume an arbitrary fixed order over all the permutations of the operations in $s_{\le i-1},s_i$ that are coherent with the order of the operations in $s_{\le i-1}$ and with the order of operations in $s_i$ (i.e.,~all the possible ways to interleave the sequences $s_{\le i-1},s_i$), and denote it by $\prec_{s_{\le i-1},s_i}$. For simplicity, we assume in the proof that in line~19 of Algorithm~\ref{alg:sequences} we produce a single label $(\alpha,p)$ rather than a sequence of labels representing the binary representation of $p$. This has no impact on the correctness of the algorithm.

\medskip
\noindent\paragraph{Sequences to Valid Outputs.} Consider a complete repairing sequence $s$ such that $\bar c \in Q(s(D))$. We construct a valid output $O=(V',E',\lambda')$ as follows. First, for each vertex $v\in V^H$, we add to $V'$ a corresponding vertex $v'$. If there exists and edge $(v,u)\in E^H$, then we add the corresponding edge $(v',u')$ to $E'$.

Next, for every vertex $v'\in V'$ corresponding to a vertex $v\in V^H$, we replace $v'$ by a path $v'_1 \rightarrow \dots \rightarrow v'_m$ of vertices. We add the edges $(u',v'_1)$ and $(v'_m,u'_\ell)$ for $\ell\in[2]$, assuming that $u'$ is the parent of $v'$ and $u'_1,u'_2$ are the children of $v'$ in $V'$. Note that initially each node of $V'$ has precisely two children since we assume that $(D,Q,H)$ is in normal form; hence, every node of $V^H$ has two children.
The path $v'_1 \rightarrow \dots \rightarrow v'_m$ is such that for every relation name $R_i$ with $v$ being the $\prec_{T^H}$-minimal covering vertex of the atom $R_i(\bar y_i)$ of $Q$ (note that for every $v$ there is at least one such relation name since $(D,Q,H)$ is in normal form), and for every block $B\in \block{\dep}{R_i,D}$, there is subsequence $v'_{i_1} \rightarrow \dots \rightarrow v'_{i_t}$ where:
\begin{itemize}
\item $t-1$ is the number of operations in $s$ over the block $B$,
\item $\lambda'(v'_{i_r})=(-1,p)$ if the $r$-th operation of $s$ over $B$ is the $p$-th operation according to $\prec_{P}$, where $P$ is the set of operations removing a single fact $f\not\in s(D)$ of $B'$, and $B'\subseteq B$ is obtained by applying the first $r-1$ operations of $s$ over $B$, and excluding the fact of $s(D)\cap B$ (if such a fact exists),
\item $\lambda'(v'_{i_r})=(-2,p)$ if the $r$-th operation of $s$ over $B$ is the $p$-th operation according to $\prec_{P}$, where $P$ is the set of operations removing a pair of facts $f,g\not\in s(D)$ of $B'$, and $B'\subseteq B$ is obtained by applying the first $r-1$ operations of $s$ over $B$, and excluding the fact of $s(D)\cap B$ (if such a fact exists),
\item $\lambda'(v'_{i_t})=(\alpha,p)$ if $B\cap s(D)=\{\alpha\}$, $B$ is the $i$-th block of $D$, and the permutation of the operations of $s$ over the blocks $B_1,\dots,B_i$ is the $p$-th in $\prec_{s_{\le i-1},s_i}$, where $s_{\le i-1}$ is the sequence of operations in $s$ over the blocks $B_1,\dots,B_{i-1}$ and $s_i$ is the sequence of operations in $s$ over the block $B_i$,
\item $\lambda'(v'_{i_t})=(\bot,p)$ if $B\cap s(D)=\emptyset$, $B$ is the $i$-th block of $D$, and then the permutation of the operations of $s$ over the blocks $B_1,\dots,B_i$ is the $p$-th in $\prec_{s_{\le i-1},s_i}$, where $s_{\le i-1}$ is the sequence of operations in $s$ over the blocks $B_1,\dots,B_{i-1}$ and $s_i$ is the sequence of operations in $s$ over the block $B_i$.
\end{itemize}
The order of such subsequences in $v'_1 \rightarrow \dots \rightarrow v'_m$ must be coherent with the order $B_1,\dots,B_n$: the first subsequence corresponds to the first block of the first relation name $R_i$ with $v'$ being the $\prec_{T^H}$-minimal covering vertex of the atom $R_i(\bar y_i)$ of $Q$, etc.

Since $H$ is complete, every atom $R_i(\bar y_i)$ of $Q$ has a $\prec_{T^H}$-minimal covering vertex, and since $(D,Q,H)$ is in normal form, every relation of $D$ occurs in $Q$. Hence, every block of $D$ has a representative path of nodes in $O$. Since $\bar c \in Q(s(D))$, there exists a homomorphism $h$ from $Q$ to $D$ with $h(\bar x)=\bar c$. Hence, there is an accepting computation of $M^k_S$ with $O$ being its output, that is obtained in the following way via Algorithm~\ref{alg:sequences}:
\begin{enumerate}
    \item We guess $N=|s|$ (i.e.,~the length of the sequence $s$) in line~2,
    \item for every node $v$, assuming $\lambda(v) = \{ R_{i_1}(\bar y_{i_1}), \ldots, R_{i_\ell}(\bar y_{i_\ell})\}$, we guess $A' =$ $\{ \bar y_{i_1} \mapsto h(\bar y_{i_1}),\ldots,\bar y_{i_\ell} \mapsto h(\bar y_{i_\ell})\}$ in line~3,
    \item for every block $B$, we guess $\alpha=s(D)\cap B$ if $s(D)\cap B\neq\emptyset$ or $\alpha=\bot$ otherwise in lines~7---9,
    \item for every block $B$ with $v'_{i_1} \rightarrow \dots \rightarrow v'_{i_t}$ being its representative path in $O$, in the $r$-th iteration of the while loop of line~12 over this block, for all $1\le r\le t-1$, we guess $-g$ in line~14 and $p$ in line~15 assuming that $\lambda'(v'_{i_r})=(-g,p)$,
    \item for every block $B$ with $v'_{i_1} \rightarrow \dots \rightarrow v'_{i_t}$ being its representative path in $O$, we guess $p$ in line~18 assuming $\lambda'(v'_{i_t})=(\alpha,p)$,
    \item we guess the total number $p$ of operations of $s$ over the blocks $B$ such that $B\in \block{\dep}{R_i,D}$ and the $\prec_{T^H}$-minimal covering vertex of the atom $R_i(\bar y_i)$ of $Q$ occurs in the subtree rooted at $u$ (the first child of $v$) in line~21.
\end{enumerate}

The number $N$ that we guess in line~2 clearly belongs to $[|D|]$ because $|s|\le |D|$ for every complete repairing sequence $s$. Moreover, since $h$ is a homomorphism from $Q$ to $D$ with $h(\bar x)=\bar c$, for every atom $R_i(\bar y_i)$ of $Q$ it holds that $R_i(h(\bar y_i))\in D$; hence, it is always possible to choose a set $A'$ in line~3 that is coherent with the homomorphism as well as with the previous choice and with the mapping $\bar x\mapsto \bar c$---for every atom $R_i(\bar y_i)$ that we consider, we include the mapping $\bar y_i \mapsto h(\bar y_i)$ in the set. In addition, for every block $B$, since $s$ is a complete repairing sequence, $s(D)\cap B$ either contains a single fact or no facts at all. Since $\bar c\in Q(s(D))$, as witnessed by $h$, if $R_i(h(\bar y_i))\in B$, then $s(D)$ contains the fact $R_i(h(\bar y_i))$ and this is the only fact of $s(D)\cap B$. In this case, we will define $\alpha=R_i(h(\bar y_i))$ in line~7 or line~8, when considering the block $B$ in line~6.

Next, if $v'_{i_1},\dots,v'_{i_t}$ is the path of nodes in $O$ representing the block $B$, and $\lambda'(v'_{i_r})=(-g,p)$ for some $r\in[t-1]$, then $p\in [\#\mathsf{opsFor}(n,-g)]$. This holds since we start with $n=|B|$ if $\alpha=\bot$ or $n=|B|-1$ otherwise, and decrease, in line~17, the value of $n$ by $1$ when guessing $-1$ in line~14 (i.e,~when applying an operation that removes a single fact) or by $2$ when guessing $-2$ in line~14 (i.e.,~when applying an operation that removes a pair of facts); hence, at the $r$-th iteration of the while loop of line~12 over $B$, it holds that $n$ is precisely the size of the subset $B'$ that is obtained by applying the first $r-1$ operations of $s$ over $B$, and excluding the fact of $s(D)\cap B$ if such a fact exists. This implies that $\#\mathsf{opsFor}(n,-g)$ is precisely the number of operations over $B'$ that remove a single fact if $-g=-1$ (there are $n$ such operations, one for each fact of $B'$) or the number of operations over $B'$ that remove a pair of facts if $-g=-2$ (there are $\frac{n (n-1)}{2}$ such operations, one for every pair of facts of $B'$). Note that the number of possible operations only depends on the size of a block and not its specific set of facts, as all the facts of a block are in conflict with each other. Since $\prec_P$ is a total order over precisely these operations, we conclude that $p\in [\#\mathsf{opsFor}(n,-g)]$.

Next, we argue that if $\lambda'(v'_{i_t})=(\alpha,p)$ (or $\lambda'(v'_{i_t})=(\bot,p)$), then $p$ is a value in $[{b \choose b'}]$.
An important observation here is that whenever we reach line~18, and assuming that $B$ is the block that is currently being handled and it is the $i$-th block of $D$ (w.r.t.~the order $B_1,\dots,B_m$ over the blocks), we have that  $b'$ is the number of operations in $s$ over the blocks $B_1,\dots,B_{i-1}$ and $b$ is the number of operations in $s$ over the blocks $B_1,\dots,B_i$.
This holds since we start with $b=b'=0$, and increase the value of $b$ by one in every iteration of the while loop of line~12 (hence, with every operation that we apply over a certain block). When we are done handling a block, we define $b'=b$ in line~11, before handling the next block. Moreover, for each relation name $R_j$, we handle the blocks of $\block{\dep}{R_j,D}$ in the fixed order defined over these blocks, and for each node $v$, we handle the relation names $R_j$ with $v$ being the $\prec_{T^H}$-minimal covering vertex of the atom $R_j(\bar y_j)$ of $Q$ in the fixed order defined over these relation names. 

In lines~20---25, when we consider the children of the current node $v$, we guess the number $p$ of operations for the first child of $v$ (hence, for all the blocks $B'\in \block{\dep}{R_j,D}$ such that the $\prec_{T^H}$-minimal covering vertex of the atom $R_j(\bar y_j)$ of $Q$ is in the subtree rooted at $v$, as each block is handled under its $\prec_{T^H}$-minimal covering vertex), and we define $b=b+p$ for the second child. By doing so, we take into account all the operations over the blocks that occur before the blocks handled under the second child of $v$ in the order, even before they are actually handled. Note that all the blocks handled under the first child appear before the blocks handled under the second child, as the first child appears before the second child in the order $\prec_{T^H}$ by definition. Hence, whenever we consider a certain block in line~6, $b$ always holds the total number of operations applied over the  previous blocks in the order, in line~11 we define $b'=b$, and when we finish handling this block in the last iteration of the while loop of line~12, $b$ holds the total number of operations applied over the previous blocks in the order and the current block.

Now, in our procedure for constructing $O$, the value $p$ that we choose for $\lambda'(v_{i_t}')$ is determined by the number of permutations of the sequences $s_{\le i-1}$ and $s_i$, where $s_{\le i-1}$ is the sequence of operations in $s$ over the blocks $B_1,\dots,B_{i-i}$ and $s_i$ is the sequence of operations over $B_i$. As said, the number of operations in $s_{\le i-1}$ is the value of $b'$ defined in line~11 when considering the block $B$ in line~6, and the number of operations in $s_{\le i}$ is precisely the value of $b$ after the last iteration of the while loop of line~12. The number of ways to interleave these sequences, which is the number of permutations of these operations that are coherent with the order of the operations in both $s_{\le i-1}$ and $s_i$ is precisely ${b\choose b'}$, and so $p\in[{b\choose b'}]$.

Finally, the number $p$ that we guess in line~21 is clearly in the range $0,\dots,N$, as the number of operations of $s$ over the blocks $B$ that are handled under some subtree of $T^H$ is bounded by the total number of operations in $s$. 
Note that for every leaf of the described computation we have that $N=0$ in line~27, as for each node $v$, the number $N$ of operations that we assign to it is precisely the number of operations over the blocks handled under its subtree. We conclude that the described computation is indeed an accepting computation, and since \emph{(1)} the described computation and $O$ consider the blocks of $D$ in the same order $B_1,\dots,B_n$, each block under its $\prec_{T^H}$-minimal covering vertex, and \emph{(2)} the only labeling configurations are those that produce the labels in line~16 and line~19, and we use the labels of $O$ to define the computation, it is now easy to verify that $O$ is the output of this computation.

It remains to show that two different sequences $s$ and $s'$ give rise to two different outputs $O$ and $O'$. Assume that $s=(o_1,\dots,o_m)$ and $s'=(o_1',\dots,o_\ell')$. If $m\neq \ell$, then $O\neq O'$ since the number of nodes in $O$ is $m+n$, where $n$ is the total number of blocks of $D$ (one node for each operation and an additional node for each block), and the number of nodes in $O'$ is $\ell+n$. If $m=\ell$, there are three possible cases. There first is when there is a block $B$ such that the operations of $s$ over $B$ and the operations of $s'$ over $B$ conform to different templates or $s(D)\cap B\neq s'(D)\cap B$. In this case, the path of nodes in $O$ representing the block $B$ and the path of nodes in $O'$ representing this block will have different labels, as at some point they will use a different $-g$, or a different $\alpha$ for the last node. 

The second case is when $s$ and $s'$ conform to the same templates over all the blocks and $s(D)\cap B=s'(D)\cap B$ for every block $B$, but they differ on the operations over some block $B$. Assume that they differ on the $r$-th operation over $B$, and this is the first operation over this block they differ on. In this case, the $r$-th node in the representative path of $B$ in $O$ and the $r$-th node in the representative path of $B$ in $O'$ will have different labels, as they will use a different $p$. This is the case since when determining the value of $p$ we consider the same subset $B'$ of $B$ in both cases (as the first $r-1$ operations over $B$ are the same for both sequences and $s(D)\cap B=s'(D)\cap B$), and the same $-g$ (as they conform to the same template), and so the same set $P$ of operations. The value of $p$ then depends on the $r$-th operation of $s$ over $B$ (respectively, the $r$-th operation of $s'$ over $B$) and the order $\prec_P$ over the operations of $P$, and since these are two different operations, the values will be different. 

The last case is when $s$ and $s'$ use the exact same operations in the same order for every block. In this case, the only difference between $s$ and $s'$ is that they interleave the operations over the different blocks differently. For each $i\in[m]$, let $s_{\le i-1}$ be the sequence of operations in $s$ over the blocks $B_1,\dots,B_{i-1}$, and let $s'_{\le i-1}$ be the sequence of operations in $s'$ over these blocks. Let $B_j$ be the first block such that $s_{\le j-1}=s'_{\le j-1}$, but $s_{\le j}\neq s'_{\le j}$. Note that $s_j$ (the sequence of operations of $s$ over $B_j$) and $s'_j$ (the sequence of operations of $s'$ over $B_j$ are the same). In this case, the last node in the representative path of $B_j$ in $O$ and the last node in the representative path of $B_j$ in $O'$ will have a different label. They will share the same $\alpha$ since they apply the same operations; hence keep the same fact of this block (or remove all its facts). However, the value of $p$ will be different as it is determined by the sequence $s_{\le j}$ (which is the same as $s'_{\le j}$, the sequence $s_j$ (which is the same as $s'_j$), and the order $\prec_{s_{\le j-1}, s_j}$ over all the ways to interleave these sequences. Since $s_{\le j}\neq s'_{\le j}$, these sequences are interleaved in different ways, and the values will be different.

\medskip
\noindent\paragraph{Valid Outputs to Sequences.} Next, let $O=(V',E',\lambda')$ be a valid output of $M^k_S$ on $D,\dep,Q,H,\bar c$. In each computation, we traverse the nodes of $T^H$ in the order defined by $\prec_{T^H}$, for each node $v$ we go over the relation names $R_i$ such that $v$ is the $\prec_{T^H}$-minimal covering vertex of the atom $R_i(\bar y_i)$ of $Q$ in some fixed order (there is at least one such relation name for every $v$ since $(D,Q,H)$ is in normal form), and for each relation name $R_i$ we go over the blocks of $\block{\dep}{R_i,D}$ in some fixed order. For every block, we produce a set of labels that must all occur along a single path of the output of the computation, as all the guesses we make along the way are existential. Since $O$ is the output of some computation of $M^k_S$ on $D,\dep,Q,H,\bar c$, it must be of the structure described in the proof of the previous direction. That is, for every node $v$ of $T^H$, we have a path of nodes in $O$ representing the blocks handled under $v$, and under the last node representing the last block of $v$, we have two children, each representing the first block handled by a child of $v$ in $T^H$. Note that since $H$ is complete, every relation name $R_i$ is such that the atom $R_i(\bar y_i)$ of $Q$ has a $\prec_{T^H}$-minimal covering vertex, and so every block of $D$ has a representative path in $O$.

It is now not hard to convert an output $O=(V',E',\lambda')$ of an accepting computation into a complete repairing sequence $s$ with $\bar c\in Q(s(D))$. In particular, for every node $v$ of $T^H$, every relation name $R_i$ such that $v$ is the $\prec_{T^H}$-minimal covering vertex of the atom $R_i(\bar y_i)$ of $Q$, and every block $B\in\block{\dep}{R_i,D}$, we consider the path $v'_1 \rightarrow \dots \rightarrow v'_m$ of vertices in $V'$ representing this block. Clearly, the first $m-1$ vertices are such that $\lambda'(v_i')$ is of the form $(-g,p)$ (these labels are produced in the while loop of line~12), and the last vertex is such that $\lambda'(v_m')$ is of the form $(\alpha,p)$ (this label is produced in line~19). Hence, the sequence $s$ will have $m-1$ operations over the block $B$, such that the $r$-th operation is a removal of a single fact if $\lambda'(v_r')=(-1,p)$ for some $p$, or it is the removal of a pair of facts if $\lambda'(v_r')=(-2,p)$ for some $p$. The specific operation is determined by the value $p$ in the following way. 

If after applying the first $r-1$ operations over $B$, and removing the fact $\alpha$ of $\lambda'(v_m')$ (if $\alpha\neq\bot$) from the result, we obtain the subset $B'$, then the operation that we choose is the $p$-th operation in $\prec_P$, where $P$ is the set of all operations removing a single fact of $B'$ if $\lambda'(v_r')=(-1,p)$ or it is the set of all operations removing a pair of facts of $B'$ if $\lambda'(v_r')=(-2,p)$. Note that such an operation exists since when we start handling the block $B$ in line~6, we define $n=|B|$ if $\alpha=\bot$ or $n=|B|-1$ otherwise, and we decrease the value of $n$ by one when we guess $-1$ in line~14 and by $2$ when we guess $-2$. Thus, in the $r$-th iteration of the while loop of line~12 over this block, $n$ is precisely the size of the subset $B'$ of $B$ as described above. Then, the $-g$ that we guess in this operation depends on the size of $B'$ and the chosen $\alpha$. If $n>1$, then the removal of any single fact of $B'$ and the removal of any pair of facts of $B'$ are all $D$-justified operations, as all the facts of $B'$ are in conflict with each other; in this case, $-g\in\{-1,-2\}$. If $n=1$, the only $D$-justified operation is the removal of the only fact of $B'$ in the case where $\alpha\neq\bot$, as these two facts are in conflict with each other; in this case, $-g=-1$. If $n=1$ and $\alpha=\bot$ or $n=0$, there are no $D$-justified operations over $B'$.
Hence, we only choose $-1$ (respectively, $-2$) if there is at least one $D$-justified operation that removes a single fact (respectively, a pair of facts), as determined by $\mathsf{shape}(n,\alpha)$. The value $p$ is from the range $[\#\mathsf{opsFor}(n,-g)]$, where $\#\mathsf{opsFor}(n,-g)$ is precisely the number of such operations available. If $-g=-1$, then the number of available operations is $n$---one for each fact of $B'$. If $-g=-2$, then the number of available operations is $\frac{n (n-1)}{2}$---one for each pair of facts of $B'$. Note that the number of operations only depends on $n$ and not on the specific block, as all the facts of a block are in conflict with each other.

As for the last node $v_m'$, $\lambda'(v_m')=(\alpha,p)$ determines the final state of the block $B$, as well as the positions of the operations over $B$ in the sequence established up to that point (as we will explain later). Since we always disregard the fact of $\alpha$ when choosing the next operation, clearly none of the operations of $s$ over $B$ removes this fact. Hence, we eventually have that $s(D)\cap B=\alpha$ if $\alpha\neq\bot$, or $s(D)\cap B=\emptyset$ if $\alpha=\bot$ (in this case, we allow removing any fact of $B'$ in the sequence). Note that if $\alpha=\bot$, $|B'|=2$, and we choose an operation that removes a single fact, this computation will be rejecting, as in the next iteration of the while loop we will have that $n=1$ and $\alpha=\bot$, and, in this case, $\mathsf{shape}(n,\alpha)=\emptyset$. Hence, if $\alpha=\bot$, the last operation of $s$ over $D$ will always be the removal of a pair of facts when precisely two facts of the block are left, and we will have that $s(D)\cap B=\emptyset$.

We have explained how we choose, for every block of $D$, the sequence of operations in $s$ over this block, and it is only left to show how to interleave these operations in the sequence. We do so by following the order $B_1,\dots,B_m$ over the blocks. We start with the block $B_1$ and denote by $s_{\le 1}$ the sequence of operations of $s$ over $B_1$. This sequence of operations has already been determined. Next, when we consider the node $B_i$ for some $i>1$, we denote by $s_i$ the sequence of operations over this block, and by $s_{\le i-1}$ the sequence of operations over the blocks $B_1,\dots,B_{i-1}$. If $s_{\le i}$ is of length $b$ and $s_{\le i-1}$ is of length $b'$, we have $b$ positions in the sequence $s_{\le i}$, and all we have to do is to choose, among them, the $b'$ positions for the sequence $s_{\le i-1}$, and the rest of the positions will be filled by the sequence $s_i$; the number of choices is ${b\choose b'}$. As argued in the proof of the previous direction, when we handle the block $B_i$, in line~18 the value of $b$ is the number of operations in $s_{\le i}$ and the value of $b'$ is the number of operations in $s_{\le i-1}$; hence, if the path of nodes representing the block $B_i$ is $v_1' \rightarrow \dots \rightarrow v_m'$ and $\lambda'(v_m')=(\alpha,p)$, $p$ is a number in $[{b\choose b'}]$, and it uniquely determines the sequence $s_{\le i}$; this is the $p$-th permutation in $\prec_{s_{\le i-1},s_i}$. Finally, we define $s=s_{\le n}$.

It is left to show that $s$ is a complete repairing sequence, and that $\bar c\in Q(s(D))$. Since for every block $B$ with a representative path $v_1' \rightarrow \dots \rightarrow v_m'$ we have that $\lambda'(v_m')=(\alpha,p)$, where $\alpha$ is either a single fact of $B$ or $\bot$, $s(D)$ contains, for every block of $D$, either a single fact or no facts. Since there are no conflicts among different blocks, we conclude that $s$ is indeed a complete repairing sequence. Moreover, since we consider a complete hypertree decomposition, every atom $R_i(\bar y_i)$ of $Q$ has a $\prec_{T^H}$-minimal covering vertex $v$. By definition of covering vertex, we have that $\bar y_i\subseteq \chi^H(v)$ and $R_i(y_i)\in\lambda^H(v)$. In the computation, for each vertex $v$ of $T^H$, we guess a set $A' =$ $\{ \bar y_{i_1} \mapsto \bar c_1,\ldots,\bar y_{i_\ell} \mapsto \bar c_\ell\}$ assuming $\lambda(v) = \{ R_{i_1}(\bar y_{i_1}), \ldots, R_{i_\ell}(\bar y_{i_\ell})\}$, such that $R_{i_j}(\bar c_j) \in D$ for $j \in [\ell]$ and all the mappings are coherent with $\bar x\mapsto \bar c$. In particular, when we consider the $\prec_{T^H}$-minimal covering vertex of the atom $R_i(y_i)$ of $Q$, we choose some mapping $\bar y_i\mapsto \bar c_i$ such that $R_i(\bar c_i)\in D$. When we consider the block $B$ of $R_i(\bar c_i)$ in line~6, we choose $\alpha=R_i(\bar c_i)$ in line~8, which ensures that eventually $s(D)\cap B=R_i(\bar c)$; that is, we keep this fact in the repair. Since $Q$ is self-join-free, the choice of which fact to keep from the other blocks of $\block{\dep}{R_i,D}$ can be arbitrary (we can also remove all the facts of these blocks).  When we transition from a node $v$ of $T^H$ to its children, we define $A=A'$, and then, for each child, we choose a set $A$ in line~3 that is consistent with $A'$ and with $\bar x\mapsto \bar c$.

This ensures that two atoms $R_i(\bar y_i)$ and $R_j(\bar y_j)$ of $Q$ that share at least one variable are mapped to facts $R_i(\bar c_i)$ and $R_j(\bar c_j)$ of $D$ in a consistent way; that is, a variable that occurs in both $\bar y_i$ and $\bar y_j$ is mapped to the same constant in $\bar c_i$ and $\bar c_j$. Let $v$ and $u$ be the 
$\prec_{T^H}$-minimal covering vertices of $R_i(\bar y_i)$ and $R_j(\bar y_j)$, respectively. Let $z$ be a variable that occurs in both $\bar y_i$ and $\bar y_j$. By definition of covering vertex, we have that $\bar y_i\subseteq \chi^H(v)$ and $\bar y_j\subseteq \chi^H(u)$. By definition of hypertree decomposition, the set of vertices $s\in T^H$ for which $z\in\chi^H(s)$ induces a (connected) subtree of $T^H$. Therefore, there exists a subtree of $T^H$ that contains both vertices $v$ and $u$, and such that $z\in\chi^H(s)$ for every vertex $s$ of this substree. When we process the root $r$ of this subtree, we map the variable $z$ in the set $A'$ to some constant (since there must exist an atom $R_k(\bar y_k)$ in $\lambda(r)$ such that $z$ occurs in $\bar y_k$ by definition of a hypertree decomposition). Then, when we transition to the children of $r$ in $T^H$, we ensure that the variable $z$ is mapped to the same constant, by choosing a new set $A'$ of mappings that is consistent with the set of mappings chosen for $r$, and so fourth. Hence, the variable $z$ will be consistently mapped to the same constant in this entire subtree, and for every atom $R_i(\bar y_i)$ of $Q$ that mentions the variable $z$ it must be the case that the 
$\prec_{T^H}$-minimal covering vertex $v$ of $R_i(\bar y_i)$ occurs in this subtree because $z\in\chi^H(v)$. Hence, the sets $A$ of mappings that we choose in an accepting computations map the atoms of the query to facts of the database in a consistent way, and this induces a homomorphism $h$ from $Q$ to $D$. And since we choose mappings that are coherent with the mapping $\bar x\mapsto\bar c$ of the answer variables of $Q$, we conclude that $\bar c\in Q(s(D))$.

Finally, we show that two different valid outputs $O=(V^O,E^O,\lambda^O)$ and $O'=(V^{O'},E^{O'},\lambda^{O'})$ give rise to two different complete repairing sequences $s$ and $s'$, respectively. If the values of $N$ guessed in line~2 in the computations of $O$ and $O'$ are different, then clearly $s\neq s'$, as the length of the sequence is determined by $N$. This holds since we decrease this value by one with every operation that we apply, and only accept if $N=0$ for every leaf of $T^H$; that is, if all our guesses for the number of operations in the computation are coherent with the sequence. Similarly,
if the length of the representative path of some block $B$ in $O$ is different from the length of the representative path of $B$ in $O'$, then $s\neq s'$, as they differ on the number of operations over $B$. If both sequences have the same length and the same number of operations for each block, then there are three possible cases. 
The first is when there exists a block $B$ such that the representative path $v_1 \rightarrow \dots \rightarrow v_m$ of $B$ in $O$ and the representative path $u_1 \rightarrow \dots \rightarrow u_m$ of $B$ in $O'$ are such that for some $i\in[m-1]$, if $\lambda^O(v_i)=(-g,p)$ and $\lambda^{O'}(u_i)=(-g',p')$ then $-g\neq -g'$, or $\lambda^O(v_m)=(\alpha,p)$ and $\lambda^{O'}(u_m)=(\alpha',p')$ with $\alpha\neq\alpha'$.
In this case, the operations of $s$ over $B$ and the operations of $s'$ over $B$ conform to different templates or keep a different fact of $B$ in the repair; hence, we again conclude that $s\neq s'$.

The second case is when for every block $B$, if $v_1 \rightarrow \dots \rightarrow v_m$ is the path in $O$ representing the block $B$ and $u_1 \rightarrow \dots \rightarrow u_m$ is the path in $O'$ representing the block $B$, then for every $i\in[m-1]$, if $\lambda^O(v_i)=(-g,p)$ and $\lambda^{O'}(u_i)=(-g',p')$, then $-g=-g'$, but for some $i\in[m-1]$, $p\neq p'$. Assume that this holds for the $r$-th node in the path, and this is the first node they differ on; that is, $\lambda^O(v_i)=\lambda^{O'}(u_i)$ for all $i\in[r-1]$, but $\lambda^O(v_r)\neq\lambda^{O'}(u_r)$.
In this case, the $r$-th operation of $s$ over $B$ and the $r$-th operation of $s'$ over $B$ will be different. This is the case since when determining the $r$-th operation of the sequence over $B$, we consider the same subset $B'$ of $B$ in both cases (as the first $r-1$ operations over $B$ are the same for both sequences) and the same $\alpha$, and so the same set $P$ of operations. The $r$-th operation in $s$ (respectively, $s'$) over $B$ is the $p$-th (respectively, $p'$-th) operation in the order $\prec_P$ over such operations, and since $p\neq p'$, these are different operations.

The last case is when for some block $B$, with $v_1 \rightarrow \dots \rightarrow v_m$ being its representative path in $O$ and $u_1 \rightarrow \dots \rightarrow u_m$ being its representative path in $O'$, we have that $\lambda'^O(u_i)=\lambda^O(v_i)$ for all $i\in[m-1]$, $\lambda^O(v_m)=(\alpha,p)$, $\lambda^{O'}(v_m)=(\alpha',p')$, $\alpha=\alpha'$, and $p\neq p'$. In this case, we consider the first block $B$ in the order $B_1,\dots,B_m$ for which this is the case. For each $i\in[m]$, let $s_{\le i-1}$ be the sequence of operations in $s$ over the blocks $B_1,\dots,B_{i-1}$, and let $s'_{\le i-1}$ be the sequence of operations in $s'$ over these blocks. Clearly, $s_{\le i-1}=s'_{\le i-1}$, as for all the previous blocks, $O$ and $O'$ agree on all the labels along the representative path of the block. However, the sequences $s_{\le i}$ and $s'_{\le i}$ will be different, and since they are subsequences of $s$ and $s'$, respectively, we will conclude that $s\neq s'$.
This is the case since the way we choose to interleave the sequences $s_{\le i-1}$ (resp., $s'_{\le i-1}$) and $s_i$ (resp., $s'_i$) depends on the values $p$ and $p'$; that is, we choose the $p$-th (resp., $p'$-th) permutation in the order 
 $\prec_{s_{\le i-1}, s_i}$ (note that $s_i=s'_i$). Since $p\neq p'$, these will be two different permutation, and $s_i\neq s'_i$. This concludes our proof.

\bibliographystyle{ACM-Reference-Format}

\balance
\bibliography{references}


\begin{thebibliography}{27}


\ifx \showCODEN    \undefined \def \showCODEN     #1{\unskip}     \fi
\ifx \showDOI      \undefined \def \showDOI       #1{#1}\fi
\ifx \showISBNx    \undefined \def \showISBNx     #1{\unskip}     \fi
\ifx \showISBNxiii \undefined \def \showISBNxiii  #1{\unskip}     \fi
\ifx \showISSN     \undefined \def \showISSN      #1{\unskip}     \fi
\ifx \showLCCN     \undefined \def \showLCCN      #1{\unskip}     \fi
\ifx \shownote     \undefined \def \shownote      #1{#1}          \fi
\ifx \showarticletitle \undefined \def \showarticletitle #1{#1}   \fi
\ifx \showURL      \undefined \def \showURL       {\relax}        \fi
\providecommand\bibfield[2]{#2}
\providecommand\bibinfo[2]{#2}
\providecommand\natexlab[1]{#1}
\providecommand\showeprint[2][]{arXiv:#2}

\bibitem[\protect\citeauthoryear{??}{Got}{2002}]%
        {Gottlob2002}
 \bibinfo{year}{2002}\natexlab{}.
\newblock \showarticletitle{Hypertree Decompositions and Tractable Queries}.
\newblock \bibinfo{journal}{\emph{J. Comput. System Sci.}}
  \bibinfo{volume}{64}, \bibinfo{number}{3} (\bibinfo{year}{2002}),
  \bibinfo{pages}{579--627}.
\newblock


\bibitem[\protect\citeauthoryear{Adler, Gottlob, and Grohe}{Adler
  et~al\mbox{.}}{2007}]%
        {AdGG07}
\bibfield{author}{\bibinfo{person}{Isolde Adler}, \bibinfo{person}{Georg
  Gottlob}, {and} \bibinfo{person}{Martin Grohe}.}
  \bibinfo{year}{2007}\natexlab{}.
\newblock \showarticletitle{Hypertree width and related hypergraph invariants}.
\newblock \bibinfo{journal}{\emph{Eur. J. Comb.}} \bibinfo{volume}{28},
  \bibinfo{number}{8} (\bibinfo{year}{2007}), \bibinfo{pages}{2167--2181}.
\newblock


\bibitem[\protect\citeauthoryear{{\`{A}}lvarez and Jenner}{{\`{A}}lvarez and
  Jenner}{1993}]%
        {AlJe93}
\bibfield{author}{\bibinfo{person}{Carme {\`{A}}lvarez} {and}
  \bibinfo{person}{Birgit Jenner}.} \bibinfo{year}{1993}\natexlab{}.
\newblock \showarticletitle{A Very Hard log-Space Counting Class}.
\newblock \bibinfo{journal}{\emph{Theor. Comput. Sci.}} \bibinfo{volume}{107},
  \bibinfo{number}{1} (\bibinfo{year}{1993}), \bibinfo{pages}{3--30}.
\newblock


\bibitem[\protect\citeauthoryear{Arenas, Bertossi, and Chomicki}{Arenas
  et~al\mbox{.}}{1999}]%
        {ArBC99}
\bibfield{author}{\bibinfo{person}{Marcelo Arenas},
  \bibinfo{person}{Leopoldo~E. Bertossi}, {and} \bibinfo{person}{Jan
  Chomicki}.} \bibinfo{year}{1999}\natexlab{}.
\newblock \showarticletitle{Consistent Query Answers in Inconsistent
  Databases}. In \bibinfo{booktitle}{\emph{PODS}}. \bibinfo{pages}{68--79}.
\newblock


\bibitem[\protect\citeauthoryear{Arenas, Croquevielle, Jayaram, and
  Riveros}{Arenas et~al\mbox{.}}{2021a}]%
        {ACJR21-NFA}
\bibfield{author}{\bibinfo{person}{Marcelo Arenas},
  \bibinfo{person}{Luis~Alberto Croquevielle}, \bibinfo{person}{Rajesh
  Jayaram}, {and} \bibinfo{person}{Cristian Riveros}.}
  \bibinfo{year}{2021}\natexlab{a}.
\newblock \showarticletitle{{\#}NFA Admits an {FPRAS:} Efficient Enumeration,
  Counting, and Uniform Generation for Logspace Classes}.
\newblock \bibinfo{journal}{\emph{J. {ACM}}} \bibinfo{volume}{68},
  \bibinfo{number}{6} (\bibinfo{year}{2021}), \bibinfo{pages}{48:1--48:40}.
\newblock


\bibitem[\protect\citeauthoryear{Arenas, Croquevielle, Jayaram, and
  Riveros}{Arenas et~al\mbox{.}}{2021b}]%
        {ACJR21}
\bibfield{author}{\bibinfo{person}{Marcelo Arenas},
  \bibinfo{person}{Luis~Alberto Croquevielle}, \bibinfo{person}{Rajesh
  Jayaram}, {and} \bibinfo{person}{Cristian Riveros}.}
  \bibinfo{year}{2021}\natexlab{b}.
\newblock \showarticletitle{When is approximate counting for conjunctive
  queries tractable?}. In \bibinfo{booktitle}{\emph{STOC}}.
  \bibinfo{pages}{1015--1027}.
\newblock


\bibitem[\protect\citeauthoryear{Aror and Barak}{Aror and Barak}{2009}]%
        {arora-book}
\bibfield{author}{\bibinfo{person}{Sanjeev Aror} {and} \bibinfo{person}{Boaz
  Barak}.} \bibinfo{year}{2009}\natexlab{}.
\newblock \bibinfo{booktitle}{\emph{Computational Complexity: A Modern
  Approach}}.
\newblock \bibinfo{publisher}{Cambridge University Press}.
\newblock


\bibitem[\protect\citeauthoryear{Arora and Barak}{Arora and Barak}{2009}]%
        {ArBa09}
\bibfield{author}{\bibinfo{person}{Sanjeev Arora} {and} \bibinfo{person}{Boaz
  Barak}.} \bibinfo{year}{2009}\natexlab{}.
\newblock \bibinfo{booktitle}{\emph{Computational Complexity - {A} Modern
  Approach}}.
\newblock \bibinfo{publisher}{Cambridge University Press}.
\newblock


\bibitem[\protect\citeauthoryear{Calautti, Console, and Pieris}{Calautti
  et~al\mbox{.}}{2019}]%
        {CaCP19}
\bibfield{author}{\bibinfo{person}{Marco Calautti}, \bibinfo{person}{Marco
  Console}, {and} \bibinfo{person}{Andreas Pieris}.}
  \bibinfo{year}{2019}\natexlab{}.
\newblock \showarticletitle{Counting Database Repairs under Primary Keys
  Revisited}. In \bibinfo{booktitle}{\emph{PODS}}. \bibinfo{pages}{104--118}.
\newblock


\bibitem[\protect\citeauthoryear{Calautti, Console, and Pieris}{Calautti
  et~al\mbox{.}}{2021}]%
        {CaCP21}
\bibfield{author}{\bibinfo{person}{Marco Calautti}, \bibinfo{person}{Marco
  Console}, {and} \bibinfo{person}{Andreas Pieris}.}
  \bibinfo{year}{2021}\natexlab{}.
\newblock \showarticletitle{Benchmarking Approximate Consistent Query
  Answering}. In \bibinfo{booktitle}{\emph{PODS}}. \bibinfo{pages}{233--246}.
\newblock


\bibitem[\protect\citeauthoryear{Calautti, Libkin, and Pieris}{Calautti
  et~al\mbox{.}}{2018}]%
        {CaLP18}
\bibfield{author}{\bibinfo{person}{Marco Calautti}, \bibinfo{person}{Leonid
  Libkin}, {and} \bibinfo{person}{Andreas Pieris}.}
  \bibinfo{year}{2018}\natexlab{}.
\newblock \showarticletitle{An Operational Approach to Consistent Query
  Answering}. In \bibinfo{booktitle}{\emph{PODS}}. \bibinfo{pages}{239--251}.
\newblock


\bibitem[\protect\citeauthoryear{Calautti, Livshits, Pieris, and
  Schneider}{Calautti et~al\mbox{.}}{2022a}]%
        {CLPS21}
\bibfield{author}{\bibinfo{person}{Marco Calautti}, \bibinfo{person}{Ester
  Livshits}, \bibinfo{person}{Andreas Pieris}, {and} \bibinfo{person}{Markus
  Schneider}.} \bibinfo{year}{2022}\natexlab{a}.
\newblock \showarticletitle{Counting Database Repairs Entailing a Query: The
  Case of Functional Dependencies}. In \bibinfo{booktitle}{\emph{PODS}}.
\newblock
\newblock
\shownote{To appear.}


\bibitem[\protect\citeauthoryear{Calautti, Livshits, Pieris, and
  Schneider}{Calautti et~al\mbox{.}}{2022b}]%
        {CLPS22}
\bibfield{author}{\bibinfo{person}{Marco Calautti}, \bibinfo{person}{Ester
  Livshits}, \bibinfo{person}{Andreas Pieris}, {and} \bibinfo{person}{Markus
  Schneider}.} \bibinfo{year}{2022}\natexlab{b}.
\newblock \showarticletitle{Uniform Operational Consistent Query Answering}. In
  \bibinfo{booktitle}{\emph{PODS}}. \bibinfo{pages}{393--402}.
\newblock


\bibitem[\protect\citeauthoryear{Chiu, Davida, and Litow}{Chiu
  et~al\mbox{.}}{2001}]%
        {Chiu2021}
\bibfield{author}{\bibinfo{person}{Andrew Chiu}, \bibinfo{person}{George
  Davida}, {and} \bibinfo{person}{Bruce Litow}.}
  \bibinfo{year}{2001}\natexlab{}.
\newblock \showarticletitle{Division in logspace-uniform NC1}.
\newblock \bibinfo{journal}{\emph{RAIRO - Theoretical Informatics and
  Applications}} \bibinfo{volume}{35}, \bibinfo{number}{3}
  (\bibinfo{year}{2001}), \bibinfo{pages}{259–275}.
\newblock


\bibitem[\protect\citeauthoryear{Downey and Hirschfeldt}{Downey and
  Hirschfeldt}{2001}]%
        {+2001}
\bibfield{editor}{\bibinfo{person}{Rod Downey} {and} \bibinfo{person}{Denis~R.
  Hirschfeldt}} (Eds.). \bibinfo{year}{2001}\natexlab{}.
\newblock \bibinfo{booktitle}{\emph{Aspects of Complexity}}.
\newblock \bibinfo{publisher}{De Gruyter}, \bibinfo{address}{Berlin, New York}.
\newblock
\showISBNx{9783110889178}


\bibitem[\protect\citeauthoryear{Dyer and Greenhill}{Dyer and
  Greenhill}{2000}]%
        {Dyer00}
\bibfield{author}{\bibinfo{person}{Martin~E. Dyer} {and}
  \bibinfo{person}{Catherine~S. Greenhill}.} \bibinfo{year}{2000}\natexlab{}.
\newblock \showarticletitle{The complexity of counting graph homomorphisms}.
\newblock \bibinfo{journal}{\emph{Random Struct. Algorithms}}
  \bibinfo{volume}{17}, \bibinfo{number}{3-4} (\bibinfo{year}{2000}),
  \bibinfo{pages}{260--289}.
\newblock


\bibitem[\protect\citeauthoryear{Fuxman, Fazli, and Miller}{Fuxman
  et~al\mbox{.}}{2005}]%
        {FuFM05}
\bibfield{author}{\bibinfo{person}{Ariel Fuxman}, \bibinfo{person}{Elham
  Fazli}, {and} \bibinfo{person}{Ren{\'{e}}e~J. Miller}.}
  \bibinfo{year}{2005}\natexlab{}.
\newblock \showarticletitle{ConQuer: Efficient Management of Inconsistent
  Databases}. In \bibinfo{booktitle}{\emph{SIGMOD}}. \bibinfo{pages}{155--166}.
\newblock


\bibitem[\protect\citeauthoryear{Fuxman and Miller}{Fuxman and Miller}{2007}]%
        {FuMi07}
\bibfield{author}{\bibinfo{person}{Ariel Fuxman} {and}
  \bibinfo{person}{Ren{\'{e}}e~J. Miller}.} \bibinfo{year}{2007}\natexlab{}.
\newblock \showarticletitle{First-order query rewriting for inconsistent
  databases}.
\newblock \bibinfo{journal}{\emph{J. Comput. Syst. Sci.}} \bibinfo{volume}{73},
  \bibinfo{number}{4} (\bibinfo{year}{2007}), \bibinfo{pages}{610--635}.
\newblock


\bibitem[\protect\citeauthoryear{Geerts, Pijcke, and Wijsen}{Geerts
  et~al\mbox{.}}{2015}]%
        {GePW15}
\bibfield{author}{\bibinfo{person}{Floris Geerts}, \bibinfo{person}{Fabian
  Pijcke}, {and} \bibinfo{person}{Jef Wijsen}.}
  \bibinfo{year}{2015}\natexlab{}.
\newblock \showarticletitle{First-Order Under-Approximations of Consistent
  Query Answers}. In \bibinfo{booktitle}{\emph{SUM}}.
  \bibinfo{pages}{354--367}.
\newblock


\bibitem[\protect\citeauthoryear{Gottlob, Leone, and Scarcello}{Gottlob
  et~al\mbox{.}}{2002}]%
        {GoLS02}
\bibfield{author}{\bibinfo{person}{Georg Gottlob}, \bibinfo{person}{Nicola
  Leone}, {and} \bibinfo{person}{Francesco Scarcello}.}
  \bibinfo{year}{2002}\natexlab{}.
\newblock \showarticletitle{Hypertree Decompositions and Tractable Queries}.
\newblock \bibinfo{journal}{\emph{J. Comput. Syst. Sci.}} \bibinfo{volume}{64},
  \bibinfo{number}{3} (\bibinfo{year}{2002}), \bibinfo{pages}{579--627}.
\newblock


\bibitem[\protect\citeauthoryear{Jerrum}{Jerrum}{2003}]%
        {Jerrum03}
\bibfield{author}{\bibinfo{person}{Mark Jerrum}.}
  \bibinfo{year}{2003}\natexlab{}.
\newblock \bibinfo{booktitle}{\emph{Counting, sampling and integrating:
  algorithms and complexity}}.
\newblock \bibinfo{publisher}{Springer Science \& Business Media}.
\newblock


\bibitem[\protect\citeauthoryear{Jerrum, Valiant, and Vazirani}{Jerrum
  et~al\mbox{.}}{1986}]%
        {DBLP:journals/tcs/JerrumVV86}
\bibfield{author}{\bibinfo{person}{Mark Jerrum}, \bibinfo{person}{Leslie~G.
  Valiant}, {and} \bibinfo{person}{Vijay~V. Vazirani}.}
  \bibinfo{year}{1986}\natexlab{}.
\newblock \showarticletitle{Random Generation of Combinatorial Structures from
  a Uniform Distribution}.
\newblock \bibinfo{journal}{\emph{Theor. Comput. Sci.}}  \bibinfo{volume}{43}
  (\bibinfo{year}{1986}), \bibinfo{pages}{169--188}.
\newblock


\bibitem[\protect\citeauthoryear{Koutris and Suciu}{Koutris and Suciu}{2014}]%
        {KoSu14}
\bibfield{author}{\bibinfo{person}{Paraschos Koutris} {and}
  \bibinfo{person}{Dan Suciu}.} \bibinfo{year}{2014}\natexlab{}.
\newblock \showarticletitle{A Dichotomy on the Complexity of Consistent Query
  Answering for Atoms with Simple Keys}. In \bibinfo{booktitle}{\emph{ICDT}}.
  \bibinfo{pages}{165--176}.
\newblock


\bibitem[\protect\citeauthoryear{Koutris and Wijsen}{Koutris and
  Wijsen}{2015}]%
        {KoWi15}
\bibfield{author}{\bibinfo{person}{Paraschos Koutris} {and}
  \bibinfo{person}{Jef Wijsen}.} \bibinfo{year}{2015}\natexlab{}.
\newblock \showarticletitle{The Data Complexity of Consistent Query Answering
  for Self-Join-Free Conjunctive Queries Under Primary Key Constraints}. In
  \bibinfo{booktitle}{\emph{PODS}}. \bibinfo{pages}{17--29}.
\newblock


\bibitem[\protect\citeauthoryear{Koutris and Wijsen}{Koutris and
  Wijsen}{2021}]%
        {KoWi21}
\bibfield{author}{\bibinfo{person}{Paraschos Koutris} {and}
  \bibinfo{person}{Jef Wijsen}.} \bibinfo{year}{2021}\natexlab{}.
\newblock \showarticletitle{Consistent Query Answering for Primary Keys in
  Datalog}.
\newblock \bibinfo{journal}{\emph{Theory Comput. Syst.}} \bibinfo{volume}{65},
  \bibinfo{number}{1} (\bibinfo{year}{2021}), \bibinfo{pages}{122--178}.
\newblock


\bibitem[\protect\citeauthoryear{Maslowski and Wijsen}{Maslowski and
  Wijsen}{2013}]%
        {MaWi13}
\bibfield{author}{\bibinfo{person}{Dany Maslowski} {and} \bibinfo{person}{Jef
  Wijsen}.} \bibinfo{year}{2013}\natexlab{}.
\newblock \showarticletitle{A dichotomy in the complexity of counting database
  repairs}.
\newblock \bibinfo{journal}{\emph{J. Comput. Syst. Sci.}} \bibinfo{volume}{79},
  \bibinfo{number}{6} (\bibinfo{year}{2013}), \bibinfo{pages}{958--983}.
\newblock


\bibitem[\protect\citeauthoryear{van Bremen and Meel}{van Bremen and
  Meel}{2023}]%
        {BrMe23}
\bibfield{author}{\bibinfo{person}{Timothy van Bremen} {and}
  \bibinfo{person}{Kuldeep~S. Meel}.} \bibinfo{year}{2023}\natexlab{}.
\newblock \showarticletitle{Probabilistic Query Evaluation: The Combined
  {FPRAS} Landscape}. In \bibinfo{booktitle}{\emph{PODS}}.
  \bibinfo{pages}{339--347}.
\newblock


\end{thebibliography}
\end{document}